\documentclass[11pt,reqno,letterpaper]{amsart}
\usepackage{booktabs}
\usepackage{color}
\usepackage[colorlinks=true, allcolors=blue, hypertexnames=false]{hyperref}
\usepackage{amsmath, amssymb, amsthm}
\usepackage{mathrsfs}
\usepackage{mathtools}
\usepackage[capitalize,nameinlink]{cleveref}
\usepackage{fullpage}
\usepackage[noadjust]{cite}
\usepackage{graphics}
\usepackage{pifont}
\usepackage{tikz}
\usepackage{bbm}
\usepackage{thm-restate}
\usepackage[T1]{fontenc}

\usetikzlibrary{arrows.meta}

\usepackage{environ}
\usepackage{framed}
\usepackage{url}
\usepackage[linesnumbered,ruled,vlined]{algorithm2e}
\usepackage[noend]{algpseudocode}
\usepackage[labelfont=bf]{caption}
\usepackage{cite}
\usepackage{framed}
\usepackage[framemethod=tikz]{mdframed}
\usepackage{appendix}
\usepackage{graphicx}
\usepackage[textsize=tiny]{todonotes}
\usepackage{tcolorbox}
\usepackage{enumerate}
\allowdisplaybreaks[1]
\usepackage{enumerate}
\usepackage{stmaryrd}
\usepackage[margin=1in]{geometry}

\DeclareSymbolFont{symbolsC}{U}{pxsyc}{m}{n}
\SetSymbolFont{symbolsC}{bold}{U}{pxsyc}{bx}{n}
\DeclareFontSubstitution{U}{pxsyc}{m}{n}
\DeclareMathSymbol{\medcircle}{\mathbin}{symbolsC}{7}

\newcommand{\fpras}{\mathsf{FPRAS}}

\crefname{algocf}{Algorithm}{Algorithms}

\AtBeginEnvironment{appendices}{\crefalias{section}{appendix}} 

\usepackage[color,final]{showkeys} 

\colorlet{refkey}{orange!20}
\colorlet{labelkey}{blue!30}

\crefname{algocf}{Algorithm}{Algorithms}

\numberwithin{equation}{section}
\newtheorem{theorem}{Theorem}[section]
\newtheorem{proposition}[theorem]{Proposition}
\newtheorem{lemma}[theorem]{Lemma}
\newtheorem{claim}[theorem]{Claim}
\newtheorem{observation}[theorem]{Observation}
\crefname{claim}{Claim}{Claims}

\newtheorem{corollary}[theorem]{Corollary}

\newtheorem*{question*}{Question}
\newtheorem{fact}[theorem]{Fact}

\theoremstyle{definition}
\newtheorem{definition}[theorem]{Definition}

\newtheorem*{definition*}{Definition}

\newtheorem{remark}[theorem]{Remark}

\newtheorem{condition}[theorem]{Condition}



\newcommand{\mb}{\mathbb}
\newcommand{\mbf}{\mathbf}

\newcommand{\mc}{\mathcal}

\newcommand{\mr}{\mathrm}

\newcommand{\ol}{\overline}
\newcommand{\on}{\operatorname}
\newcommand{\wh}{\widehat}
\newcommand{\wt}{\widetilde}

\newcommand{\eps}{\varepsilon}

\let\originalleft\left
\let\originalright\right
\renewcommand{\left}{\mathopen{}\mathclose\bgroup\originalleft}
\renewcommand{\right}{\aftergroup\egroup\originalright}

\DeclareMathOperator{\id}{id}
\newcommand{\Bernoulli}{\on{Ber}}

\renewcommand{\index}{\on{index}}

\newcommand{\Tmix}{T_{\mathrm{mix}}}

\newcommand{\vecsigma}{{\vec{\sigma}}}

\newcommand{\Tcouple}{{T_{\mathrm{cp}}}}
\newcommand{\Ccouple}{{C_{\mathrm{cp}}}}
\newcommand{\Tmod}{{T_{\mathrm{m}}}}
\newcommand{\Cmod}{{C_{\mathrm{buffer}}}}

\newcommand{\Exchange}{\mathsf{Ex}}
\newcommand{\Pure}{\mathsf{PureEx}}
\newcommand{\Swappable}{\mathsf{Swap}}

\newcommand{\Avoid}{\mathsf{Avoid}}
\newcommand{\LU}{\on{LU}}

\allowdisplaybreaks

\newcommand{\ignore}[1]{}

\title{Sampling Colorings Close to the Maximum Degree: Non-Markovian Coupling and Local Uniformity}

\author[A1]{Vishesh Jain}
\address{Department of Mathematics, Statistics, and Computer Science, University of Illinois Chicago, Chicago, IL, 60607 USA}
\email{visheshj@uic.edu}

\author[A2]{Clayton Mizgerd}
\address{Department of Mathematics, Statistics, and Computer Science, University of Illinois Chicago, Chicago, IL, 60607 USA}
\email{cmizge2@uic.edu}

\author[A3]{Eric Vigoda}
\address{Department of Computer Science, University of California, Santa Barbara, CA, 93106 USA}
\email{vigoda@ucsb.edu}

\DeclareMathOperator{\BC}{BC}
\DeclareMathOperator{\NM}{NM}
\DeclareMathOperator{\Jerrum}{Jerrum}
\newcommand{\Temp}{\mathsf{Temp}}

\newcommand{\paren}[1]{\left(#1\right)}

\begin{document}

\begin{abstract}

Sampling graph colorings via local Markov chains is a central problem in approximate counting and Markov chain Monte Carlo (MCMC).
We address the problem of sampling a random $k$-coloring of a graph with maximum degree $\Delta$.  
The simplest algorithmic approach is to establish rapid mixing of the single-site update chain known as the Metropolis Glauber dynamics, which at each step chooses a random vertex $v$ and proposes a random color $c$, recoloring $v$ to $c$ if the resulting coloring remains proper.  
It is a long-standing open problem to prove that the Glauber dynamics has polynomial mixing time on all graphs whenever $k\geq\Delta+2$.

We prove that for every $\delta>0$ and all $\Delta \geq \Delta_0(\delta)$, if $k\ge (1+\delta)\Delta$ then the Glauber dynamics has optimal mixing time of $O_{\delta}(|V| \log |V|)$ on any graph of girth $\geq 7$ and maximum degree~$\Delta$.  
Our approach builds on a non-Markovian coupling introduced by Hayes and Vigoda (2003) for the large-degree regime $\Delta=\Omega(\log n)$ and girth $11$, in which updates at time $t$ may depend on and modify proposed updates at future times.  A complete analysis of this framework requires resolving substantial technical obstacles that remain in the original argument, and extending it to the constant-degree regime introduces further difficulties, since non-Markovian updates may fail with constant probability.

We overcome these obstacles by developing and analyzing a refined local non-Markovian coupling, 
and by establishing new local-uniformity results for the Metropolis dynamics, extending prior results for the heat-bath chain due to Hayes (2013). Together, these ingredients provide a complete analysis of the non-Markovian coupling framework in the large-degree regime, while simultaneously strengthening it substantially to obtain optimal mixing all the way down to the constant-degree setting.
\end{abstract}

\maketitle

\thispagestyle{empty}

\newpage

\setcounter{page}{1}

\section{Introduction}
A problem of long-standing interest in the fields of approximate counting and Markov chain Monte Carlo (MCMC) is estimating the number of $k$-colorings of a graph $G=(V,E)$ of maximum degree~$\Delta$. Since the seminal work of Jerrum~\cite{jerrum1995very}, it has been an outstanding open problem to efficiently sample $k$-colorings when $k$ is close to $\Delta$. Existing results achieve such guarantees only under strong structural assumptions, such as large girth (the length of the shortest cycle) or when $\Delta$ grows with $n=|V|$. Progress on sampling $k$-colorings has also served as a driving force for the development of new algorithms for approximate counting and new techniques for analyzing the mixing times of MCMC algorithms. In this paper we make significant progress on sampling $k$-colorings when $k$ is close to the maximum degree $\Delta$.

Throughout this paper, let $[k]=\{1,\dots,k\}$, and 
a $k$-coloring is a proper vertex coloring $\chi:V\to[k]$ such that for all $\{v,w\}\in E$, $\chi(v)\neq\chi(w)$.  
For a graph $G=(V,E)$ and an integer $k\geq 2$, let 
$\Omega$ denote the set of $k$-colorings of $G$, and let $\pi$ denote the uniform distribution over $\Omega$.  

In the approximate counting problem we are aiming for a fully-polynomial randomized approximation scheme ($\fpras$) for estimating $|\Omega|$.  In the corresponding approximate sampling problem, which we study in this paper,
we are given as input a graph $G=(V,E)$ of maximum degree $\Delta$, an integer $k\geq 2$, and an error parameter $\eps>0$, and our goal is to sample a $k$-coloring from a distribution $\mu$ 
which is within total variation distance $\eps$ of the uniform distribution $\pi$ over $\Omega$, in time polynomial in $n=|V|$ and $\log(1/\eps)$; this then yields an $\fpras$ for the approximate counting problem (e.g., see \cite{SVV:annealing,Huber,Kolmogorov}). 

The simplest and most widely studied approach for sampling colorings is the Glauber dynamics, which is the single-site update Markov chain.  In this paper we analyze the Metropolis version of the Glauber dynamics, which we simply refer to as the Glauber dynamics.  

\begin{definition}[Discrete-time Metropolis Glauber Dynamics]\label{def:metropolis} Let $G = (V,E)$ be a graph and let $k\geq 2$ be the number of colors. Let $X_0$ be an arbitrary proper $k$-coloring of $G$.  
The dynamics evolves as a Markov chain $\{X_t\}_{t\geq 0}$: 
\begin{enumerate}
    \item At each time step $t$, choose a vertex $u \in V$ uniformly at random and a candidate color  $c \in [k]$ uniformly at random.
    \item Define the candidate configuration $X'$ by  \[ 
    X'(w) = \begin{cases} c & \text{if } w = u, \\ X_t(w) & \text{otherwise.} \end{cases} 
    \]  
    \item If $X'$ is a proper coloring, set $X_{t+1} = X'$; else, set $X_{t+1} = X_t$.
\end{enumerate}
\end{definition}
It is straightforward to verify that the Glauber dynamics is ergodic whenever $k\ge\Delta+2$. Since the transitions are symmetric, the unique stationary distribution is $\pi$, the uniform distribution over~$\Omega$.

\begin{remark}
    An alternative version of the Glauber dynamics is the heat-bath Glauber dynamics. The main difference between the heat-bath and Metropolis dynamics is in the update probability. The heat-bath chain selects a color uniformly from the \textit{currently available} colors $A(X_t, v) := [k] \setminus \{X_t(w):w\in N(v)\}$, guaranteeing a valid move. In contrast, the Metropolis process attempts any color in $[k]$. 
\end{remark}

The mixing time $\Tmix$ is the minimum number of steps, from the worst initial state $X_0$, to ensure that the distribution of $X_{\Tmix}$ is within total variation distance $1/4$ of the stationary distribution.  Hayes and Sinclair~\cite{HayesSinclair} proved that for every constant $\Delta$, for any graph of maximum degree $\Delta$ the mixing time is $\Omega(n\log n)$.  
Therefore we say the Glauber dynamics has optimal mixing time when $\Tmix=O(n\log n)$.

The first major progress on sampling $k$-colorings was by Jerrum~\cite{jerrum1995very} who proved $O(n\log n)$ mixing time of the Glauber dynamics whenever $k>2\Delta$.  This was improved by Vigoda~\cite{Vigoda} to $k>(11/6)\Delta$, and further improved to $k\geq 1.809\Delta$ by~\cite{CDMPP19,CarlsonVigoda25} using a more general Markov chain known as the flip dynamics; this is currently the best poly-time sampling result which applies for general graphs.

Further progress on sampling $k$-colorings typically relies either on strong structural assumptions (such as large girth and constant $\Delta$) or applies only when $k/\Delta$ is large. We briefly review the most relevant prior results.
Chen, Liu, Mani, and Moitra~\cite{CLMM23} proved that for every {\em constant} $\Delta$ and all $k\geq\Delta+3$, for any graph with girth $\geq g(\Delta)$ the Glauber dynamics has $O(n\log n)$ mixing time.  Regarding the girth requirement in \cite{CLMM23}, in the simpler case when $k\geq (1+\delta)\Delta$, the required girth grows at least as $C(\delta)\log^2 \Delta$.   
Other notable results include polynomial mixing time of the Glauber dynamics when $k>1.764\Delta$ for all $\Delta\geq 3$ for triangle-free graphs~\cite{CGSV21,FGYZ21, jain2022spectral} using spectral independence and a result of Dyer, Frieze, Hayes, and Vigoda~\cite{DFHV},
which establishes optimal mixing time of the Glauber dynamics when either: $k > 1.764\Delta$ and girth $\geq 5$, or $k> 1.489\Delta$ and girth $\geq 7$, and $\Delta\geq C(\delta)$ where $\delta$ is the gap from the relevant threshold; see also \cite{dyer2003randomly, Molloy, hayes-old-girth5} for earlier results in this direction.

Most closely related to our work, Hayes and Vigoda~\cite{hayesvigoda-nonmarkovian} claimed
to establish $O(n\log n)$ mixing of the Glauber dynamics for every fixed $\delta>0$ and for $k\geq(1+\delta)\Delta$ when the girth $g\geq 11$ and $\Delta=\Omega(\log n)$; their proof contains several non-trivial technical issues (see~\cref{sub:corrections,rem:HV-error}), 
which we address and resolve in our work.
In particular, our work requires revisiting the construction of the bounding chain,
the definition and validity of the non-Markovian coupling itself, 
and how these components are combined in the analysis.
In addition, we extend the local uniformity results of Hayes~\cite{hayes2013local} to the Metropolis Glauber dynamics, which is also required but not established in~\cite{hayesvigoda-nonmarkovian}.  
Furthermore, extending the Hayes--Vigoda non-Markovian coupling to the constant-degree regime introduces additional challenges in both the definition and analysis of the coupling.

Our main result establishes optimal mixing of the Glauber dynamics
in the near-critical regime $k \ge (1+\delta)\Delta$, assuming only girth $\ge 7$ and $\Delta \geq \Delta_0(\delta)$.

\begin{theorem}\label{thm:main-mixing}
For every $\delta>0$, there exist constants $\Delta_0(\delta),C(\delta)>0$ such that the following holds. For any graph $G=(V,E)$ of maximum degree $\Delta \geq \Delta_0(\delta)$ and girth $\geq 7$, for any $k\geq(1+\delta)\Delta$, the Metropolis Glauber dynamics on $k$-colorings has mixing time
\[
\Tmix \leq C(\delta)\, n \log n .
\]
\end{theorem}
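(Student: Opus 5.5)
The proof will follow the path-coupling / contraction framework and build on the Hayes--Vigoda non-Markovian coupling. The goal is the following contraction estimate: for two colorings $X_0,Y_0$ that differ only at a single vertex $v$, after $T=\Theta_\delta(n)$ steps of a suitable coupling,
\[
\mathbb{E}[\,d(X_T,Y_T)\,]\le \tfrac12,
\]
where $d$ is Hamming distance (or a weighted variant). By the path-coupling lemma (Bubley--Dyer), applied to the $T$-step chain and using $d\le n$, this gives $\Tmix=O(T\log n)=O_\delta(n\log n)$.

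The first ingredient is a burn-in phase. Before coupling, I would run both chains for $\Theta_\delta(n)$ steps from their starting states. I would then prove a local uniformity statement for the Metropolis dynamics, extending Hayes' heat-bath result. After burn-in, with probability $1-\exp(-\Omega(\Delta))$ for each vertex $w$ (and uniformly over a window of $O(n)$ later steps), the number of available colors satisfies
\[
|A(X_t,w)|\ge (1-\eps)\,k\,e^{-\Delta/k}.
\]
Girth $\ge7$ is what makes the neighbourhood colors of $w$ behave approximately independently given the radius-$2$ boundary, so the standard proof applies. Each neighbour $u\in N(w)$ is recolored at least once in a window of length $Cn$ with probability close to $1$. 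At its last recoloring, its new color is close to uniform on $A(X,u)$, and girth guarantees that distinct neighbours of $w$ share no other neighbours within short distance. A martingale/Chernoff-type concentration over these last updates then gives the bound. The new part for Metropolis is handling rejected proposals: the last successful update time is biased, so I would condition on the sequence of proposed vertices and colors, not on accepted moves.

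The second ingredient is the coupling itself, on the disagreement pair $(X,Y)$ differing at $v$. I would use the identity coupling, except that when a neighbour $w$ of $v$ proposes color $X(v)$ or $Y(v)$, the two proposals are swapped. Given local uniformity, this alone yields expected change in distance at most $\frac1{kn}(\Delta-|A|)$-type terms. That is not enough near $k=(1+\delta)\Delta$; it reproduces only the $1.763$-type bound. The Hayes--Vigoda non-Markovian modification is the fix: when $v$ is selected at time $t$ to update, we look ahead at the future proposals of neighbours within a time window. If a later proposal at a neighbour $w$ would be blocked in exactly one chain (that is, it would create a new disagreement), we modify that future proposal by exchanging it with a matched proposal chosen so that the two chains resolve identically. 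This is done measure-preservingly: the map on future update sequences is a bijection. The effect is that the bad event ``$w$ chooses the disagreeing color'' is charged only when $w$ updates before $v$'s disagreement is resolved, not over the full time horizon. The disagreement at $v$ is resolved at rate $\approx |A|/(kn)$, while it spreads at rate $\approx(\text{number of neighbours whose proposal hits a disagreement color})/(kn)$. I would show that the look-ahead pairs each spreading event with a cancelling event unless an unlikely local pattern occurs, such as two nearby disagreements, or a conflicting modification at distance $\le 3$ (here girth $7$ is used again). Summing gives a drift of $-\Omega(\delta)/n$ per step in expectation.

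I expect the main obstacle to be making the non-Markovian coupling genuinely valid and local at constant $\Delta$. Modified future updates can interact: two look-ahead modifications may target the same future step, or a modification can fail with constant probability because $\Delta$ is no longer $\Omega(\log n)$. So I would first define the modifications only within a local window of radius $2$ and time $O(n/\Delta)$, with a priority order to resolve collisions. Failed modifications would be treated as ``bad'' events whose probability is bounded by $\exp(-\Omega_\delta(\Delta))$ using local uniformity, which is small once $\Delta\ge\Delta_0(\delta)$. The remaining step is to verify the bijection property (marginals are Metropolis dynamics) by a careful case analysis. Finally, I would absorb the bad-event contributions into the $-\Omega(\delta)$ drift, taking $\Delta_0(\delta)$ large enough, and conclude by the path-coupling lemma.
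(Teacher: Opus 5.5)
Your outline has the paper's overall architecture (burn-in to local uniformity, a Hayes--Vigoda-type non-Markovian coupling, path coupling), but its two central steps would not work as stated.

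\textbf{The non-Markovian step.} As you describe it, the edit acts on the dangerous proposal itself and on later proposals. Consider a neighbour $u$ of a disagreement $p$ that is singly blocked, i.e.\ exactly one colour of $H=\{X_{t-1}(p),Y_{t-1}(p)\}$ appears on $N_p(u)$, where the two chains agree. Then $|A(X_{t-1},u)|$ and $|A(Y_{t-1},u)|$ differ by one. So any measure-preserving re-pairing of the proposals at times $\ge t$ still lets the time-$t$ update succeed at $u$ in exactly one chain, with conditional probability at least $1/(kn)$.

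The missing idea is to change $Y$'s colouring of the neighbourhood \emph{before} $t$. At a time $t$ when $v_t\in N(p)$ proposes $Y_{t-1}(p)$ and is singly blocked, the paper rewrites the last successful recolouring $\tau_w^-<t$ of each blocking neighbour $w\in B$ (to a colour $\beta_w(\cdot)$ from its exchangeable set). It also rewrites that of a matched complementary neighbour $\alpha(w)$ (to the unblocked colour), so that $Y$ carries the opposite blocking pattern on $N_p(v_t)$ at time $t$. It then patches every proposal of the old colour made by a neighbour of $w$ during the whole epoch $I(w,t)$, before and after $t$.

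Two of your quantitative assumptions fail here. First, these epochs have length $\Theta_\delta(n)$, so your window of $O(n/\Delta)$ would make the edit essentially never available. Second, the edit fails with probability that does not decay with $\Delta$, not $e^{-\Omega(\Delta)}$. A given blocker has not yet been recoloured within the current block with probability $e^{-\Theta(t/n)}$, and mismatches in $\alpha$ or $\beta$ are controlled only by $\eps$.

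So you need a time-by-time choice between the edit and Jerrum's coupling, a choice that depends on future updates, plus a proof that the composed map is still a bijection. This is most of the paper: a bounding chain $Z_t$ whose potential-disagreement set $\mc P$ is tree-like and unchanged by every edit, each local edit being an involution, and the reverse construction making the same choices. It also needs a buffer of $\Cmod n$ steps in the drift and a separate bound on temporary disagreements that are never repaired.

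\textbf{Local uniformity and path coupling.} A lower bound on $|A(X_t,w)|$ is not enough. Under the non-Markovian coupling, a new disagreement at $w\in N(p)$ arises essentially only when the proposed colour $Y(p)$ is unblocked on $N_p(w)$ in $X$. To beat the death rate you need, for every fixed colour $c$, the estimate $\sum_{w\in N(p):\,c\in A_p(X_t,w)}e^{d(w)/k}\le d(p)+\eps\Delta$, used with the potential $\sum_{w\in D_t}e^{d(w)/k}$. Plain Hamming distance fails on irregular graphs, e.g.\ when $p$ has many low-degree neighbours. Bounding that count trivially by $\Delta$ returns you to the condition $ke^{-\Delta/k}>\Delta$.

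Proving such per-colour statistics for the Metropolis chain is exactly where the acceptance rates at $w\in N(v)$, which depend on the colours in $S_2(v)$, must be decoupled. The paper uses a bias-field recursion $P(X,v,c)\approx d(v)/k$ and an auxiliary process whose refresh rate at $w$ is $\mb E[|A(\cdot,w)|\mid\mc F_v]/k$.

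In this approach the burn-in must have length of order $n\log\Delta$, not $\Theta_\delta(n)$. If all of $S_2(v)$ starts with one colour $c$, about $e^{-\Theta(C)}\Delta^2$ of these vertices still carry $c$ after $Cn$ steps. This violates the bound $|X^{-1}(c)\cap B_2(v)|=O(\Delta)$, which the paper uses to control never-recoloured blockers.

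Moreover, when $\Delta$ does not grow with $n$, local uniformity holds only near the disagreement, with failure probability $e^{-\Omega(\Delta)}$ per vertex. So you cannot burn in and then invoke a single-disagreement contraction. After burn-in the chains differ on many vertices, and the contraction must be applied along a Hamming path in $\wh\Omega$ whose intermediate labelings also need local uniformity.

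The paper follows Dyer--Frieze--Hayes--Vigoda:
\begin{itemize}
\item a block of $C_{\mathrm{blk}}n\log\Delta$ steps, consisting of a burn-in segment followed by $\Theta(\log\Delta)$ contracting segments of length $\Ccouple n$;
\item bad events for disagreements escaping a ball around $z^*$ or piling up in some $B_2(v)$, so that local uniformity transfers from $X$ to the interpolating labelings;
\item the resulting bound $\mb E\bigl[|X_{T_{\mathrm{blk}}}\oplus Y_{T_{\mathrm{blk}}}|\bigr]\le\Delta^{-1}$.
\end{itemize}
Contraction only to $1/2$ per $\Theta(n\log\Delta)$ steps would give $O(n\log n\log\Delta)$. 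That is not $O_\delta(n\log n)$, since $\Delta$ may grow with $n$.
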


\subsection{Technical Overview}
A standard approach to proving rapid mixing of the Glauber dynamics is via path coupling~\cite{BubleyDyer}.
Consider a pair of colorings $X_t,Y_t$ that differ at a single vertex~$z$.  Let $c_X=X_t(z)$ and $c_Y=Y_t(z)$ denote the disagree colors.  Under the identity coupling we update the same vertex $v_t$ with the same attempted color $c_t$ in both chains.  For each $w\in N(z)$, there are at most two color choices which can create a new disagreement.  
Since there are at least $k-\Delta$ colors that can successfully recolor $z$ in both chains, while each neighbor $w\in N(z)$ can create at most two new disagreements, contraction requires $k-\Delta > 2\Delta$, which yields rapid mixing when $k>3\Delta$.

In Jerrum's coupling, for $v_t\in N(z)$, we pair color $c_X$ in one chain with $c_Y$ in the other chain.  All other $(v_t,c)$ updates are coupled using the identity coupling.
Now each neighbor can create at most one new disagreement (from color $c_Y$ in $X_t$ and $c_X$ in $Y_t$).  Therefore, we need $k-\Delta>\Delta$, which yields Jerrum's $k>2\Delta$ bound.

Subsequent improvements rely on first burning-in the chains to avoid worst-case configurations, as utilized in \cite{dyer2003randomly, Molloy, hayes-old-girth5, DFHV}. In a star graph on $\Delta+1$ vertices with center vertex $v$, the expected number of available colors for $v$ in a random coloring is $\approx k\exp(-\Delta/k)$, since a color does not appear in the neighborhood with probability $(1-1/k)^\Delta\approx \exp(-\Delta/k)$.  In fact,  Dyer and Frieze \cite{dyer2003randomly} showed such a lower bound on the expected number of available colors for any triangle-free graph.  
This heuristic leads to the condition $k\exp(-\Delta/k)>\Delta$, which holds for $k>1.764\Delta$.
This forms the basis of rapid mixing results for graphs of girth at least $5$; see \cite{dyer2003randomly,hayes-old-girth5,DFHV}.
The additional girth assumption ensures that these estimates hold for the burned-in dynamics, rather than only in the stationary distribution.
  
Molloy~\cite{Molloy}, and subsequently Hayes~\cite{hayes-old-girth5,hayes2013local}, further improved this bound by proving stronger burn-in properties on the distance-2 neighborhood of a vertex (and hence further requiring girth $\geq 7$).  In particular, for a neighbor $w\in N(z)$ of the disagreement, by considering the probability that both colors $c_X$ and $c_Y$ appear in $N(w)\setminus\{z\}$, and hence there is no bad update for $w$, we obtain the bound $k\exp(-\Delta/k)>\Delta(1-\exp(-\Delta/k))^2$, which holds for $k>1.489\Delta$. 

The remaining obstacle in the near-critical regime is handling singly-blocked configurations.  Suppose that the disagree color $c_Y$ appears in $N(w)\setminus\{z\}$, and the other disagree color $c_X$ does not appear.  
Then the coupled update for $w$ which attempts color $c_Y$ in $X_t$ and color $c_X$ in $Y_t$ is blocked in $X_t$ but succeeds in $Y_t$, creating a new disagreement.  

Suppose instead we could modify the coloring of $Y_t$ on $N(w)\setminus\{z\}$ so that $c_X$ appears and $c_Y$ does not (which is the opposite of $X_t$).  Then the same coupled update would be blocked in both chains, preventing this disagreement.  
Consequently, a new disagreement at $w$ can only occur if $c_Y$ does not appear in $N(w)\setminus\{z\}$ in $X_t$ (and symmetrically for $c_X$ in $Y_t$).
  The coupling then requires $k\exp(-\Delta/k)>\Delta\exp(-\Delta/k)$, which holds for $k>\Delta$.

The goal of the non-Markovian coupling is to realize this ideal:
 when $w\in N(z)$ is singly blocked for one of the disagree colors in $X_t$ then we'll couple the update sequence for $Y_t$ (at earlier times) so that it is singly blocked for the other disagree color.  The challenge is ensuring the disagreements we introduce to obtain this do not propagate; this requires looking at and modifying updates at future times, hence the non-Markovian aspect of the coupling (see \cref{fig:nm} for an illustration). 

To implement this strategy, we first bring the chains into a well-behaved regime via a burn-in phase,
 after which the refined coupling can be applied effectively.

\begin{figure}
    \centering
    \includegraphics[width=0.5\linewidth]{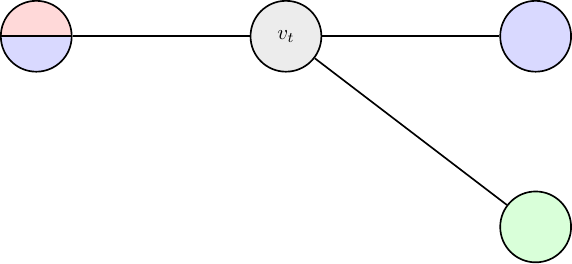}
    \caption{This chain is following the Jerrum coupling.  Vertex $v_t$ will attempt (blue/red) in the upper and lower chains respectively, leaving it as a (gray/red) discrepancy after time $t$.}
    \label{fig:placeholder}
\includegraphics[width=0.5\linewidth]{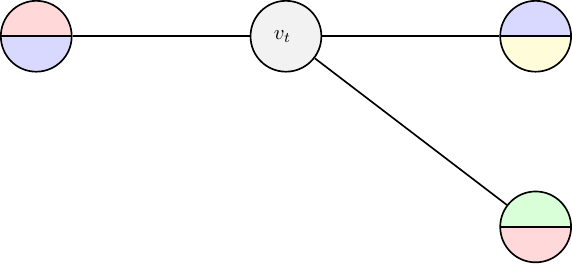}
    \caption{This chain is following the non-Markovian coupling.  By changing past updates, when $v_t$ attempts (blue/red) it will remain (gray/gray).  We have prevented this discrepancy at the cost of introducing two ``temporary'' discrepancies in the neighborhood which we have cleverly chosen so that they will be fixed.  The color ``yellow'' is somewhat arbitrary; it encodes that green was the original color of the other vertex to ensure validity of the coupling.}
    \label{fig:nm}
\end{figure}

 \medskip

{\bf \noindent Stage I: Burn-in Phase:}  
To couple a pair $(X_0,Y_0)$ that differ at a single vertex $z$, we first run a burn-in phase of $O(n\log{\Delta})$ steps.
The burn-in phase transforms worst-case initial colorings into configurations that are locally close to the stationary distribution; this is formalized via local uniformity.
During this phase, we show that the initial disagreement at $z$ does not spread beyond a ball of radius $\Delta^{c}$ for a small constant $c \in (0,1/10)$,
allowing us, via a union bound, to conclude that all vertices in this local ball satisfy local uniformity with high probability. A key aspect here is that for a fixed (sufficiently large) time and a fixed vertex, local uniformity properties hold with probability $1-\exp(-\Omega(\Delta))$, which is what permits us to take various union bounds over time and space. This aspect is also utilized in \cite{DFHV}. 
  
One of our key technical ingredients is establishing local uniformity for the Metropolis Glauber dynamics.
A local uniformity result shows that after a suitable burn-in period, the distribution of colors in the neighborhood of every vertex is close to the distribution induced by a random coloring of a tree rooted at that vertex.

Previously, Hayes~\cite{hayes2013local} established an analogous local uniformity result for the heat-bath Glauber dynamics.  Hayes's proof utilizes the similarity of the heat-bath dynamics to the stationary distribution, and it was unclear how to extend his proof approach to the Metropolis dynamics.  
In particular, in the heat-bath dynamics the vertex update probabilities are independent of the current coloring, whereas in the Metropolis dynamics the probability that a proposed recoloring succeeds depends on the number of available colors at the chosen vertex. Controlling this dependency on the current configuration is the central difficulty in establishing the burn-in result. We discuss this in a bit more detail in \cref{sub:corrections}.

We state here a simplified version of our burn-in result, 
and refer the interested reader to \cref{thm:local-uniformity-discrete} for the general statement.
For a coloring $X_t$, and a vertex $v$, let 
\[ A(X_t,v):=[k]\setminus X_t(N(v))
\]
denote the set of {\em available colors}.  The following result shows that after a suitable burn-in period, the number of available colors is determined only by the degree of $v$.  The burn-in result requires girth $g\geq 7$ as in Hayes~\cite{hayes2013local}, where the girth is the number of edges in the shortest cycle.

\begin{theorem}\label{thm:SIMPLER-local-uniformity-discrete}
    Given $\delta,\eps > 0$, there exist $\Delta_0 = \Delta_0(\delta,\eps),C = C(\delta,\eps)$  such that for all $\Delta > \Delta_0$, $k \geq (1+\delta)\Delta$, and for any graph $G = (V,E)$ of maximum degree $\Delta$ and girth $g\geq 7$
    we have the following.  Let $\{X_t\}_{t \geq 0}$ be the discrete-time Metropolis Glauber dynamics.  Fix $v \in V$. Suppose $T \ge n$ and $T_0 \ge C n \log \Delta$. Then,
    \begin{equation}\label{eq:simpler-main-number-colors}
        \mb{P}\left[ \exists t \in [T_0,T_0+T] : \big||A(X_t,v)| - k e^{-d(v)/k} \big| > \eps k\right] \leq \frac Tn e^{-\Delta/C}.
    \end{equation}
\end{theorem}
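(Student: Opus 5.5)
We follow Hayes's approach for the heat-bath chain~\cite{hayes2013local}, adapted to the Metropolis dynamics. Fix $v$ and write $N(v)=\{w_1,\dots,w_d\}$ with $d=d(v)$. By girth $\ge 7$, the sets $N(w_i)\setminus\{v\}$ are pairwise disjoint, and there are no edges between them, so the radius-$3$ ball around $v$ is a tree. The proof has three steps: first establish a crude uniformity property, then show the neighbors' colors are nearly independent and uniform, then concentrate $|A(X_t,v)|$ and union bound over time.

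\emph{Step 1: crude uniformity.} First show that after $T_1 = C_1 n\log\Delta$ steps, with probability $1-e^{-\Omega(\Delta)}$, every vertex $u$ within distance $2$ of $v$ has $|A(X_t,u)|\ge \delta'\Delta$ for some $\delta'=\delta'(\delta)>0$. The available colors are $k-|X_t(N(u))|\ge k-\Delta\ge\delta\Delta$ trivially, so a weak version is immediate. What we really need is a quantitative lower bound on the success probability of an update. Fix a window of length $\Theta(n)$. Every $w\in N(u)$ is updated $\Theta(1)$ times in expectation. Each proposed color succeeds with probability at least $\delta\Delta/k\ge\delta/(1+\delta)$, and then lands on a uniformly random color among the $|A(X_s,w)|$ available ones. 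These events are nearly independent across the neighbors $w$, because they share no neighbors other than $u$ (this uses girth). Azuma/Chernoff over the $\Theta(\Delta)$ neighbors then gives, for each color $c$, concentration of the number of neighbors colored $c$ at scale $O(1)+\eps\Delta/k$.

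\emph{Step 2: independence and approximate uniformity of neighbors' colors.} The goal is to show that conditioned on the history outside the ball, the colors $X_t(w_i)$ behave like independent draws with $\Pr[X_t(w_i)=c]\approx 1/k$ up to $(1\pm\eps)$ factors, for all but $\eps k$ colors $c$. Fix a time $t$ and look back over the window $[t-Ln,t]$ with $L=L(\eps)$. Each $w_i$ is updated in this window with probability $1-e^{-L}$, and its final color is determined by its last successful update.

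The Metropolis difficulty appears here. The success probability of a proposed color $c$ at $w_i$ is $\mathbf 1[c\in A(X_s,w_i)]$, and this depends on the colors in $N(w_i)\setminus\{v\}$, which evolve simultaneously. We handle it by a symmetrization argument. Condition on the update times and the proposed colors of all vertices at distance $\ge 2$ from $w_i$, other than through $v$. Then the last successful color at $w_i$ is uniform over the colors not seen in $N(w_i)$ at that moment. Applying Step 1 to $w_i$'s neighborhood, which is an independent subtree by girth, this set has size $\ge\delta'\Delta$ and is "spread out". Consequently, for a fixed color $c$:
\[
\Pr[c\notin X_t(N(v))]\approx \prod_{i}\Bigl(1-\tfrac{\mathbf 1[c\in A_i]}{|A_i|}\Bigr),
\]
where $A_i$ denotes the availability set of $w_i$ at its last update. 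We then need the heuristic $\sum_c\prod_i(\cdot)\approx k e^{-d/k}$, which is a self-consistency statement. Following Hayes, we iterate a contraction. Define the deviation $\eta$ of the vector $(|A(X_t,u)|/k)_{u}$ from its fixed point $e^{-d(u)/k}$. Then show that one more burn-in window of length $Ln$ maps deviation $\eta$ to at most $(1-\delta'')\eta+\eps/2$ w.h.p., using $k\ge(1+\delta)\Delta$ to get the contraction. After $O(\log(1/\eps))$ windows, the deviation is below $\eps$. This explains the $T_0\ge Cn\log\Delta$ requirement.

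\emph{Step 3: concentration and union bound.} Given Step 2, $|A(X_t,v)|=\sum_c \mathbf 1[c\notin X_t(N(v))]$ is a function of $d$ nearly independent colors. Changing one color changes it by at most $1$, so McDiarmid's inequality gives deviation $\eps k$ with probability $e^{-\Omega(\eps^2 k)}=e^{-\Delta/C}$ at a fixed time $t$. For the whole interval $[T_0,T_0+T]$, note that $|A(X_t,v)|$ changes only when a neighbor of $v$ is updated. So it suffices to check a grid of $O(T\Delta/n)$ times spaced $n/(\eps\Delta)$ apart, between which at most $O(\eps k)$ changes occur w.h.p. A union bound then yields $\frac Tn e^{-\Delta/C}$ after adjusting $C$.

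The main obstacle is Step 2: controlling the state-dependent acceptance probabilities of Metropolis updates, so that the last successful color at $w_i$ is genuinely close to uniform and nearly independent across $i$. Our first attempt is to couple the Metropolis chain on the ball with a heat-bath–like process via a random time change. Conditioned on the trajectory, each successful update at $w$ is uniform over $A(X_s,w)$, so the only distortion lies in the timing of successful updates. We would show, via Step 1, that these timing distortions affect the final color distribution by only $O(\eps)$.
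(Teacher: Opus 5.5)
There is a genuine gap, and it sits where you flagged it (Step~2). It has two layers. First, conditioning on the history outside the ball does not make the colors $X_t(w_i)$ even approximately conditionally independent in the true undirected chain. They interact through $v$ and through the two-way feedback between each $w_i$ and $N(w_i)\setminus\{v\}$, so your product formula has no justification for $X$ itself. The paper, following Hayes, first passes to the recursive process $X^*$ on $G_{\mathrm{in}}(v,3)$ (\cref{def:Xstar}), in which information flows only toward $v$, and then proves $X^*\approx X$ (\cref{prop:comparison}). That comparison needs an above-suspicion initial state, and this is what the $Cn\log\Delta$ burn-in provides; $O(\log(1/\eps))$ windows of length $O(n)$ do not account for it.

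Second, even inside $X^*$ the Metropolis timing leak remains. The last successful refresh time of $w$ is driven by the path of $|A(X_s,w)|$, hence by the colors on $N(w)\setminus\{v\}$, so you cannot evaluate $\mb{P}[c\in A_i]$ using the unconditioned laws of those colors. A lower bound $|A|\ge\delta'\Delta$ says nothing about this bias. Replacing the rate by a deterministic constant would require $|A(X_s,w)|$ to be concentrated, which is essentially the theorem being proved. The missing idea is the process $Z^*$ of \cref{def:Zstar}. There $w$ refreshes at the $\mc F_v$-measurable rate $\mb{E}[|A(X^*_t,w)|\mid\mc F_v]/k$ and is coupled to $X^*$ by Poisson thinning. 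Because the colors on $N(w)\setminus\{v\}$ are conditionally independent in $X^*$, \cref{lem:dyer-concentration} gives $|A(X^*_t,w)|$ conditional variance $O(k)$. Consequently the last-refresh times differ at a given neighbor with probability $O(\Delta^{-1/2})+e^{-\rho_\delta T}$, hence at no more than $\eps\Delta$ neighbors whp (\cref{lem:Xstar-vs-Zstar}).

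Your contraction also iterates the wrong quantity, and with too weak an input. Let $P(v,c)=\sum_{w\in N(v)}\mathbf 1[c\in A_v(w)]/|A_v(w)|$. Heuristically (in $X^*$, given $\mc F_v$) $\mb{P}[c\notin X(N(v))]\approx e^{-P(v,c)}$, and $\sum_c P(v,c)=d(v)$. By convexity $\sum_c e^{-P(v,c)}\ge ke^{-d(v)/k}$, with equality only if $P(v,\cdot)$ is flat in $c$. The sizes $|A_v(w)|$ do not determine which colors are available: if all $w_i$ shared one available set of the correct size, $|A(v)|$ would be far from $ke^{-d(v)/k}$. So a recursion for the deviation of $(|A(X_t,u)|/k)_u$ does not close.

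What must be contracted is the color-resolved bias field (\cref{def:bias-field}). \cref{prop:concentration-of-P} sets $\alpha_\ell=\max|P(w,c)-P(w,c')|$ over $w\in B_\ell(v)$ and a time window. From the local relation $P(u,c)\approx\sum_{w\in N(u)}e^{-P(w,c)}/|A_u(w)|$ it shows $\alpha_\ell\le 60\eps+(1-\delta/2)\alpha_{\ell+1}$. The factor there is $\Delta/(eA_{\min})$, which is below $1$ only because $A_{\min}\approx(1-\eps)k/e$. That is the tight lower bound of \cref{lem:available-colors-fixed-time} and \cref{cor:universal-lower-bound}, obtained from Dyer--Frieze applied to the conditionally independent colors in $X^*$. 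Your Step~1 gives only $|A|\ge k-\Delta$, which is $\delta\Delta$ when $k=(1+\delta)\Delta$, and then the factor is about $1/(e\delta)$, not a contraction for small $\delta$.

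Once these pieces are in place, your Step~3 is essentially what the paper does: Dyer--Frieze concentration conditional on $\mc F_v$, transfer to $X$, and a mesh argument over time.
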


    Before proceeding, let us interpret the conclusion of \cref{thm:SIMPLER-local-uniformity-discrete}.  As noted earlier, in a star graph the expected number of available colors for the center vertex $v$ is $ \approx k\exp(-d(v)/k)$.  
     \eqref{eq:simpler-main-number-colors} 
     shows that the number of available colors at $v$ is sharply concentrated around this quantity.  

The refinement of \cref{thm:SIMPLER-local-uniformity-discrete} stated in \cref{thm:local-uniformity-discrete} 
establishes a finer-grained description of the distribution
of colors in the neighborhood of a vertex $v$. The probability that a neighbor $w\in N(v)$ has exactly $i_1$ neighbors colored $c_1$ and $i_2$ neighbors colored $c_2$ is close to a Poisson distribution, 
which shows that correlations between colors in the neighborhood are negligible.

\begin{remark}
\label{rem:HV-error}
    Hayes~\cite{hayes2013local} established the analog of~\cref{thm:SIMPLER-local-uniformity-discrete} (and the more general, \cref{thm:local-uniformity-discrete}) for the heat-bath dynamics.  
    The result of Hayes and Vigoda~\cite{hayesvigoda-nonmarkovian} relied on a local uniformity result for Metropolis Glauber dynamics; however Hayes's proof approach does not extend to the Metropolis dynamics.
\end{remark}

{\bf \noindent Stage II: Non-Markovian Coupling:}
To translate local uniformity into rapid mixing, we refine the non-Markovian coupling introduced by Hayes and Vigoda~\cite{hayesvigoda-nonmarkovian}.   We provide a high-level overview of the coupling here, and identify the key differences from \cite{hayesvigoda-nonmarkovian} in the subsequent subsection \cref{sub:corrections}.

Fix a pair $X_0,Y_0$ that differ only at a single vertex $z$.  Let $\Tcouple$ denote the coupling time, and   
let \[ \vecsigma=((v_1,c_1),\dots,(v_{\Tcouple},c_{\Tcouple}))
\]
denote the update sequence defining the evolution from $X_0$ to $X_1,\dots,X_{\Tcouple}$.  Thus, in the
transition $X_{t-1}\rightarrow X_{t}$ we recolor~$v_t$ to color $c_t$ if $c_t\notin X_{t-1}(N(v_t))$.  

Our coupling defines a coupled update sequence:
\[ F(\vecsigma)=\vecsigma'=((v_1,c'_1),\dots,(v_{\Tcouple},c'_{\Tcouple})),
\]
which will define the evolution from $Y_0$ to $Y_1,\dots,Y_{\Tcouple}$.  Notice that the vertex updates are the same for both chains, only the color updates possibly differ.  We will ensure that $F$ is bijective and hence this is a valid coupling.

We can now state our main coupling result, which we present in a slightly informal manner here in the introduction, and we refer the interested reader to a more formal statement in \cref{sec:nonmarkovian}, which is presented after presenting the necessary definitions.

\begin{theorem}
    \label{thm:SIMPLER-main-nonmarkovian}
Let $\delta>0$, and let $G=(V,E)$ be a graph of maximum degree $\Delta$ and girth $\geq 7$, where $\Delta>\Delta_0(\delta)$.  Let  $k\geq(1+\delta)\Delta$.
    Consider a pair of colorings $X_0,Y_0$ which differ at a single vertex $z$ and 
    which are ``burned in'' in the ball of radius $\Delta^{1/10}$
    around $z$.
Then, for $\Tcouple=O_{\delta}(n)$,
there exists a $\Tcouple$-step coupling of the Glauber dynamics where 
    \[  \mb{E}[|X_{\Tcouple}\oplus Y_{\Tcouple}|] \leq 1/3.
\]
\end{theorem}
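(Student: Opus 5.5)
The plan is to build the coupling $F$ as a bijection on update sequences of length $\Tcouple = Cn$, with $C=C(\delta)$ a large constant. $F$ starts from Jerrum's coupling (pair $c_X$ with $c_Y$ at neighbors of the current disagreement, identity elsewhere). On top of this we add \emph{non-Markovian} modifications that repair singly-blocked neighbors.

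\textbf{Step 1: the base coupling.} Let $\tau$ be the first time the disagreement at $z$ is resolved, which happens when $z$ is recolored to a color available in both chains. Suppose a neighbor $w\in N(z)$ is selected at time $t<\tau$, and in $X_{t-1}$ exactly one of $c_X,c_Y$ appears on $N(w)\setminus\{z\}$, say $c_Y$ appears at $u$. In that case we will not just accept the new disagreement. Instead we look back to the most recent update of $u$ before $t$, and forward in time. We then swap colors in the $Y$-update sequence: at the last recoloring of $u$ we swap $c_X\leftrightarrow c_Y$, and at a designated later update we swap back. This makes $u$ hold $c_X$ in $Y$, so the update at $w$ is blocked in both chains. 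The ``yellow'' bookkeeping in \cref{fig:nm} records the original color, so that $F$ is invertible.

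\textbf{Step 2: bijectivity.} We show that $F$ is a bijection by defining $F^{-1}$ explicitly. Scanning times in increasing order, the set of modified indices is determined by the image sequence together with $X_0,Y_0$. This is the reason every modification is keyed to events readable from both chains.

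\textbf{Step 3: the analysis.} We bound
\[
\mb{E}|X_{\Tcouple}\oplus Y_{\Tcouple}| \le \Pr[\tau>\Tcouple] + \mb{E}[\#\text{disagreements created before }\tau,\ \text{still present at }\Tcouple].
\]
\begin{itemize}
\item[(a)] Since $X_0,Y_0$ are burned in on the ball of radius $\Delta^{1/10}$ around $z$, \cref{thm:SIMPLER-local-uniformity-discrete} and its refinement \cref{thm:local-uniformity-discrete} apply. A union bound over $O(\Delta^{\Delta^{1/10}})$-sized balls is not needed; we only need the vertices within distance $3$ of $z$ over $O(n)$ steps, at cost $Ce^{-\Delta/C}$. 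Under these bounds, $z$ has $\approx ke^{-\Delta/k}$ available colors, so it is fixed at rate $\gtrsim ke^{-\Delta/k}/(kn)$.
\item[(b)] By the Poisson local uniformity, for each $w\in N(z)$ the probability that neither $c_X$ nor $c_Y$ appears on $N(w)\setminus\{z\}$ is about $e^{-2\Delta/k}$. Singly-blocked cases are neutralized by Step 1. The rate of new disagreements is therefore about $\Delta e^{-\Delta/k}/(kn)$ per step, plus the rate at which temporary disagreements fail to heal.
\item[(c)] We show each temporary disagreement (at $u$, of the swapped-color type) heals before it spreads, except with probability $O(1/\Delta)$ or $O(\delta^2)$. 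Girth $\ge 7$ ensures that the relevant neighborhoods of $u$ and $z$ are tree-like and disjoint.
\end{itemize}
Since $k\ge(1+\delta)\Delta$ gives $ke^{-\Delta/k}>(1+\delta/2)\Delta e^{-\Delta/k}$, the expected disagreement count is dominated by a subcritical branching process. Summing over $\Tcouple=Cn$ steps with $C$ large then gives total expectation at most $1/3$.

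\textbf{Main obstacle.} I expect the main obstacle to be step (c) together with Step 2. The modifications reach into the future, so the event ``$w$ is singly blocked at time $t$'' is determined by updates that the coupling itself alters, and cascades of modifications could interfere. The first thing I would try is to restrict modifications to a locality window: act only when the singly-blocking vertex $u$ is unique and its last update lies within a bounded window of $O(n/\Delta)$ steps. Each modification would be allowed to touch at most one pending future update, and failure events would be absorbed into an error term. Local uniformity would then show these failure events have probability $o_\Delta(1)$ per neighbor, which is enough to preserve the $\delta$-margin.
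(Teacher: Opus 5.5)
\textbf{Main gap: the non-Markovian edit fails with constant probability.} As you set it up, the edit can be performed only with probability bounded away from $1$. Each failure falls back to Jerrum's map and creates a disagreement in exactly the singly-blocked case you are trying to save. That case has probability about $(1-e^{-\lambda})e^{-\lambda}$ with $\lambda=d(w)/k$. A failure rate $q$ therefore multiplies the birth rate by roughly $1+(1-e^{-\lambda})q$, so you need $q=O(\delta)$; the paper gets $O(\eps^{2/5})$. Your scheme loses constants in three places.
\begin{enumerate}
\item Putting $c_X$ on the blocker $u$ needs $c_X\in A(X_{\tau_u^{-}-1},u)$, and also that no neighbour of $u$ proposes $c_X$ during $u$'s epoch. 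That epoch has length $\Theta(n)$, while $u$'s up to $\Delta$ neighbours each propose $c_X$ at rate $1/(kn)$. Both conditions hold only with probability about $\exp(-d(u)(1+|I(u,t)|/n)/k)$. If you drop the second one, and do not redirect neighbour proposals of $c_Y$, the temporary disagreement at $u$ spreads with constant probability, contradicting your (c).
\item Insisting on a unique blocker discards another constant fraction, since the number of $c_Y$-holders on $N(w)\setminus\{z\}$ is roughly Poisson with mean $d(w)/k$.
\item Asking that $u$'s last update lie within $O(n/\Delta)$ steps makes the edit applicable only with probability $O(1/\Delta)$. The paper removes even the $O(n)$ window of Hayes--Vigoda because it already inflates the failure probability.
\end{enumerate}
The missing idea (\cref{def:local-NM-coupling}) is to separate where $c_b$ is removed from where $c_u$ is placed. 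Each blocker $w\in B$ is recoloured at $\tau_w^-$ to $\beta_w(X_{t-1}(\alpha(w)))\in\Exchange(w,t)\setminus H^*$. The colour $c_u$ is placed at $\tau^-_{\alpha(w)}$ on a \emph{different} neighbour $\alpha(w)$ for which $c_u$ is exchangeable. Neighbour proposals of the old colour during each epoch are redirected to the new one. Here $\alpha$ is the order-preserving matching of $\Swappable_{\mc P}(c_b,t)$ with $\Swappable_{\mc P}(c_u,t)$, sorted by $\deg\cdot(|I|/n+1)$, and $\beta_w$ is the order-preserving matching of exchangeable sets. A second application reverses both matchings, so the local map is an involution. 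Because $v_t$ has many candidate neighbours, \cref{claim:bernoulli-sequence-nice-event}, applied in the decoupled process $W^*$, shows $\alpha$ and $\beta$ are defined except with probability $O(\eps^{2/5})$.

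\textbf{Further gaps in Steps 2 and 3.} Bijectivity lets you neutralise only one orientation. When the colour blocked in $X$ is $c_X$ rather than $c_Y$, the mirror edit must still be applied and still yields a disagreement; otherwise its preimage collides with that of the good case, and the map is not injective. So ``singly-blocked cases are neutralised'' is false. Your rate $\Delta e^{-\Delta/k}$ is correct only for this reason, not via the reasoning in (b).

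Edits are also needed at every vertex of the growing discrepancy tree, not just at neighbours of $z$ before $\tau$. Each edit rewrites coordinates both before and after $t$. Keeping edits from interacting, and making the reverse construction repeat the same choices, is what the paper's machinery is for: the bounding chain $Z_t$, the predicate $\BC$ (acyclicity of $G[\mc P\cup\{v\}]$, no repropagation, $|\mc P|\le\exp(\exp(O(\Ccouple)))$), the set $\Avoid(t)$, and the invariance of exchangeable and swappable sets under an edit. ``Scanning times in increasing order'' does not supply any of this.

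Finally, the unweighted Hamming count has no negative drift on non-regular graphs. Local uniformity gives birth rate $\sum_{w\in N(p)}e^{-d(w)/k}$ against death rate $ke^{-d(p)/k}$. If $d(p)=\Delta$ while the neighbours of $p$ have small degree, the birth rate is close to $\Delta>ke^{-\Delta/k}$ whenever $k<1.76\Delta$. The paper instead tracks $\sum_{w\in D_t}e^{d(w)/k}$, for which \eqref{eq:main-blockers} gives weighted births $\approx d(p)$ against weighted deaths $\approx k$. It also handles the fact that $\mc G$ depends on the future, through $\mc F_t$-measurable surrogates $D_t^\circ$ and the supermartingale $R(t)/Q(t)$.
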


A more precise version of \cref{thm:SIMPLER-main-nonmarkovian} is stated in \cref{thm:main-nonmarkovian} in \cref{sec:nonmarkovian}, where the coupling is formally defined and the requisite burn-in properties are formalized.  We provide a high-level description of the coupling here.

We construct our global coupling $F$ by composing local non-Markovian couplings. 
For each time $1\le t\le \Tcouple$ we define a local coupling $f_t$ which may modify updates at multiple times, both earlier and later than $t$.
Because these transformations act beyond the current step, the resulting construction is referred to as a non-Markovian coupling.

The high-level idea of the non-Markovian coupling is as follows.  Suppose there is an update of vertex $v_t$ at time $t$ which potentially propagates a disagreement.  We will ensure our disagreements propagate in a tree-like manner (in order to apply our non-Markovian coupling) and hence there will be a unique neighbor $p\in N(v_t)$ which differs.  Let $\{c_b,c_u\}=\{X_{t-1}(p),Y_{t-1}(p)\}$ denote the disagree colors.  The non-Markovian coupling will apply in the case where one of the disagree colors, say $c_b$, appears in $X_{t-1}(N(v_t)\setminus\{p\})$, and hence is a blocked color; additionally, the other color $c_u$ does not appear among the neighbors $N(v_t)\setminus\{p\}$, and hence is an unblocked color.

In our non-Markovian coupling we create a pair of temporary disagreements in the neighborhood of $v_t$ so that in chain $Y_{t-1}$ the color $c_u$ appears and $c_b$ no longer appears (and hence in $Y_t$ color $c_u$ is now blocked and $c_b$ is unblocked, thus the chain $Y_t$ has the opposite behavior as $X_t$ with respect to this pair of colors).  Consequently, the coupled update $c_b$ in $X_t$ and $c_u$ in $Y_t$ is blocked in both chains, thereby preventing a disagreement that would occur under the standard coupling; this is the key mechanism enabling contraction in the near-critical regime, as outlined earlier.

Creating these temporary disagreements in a bijective manner is non-trivial.
Moreover, we must ensure that these temporary disagreements do not propagate.
We defer the detailed construction to \cref{sub:coupling-overview}.

Our construction differs significantly from Hayes and Vigoda~\cite{hayesvigoda-nonmarkovian}, where the non-Markovian coupling is applied either at all relevant times or not at all.
When $\Delta$ is constant, a non-Markovian update may fail (i.e., cease to be bijective) with constant probability.
Consequently, our global coupling, which is the composition of local couplings at each time $t$,
decides for each relevant time whether to apply the local non-Markovian coupling (when it is well-defined), or use the identity/Jerrum coupling if not. 

We discuss some more aspects of our construction and analysis, and how they contrast with \cite{hayesvigoda-nonmarkovian} in the next subsection.

\subsection{Comparison with previous work}
\label{sub:corrections}

We highlight several key issues in the construction and analysis of
\cite{hayesvigoda-nonmarkovian} and explain how our work addresses them.

\medskip

\paragraph{\bf Bounding chain}

To apply our non-Markovian coupling we need to ensure the associated updates are sufficiently disjoint from each other.  To achieve that we define a bounding chain $Z_t$ (\cref{sub:bounding-chain}) where each vertex has a subset of colors at each time, 
capturing the possible colors of $X_t$ and $Y_t$ under all coupling choices.  The set of vertices with more than one color at a time, denoted as $\mc{P}$ for {\em potential persistent
disagreements},  will be a superset of the disagreements between $X_t$ and $Y_t$ (excluding the temporary disagreements introduced during the non-Markovian coupling).  

Hayes--Vigoda utilize a different bounding chain, where the color sets $|Z_t(v)|$ are of size at most $2$. However, for the bounding chain to be sufficiently powerful to capture all potentially propagating disagreements, one must allow the color set sizes $|Z_t(v)|$ to grow, which we incorporate in our construction.
This presents an additional obstacle for maintaining that $|\mc{P}|$ is sufficiently small with high probability (see \cref{lem:P-total-size-bound}).

\medskip

\paragraph{\bf Validity of the coupling}

A central requirement of the non-Markovian framework is that the global
coupling map is bijective. In Hayes and Vigoda~\cite{hayesvigoda-nonmarkovian},
this property is argued via a more intricate global construction.
In contrast, our formulation yields a simpler and more transparent validity
argument: by composing local transformations that are each explicitly
bijective, we obtain a global coupling whose bijectivity follows directly.
This local structure is also essential for handling failures of individual
non-Markovian updates in the constant-degree regime. See \cref{sec:validity} for details. 

\medskip

\paragraph{\bf Measurability and drift analysis}

Our aim is to analyze the expected Hamming distance at time~$t$ given the coupled updates at all times prior to $t$. However the coupling is looking ahead (and potentially modifying updates) at times after $t$; this appears somewhat contradictory.  To overcome this, in Hayes--Vigoda they condition on the event that all non-Markovian updates succeed. 
However, the coupling in Hayes--Vigoda restricts non-Markovian updates to a smaller time interval $\Tmod=O(n)$ around the particular update, which leads to higher failure probability of non-Markovian updates than claimed in \cite[Theorem~17]{hayesvigoda-nonmarkovian}; the subsequent drift analysis is therefore not complete as stated.

We rectify these issues as a byproduct of our improved coupling, which eliminates the smaller $\Tmod$ window, and our enhanced analysis; see \cref{sec:analysis-coupling} for details. 

\medskip

\paragraph{\bf Local uniformity}
\label{sub:proof-overview-uniformity}

Let us briefly describe the proof overview of \cref{thm:SIMPLER-local-uniformity-discrete} (and its more formal version \cref{thm:local-uniformity-discrete}) which shows that after a burn-in period, the Metropolis dynamics exhibits certain local uniformity properties.  
Namely, we show that the distribution of colors in the neighborhood of a vertex
is well approximated for certain relevant statistics by the distribution induced by a random coloring of a tree
rooted at that vertex.

Similar results were proved for the heat-bath dynamics by Hayes~\cite{hayes2013local}.  We largely follow the proof strategy of \cite{hayes2013local}.  The main strategy is to use the continuous time dynamics, fix a distinguished vertex~$v$, and switch to a directed graph $G^* = G_\mr{in}(v,3)$ (see \cref{defn:directed-neighborhoods})
in which edges within $B_3(v)$ are oriented towards $v$ and all other edges are bidirected.  
Then, conditioning on $\mc{F}$, the $\sigma$-algebra generated by the configuration outside $B_2(v)$, the colors of the vertices $w\in N(v)$ become independent.
As in \cite{hayes2013local}, this auxiliary process allows us to analyze the local behavior around a fixed vertex $v$, after which we compare it to the original Glauber dynamics via a coupling argument.

However, analyzing the Metropolis dynamics introduces an additional difficulty.
For the continuous-time version of the heat-bath dynamics, each vertex $u$ has a Poisson clock of rate $1$, and when its clock ticks it chooses a uniform random color in $A(X_t,u)$.  For the Metropolis dynamics,
we instead give each vertex $u$ a Poisson clock of rate $|A(X_t,u)|/k$, and when the clock ticks we choose a uniform random color in $A(X_t,u)$; notice that this is the same process as the Metropolis dynamics but corresponds to ignoring rejected proposals.
However, information about the Poisson ringing of $u$ 
reveals information about the configuration in $N(u)$.
  To circumvent this dependency, we instead introduce another auxiliary process in which $u$ has a Poisson clock of rate $\mb{E}[|A(X_t,u)| \mid \mc{F}] /k$.  
  Under this auxiliary process, the events we analyze become conditionally independent given $\mc{F}$.
  The auxiliary process can then be coupled with the standard Metropolis dynamics using Poisson thinning.

  This auxiliary process allows us to recover the conditional independence structure used in Hayes's analysis. 
Using this framework, we establish the concentration bounds on $|A(X_t,u)|$ stated in \cref{thm:SIMPLER-local-uniformity-discrete} (and more generally in \cref{thm:local-uniformity-discrete}), see \cref{par:local-uniformity-overview} for more detailed discussion.
Similar auxiliary processes are also used to obtain conditional independence for various properties of our non-Markovian couplings; see \cref{sec:non-markovian-inequalities}.

\medskip

\paragraph{\bf Girth requirement.} The local uniformity argument requires tree-like structure up to radius $3$ (which translates to girth $g\geq 7$),
while the coupling analysis requires that the discrepancy region remains acyclic and its neighborhood remains acyclic (see \cref{rmk:girth}).
These constraints combine to yield the girth
$\ge 7$ condition in our main result, namely \cref{thm:main-mixing}.

\subsection{Notation}
We will use standard graph theory notation. For a graph $G = (V,E)$, a vertex $v \in V$ and a non-negative integer $r$, we let $B_r(v)$ denote the ball of radius $r$ around $v$, i.e.~the set of vertices which can be reached from $v$ in $r$ edges or fewer. We extend this to $T \subseteq V$ via $B_r(T) = \bigcup_{v \in T} B_r(v)$.  We also define the sphere $S_r(T) = B_r(T) \setminus B_{r-1}(T)$ and the neighborhood $N^r(T) = B_r(T) \setminus T$. For a subset $T\subseteq V$, we let $G[T]$ denote the induced graph on $T$.
For labelings $X,Y:V\to[k]$, write $X\oplus Y:=\{v\in V:X(v)\ne Y(v)\}$.

Throughout, we will let $[N]$ denote the interval $\{1,\dots, N\}$. We will also make use of asymptotic notation. For functions $f,g$, $f = O_\alpha(g)$ means that $f \le C_\alpha g$, where $C_\alpha$ is some constant depending on $\alpha$; $f = \Omega_\alpha(g)$  means that $f \ge c_\alpha g$, where $c_\alpha > 0$ is some constant depending on $\alpha$, and $f = \Theta_\alpha(g)$ means that both $f = O_\alpha(g)$ and $f = \Omega_\alpha(g)$ hold. For parameters $\varepsilon, \delta$, we write $\varepsilon \ll \delta$ to mean that $\varepsilon \le c(\delta)$ for a sufficient function $c$. A chain $\alpha \ll \beta \ll \gamma$ should be read from right to left.

\subsection{Outline of Paper}

In the following \cref{sec:nonmarkovian} we define our coupling, including the bounding chain, the local non-Markovian coupling, and the global coupling which is a composition of local couplings. We provide an overview of the construction in \cref{sub:coupling-overview}. Then in \cref{sec:validity} we prove that the coupling we constructed is in fact a valid coupling.  We then analyze the coupling in \cref{sec:analysis-coupling}, and conclude \cref{thm:main-nonmarkovian} (the formal version of \cref{thm:SIMPLER-main-nonmarkovian}) which shows that the coupling contracts from a ``burned-in'' configuration.  We prove the local uniformity properties, including \cref{thm:SIMPLER-local-uniformity-discrete}, in \cref{sec:local-uniformity}. In \cref{sec:non-markovian-inequalities}, we prove certain properties of the non-Markovian coupling which require the decoupling techniques from \cref{sec:local-uniformity} and are stated and used in \cref{sec:analysis-coupling}.  Finally, in \cref{sec:DFHV}, we combine the contracting coupling result from \cref{sec:analysis-coupling} with the local uniformity results in \cref{sec:local-uniformity} to conclude the main result of \cref{thm:main-mixing}, following the burn-in and contraction framework of Dyer--Frieze--Hayes--Vigoda \cite{DFHV}.

\subsection*{Acknowledgments}

The authors met at the Rocky Mountain Summer Workshop in Algorithms, Probability, and Combinatorics at Colorado State University.  V.J.~is partially supported by NSF grant DMS-2237646. C.M.~is partially supported by a Simons Dissertation Fellowship.  E.V.~is partially supported by NSF grant CCF~2147094.

The authors acknowledge the use of various versions of ChatGPT and Gemini for proofreading and copyediting. The mathematical content is entirely due to the authors.

\section{Construction of the Non-Markovian Coupling}
\label{sec:nonmarkovian}

In the next sections we will prove our main coupling result, which is stated informally in \cref{thm:SIMPLER-main-nonmarkovian}, and is stated formally in \cref{thm:main-nonmarkovian} below.  Recall, \cref{thm:SIMPLER-main-nonmarkovian} says that for a ``burned-in'' pair of initial states disagreeing at a single vertex, there is a coupling of length $O(n)$
so that the coupled pair contracts in terms of the expected Hamming distance.  This coupling result is the main ingredient in the subsequent fast mixing results, see~\cref{sec:DFHV}. 

The goal of this section is to construct the appropriate coupling.  In \cref{sub:coupling-overview} we provide an overview of the coupling.  We begin by presenting the bounding chain in \cref{sub:bounding-chain}.  Then we present the local non-Markovian coupling in \cref{sub:non-markovian-local-coupling}.  Finally, we compose a set of local couplings to obtain our global coupling $F$ in \cref{sub:global}.

The proof that the coupling is, in fact, a valid coupling (which we do by showing it is a bijective map) is presented in \cref{sec:validity}.
We then prove that the coupling contracts, with respect to the expected Hamming distance, in \cref{sec:analysis-coupling}, which will complete the proof of \cref{thm:main-nonmarkovian}.

In order to apply path coupling with respect to Hamming distance for colorings, we need to expand the state space as is done in all prior works in the literature using path coupling for colorings.  We consider the auxiliary state space:
\[
    \widehat\Omega := [k]^V,
\]
which consists of all labelings, not only the (proper) colorings. The Metropolis update rule extends in a straightforward manner to $\widehat\Omega$.  From a labeling $X_t\in\widehat\Omega$, choose a vertex $u\in V$ and color $c\in [k]$ uniformly at random.  For all $w \neq u$, we set $X_{t+1}(w) = X_{t}(w)$, and we update the color of $u$ as follows:
\[ X_{t+1}(u) = \begin{cases}
c & \mbox{if } c \notin X_t(N(u)) \\
X_t(u) & \mbox{if } c \in X_t(N(u)) 
\end{cases}
\]

Notice that if $X_t\in\Omega$ (i.e., $X_t$ is a proper coloring) then $X_{t+1}\in\Omega$. Thus, $\Omega$ is a closed class for the chain on $\wh{\Omega}$, and the restriction of the chain to $\Omega$ is exactly the original Metropolis Glauber dynamics, which has the uniform distribution on $\Omega$ as its unique stationary distribution.  Therefore, any mixing-time upper bound proved for the extended update rule, when started from states in $\Omega$, yields the same mixing-time upper bound for the original chain. The expanded state space $\wh{\Omega}$ is used only in the path-coupling argument.

We now present the formal version of \cref{thm:SIMPLER-main-nonmarkovian}. First, we need to quantify what we mean by ``burned-in''. For a labeling $X \in \wh{\Omega}$ and a vertex $u \in V$, we write
\[A(X,u) := [k] \setminus X(N(u))\]
for the set of colors available at $u$ under $X$. 

\begin{definition}
\label{def:local-uniformity}
Let $G = (V,E)$ be a graph of maximum degree $\Delta$. 
We say that a labeling $X \in \wh{\Omega}$ is $\eps$-uniform at $z^* \in V$ for radius $R$  if for all $v \in B_R(z^*)$, the following three conditions hold. 
\begin{equation}
\label{eq:main-number-colors}
\big| |A(X,v)| - k e^{-d(v)/k} \big| \leq \eps k; 
\end{equation}
for all colors $c \in [k]$,
\begin{equation}\label{eq:main-blockers} \left| \paren{\sum_{\substack{w \in N(v)  \\ c \in A_v(X,w)}} e^{d(w)/k}} - d(v) \right| < \eps\Delta; \end{equation}
where $A_v(X,w) = [k] \setminus X(N(w) \setminus \{v\})$
and for all $c \in [k]$,
\begin{equation}
\label{eq:two-nbd-uniformity}
|X^{-1}(c) \cap B_2(v)| \leq 400\Delta.
\end{equation}
\end{definition}

\begin{remark}
    If neighbors of a vertex $v$ choose colors uniformly and independently, the probability a color $c$ is missing from $N(v)$ is $(1-1/k)^{d(v)} \approx e^{-d(v)/k}$. Summing over $k$ colors yields the expectation $k e^{-d(v)/k}$ for the number of missing colors from $N(v)$. \eqref{eq:main-number-colors} requires that the number of available colors at $v$ is approximately equal to this quantity. \eqref{eq:main-blockers} requires a finer-grained control over the local structure and demands.
\end{remark}

\begin{definition}
\label{def:event-LU}
Let $G = (V,E)$ be a graph of maximum degree $\Delta$. For a labeling $X \in \wh{\Omega}$ and an update sequence \[ \vecsigma=((v_1,c_1),\dots,(v_{T},c_{T}))\]
we say that $\LU(X_0, \vec{\sigma}, \eps, z^*)$ holds if for all $0\leq t \leq T$, $X_{t}$ is $\eps$-uniform at $z^* \in V$ for radius $\Delta^{1/10}$.
\end{definition}

\begin{theorem}
\label{thm:main-nonmarkovian}
For every $\delta > 0$, there exists constants
\[1/\Delta_0 \ll \eps \ll 1/\Ccouple \ll \delta\]
such that the following holds. 
Let $G=(V,E)$ be a graph on $n$ vertices of maximum degree $\Delta\ge \Delta_0(\delta)$ and girth at least $7$, and let $k\ge (1+\delta)\Delta$.
Let $X_0,Y_0\in\wh\Omega$ be neighboring labelings with unique disagreement at vertex $z^* \in V$. 
Set
\[
    \Tcouple := \Ccouple\,n.
\]
Then there exists a $T_{\mathrm{cp}}$-step coupling of the Metropolis dynamics such that
\[
    \mb E\bigl[|X_\Tcouple \oplus Y_\Tcouple| \cdot \mbf1\{\LU(X_0,\vecsigma,\eps,z^*) \}\bigr] \le \frac13.
\]
\end{theorem}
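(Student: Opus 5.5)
The proof has three parts: construct the coupling $F$ as a composition of local bijections, check that it is valid, and bound the expected Hamming distance at time $\Tcouple$ by a drift analysis carried out on the event $\LU(X_0,\vecsigma,\eps,z^*)$.

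\textbf{Construction.} I would begin with a bounding chain $Z_t$. Each vertex carries a set of colors $Z_t(v)$ that contains every color $X_t(v)$ or $Y_t(v)$ could take under any choice of local coupling. Let $\mc P_t=\{v:|Z_t(v)|\ge 2\}$ be the set of potential persistent disagreements. Girth $\ge 7$, together with the fact that with high probability $\mc P_t$ stays a tree of size $\Delta^{o(1)}$ inside $B_{\Delta^{1/10}}(z^*)$, guarantees that an update at a vertex $v_t$ adjacent to $\mc P_t$ sees a unique disagreeing parent $p$. At such a time $t$, suppose exactly one disagree color $c_b$ is blocked in $N(v_t)\setminus\{p\}$ and the other color $c_u$ is not. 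The local map $f_t$ then finds a neighbor $w$ of $v_t$ whose most recent recoloring (to $c_b$) and a partner neighbor $w'$ (eventually recolored to $c_u$) can have their update colors swapped in $Y$. It also modifies the future updates that would let these temporary disagreements propagate, for example the next updates of their neighbors that attempt the swapped colors, within a window of length $O(n)$. The map $f_t$ is defined as an explicit involution on update sequences whenever the required configuration exists and the affected times are disjoint from those used by other local maps. Otherwise $f_t$ falls back to the Jerrum coupling at time $t$. Setting $F=f_{\Tcouple}\circ\cdots\circ f_1$ and checking that each $f_t$ is a bijection, with the decision of whether $f_t$ applies determined by data that $f_t$ leaves invariant, shows that $F$ is a bijection. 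This gives a valid coupling.

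\textbf{Drift analysis.} Under $\LU$, the expected number of available colors at a vertex is $ke^{-d/k}(1\pm\eps)$, which I take from \eqref{eq:main-number-colors}. I then compute the one-step change in the expected number of persistent disagreements:
\begin{itemize}
\item a disagreement at $p$ is removed at rate $\approx \frac1n e^{-\Delta/k}$;
\item a new disagreement is created at a neighbor $w$ only when both disagree colors are unblocked in $N(w)\setminus\{p\}$, because the non-Markovian move neutralizes the singly-blocked case. By \eqref{eq:main-blockers} summed over $N(p)$, this contributes rate at most $\frac1{nk}\sum_w e^{-2d(w)/k}\cdot(\cdots)\lesssim \frac{\Delta}{nk}e^{-\Delta/k}(1+\eps)$;
\item local maps that fail, and temporary disagreements that are not repaired, contribute an error term of order $O(\eps+\Delta^{-c})/n$ per disagreement.
\end{itemize}
Since $k\ge(1+\delta)\Delta$, the net drift is $-\Omega(\delta)e^{-1}/n$ per disagreement. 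Over $\Ccouple n$ steps the expected number of disagreements therefore decays like $e^{-\Omega(\delta)\Ccouple}\le 1/6$. Adding the expected number of temporary disagreements still present at time $\Tcouple$ (bounded by $1/6$ by choosing $\Ccouple$, then $\eps$ and $1/\Delta_0$, small) yields the bound $1/3$. Because the indicator of $\LU$ sits inside the expectation, I can work with a stopped process: I stop as soon as uniformity fails or $\mc P$ leaves the ball, and bound the probability of leaving the ball by a branching-process estimate on $|\mc P|$.

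\textbf{Main obstacle.} The hardest step is measurability. The local map $f_t$ looks at future updates, so conditioning on the past does not make the drift computation legitimate. My approach is to bound the probability that a given non-Markovian move fails, i.e.\ that a conflicting update occurs in its future window, by a quantity depending only on the local configuration at time $t$. I would get this uniform bound via the auxiliary Poisson-clock decoupling with rates $\mb E[|A(X_t,u)|\mid\mc F]/k$, as in the local uniformity section, which makes the relevant events conditionally independent. Failures can then be charged as a small additive error in the supermartingale rather than conditioned away.
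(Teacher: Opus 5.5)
Your overall architecture matches the paper: a bounding chain, a composition of local involutions whose applicability depends only on data they leave invariant, and Poisson-clock decoupling for the future-dependent events. The drift analysis, however, does not go through as written.

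\textbf{The drift computation.} First, you misdescribe which configurations create disagreements. Because $f_t$ must be an involution, the non-Markovian move pairs the two singly-blocked configurations with each other. It therefore neutralizes only the one in which the color proposed in $X$, namely $c_Y=Y_{t-1}(p)$, is blocked in $N(v_t)\setminus\{p\}$. In the mirror configuration ($c_Y$ unblocked, $c_X$ blocked) a disagreement is still created. So births at $w$ occur whenever $c_Y$ is unblocked, with probability $\approx e^{-d(w)/k}$, not $e^{-2d(w)/k}$. (With $e^{-2d/k}$, your inequality would hold with a margin independent of $\delta$, which signals the error.)

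Second, and more seriously, unweighted Hamming distance has no per-disagreement negative drift on irregular graphs of maximum degree $\Delta$. Suppose $d(p)=\Delta$ but the neighbors of $p$ have bounded degree. Then the birth rate from $p$ is $\approx \frac{1}{nk}\sum_{w\in N(p)}e^{-d(w)/k}\approx \frac{\Delta}{nk}$, while the death rate is $\approx \frac1n e^{-\Delta/k}$, and $\Delta/k>e^{-\Delta/k}$ when $k$ is close to $\Delta$. Your bound $\lesssim\frac{\Delta}{nk}e^{-\Delta/k}$ does not follow from \eqref{eq:main-blockers}. That condition controls the weighted sum $\sum e^{d(w)/k}$, and unweighted it only gives $d(p)+\eps\Delta$. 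The paper instead tracks the potential $\sum_{w\in D_t}e^{d(w)/k}$. For this potential, \eqref{eq:main-blockers} gives weighted births at most $(d(p)+O(\eps\Delta))/(nk)$, and \eqref{eq:main-number-colors} gives weighted deaths at least $(1-e\eps)/n$. The margin is then $1-d(p)/k\ge\delta/(1+\delta)$ for every $p$, and since all weights are at least $1$ this also bounds the Hamming distance.

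\textbf{Failures and measurability.} Your treatment of failures also falls short. For $t\lesssim n$, a constant fraction of blockers $w$ have not been successfully recolored since time $0$. There is then no update $\tau_w^-$ to edit, so the move is unavailable with constant probability. The error is therefore not $O(\eps+\Delta^{-c})/n$, and there is no negative drift in this phase. The paper inserts a buffer of $\Cmod n$ steps (\cref{lem:probability-nonmarkovian-fails}), allows growth by a factor $e^{e\Cmod}$ there, and takes $\Ccouple\gg\Cmod$.

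Moreover, a failed move is not a one-step additive error. It creates a persistent disagreement at $v_t$ whose later spread, and the later local maps it affects, are again future-dependent; charging it to the supermartingale just relocates the measurability problem. The paper proceeds as follows:
\begin{enumerate}
\item It builds an adapted proxy $D_t^\circ$ by always following the successful branch (\cref{claim:measurability-of-Y}).
\item It runs a supermartingale for $\sum_{w\in D_t^\circ}e^{d(w)/k}$ with $\mc F_t$-measurable multipliers.
\item It inserts $\mathbf 1\{\mc U\cap\mc G\}$ only at time $\Tcouple$.
\item It handles $\ol{\mc G}$ globally: this event has probability at most $\eps^{1/3}$ (\cref{prop:non-markovians-all-succeed}), and on it only the crude bound $|\mc P|+|\Temp_{\le\Tcouple}|$, which has exponential tails, is used.
\end{enumerate}

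\textbf{Construction details needed for bijectivity.} The move must recolor every neighbor of $v_t$ carrying $c_b$ (the whole set $B$, matched via $\alpha$ and color bijections $\beta_w$). Otherwise the image is doubly blocked rather than the mirror configuration, and the reverse map cannot undo it. The interior edits must also span the full epochs $I(w,t)$ rather than a fixed $O(n)$ window.
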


\subsection{Coupling Overview}
\label{sub:coupling-overview}

Fix a pair of labelings $X_0,Y_0 \in \wh{\Omega}$ that differ only at a single vertex $z^*$.
Let $\Tcouple$ denote the coupling time from \cref{thm:main-nonmarkovian}.  
Let \[ \vecsigma=((v_1,c_1),\dots,(v_{\Tcouple},c_{\Tcouple}))
\]
denote the update sequence defining $X_0,\dots,X_{\Tcouple}$.  
Thus, in the
transition $X_{t-1}\rightarrow X_{t}$ we recolor~$v_t$ to color $c_t$ if $c_t\notin X_{t-1}(N(v_t))$.

Given $X_0,Y_0$ and the update sequence $\vecsigma$, we will define a coupled update sequence:
\[ \vecsigma'=((v_1,c'_1),\dots,(v_{\Tcouple},c'_{\Tcouple})),
\]
which will define $Y_0,\dots,Y_{\Tcouple}$.  Thus the two chains always update the same vertex $v_t$ at time $t$; only the
proposed colors may differ.

Before defining $\vecsigma'$, we first construct an auxiliary process
$Z_t$, called the {bounding chain} (\cref{sub:bounding-chain}). This process depends only on
$X_0,Y_0$, and the original sequence $\vecsigma$.
It is not itself a labeling: rather, for each vertex $v$, the value $Z_t(v)$ is a set of colors, i.e.
$Z_t(v)\subseteq [k]$.
The role of $Z_t$ is to identify the region where disagreements between the two
chains may persist. More precisely, once the coupling has been defined, we will
show that (see \cref{prop:main-global-induction-rewrite})
\[
\{X_t(v),Y_t(v)\}\subseteq Z_t(v)
\]
for every vertex $v$ and every time $t\le \Tcouple$, except for the temporary
disagreements that are intentionally created during certain non-Markovian edits.
Consequently, apart from these temporary disagreements, any actual disagreement
between $X_t$ and $Y_t$ can occur only at a vertex $v$ with $|Z_t(v)|>1$.

This motivates the definition
\[
\mc{P}_t:= \{v\in V:\ |Z_t(v)|>1\},
\]
and we write
\[
\mc{P}:=\bigcup_{t\le \Tcouple}\mc{P}_t, \qquad \mc{P}_{\leq t} = \bigcup_{s\leq t} \mc{P}_s.
\]
The set $\mc{P}_t$ should be viewed as a set of {\em potential persistent
disagreements} at time $t$. The key point is that $\mc{P}_t$ is determined before we define
the coupled sequence $\vecsigma'$: it is built only from $X_0,Y_0$, and the update sequence $\vecsigma$, and it is designed to dominate the disagreements in every coupling
that we will later allow (again, ignoring the temporary disagreements deliberately introduced
inside a non-Markovian edit.)

The geometry of $\mc{P}$ determines whether we attempt the non-Markovian
construction. If the induced subgraph on $\mc{P}$ together with its local
neighborhood contains a cycle, then we do not use any non-trivial edits: we
simply take the identity coupling at all times, i.e.,
\[
\vecsigma'=\vecsigma.
\]

If, on the other hand, the subgraph is tree-like, 
then we use a non-trivial coupling.  Formally, later (\cref{sub:global}) we will track a set $D_t \subseteq \mc{P}_t$ of actual persistent
disagreements. A time $t$ is potentially propagating if the updated vertex
$v_t$ lies on the boundary of the current disagreement set, say, with unique
neighbor $p \in D_{t-1}$, and the proposed color $c_t$ is exactly the disagreeing
color across that edge.
In this situation the disagreement at $p$ can spread to $v_t$. These are the times at which we
consider a non-Markovian edit. In particular, they are among the times when
the bounding chain indicates that the set of potential disagreements may grow.

At all other times we use simpler local couplings. If $v_t$ is far from the
current disagreement set, or if $v_t$ already belongs to it, then we use the
identity coupling. If $v_t$ is adjacent to the current disagreement set but the
update is not potentially propagating, then we use the standard one-step
Jerrum coupling at time $t$; this modifies only the color choice at time $t$
and leaves all other times unchanged. Even at a potentially propagating time,
we apply the non-Markovian map only when an additional local well-definedness
condition is satisfied; otherwise we again fall back to Jerrum's coupling.

In summary, the full coupling will be defined as a composition of local maps $f_t$.
There are three kinds of local maps to keep in mind. The first is the
identity coupling, which leaves the color update unchanged. The second is a
standard one-step coupling, namely Jerrum's coupling, where the coupled color
choice at time $t$ is determined from the update at time $t$ and the past
history. The third is our non-Markovian coupling: at certain exceptional times, $f_t$ reads (and writes to) more coordinates than just $c_t$, so we might change the color choices $c_{t'}$ for times $t'\neq t$ and some of these times might be in the future $t'>t$. However, none of these
local maps ever changes the vertex-update sequence $(v_t)$.

We will define the full coupling
\[
F=F_{X_0,Y_0}:(V\times [k])^{\Tcouple}\to (V\times [k])^{\Tcouple}
\]
by composing these local maps at the appropriate times.

\subsection{Constructing the bounding chain}
\label{sub:bounding-chain}
To apply our non-Markovian coupling, we want to ensure that the set of vertices involved in the non-Markovian updates induce an essentially tree-like subgraph.  To this end, we will now introduce a ``bounding chain'' that defines a set $\mc{P}$ which is a superset of all discrepancies that will occur between the subsequently coupled chains $(X_t)$ and $(Y_t)$ (ignoring temporary disagreements).  We will then define a predicate $\BC(X_0,Y_0,\vecsigma)$  which is $\tt{True}$ if certain conditions are satisfied (roughly, the discrepancy region $\mc{P}$ and its local neighborhood evolve in a tree-like manner), and is $\tt{False}$ otherwise.

Let $X_0, Y_0\in\wh{\Omega}$ where $X_0\oplus Y_0=\{z^*\}$ 
for some $z^*\in V$, and let $\vecsigma=\vecsigma_{\Tcouple}=((v_1,c_1),\dots,(v_{\Tcouple},c_{\Tcouple}))$
denote the update sequence defining $X_0,\dots,X_{\Tcouple}$.  For each such $X_0,Y_0,\vecsigma$, we will define a bounding chain $(Z_t)$ where for all $v\in V, 0\leq t\leq\Tcouple$, 
\[
Z_t(v)\subseteq [k].
\]

   Let $Z_0$ be defined by $Z_0(v) = \{ X_0(v), Y_0(v)\}$ for all $v\in V$. In particular, $|Z_0(z^*)| = 2$ and $|Z_0(v)| = 1$ for all $v\neq z^*$.  

   For $t>0$ we define $Z_t$ inductively as follows.
   For all $v\in V$, we decompose the colors $[k]$ into three disjoint sets $A(Z_{t-1},v),B(Z_{t-1},v),H(Z_{t-1},v)$ which correspond to the available, blocked, and potentially hazardous colors defined as follows:
    \begin{align*} 
    A(Z_{t-1},v) & = [k] \setminus \bigcup_{w \in N(v)} Z_{t-1}(w) = \{c\in [k]: c\notin Z_{t-1}(w)\mbox{ for any }w\in N(v)\},
    \\
         H(Z_{t-1},v) & = \bigcup_{\substack{w \in N(v): \\ |Z_{t-1}(w)| > 1}} Z_{t-1}(w),
         \\
         B(Z_{t-1},v) &= [k]\setminus (A(Z_{t-1},v)\cup H(Z_{t-1},v)).  \text{ Thus, } B(Z_{t-1},v) \subseteq \bigcup_{\substack{w \in N(v):\\ |Z_{t-1}(w)|=1}} Z_{t-1}(w)
    \end{align*}
      We emphasize that hazardous colors take precedence over blocked colors because $H(Z_{t-1},v)$ is
meant to record the colors through which unresolved uncertainty can propagate: if $v$ has a neighbor $w$ such that $Z_{t-1}(w) = \{R,B\}$ and another neighbor $w'$ such that $Z_{t-1}(w') = \{R\}$, then color $R$ is included in $H(Z_{t-1},v)$ 
    and considered hazardous, not blocked. This is a deliberate over-approximation, and it ensures that later
local couplings can swap between colors carried by a discrepancy without
changing the evolution of the bounding chain. 
    
    We define $Z_t$, for $1\leq t \leq \Tcouple$, based on the update $(v_t,c_t)$ as follows:
    \begin{enumerate}
       \item For all $w\neq v_t$, set $Z_t(w)=Z_{t-1}(w)$.
        \item If $c_t\in A(Z_{t-1},v_t)$, set $Z_{t}(v_t) = \{c_t\}$.
        \item If $c_t \in B(Z_{t-1},v_t)$, set $Z_{t}(v_t) = Z_{t-1}(v_t)$.
        \item If $c_t\in H(Z_{t-1},v_t)$, do the following.
        \begin{enumerate}
            \item If $v_t \in \mc{P}_{{t-1}}$, set $Z_{t}(v_t) = Z_{t-1}(v_t)$.
            \item If $v_t \notin \mc{P}_{{t-1}}$, then choose
            \[ { w \in \{ u \in N(v_t) \cap \mc{P}_{t-1} : c_t \in Z_{t-1}(u) \} } \]
            according to some fixed, but otherwise arbitrary, total order.  Set $Z_{t}(v_t) = Z_{t-1}(w) \cup Z_{t-1}(v_t)$.
        \end{enumerate}
    \end{enumerate}

    For an intuitive understanding of these update rules, note that the sets $A(Z_{t-1},v)$, $B(Z_{t-1},v)$, and $H(Z_{t-1},v)$
represent three different levels of certainty about what happens if the color
$c_t$ is proposed at the vertex $v_t$. If $c_t \in A(Z_{t-1},v_t)$, then no neighbor of $v_t$ can possibly use
$c_t$, according to the information recorded in $Z_{t-1}$.  Thus the proposal
is certainly legal, so after the update the color of $v_t$ is completely
determined, and we may collapse to $
Z_t(v_t)=\{c_t\}$. If $c_t \in B(Z_{t-1},v_t)$, then $c_t$ is not available, but it also does not
appear in the color set of any unresolved neighbor.  In this case no new uncertainty is created at $v_t$, so we simply
leave its set of possible colors unchanged:
$
Z_t(v_t)=Z_{t-1}(v_t).
$ The interesting case is when $c_t \in H(Z_{t-1},v_t)$.  By definition, this
means that some neighbor $w$ of $v_t$ has $|Z_{t-1}(w)|>1$ and lists
$c_t$ as a possible color.  Thus the proposal at $v_t$ interacts with an
already unresolved part of the configuration.  This is precisely the mechanism
by which uncertainty can spread from one vertex to another.  In this situation
we deliberately over-approximate: rather than trying to determine the exact
outcome from the partial information in $Z_{t-1}$, we allow $v_t$ both to keep
any of its previous possible colors and also to acquire any color that is
currently carried by an unresolved neighbor.

The purpose of the bounding chain is to track the region where
disagreements between the coupled chains may persist.
We therefore define the set of potential persistent disagreements using $Z_t$ as follows. For $0\leq t\leq\Tcouple$, let 
\[  \mc{P}_t = \{v\in V: |Z_t(v)|>1\},
\]
and 
\[
\mc{P} = \cup_{0\le t\le \Tcouple} \mc{P}_t, \qquad \mc{P}_{\leq t} = \bigcup_{s\leq t}\mc{P}_s.
\]
Note, $\mc{P}_0=\{z^*\}$ where $z^*$ is the initial disagreement between $X_0$ and $Y_0$ and the set $\mc{P}_{\leq s}$ ``grows'' from $z^*$. The next condition captures that this growth happens in a tree-like manner, outwards from $z^*$.

For the bounding chain $Z_t$ on inputs $X_0,Y_0,\vecsigma$, we set $\BC(X_0,Y_0,\vecsigma)=\tt{True}$ if the following properties hold:
    \begin{enumerate}
        \item
        $G[\mc{P}]$
        is acyclic.  Furthermore, there is no vertex $v$ so that $G[\mc{P} \cup \{v\}]$ has a cycle.
        \item $\forall\, 1\leq t\leq\Tcouple, v_t\in \cup_{s<t} \mc{P}_s \implies c_t\notin H(Z_{t-1},v_t)$. 
        \item We have the size bound $|\mc{P}| \leq \exp(\exp(O(\Ccouple)))$, where the implicit constant in $O(\cdot)$ is universal, independent of any parameters.
    \end{enumerate}
    Otherwise, 
we set $\BC(X_0,Y_0,\vecsigma)=\tt{False}$.

The condition $\BC(X_0,Y_0,\vecsigma)=\texttt{True}$ singles out the update
sequences for which the potential disagreement region evolves in a controlled
way. Condition~(1) is a geometric requirement. 
This condition enforces a local tree-like quality to ensure that our discrepancies do not interact.
Condition~(2) is a dynamical requirement.  As the update rule above shows, a
hazardous update is the only kind of update that can enlarge the bounding chain vertex. The implication
\[
v_t\in \bigcup_{s<t}\mc P_s \implies c_t\notin H(Z_{t-1},v_t)
\]
says that if the color $c_t$ is potentially hazardous then the updated vertex was not a discrepancy at an earlier time, hence a vertex $v_t$ already in the discrepancy set cannot be updated with a hazardous color.  Intuitively, once a vertex has entered
$\mc P$, later updates at that vertex may resolve its uncertainty or leave it
unchanged, but they are not allowed to make it hazardous again.  Thus $\mc{P}_{\leq t}$
grows outward from $z^*$ in a ``one-pass manner'', rather than repeatedly
re-propagating through vertices that are already in the discrepancy set. Finally, condition (3) ensures that the bounding chain does not grow too rapidly, and will be used in \cref{sec:analysis-coupling} to show that our non-Markovian coupling is contractive. 

\begin{fact}
\label{lem:bc-basic-geometry-rewrite}
Assume $\BC(X_0,Y_0,\vec\sigma)=\tt{True}$.
\begin{enumerate}
\item For every $t$, the induced subgraph $G[\mc P_{\le t}]$ is connected.
\item If $v_t\in \mc P_t\setminus \mc P_{\le t-1}$, then $N(v_t)\cap \mc P_{t-1}$ is a singleton.
\end{enumerate}
In particular, every vertex that first enters $\mc P$ has a unique parent in the previously constructed set.
\end{fact}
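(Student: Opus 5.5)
We argue by induction on $t$, using only the update rules of the bounding chain and conditions (1)--(2) of $\BC$. The key structural observation is that a vertex's color set can grow only in case 4(b). In cases 2 and 3 we have $|Z_t(v_t)|=1$ or $Z_t(v_t)=Z_{t-1}(v_t)$, and in case 4(a) nothing changes. Hence if $v_t\in\mc P_t\setminus\mc P_{\le t-1}$, then $|Z_{t-1}(v_t)|=1$ and $|Z_t(v_t)|>1$, so case 4(b) must have applied. In particular $c_t\in H(Z_{t-1},v_t)$, and $v_t$ has some neighbor $w\in N(v_t)\cap\mc P_{t-1}$ with $c_t\in Z_{t-1}(w)$.

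For part (1), note first that $\mc P_{\le 0}=\{z^*\}$ is connected. For the inductive step, consider the two possibilities at time $t$. If $v_t\in\mc P_{\le t-1}$, or $v_t\notin\mc P_t$, then $\mc P_{\le t}=\mc P_{\le t-1}$. Otherwise $v_t$ is a new vertex, and by the observation above it has a neighbor $w\in\mc P_{t-1}\subseteq\mc P_{\le t-1}$. Adding a vertex adjacent to a connected set keeps it connected.

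For part (2), existence of a neighbor in $N(v_t)\cap\mc P_{t-1}$ is exactly the observation above. For uniqueness, suppose $w_1\neq w_2$ both lie in $N(v_t)\cap\mc P_{t-1}$. By part (1), $G[\mc P_{\le t-1}]$ is connected, so there is a path from $w_1$ to $w_2$ inside $\mc P_{\le t-1}$. This path avoids $v_t$, since $v_t\notin\mc P_{\le t-1}$. Together with the edges $v_tw_1$ and $v_tw_2$, it forms a cycle in $G[\mc P_{\le t-1}\cup\{v_t\}]\subseteq G[\mc P]$. This contradicts condition (1) of $\BC$; already the acyclicity of $G[\mc P]$ suffices, since $v_t\in\mc P$. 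The ``in particular'' then follows: the unique element of $N(v_t)\cap\mc P_{t-1}$ is the parent.

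The only step requiring care is the observation that membership in $\mc P$ can be newly acquired only through case 4(b). This needs $|Z_{t-1}(v_t)|=1$ whenever $v_t\notin\mc P_{\le t-1}$, which holds by definition, since $v_t\notin\mc P_{t-1}$. Condition (2) of $\BC$ is not needed for this fact. One point worth recording is that uniqueness holds even among neighbors in $\mc P_{\le t-1}$, not just in $\mc P_{t-1}$; the cycle argument above gives this stronger statement as well.
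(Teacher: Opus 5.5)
Your proof is correct and follows essentially the same argument as the paper: a vertex can newly enter $\mc P$ only through a hazardous update (rule 4(b)), which supplies a neighbor in $\mc P_{t-1}$ and gives both connectivity and existence. Two such neighbors would close a cycle through a path in the connected set $G[\mc P_{\le t-1}]$, contradicting condition (1) of $\BC$. Your explicit check that only rule 4(b) can enlarge $Z_t(v_t)$, and your remark that uniqueness holds even among neighbors in $\mc P_{\le t-1}$, are small refinements that the paper leaves implicit.
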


\begin{proof}
For part~(1), note that $\mc P_{\le 0}=\{z^*\}$ is connected. Suppose inductively that $\mc P_{\le t-1}$ is connected. If $\mc P_t\subseteq \mc P_{\le t-1}$ there is nothing to prove. Otherwise let $x\in \mc P_t\setminus \mc P_{\le t-1}$. Then the update at time $t$ is hazardous, so by definition of the bounding chain there exists a neighbor $y\in N(x)$ with $|Z_{t-1}(y)|>1$, i.e.
$y\in \mc P_{t-1}\subseteq \mc P_{\le t-1}$. Thus every new vertex is attached by an edge to the previously constructed set, and connectivity is preserved.

For part~(2), existence of at least one neighbor in $\mc P_{t-1}$ follows from the same observation. For uniqueness, suppose that $v_t$ had two distinct neighbors $p,q\in \mc P_{t-1}$. By part~(1), the graph induced by $\mc P_{\le t-1}$ is connected, so there is a path in $G[\mc P_{\le t-1}]$ from $p$ to $q$. Together with the two edges $pv_t$ and $qv_t$, this creates a cycle in $G[\mc P \cup \{ v_t \}]$, contradicting Condition~(1) in the definition of $\BC$.
\end{proof}

\subsection{Local non-Markovian coupling}\label{sub:non-markovian-local-coupling}

The bounding chain from \cref{sub:bounding-chain} identifies the region
$\mc P$ where persistent discrepancies may potentially occur and, when
$\BC(X_0,Y_0,\vecsigma)=\texttt{True}$, guarantees that this region grows
outward from the initial discrepancy in a controlled tree-like manner.
We now define the basic local operation that will be used to try to prevent a propagation
in the potential disagreement set $\mc{P}_t$ from becoming a propagation in the actual disagreement set $X_t \oplus Y_t$. The resulting local
transformation, denoted $\NM_{t,H^*}$, is the basic building block of the
global coupling constructed later.\\

We begin with some preliminary definitions.  For a set $S \subset V$ and $v\in V$, let $N_S(v)$ be the neighborhood of $v$ in $G[(V \setminus S) \cup \{v\}]$, and similarly define $N_S^2(v)$.  Notice that $N_S(v) = N(v) \setminus S$ but $N_S^2(v) \ne N^2(v) \setminus S$ in general.  For simplicity, for a vertex $w$, we will write $N_w(v) = N_{\{w\}}(v)$.

For a vertex $w\in V$ and time $0<t\leq\Tcouple$, define the last and next successful updates of $w$ as follows: \begin{align*}
    \tau_w^- &= \tau_w^-(t) := \max\{s:1\leq s \leq t,v_s=w, c_s \in A(X_{s-1},v_s) \} 
\\
\tau_w^+ & = \tau_w^+(t) := \min\{s: t < s \leq \Tcouple,v_s=w, c_s \in A(X_{s-1},v_s) \},
\end{align*}
whenever the sets are nonempty, and take $\tau_w^- = 0$ or $\tau_w^+ = \Tcouple+1$ otherwise.  We define the update epoch of $w$ around time $t$ by
\[
I(w,t) = 
\begin{cases} 
[\tau_w^-,\tau_w^+)
 & \mbox{if } \tau_w^- > 0,  
 \\
 \mbox{undefined} & \mbox{otherwise.}
 \end{cases}
\]
Note that $t \in I(w,t)$ and $X_s(w)$ is constant (in particular, equal to $X_t(w)$) for all $s \in I(w,t)$. We also define the punctured interval
\[I^{\circ}(w,t) := I(w,t)\setminus \{t\}.\]

Fix initial labelings $X_0,Y_0\in\wh{\Omega}$ that differ at a single vertex $z^*\in V$, and fix an update sequence $\vecsigma \in (V \times [k])^\Tcouple$.  Let $\{Z_t\}$ denote the corresponding bounding chain, and let $\mc{P} = \cup_{t \leq \Tcouple} \mc{P}_t$ be its associated potential persistent discrepancy set. Let $X_1,\dots,X_{\Tcouple}$ denote the evolution of $X_0$ according to $\vecsigma$.

\begin{condition} 
\label{condition:nm-prelim}
Our non-Markovian update is not defined unless the following conditions hold (and even if they are all satisfied, we may still require additional properties, see \cref{assumption:non-markovian-succeed}).
\begin{itemize}

\item $1\leq t \leq \Tcouple$ is an update time such that $c_t \in H(Z_{t-1},v_t)$ (i.e., the chosen update vertex $v_t$ is attempting a hazardous update). When $\BC(X_0,Y_0,\vecsigma)=\tt{True}$, this implies that $v_t \in \mc{P}_{\leq t} \setminus \mc{P}_{\leq (t-1)}$.  

\item Let $p$ denote the first element in $N(v_t) \cap \mc{P}_{t-1}$ according to some fixed, but otherwise arbitrary, total ordering. We assume that $c_t \neq X_{t-1}(p)$. (Whenever $\BC(X_0, Y_0, \vecsigma) = \tt{True}$, $p$ is uniquely determined by \cref{lem:bc-basic-geometry-rewrite}. We will only use our non-Markovian coupling whenever this holds, see the first bullet of \cref{assumption:non-markovian-succeed}.) 

\item Let $H^* = \{ X_{t-1}(p), c_t \}$, so that $H^*$ is a subset of $H(Z_{t-1},v_t)$ of size $2$ according to the previous bullet. 

We assume that $|X_{t-1}(N_p(v_t)) \cap H^*| = 1$. In this case, let
\[c_b \in X_{t-1}(N_p(v_t)) \cap H^*\]
denote the unique ``blocked'' color and let
\[c_u \in H^*\setminus \{c_b\}\]
denote the unique unblocked color. 
\item For each $w \in X_{t-1}^{-1}(c_b) \cap N(v_t)$, we have $\tau_w^- > 0$.
\end{itemize}
\end{condition}

\begin{remark}
Notice that whether \cref{condition:nm-prelim} is satisfied or not is fully determined by $(v_s,c_s)$ for $1 \leq s \leq t$. This is in contrast to the later condition \cref{assumption:non-markovian-succeed}, which requires ``looking into the future''. 
\end{remark}

In the definitions below, for lightness of notation, we will omit the dependence on $t$ and $H^*$ in various places where there is no risk of confusion. We emphasize that we will only use these definitions on the event that \cref{condition:nm-prelim} holds and further, that $\BC(X_0, Y_0, \vecsigma) = \tt{True}$ (see the first bullet of \cref{assumption:non-markovian-succeed}).  

\medskip

For a vertex $w\in N_p(v_t)$, we denote the {\em exchangeable colors} for $w$ as follows:
\[  \Exchange(w,t) = \{X_{t-1}(w)\}\cup\{c \in A(X_{\tau_w^--1},w) : \mbox{for all }s\in I^{\circ}(w,t), \mbox{ if } v_s\in N(w) \mbox{ then } c_s\neq c\},
\]
whenever $\tau_w^-$ is defined, and $\Exchange(w,t) = \emptyset$ otherwise. 
In words, $\Exchange(w,t)$ denotes those colors $c$ which were available to $w$ at its last successful recoloring (prior to $t$) and no neighbor of $w$ attempted $c$ during $I^{\circ}(w,t)$, except that we include the current color $X_{t-1}(w)$ regardless of updates attempted in the epoch.

We now define an \emph{avoided set}
\begin{equation*}
    \Avoid(t) = \{ w \in N(v_t) : \exists s \in I^\circ(w,t) \text{ where } v_s \in (N(w) \cap N_w^2(\mc{P})) \cup \{ v_t \}, c_s = X_{s-1}(w) \} \cup N_{v_t}^2(\mc{P}).
\end{equation*}
The purpose of $\Avoid(t)$ is to ensure that non-Markovian updates do not interact with each other.  If a vertex $w$ is distance $2$ from a vertex $z \in \mc{P}$, then $w$ is ineligible.  If a vertex $w$ is distance $3$ from a vertex $z \in \mc{P}$, then $w$ is eligible only as long as editing the color of $w$ does not force one to edit any updates in $N^2(z)$.

Next, for a color $c\in [k]$, we define the set of neighbors which are {\em swappable for color $c$ at time $t$} as
\begin{equation*} \Swappable_\mc{P}(c,t) = \{w\in N_\mc{P}(v_t): c\in\Exchange(w,t) \} \setminus \Avoid(t).
\end{equation*}

We define the \emph{complementary neighbor} mapping
\[ \alpha : \Swappable_{\mc{P}}(c_b,t) \to \Swappable_{\mc{P}}(c_u,t) \]
in the following manner.  
Order $\Swappable_{\mc{P}}(c_b,t) = \{ z_1,\ldots,z_\ell \}$ by increasing $\deg z_i \cdot (|I(z_i,t)|/n + 1)$ and $\Swappable_{\mc{P}}(c_u,t) = \{ z_1',\ldots,z_{\ell'}' \}$ by increasing $\deg z_i' \cdot (|I(z_i',t)|/n + 1)$, breaking ties according to some arbitrary (but otherwise globally fixed) total order on $V$.  For $i\leq\min\{\ell,\ell'\}$, let $\alpha(z_i)=z'_i$; and for $i>\min\{\ell,\ell'\}$, let $\alpha(z_i)$ be undefined.

For any $z \in \Swappable_{\mc{P}}(c_b,t)$ for which $\alpha(z)$ is defined, we further define the \emph{complementary color} mapping
\[\beta_{z}: \Exchange(\alpha(z),t)\rightarrow\Exchange(z,t)\]
as follows.  Let $\Exchange(\alpha(z),t)=\{c'_1,\dots,c'_{m'}\}$ and $\Exchange(z,t)=\{c_1,\dots,c_{m}\}$ denote ordered sets according to the natural order on $[k]$.  For $1\leq j \leq \min\{m,m'\}$, let $\beta_z(c'_j)=c_j$; and for $j>\min\{m,m'\}$, let $\beta_z(c'_j)$ be undefined.  

Finally, we define the set of {\em blocking neighbors} as
\[ B=\{w\in N_{\mc{P}}(v_t): X_{t-1}(w)=c_b\}.
\]
If \cref{assumption:non-markovian-succeed} holds, its second bullet together with \cref{condition:nm-prelim} gives $c_b\in X_{t-1}(N_\mc{P}(v_t))$, and hence $|B|\geq 1$.

\begin{condition}[non-Markovian well-defined]
\label{assumption:non-markovian-succeed}
The transformation $\NM_{t,H^*}$ is applied only if \cref{condition:nm-prelim} holds and furthermore the following hold:
\begin{itemize}
\item $\BC(X_0, Y_0, \vecsigma) = \tt{True}$;
\item $X_{t-1}(N_p(v_t) \cap \mc{P}) \cap H^* = \emptyset$;
\item $B \cap \Avoid(t) = \emptyset$;
\item the mapping $\alpha$ is defined on $B$;
\item for each $w\in B$ with $\alpha(w)\ne w$, the color
$\beta_w(X_{t-1}(\alpha(w)))$ is defined and satisfies  $\beta_w(X_{t-1}(\alpha(w))) \notin H^*$; 
\item $\alpha(B)\cap B=\{ w \in B : \alpha(w) = w \}$.
\end{itemize}
\end{condition}

Notice that the third bullet is equivalent to $B \subset \Swappable_\mc{P}(c_b,t)$.  This follows as $c_b = X_t(w)$ for all $w \in B$ and $X_t(w) \in \Exchange(w,t)$ by definition of $\Exchange(w,t)$.

We can now define our local non-Markovian coupling.

\begin{definition}[local non-Markovian transformation]\label{def:local-NM-coupling}
Given $X_0, Y_0, \vecsigma$ and $t \in [\Tcouple]$ satisfying \cref{assumption:non-markovian-succeed}, the transformation
\[
\NM_{t,H^*}(\vecsigma)=((v_1,c'_1),\dots,(v_{\Tcouple},c'_{\Tcouple}))
\]
is defined as follows.

\begin{enumerate}
\item For each $w \in B$ with $\alpha(w) = w$, set
\[ c'_{\tau_w^-} = c_u. \]
\item For each $w\in B$ with $\alpha(w) \ne w$, set
\[
c'_{\tau_w^-}=\beta_w\big(X_{t-1}(\alpha(w))\big),
\qquad
c'_{\tau_{\alpha(w)}^-}=c_u .
\]

\item For every $w\in B\cup \alpha(B)$ and every $s\in I^\circ(w,t)$ with
$v_{s}\in N(w)$ and $c_{s}=X_{s-1}(w)$, set
\[
c'_{s}=c'_{\tau_w^-}.
\]

\item At time $t$ set
\[
c'_t = H^*\setminus\{c_t\} = X_{t-1}(p).
\]

\item For all other times $s$, set $c'_s=c_s$.
\end{enumerate}

\end{definition}

\begin{remark}
\label{rmk:temporary-vs-persistent}
    The point of this coupling is that we introduce ``temporary'' discrepancies to block ``persistent'' discrepancies.  However, \cref{assumption:non-markovian-succeed} does not assume that $\tau_w^+$ is defined for $w \in B$; thus, some of these temporary discrepancies may persist until time $\Tcouple$.  This is intentional and is designed this way to overcome measurability obstacles.  While these errors may not be fixed, they cannot propagate as we have checked that their new colors were exchangeable in $I(w,t)$.
\end{remark}

The following simple lemma shows that the edits in the previous definition are pairwise disjoint, whenever \cref{assumption:non-markovian-succeed} holds. 

\begin{fact}
\label{lem:temporary-geometry}
Suppose \cref{assumption:non-markovian-succeed} holds. Let $\Temp_t:=B\cup \alpha(B)$. Then:
\begin{enumerate}
\item $\Temp_t\subseteq N_{\mc P}(v_t)$, and for every $w\in \Temp_t$ one has $N(w)\cap \mc P=\{v_t\}$;
\item If $u\in N(\Temp_t)\setminus\{v_t\}$, then $u \notin N(\mc P)$ and there is a unique $w\in \Temp_t$ with $u\in N(w)$;
\item If $u$ has some update $(u,c)$ changed by $\NM_{t,H}$, then $u \notin N^2_{v_t}(\mc{P})$;
\item $\Temp_t$ is an independent set.
\end{enumerate}
\end{fact}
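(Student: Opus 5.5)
The plan is to derive all four items from three facts. First, $\Temp_t\subseteq \Swappable_{\mc P}(c_b,t)\cup\Swappable_{\mc P}(c_u,t)$, and both sets are disjoint from $\Avoid(t)$. Second, Condition~(1) of $\BC$, combined with connectivity of $\mc P$ from \cref{lem:bc-basic-geometry-rewrite}. Third, the girth assumption $g\ge 7$. Since $\BC(X_0,Y_0,\vecsigma)=\tt{True}$ and $c_t\in H(Z_{t-1},v_t)$, the first bullet of \cref{condition:nm-prelim} gives $v_t\in\mc P$. Moreover $\mc P=\mc P_{\le \Tcouple}$ induces a connected subgraph, by \cref{lem:bc-basic-geometry-rewrite}(1).

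For item (1), $B\subseteq N_{\mc P}(v_t)$ holds by definition. Also $\alpha(B)\subseteq \Swappable_{\mc P}(c_u,t)\subseteq N_{\mc P}(v_t)$. So every $w\in\Temp_t$ satisfies $w\notin\mc P$ and $v_t\in N(w)\cap\mc P$. Suppose some $u\ne v_t$ also lay in $N(w)\cap \mc P$. Take a path from $u$ to $v_t$ inside $G[\mc P]$; adding the edges $wu$ and $wv_t$ gives a cycle in $G[\mc P\cup\{w\}]$, contradicting $\BC$(1).

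Item (4) is immediate: an edge $ww'$ inside $\Temp_t$ would close a triangle with $v_t$. For item (2), let $u\in N(w)\setminus\{v_t\}$ with $w\in\Temp_t$. By (1), $u\notin\mc P$. Suppose $u$ were adjacent to some $z\in\mc P$. Then $z\ne v_t$, since otherwise $v_t,w,u$ form a triangle. So $z$--$u$--$w$ is a path of length two from $\mc P$ to $w$ that avoids $v_t$, which puts $w\in N^2_{v_t}(\mc P)\subseteq\Avoid(t)$. This contradicts $w\in\Swappable_{\mc P}(\cdot,t)$. For uniqueness, if $u$ were adjacent to two distinct $w,w'\in\Temp_t$, then $v_t w u w'$ would be a $4$-cycle, again contradicting girth $\ge 7$.

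For item (3), I will go through the clauses of \cref{def:local-NM-coupling} and list which vertices have edited updates. These are:
\begin{itemize}
\item $w\in\Temp_t$, at time $\tau_w^-$;
\item $v_t$, at time $t$;
\item vertices $v_s\in N(w)$ for $w\in\Temp_t$, at those $s\in I^\circ(w,t)$ with $c_s=X_{s-1}(w)$.
\end{itemize}
The first two types are easy: $v_t\in\mc P$ is excluded from $N^2_{v_t}(\mc P)$, since $N^r$ removes the base set, and $w\notin N^2_{v_t}(\mc P)$ because $w\notin\Avoid(t)$. For the third type, the case $v_s=v_t$ is already handled. Otherwise $v_s\in N(w)$ together with $c_s=X_{s-1}(w)$ and $s\in I^\circ(w,t)$ is exactly the situation excluded from $\Avoid(t)$ by its first clause, unless $v_s\notin N_w^2(\mc P)$.

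So it remains to show that $v_s\notin N^2_{v_t}(\mc P)$ whenever $v_s\notin N^2_{w}(\mc P)$. Take a witnessing path $z$--$x$--$v_s$ with $z\in\mc P$ that avoids $v_t$. If it passed through $w$, then $x=w$ and $z\in N(w)\cap\mc P=\{v_t\}$, which is impossible. So the path also avoids $w$, giving $v_s\in N^2_w(\mc P)$, a contradiction.

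The main obstacle I expect is this last step: reconciling the subscripted neighborhoods $N^2_{v_t}$ and $N^2_w$, which delete different vertices, and making sure the punctured-interval bookkeeping in $\Avoid(t)$ covers every edited time. I would first verify the path-rerouting argument above, and fall back on girth if some degenerate case (such as $x=v_t$) sneaks in.
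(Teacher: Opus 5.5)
Your proposal is correct and follows the paper's proof essentially step for step. Item (1) uses connectivity of $G[\mc P]$ together with $\BC$(1), item (2) uses the inclusion $N^2_{v_t}(\mc P)\subseteq\Avoid(t)$ together with the $4$-cycle exclusion, and item (4) uses the triangle exclusion; for item (3) you spell out what the paper calls immediate, and your rerouting step is sound: a length-$\le 2$ path in $G-v_t$ from $\mc P$ to $v_s\in N(w)$ cannot pass through $w$ because $N(w)\cap\mc P=\{v_t\}$, so it witnesses $v_s\in N^2_w(\mc P)$, and no degenerate case (such as $x=v_t$) remains.
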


\begin{proof}
The inclusion $\Temp_t\subseteq N_{\mc P}(v_t)$ is immediate from the definitions of $B$ and $\alpha$. Fix $w\in \Temp_t$. Since $v_t\in \mc P$, we already have $v_t\in N(w)\cap \mc P$. If there were another vertex $q\in N(w)\cap \mc P$ with $q\neq v_t$, then the connectedness of $G[\mc P]$ from \cref{lem:bc-basic-geometry-rewrite}(1) would give a path in $G[\mc P]$ from $q$ to $v_t$; together with the edges $qw$ and $wv_t$, this would create a cycle in $G[\mc P \cup \{w\}]$, contradicting $\BC(X_0,Y_0,\vecsigma)=\tt{True}$. This proves (1).

The first part of (2) follows immediately from the fact that $N_{v_t}^2(\mc{P}) \subset \Avoid(t)$.  If some $u\in N(\Temp_t)\setminus\{v_t\}$ were adjacent to two distinct vertices $w,w'\in \Temp_t$, then $u-w-v_t-w'-u$ would be a $4$-cycle in $G$, but $G$ has girth $\geq7$. This proves (2).

Bullet (3) of \cref{lem:temporary-geometry} is the purpose of the final part of $\Avoid(t)$, and follows immediately by checking the definition of $\Avoid(t)$ and \cref{def:local-NM-coupling}.  A vertex may not be put in $\Temp_t$ if exchanging its color forces it to change an update from a vertex in the $2$-neighborhood of $\mc{P}$.

Finally, if two distinct vertices $w,w'\in \Temp_t$ were adjacent, then $w-v_t-w'-w$ would be a triangle, contradicting our girth assumption. Hence $\Temp_t$ is independent.
\end{proof}

\subsection{Global Non-Markovian Coupling by Composing Local Couplings}
\label{sub:global}

Having defined our local non-Markovian coupling, we can now define our global (non-Markovian) coupling. As mentioned earlier, this will arise by composing three different types of maps: the identity map, the local non-Markovian map, and the following local Jerrum map.

\begin{definition}[local Jerrum coupling]
    Given $H \in \binom{[k]}2$, $t \in [\Tcouple]$, and \[\vecsigma=\vecsigma_{\Tcouple}=((v_1,c_1),\dots,(v_{\Tcouple},c_{\Tcouple})),\] define $\Jerrum_{t,H}(\vecsigma)$ to be the update sequence $\vecsigma' = ((v_s,c_s'))_{s \in [\Tcouple]}$ where $c_s' = c_s$ for all $s \ne t$ and
\[ c_t' = \begin{cases} H \setminus \{c_t\} & c_t \in H, \\ c_t & c_t \notin H. \end{cases} \]
\end{definition}

\begin{definition}[global coupling]\label{def:global-coupling}
Our global coupling is a partial function  $F : \wh{\Omega} \times \wh{\Omega} \times (V \times [k])^\Tcouple  \to (V \times [k])^\Tcouple$, 
\begin{align*}
    (X_0,Y_0,\vecsigma) & \xrightarrow[]{F} \vecsigma'
\end{align*}
which maps an update sequence $\vecsigma$, which will be applied to $X_0$, to a corresponding update sequence $\vecsigma'$, which will be applied to $Y_0$. The function $F$ is defined only if $X_0$ and $Y_0$ disagree at exactly one vertex, $z^*$. 

If $\BC(X_0,Y_0,\vecsigma) = \tt{False}$, then set $F(X_0,Y_0,\vecsigma) = \vecsigma$ (thus, we are using the identity coupling when the bounding chain ``fails'').  Otherwise, proceed as follows.

For $t \in \{0,1,\dots, \Tcouple\}$, we will inductively define functions
\[f_t: (V \times [k])^{\Tcouple} \to (V \times [k])^{\Tcouple},\]
with the initialization $f_0 = \on{id}$, and sets 
\[D_t \subseteq V\]
with the initialization $D_0 = \{z^*\}$, the unique ``root disagreement''. Let 
\[f_{\leq t}(\vec{\sigma}) = f_{t}\circ f_{t-1}\cdots \circ f_0(\vec{\sigma}).\] We  define
\[F(X_0, Y_0, \vec{\sigma}) = f_{\leq \Tcouple}(\vec{\sigma}).\]

For $t\geq 1$, suppose we have already defined $D_{t-1}$ and $f_{\leq t-1}(\vecsigma)$. 
Let $Y^{t-1}_s$ denote the configuration of the chain started from $Y_0$ after $s$ steps according to the intermediate
update sequence $f_{\le t-1}(\vecsigma) \in (V \times [k])^{\Tcouple}$.
In particular, $Y^{t-1}_{t-1}$ is the state after
$t-1$ steps under this intermediate update sequence. We now define $D_{t}$ and $f_{t}$ as follows.

\begin{enumerate}
    \item If $v_t \notin B_1(D_{t-1})$ (i.e.~$v_t \notin D_{t-1} \cup N(D_{t-1})$), then choose $f_t = \id$.  Set $D_t = D_{t-1}$.
    \item If $v_t \in D_{t-1}$, then choose $f_t = \id$.  Set 
    \[ D_{t} = \begin{cases} D_{t-1} \setminus \{v_t\} & c_t \in A(X_{t-1},v_t),  \\ D_{t-1} & c_t \notin A(X_{t-1},v_t). \end{cases} \]

    \item Otherwise, $v_t \in  N(D_{t-1})$.  Let $p$ be the first element in $N(v_t) \cap D_{t-1}$ (according to some fixed, but otherwise arbitrary, total order) and let $H = \{ X_{t-1}(p), Y_{t-1}^{t-1}(p) \}$. 

    \begin{enumerate}
        \item If $|H| = 1$, then $f_t = \on{id}$ and $D_{t} = D_{t-1}$.
    \item If $|H| = 2$ and $c_t \ne Y_{t-1}^{t-1}(p)$, set $f_t = \Jerrum_{t,H}$ and $D_t = D_{t-1}$. Otherwise, $|H| = 2$, $c_t = Y_{t-1}^{t-1}(p)$ and we proceed to the next case. 
    \item This is the ``danger zone''. Note that $c_t = Y_{t-1}^{t-1}(p) \ne X_{t-1}(p)$.  Set 
    \[ f_t = \begin{cases} \Jerrum_{t,H} & H \subset X_{t-1}(N_p(v_t)), \\ \NM_{t,H} & H \cap X_{t-1}(N_p(v_t)) = \{ c_t \} \text{ and \cref{assumption:non-markovian-succeed}  holds}, \\ \NM_{t,H} & H \cap X_{t-1}(N_p(v_t)) = H \setminus \{c_t\} \text{ and \cref{assumption:non-markovian-succeed} holds}, \\ \Jerrum_{t,H} & \text{otherwise}. \end{cases} \]
    We further track the discrepancy set
    \[ D_{t} = \begin{cases} D_{t-1} & H \subset X_{t-1}(N_p(v_t)), \\ D_{t-1} & H \cap X_{t-1}(N_p(v_t)) = \{ c_t \} \text{ and \cref{assumption:non-markovian-succeed} holds}, \\ D_{t-1} \cup \{v_t\} & H \cap X_{t-1}(N_p(v_t)) = H \setminus \{c_t\} \text{ and \cref{assumption:non-markovian-succeed} holds}, \\ D_{t-1} \cup \{ v_t \} & \text{otherwise}. \end{cases} \]
    \end{enumerate}
\end{enumerate}

\end{definition}

\begin{remark}
We will later see (\cref{prop:main-global-induction-rewrite}) that $N(v_t) \cap D_{t-1}$ is always a singleton and hence $p$ is uniquely determined in Case~(3). In the same proposition, we will also see that $|H| = 2$, so that one never enters Case~(3a). Nevertheless, we work with the more general definition above to ensure that the function is \emph{a priori} well-defined. 
\end{remark}

\begin{remark}
    Jerrum's coupling~\cite{jerrum1995very} may be defined exactly as above with two differences, both on step 3.  Replace $f_t = \Jerrum_{t,H}$ for all four cases.  For $D_t$, in the second case ($H \cap X_{t-1}(N_p(v_t)) = \{c_t\}$ and \cref{assumption:non-markovian-succeed} holds), still set $D_t = D_{t-1} \cup \{v_t\}$ instead of $D_t = D_{t-1}$.  This single case is our margin of victory.
\end{remark}

\section{Validity of the coupling}
\label{sec:validity}

In this section we prove that the coupling $F$, which was defined in \cref{sec:nonmarkovian}, is a valid coupling by proving that it is a bijective map on the space of update sequences.   We first show some basic properties of the local non-Markovian coupling in \cref{sub:localNM-properties} and of the bounding chain in \cref{sub:validity-bounding}.  We then prove in \cref{sub:validity-local} that the local non-Markovian coupling is bijective.
Finally, we prove that the global coupling $F$ is a valid coupling in \cref{sub:validity-global-bijective}.

\subsection{Properties of local coupling}
\label{sub:localNM-properties}

We will need the following properties of $\NM_{t,H^*}$ in later proofs.

\begin{lemma}\label{lem:non-markovian-local-is-bijection}
    For all $s \in [\Tcouple]$, let $X_s$ denote $X_0$ evolved according to $\vecsigma$ and $X'_s$ denote $X_0$ evolved according to $\vecsigma'=\NM_{t,H^*}(\vecsigma)$.
    \begin{enumerate}
        \item \label{item:local-far-away-no-change}
        For all $s\in[\Tcouple]$ with $v_s\notin B_1(B \cup \alpha(B))$, $c_s=c_s'$.
        \item For all $s\in [\Tcouple]$ the following hold:
       \begin{enumerate}
           \item \label{item:local-coupling-blocking-condition} If $v_s \in N_\mc{P}(B\cup \alpha(B))$ then $c_s \in A(X_{s-1},v_s)$ if and only if $c_s' \in A(X_{s-1}',v_s)$.  Furthermore, if both chains succeed, then $c_s = c_s'$.
           \item $X_s(v) = X_s'(v)$ unless $v \in B \cup \alpha(B) \cup \mc{C}_s$, where $\mc{C}_s$ is the (possibly empty) component of $\mc{P}_{\leq s} \setminus \mc{P}_{t-1}$ containing $v_t$.
           \label{item:local-coupling-only-vt-propagates}
       \end{enumerate}
    \end{enumerate}
\end{lemma}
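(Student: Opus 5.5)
Part (1) follows from inspecting \cref{def:local-NM-coupling}. The only coordinates changed by $\NM_{t,H^*}$ are the following: the times $\tau_w^-$ for $w\in B\cup\alpha(B)$, whose updated vertex is $w$ itself; the times $s\in I^\circ(w,t)$ with $v_s\in N(w)$; and the time $t$, whose vertex is $v_t\in N(w)$ for any $w\in B\cup\alpha(B)$. Each of these vertices lies in $B_1(B\cup\alpha(B))$, and every other coordinate satisfies $c'_s=c_s$.

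For part (2) I will argue by induction on $s$, proving (a) and (b) together. The inductive hypothesis is that for $s'<s$, $X_{s'}$ and $X'_{s'}$ agree outside $\Temp_t\cup\mc C_{s'}$. It also records the exact form of the disagreement at each $w\in\Temp_t$: for $s'\in I(w,t)$, $X'_{s'}(w)=c'_{\tau_w^-}$ and $X_{s'}(w)=X_{t-1}(w)$, while outside $I(w,t)$ the two chains agree at $w$.

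The first ingredient is that the edit at $\tau_w^-$ succeeds in $X'$. The new color is $c_u$ or $\beta_w(X_{t-1}(\alpha(w)))\in\Exchange(w,t)\subseteq A(X_{\tau_w^--1},w)$, and $X'_{\tau_w^--1}$ agrees with $X_{\tau_w^--1}$ on $N(w)$. Here I use \cref{lem:temporary-geometry}(1),(2): the only neighbor of $w$ in $\mc P$ is $v_t$, and no neighbor of $w$ other than $v_t$ lies in $\Temp_t$ or in $N(\mc P)$. For $v_t$ itself, I will need that it is not recolored differently in the two chains before $t$. This should hold because $v_t\notin\mc P_{\le t-1}$, and its agreement up to $t-1$ follows from the inductive hypothesis.

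The color $c_u$ also needs attention. It belongs to $\Exchange(\alpha(w),t)$ because $\alpha(w)\in\Swappable_{\mc P}(c_u,t)$. The fact that it is available at $\tau_{\alpha(w)}^-$ follows in the same way.

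Next, $w$ keeps its new color throughout $I(w,t)$. In $X$, $w$ is not successfully updated during $I^\circ(w,t)$. In $X'$, any update of $w$ at a time $s\in I^\circ(w,t)$ uses the unchanged color $c_s$, and it fails in $X$. Since the neighborhoods agree, it also fails in $X'$; the only exception is an update at $v_t$ itself, which I handle by the same argument. Consequently the disagreement at $w$ is exactly $\{X_{t-1}(w),c'_{\tau_w^-}\}$ on $I(w,t)$. After $\tau_w^+$, the update is successful with the same color in both chains, because $\tau_w^+$ is unchanged and availability is shown in (a), so the disagreement heals.

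For (a), take $v_s\in N_{\mc P}(w)$ with $w\in\Temp_t$. By \cref{lem:temporary-geometry}(2), $w$ is the unique neighbor of $v_s$ in $\Temp_t$, and $v_s\notin N(\mc P)$. By the induction hypothesis, $X_{s-1}$ and $X'_{s-1}$ therefore differ on $N(v_s)$ at most at $w$. There are three cases.
\begin{itemize}
\item If $c_s=X_{s-1}(w)$ with $s\in I^\circ(w,t)$, the coupling sets $c'_s=X'_{s-1}(w)$. Both proposals are blocked by $w$.
\item If $c_s=c'_{\tau_w^-}\neq X_{s-1}(w)$ with $s\in I^\circ(w,t)$, then $c_s$ is not attempted by any neighbor of $w$ during the epoch, by the definition of $\Exchange$. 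So this case is vacuous.
\item Otherwise $c_s=c'_s$, and the color is blocked or available in $N(v_s)\setminus\{w\}$ identically in both chains.
\end{itemize}
In every case availability agrees, and the colors coincide when the update succeeds. This, together with the previous step, shows that disagreements at $\Temp_t$ do not spread outward.

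It remains to handle time $t$ and the component $\mc C_s$. At time $t$, $c'_t=X_{t-1}(p)$. Since $c_b$ is removed from $N_p(v_t)$ and $c_u$ is inserted, the behavior of $v_t$ may differ from that of $X$, but only at $v_t$. Any later spread of this disagreement stays within vertices that enter $\mc P$ through $v_t$, that is, within $\mc C_s$. This uses that $\mc P_{\le s}$ dominates the hazardous spread, together with \cref{lem:bc-basic-geometry-rewrite}. The delicate part is this last step: showing that differences on $\mc C_s$ cannot leak into $N(\mc C_s)\setminus\mc P$ except through hazardous updates, which create new vertices in $\mc P$ attached to $\mc C_s$.

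I expect the main obstacle to be the interplay between $v_t$ and updates in $I^\circ(w,t)$ at times after $t$. I will use the $\{v_t\}$ clause of $\Avoid(t)$ and condition (2) of $\BC$ to show that any update of $v_t$ or of its $\mc C$-descendants never changes the availability status at $w$.
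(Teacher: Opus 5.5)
Your part (1), your argument that the edits at the times $\tau_w^-$ succeed in $X'$, and your three-case analysis for item (a) are sound and match the paper. The gap is at the step you yourself flag as delicate. Your induction hypothesis only says that $X_{s'}$ and $X'_{s'}$ agree outside $\Temp_t\cup\mc C_{s'}$. It says nothing about \emph{which} colors the disagreement on $\mc C_{s'}$ can carry, so you cannot invoke ``$\mc P_{\le s}$ dominates the hazardous spread.'' The bounding chain was built from $(X_0,Y_0,\vecsigma)$, not for the pair $(X,X')$. Moreover, a vertex of $\mc C_{s-1}\subseteq \mc P_{\le s-1}$ may already have collapsed to a singleton in $Z_{s-1}$.

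The missing idea is to carry, as part of the induction, the invariant $\{X_r(u),X'_r(u)\}\subseteq Z_r(u)$ for every $u\in\mc C_r$. It is seeded at time $t$ by $c_t,c'_t\in H^*\subseteq H(Z_{t-1},v_t)\subseteq Z_t(v_t)$. With it, suppose a proposal $c_s$ at $v_s$ has different status in the two runs because of a neighbor $q\in\mc C_{s-1}$. Then $c_s$ equals exactly one of $X_{s-1}(q)\neq X'_{s-1}(q)$, so $|Z_{s-1}(q)|>1$, $q\in\mc P_{s-1}$, and $c_s\in H(Z_{s-1},v_s)$. The update is therefore hazardous for $Z$, so $v_s$ enters $\mc P_s$ and hence $\mc C_s$, and the invariant propagates. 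The one exception is $v_s=p\in\mc P_{t-1}$, where such an update is ruled out by condition~(2) of $\BC$.

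That same mechanism settles the obstacle you anticipate, and the tools you name do not provide it. The $\{v_t\}$-clause of $\Avoid(t)$, together with $\Exchange$, protects $v_t$'s proposals during $I^\circ(w,t)$ from the temporary color at $w$; that is the opposite direction. Condition~(2) of $\BC$ only forbids hazardous updates at vertices already in $\mc P$. What you need is this: an update of $w\in B\cup\alpha(B)$ at a time $s>t$ is unaffected by the disagreement at $v_t$, because otherwise the invariant gives $c_s\in Z_{s-1}(v_t)\subseteq H(Z_{s-1},w)$. That update would be hazardous and force $w\in\mc P$, contradicting $w\in N_{\mc P}(v_t)$.

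Without this, three of your claims are unsupported: that $w$'s unchanged proposals stay blocked in $X'$ throughout $I^\circ(w,t)$ ``by the same argument,'' that $w$ is recolored to the same color in both runs at $\tau_w^+$, and that agreement at $w$ persists afterwards. Item (a) concerns $v_s\in N_{\mc P}(B\cup\alpha(B))$ and says nothing about updates at $w$ itself. And ``$\tau_w^+$ is unchanged'' is exactly what this induction must establish (it is recorded afterwards as \cref{cor:epoch-behavior-W-rewrite}), so appealing to it is circular.
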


These properties will be utilized in the later proof that the global coupling (formed by composing local couplings) is bijective, and hence a valid coupling.  Perhaps the most subtle item in \cref{lem:non-markovian-local-is-bijection} is \cref{item:local-coupling-only-vt-propagates}.  Essentially, this states that at time $t$, we will have the discrepancies created on $B \cup \alpha(B)$ as well as the discrepancy at $v_t$.  As time progresses past $t$, those on $B \cup \alpha(B)$ will not propagate outwards, and although $v_t$ will propagate, it will do so in a contained manner, following the bounding chain $\mc{P}_t$.  In particular, for $s<t$, we have $\mc{C}_s = \emptyset$.

\begin{proof}
Set
\[
W:=B\cup \alpha(B),
\qquad
\chi(w):=X_{t-1}(w),
\qquad
\chi'(w):=c'_{\tau_w^-}
\qquad (w\in W).
\]
We first record two basic consequences of the definition of $\NM_{t,H^*}$.

For every $w\in W$ one has
\begin{equation}
\label{eq:local-new-color-exchangeable-rewrite}
\chi'(w)\in \Exchange(w,t).
\end{equation}
Indeed, if $w\in B$ and $\alpha(w) \ne w$, then
$\chi'(w)=\beta_w(X_{t-1}(\alpha(w)))\in \Exchange(w,t)$ because $\beta_w$ maps into $\Exchange(w,t)$.
If $w\in \alpha(B)$, then $\chi'(w)=c_u$ and $w\in \Swappable_\mc P(c_u,t)$, so again $c_u\in \Exchange(w,t)$.
If $w \in B$ and $\alpha(w) = w$, then $w=\alpha(w)\in\Swappable_\mc P(c_u,t)$, and hence $\chi'(w)=c_u\in\Exchange(w,t)$.

We also note that $\chi'(w)\neq \chi(w)$ for every $w\in W$.
Indeed, if $w\in B$ and $\alpha(w)\ne w$, then $\chi(w)=c_b\in H^*$, whereas $\chi'(w)=\beta_w(X_{t-1}(\alpha(w)))\notin H^*$ by \cref{assumption:non-markovian-succeed}.
If $w\in \alpha(B)$, then $\chi'(w)=c_u$, while $\chi(w)=X_{t-1}(w)\neq c_u$ because $c_b$ is the unique color from $H^*$ appearing on $N_\mc P(v_t)$.

For $w\in B$, the condition $B\cap \mathsf{Avoid}_{\mc P}(t)=\emptyset$ gives $w\in \Swappable_\mc P(c_b,t)$, while for $w\in \alpha(B)$ we have $w\in \Swappable_\mc P(c_u,t)$ by definition of $\alpha$.
Hence for every $w\in W$ there is no $s\in I^\circ(w,t)$ with $v_s=v_t$ and $c_s=\chi(w)$, i.e.
\begin{equation}
\label{eq:local-vt-does-not-attempt-old-rewrite}
\text{for every }w\in W\text{ there is no }s\in I^\circ(w,t)\text{ with }v_s=v_t\text{ and }c_s=\chi(w).
\end{equation}
Since $\chi'(w)\neq \chi(w)$, the inclusion \eqref{eq:local-new-color-exchangeable-rewrite} places $\chi'(w)$ in the second part of the definition of $\Exchange(w,t)$, and therefore
\begin{equation}
\label{eq:local-no-neighbor-uses-new-rewrite-section24}
\chi'(w)\in A(X_{\tau_w^- -1},w)
\qquad\text{and}\qquad
\text{if }s\in I^\circ(w,t)\text{ and }v_s\in N(w),\text{ then }c_s\neq \chi'(w).
\end{equation}

The only coordinates edited by $\NM_{t,H^*}$ are:
(i) time $t$;
(ii) the times $\tau_w^-$ for $w\in W$; and
(iii) the times $s\in I^\circ(w,t)$ with $w\in W$ and $v_s\in N(w)$ at which $c_s=\chi(w)$.

\cref{item:local-far-away-no-change} follows immediately from \cref{def:local-NM-coupling}.



We now prove the remaining assertions by induction on time.
For $0\le r\le \Tcouple$, let $(\dagger_r)$  denote the conjunction of the following statements:
\begin{enumerate}
\item for every $w\in W$,
\[
X_r(w)=X_r'(w)\quad\text{if }r<\tau_w^-\text{ or }r\ge \tau_w^+,
\]
and
\[
X_r(w)=\chi(w),
\qquad
X_r'(w)=\chi'(w)
\quad\text{if }\tau_w^-\le r<\tau_w^+;
\]
\item if $u\notin W\cup \mc C_r$, then $X_r(u)=X_r'(u)$;
\item if $u\in \mc C_r$, then $\{X_r(u),X_r'(u)\}\subseteq Z_r(u)$.
\end{enumerate}
The base case $r=0$ is immediate because $W\cap \mc P=\emptyset$ and $\mc C_0=\emptyset$.
Assume $(\dagger_{s-1})$ and let $u:=v_s$.

\smallskip
\noindent\emph{Case 1: $u\in N(W)\setminus\{v_t\}$.}
Recall that by definition, $W \cap \Avoid(t) = \emptyset$ and $N_{v_t}^2(\mc{P}) \subset \Avoid(t)$.  By \cref{lem:temporary-geometry}, there is a unique $w\in W$ with $u\in N(w)$, and $u$ has no neighbor in $\mc P$.
Every neighbor of $u$ other than $w$ lies outside $W\cup \mc C_{s-1}$, and hence has the same color in the two runs by $(\dagger_{s-1})(2)$.

If $s\in I^\circ(w,t)$ and $c_s=\chi(w)$, then by definition $c_s'=\chi'(w)$.
By $(\dagger_{s-1})(1)$, the vertex $w$ currently carries $\chi(w)$ in the original run and $\chi'(w)$ in the modified run, so both proposals are blocked by the same neighbor $w$.

In every other subcase, $c_s'=c_s$.
If $s\notin I^\circ(w,t)$, then $(\dagger_{s-1})(1)$ gives $X_{s-1}(w)=X_{s-1}'(w)$.
If $s\in I^\circ(w,t)$, then either $c_s\neq \chi(w)$ by assumption or else we are in the previous subcase, and \eqref{eq:local-no-neighbor-uses-new-rewrite-section24} gives $c_s\neq \chi'(w)$.
Hence in every case $w$ has the same effect on the proposal in the two runs.
Since all other neighbors of $u$ also agree, the proposal has the same status in both chains.
This proves \cref{item:local-coupling-blocking-condition} at time $s$.
Moreover, when both chains succeed we are necessarily in the subcase $c_s'=c_s$, proving the final sentence in \cref{item:local-coupling-blocking-condition}.
Since $u$ has no neighbor in $\mc P$, the common status cannot be hazardous, and therefore $X_s(u)=X_s'(u)$.

\smallskip
\noindent\emph{Case 2: $u=w\in W$.}
We split according to the position of $s$ relative to the epoch $I(w,t)$.

If $s<\tau_w^-$, then $c_s'=c_s$.
Also $s<t$, and $(\dagger_{s-1})(2)$ gives equality on all neighbors of $w$.
So the proposal has the same status in the two runs, and $w$ remains equal.

If $s=\tau_w^-$, then the original proposal is $\chi(w)$, which is successful by definition of $\tau_w^-$.
The modified proposal is $\chi'(w)$, and this is also successful because \eqref{eq:local-no-neighbor-uses-new-rewrite-section24} gives $\chi'(w)\in A(X_{\tau_w^- -1},w)$, while before time $\tau_w^-$ the two runs coincide on $N(w)$.
Hence after time $\tau_w^-$ the two runs carry the colors $\chi(w)$ and $\chi'(w)$ at $w$.

Now assume $\tau_w^-<s<\tau_w^+$.
The proposal at $w$ is unchanged, so $c_s'=c_s$.
In the original run this update is not successful, by definition of $\tau_w^+$.
Suppose for contradiction that it were successful in the modified run.
Since every neighbor of $w$ other than $v_t$ has the same color in the two runs, the only possible cause of different status is the color at $v_t$.
If $s<t$, then the two runs still agree at $v_t$ by $(\dagger_{s-1})(2)$, contradiction.
Hence $s>t$, so $v_t\in \mc C_{s-1}$, and $(\dagger_{s-1})(3)$ gives
\[
\{X_{s-1}(v_t),X_{s-1}'(v_t)\}\subseteq Z_{s-1}(v_t).
\]
Because the proposal has different status in the two runs, it must be blocked by $v_t$ in exactly one run; in particular
\[
c_s\in \{X_{s-1}(v_t),X_{s-1}'(v_t)\}
\quad\text{and}\quad
X_{s-1}(v_t)\neq X_{s-1}'(v_t).
\]
Hence $|Z_{s-1}(v_t)|>1$, so $v_t\in \mc P_{s-1}$, and therefore
\[
c_s\in Z_{s-1}(v_t)\subseteq H(Z_{s-1},w).
\]
But then the update at time $s$ is hazardous for the original bounding chain, which would imply $w\in \mc P_s$, contradicting $w\in W\subseteq N_\mc P(v_t)$.
Hence the update is blocked in both runs, and $w$ continues to carry $\chi(w)$ and $\chi'(w)$.

If $s=\tau_w^+$, then again $c_s'=c_s$, and the original update is successful by definition of $\tau_w^+$.
If it failed in the modified run, then every neighbor of $w$ other than $v_t$ would still agree in the two runs, so the only possible cause would again be the color at $v_t$.
Since $s=\tau_w^+>t$, we have $v_t\in \mc C_{s-1}$, and $(\dagger_{s-1})(3)$ gives
\[
\{X_{s-1}(v_t),X_{s-1}'(v_t)\}\subseteq Z_{s-1}(v_t).
\]
Because the proposal has different status in the two runs, it must be blocked by $v_t$ in exactly one run; hence
\[
c_s\in \{X_{s-1}(v_t),X_{s-1}'(v_t)\}
\quad\text{and}\quad
X_{s-1}(v_t)\neq X_{s-1}'(v_t).
\]
Therefore $|Z_{s-1}(v_t)|>1$, so $v_t\in \mc P_{s-1}$ and
\[
c_s\in Z_{s-1}(v_t)\subseteq H(Z_{s-1},w),
\]
again contradicting $w\notin \mc P$.
So the update succeeds in both runs and sends $w$ to the same color $c_s$; from this time onward the two runs agree at $w$.

Finally, if $s>\tau_w^+$, then $w$ already agrees in the two runs by the previous subcase.
If the update at time $s$ created a new discrepancy at $w$, then every neighbor of $w$ other than $v_t$ would still agree in the two runs, so the discrepancy could only come from the color at $v_t$.
Since $s>t$, we have $v_t\in \mc C_{s-1}$, and $(\dagger_{s-1})(3)$ gives
\[
\{X_{s-1}(v_t),X_{s-1}'(v_t)\}\subseteq Z_{s-1}(v_t).
\]
Because the proposal has different status in the two runs, it must be blocked by $v_t$ in exactly one run; hence
\[
c_s\in \{X_{s-1}(v_t),X_{s-1}'(v_t)\}
\quad\text{and}\quad
X_{s-1}(v_t)\neq X_{s-1}'(v_t).
\]
Therefore $|Z_{s-1}(v_t)|>1$, so $v_t\in \mc P_{s-1}$ and
\[
c_s\in Z_{s-1}(v_t)\subseteq H(Z_{s-1},w),
\]
hence $w\in \mc P_s$, impossible.
So once equality is restored at $\tau_w^+$, it persists forever.
This proves $(\dagger_s)(1)$.

\smallskip
\noindent\emph{Case 3: $u=v_t$.}
If $s<t$, then $c_s'=c_s$: by \eqref{eq:local-vt-does-not-attempt-old-rewrite} and \eqref{eq:local-no-neighbor-uses-new-rewrite-section24}, the local edit never changes a proposal of $v_t$ before time $t$.
For every $w\in W$ that currently differs in the two runs, the common proposal $c_s$ is neither $\chi(w)$ nor $\chi'(w)$, so the temporary discrepancies on $W$ do not affect the proposal status at $v_t$.
Since $\mc C_{s-1}=\emptyset$ for $s<t$, all remaining neighbors of $v_t$ agree in the two runs, and therefore $X_s(v_t)=X_s'(v_t)$.

If $s=t$, then $c_t'=H^*\setminus\{c_t\}$.
Both colors $c_t$ and $c_t'$ belong to $H^*\subseteq H(Z_{t-1},v_t)$, so
\[
\{X_t(v_t),X_t'(v_t)\}\subseteq H(Z_{t-1},v_t)\cup Z_{t-1}(v_t)=Z_t(v_t).
\]
Also $v_t\in \mc C_t$ by definition.

Now assume $s>t$.
Then again $c_s'=c_s$.
As before, \eqref{eq:local-vt-does-not-attempt-old-rewrite} and \eqref{eq:local-no-neighbor-uses-new-rewrite-section24} show that the temporary discrepancies on $W$ are irrelevant to the proposal status at $v_t$.
Thus any discrepancy in the update outcome can only come from neighbors in $\mc C_{s-1}$.
For each $q\in \mc C_{s-1}$, $(\dagger_{s-1})(3)$ gives
\[
\{X_{s-1}(q),X_{s-1}'(q)\}\subseteq Z_{s-1}(q).
\]
Hence the only new color that can appear at $v_t$ is either the old color already contained in $Z_{s-1}(v_t)$, or the proposal color $c_s$.
If the two runs treat $c_s$ differently, then there exists a neighbor $q\in N(v_t)\cap \mc C_{s-1}$ such that $c_s$ equals exactly one of $X_{s-1}(q)$ and $X_{s-1}'(q)$.
In particular $X_{s-1}(q)\neq X_{s-1}'(q)$, so $|Z_{s-1}(q)|>1$ and hence $q\in \mc P_{s-1}$.
Therefore
\[
c_s\in Z_{s-1}(q)\subseteq H(Z_{s-1},v_t).
\]
In all cases
\[
\{X_s(v_t),X_s'(v_t)\}\subseteq Z_s(v_t),
\]
and $v_t\in \mc C_s$.
Thus $(\dagger_s)(2)$ and $(\dagger_s)(3)$ hold at $u=v_t$.

\smallskip
\noindent\emph{Case 4: $u\notin W\cup N(W)$.}
Then $c_s'=c_s$.
If $u\notin \mc C_{s-1}\cup N(\mc C_{s-1})$, every neighbor of $u$ has the same color in the two runs by $(\dagger_{s-1})(2)$, so $X_s(u)=X_s'(u)$.

Otherwise $u\in \mc C_{s-1}\cup N(\mc C_{s-1})$.
If $u\in \mc C_{s-1}$, then automatically $u\in \mc C_s$, because
\[
\mc P_{\le s}\setminus \mc P_{t-1}\supseteq \mc P_{\le s-1}\setminus \mc P_{t-1},
\]
so the component containing $v_t$ can only grow.
Since $u\notin N(W)$, the temporary discrepancies on $W$ are irrelevant.
Every neighbor of $u$ outside $\mc C_{s-1}$ agrees in the two runs, while for each $q\in \mc C_{s-1}$ the inductive hypothesis gives
\[
\{X_{s-1}(q),X_{s-1}'(q)\}\subseteq Z_{s-1}(q).
\]
Therefore the only new color that can appear at $u$ is again the common proposal $c_s$.
If the two runs treat that proposal differently, then there exists a neighbor $q\in N(u)\cap \mc C_{s-1}$ such that $c_s$ equals exactly one of $X_{s-1}(q)$ and $X_{s-1}'(q)$.
In particular $X_{s-1}(q)\neq X_{s-1}'(q)$, so $|Z_{s-1}(q)|>1$ and hence $q\in \mc P_{s-1}$.
Therefore
\[
c_s\in Z_{s-1}(q)\subseteq H(Z_{s-1},u).
\]
Consequently
\[
\{X_s(u),X_s'(u)\}\subseteq Z_s(u).
\]
If $u\notin \mc C_{s-1}$ and $X_s(u)\neq X_s'(u)$, then the differing status implies $c_s\in H(Z_{s-1},u)$, so $u\in \mc P_s$; since $u$ has a neighbor in $\mc C_{s-1}$, it follows that $u\in \mc C_s$.
This proves $(\dagger_s)(2)$ and $(\dagger_s)(3)$.

These four cases exhaust all possibilities, so $(\dagger_s)$ holds for every $s\le \Tcouple$.
Assertion \cref{item:local-coupling-only-vt-propagates} is exactly $(\dagger_s)(2)$.
\end{proof}

The proof above also gives the following conclusion, which we isolate since it will be needed in the involution argument later.

\begin{corollary}
\label{cor:epoch-behavior-W-rewrite}
In the setup of \cref{lem:non-markovian-local-is-bijection}, let $W=B\cup \alpha(B)$. 
\[
X_s(w)=X_s'(w)\quad\text{if }s<\tau_w^-\text{ or }s\ge \tau_w^+,
\]
and
\[
X_s(w)=X_{t-1}(w),
\qquad
X_s'(w)=c'_{\tau_w^-}
\quad\text{if } s \in I(w,t).
\]
In particular, the next successful recoloring time of $w$ after $t$ is still $\tau_w^+$ in the modified run, so the epoch $I(w,t)$ is unchanged by the local edit.
\end{corollary}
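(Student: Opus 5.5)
This corollary is read off from the invariant $(\dagger_r)$ established by induction in the proof of \cref{lem:non-markovian-local-is-bijection}. I would not redo any analysis. I would take the statement $(\dagger_s)(1)$, which was shown to hold for every $0\le s\le \Tcouple$, specialize it to a fixed $w\in W=B\cup\alpha(B)$, and translate it into the displayed conclusions. Part (1) of $(\dagger_s)$ says exactly that $X_s(w)=X_s'(w)$ whenever $s<\tau_w^-$ or $s\ge\tau_w^+$. It also says that on $\tau_w^-\le s<\tau_w^+$, i.e.\ for $s\in I(w,t)$, we have $X_s(w)=\chi(w)=X_{t-1}(w)$ and $X_s'(w)=\chi'(w)=c'_{\tau_w^-}$. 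One point to check is that $I(w,t)$ is defined for every $w\in W$. This holds because $W\subseteq \Swappable_{\mc P}(c_b,t)\cup\Swappable_{\mc P}(c_u,t)$ forces $\Exchange(w,t)\neq\emptyset$, hence $\tau_w^->0$.

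For the ``in particular'' clause, I would extract from Case~2 of the inductive proof the finer statement about update outcomes at $w$ in the modified run. For $\tau_w^-<s<\tau_w^+$ with $v_s=w$, the update is blocked in both runs; any difference in status would make the update hazardous for the bounding chain, forcing $w\in\mc P$, which contradicts $w\in N_{\mc P}(v_t)$. At $s=\tau_w^+$ the update succeeds in both runs. Thus, in the run driven by $\vecsigma'$, the last successful recoloring of $w$ at or before $t$ is still $\tau_w^-$. Here one uses that $c'_{\tau_w^-}=\chi'(w)$ is available, by \eqref{eq:local-no-neighbor-uses-new-rewrite-section24}, and that the runs coincide at $w$ before $\tau_w^-$. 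The first successful recoloring after $t$ is still $\tau_w^+$. Hence the epoch $I(w,t)$ computed from the modified run equals the original one. If $\tau_w^+=\Tcouple+1$, the same argument shows that no update of $w$ in $(t,\Tcouple]$ succeeds in either run.

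The only mildly delicate step is making sure that ``successful'' is measured with respect to the modified trajectory $X'$. In Case~2, successes at times strictly before $\tau_w^-$ are irrelevant to the epoch, since the epoch only looks at the last success up to $t$. So the argument reduces to the three subcases $s=\tau_w^-$, $\tau_w^-<s<\tau_w^+$ and $s=\tau_w^+$ already treated there, and I expect no real obstacle.
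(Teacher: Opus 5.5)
Your proposal is correct and follows the paper's own proof: the paper also reads the corollary off the invariant $(\dagger_s)(1)$ and the subcases $s=\tau_w^-$, $\tau_w^-<s<\tau_w^+$, $s=\tau_w^+$ of Case~2 in the proof of \cref{lem:non-markovian-local-is-bijection}. Your extra checks are sound and make explicit what the paper leaves implicit: that $\tau_w^->0$ for every $w\in W$, and that $\tau_w^-$ is preserved as well as $\tau_w^+$. One small correction: availability of $\chi'(w)$ at time $\tau_w^-$ needs the two runs to agree on $N(w)$ before $\tau_w^-$, not at $w$.
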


\begin{proof}
This is exactly the epoch-wise description established in the proof of \cref{lem:non-markovian-local-is-bijection}; the final sentence follows because the local edit neither creates nor destroys the successful recoloring time $\tau_w^+$.
\end{proof}

\subsection{Bounding chain properties}
\label{sub:validity-bounding}

Here we show that the bounding chain satisfies certain symmetry properties, which will be used in the later proofs showing that the coupling is valid.

\begin{lemma}
\label{lem:bounding-chain-is-symmetric}
    $\BC(\cdot,\cdot,\cdot)$ satisfies the following symmetry properties.
    \begin{itemize}
        \item $\BC(X_0, Y_0, \vecsigma) = \BC(Y_0, X_0, \vecsigma)$.
        \item If $\BC(X_0,Y_0,\vecsigma) = \tt{True}$, let $\vecsigma' = F(X_0,Y_0,\vecsigma)$.  Then $\BC(X_0,Y_0,\vecsigma') = \tt{True}$.  Furthermore, for all $t \in [\Tcouple]$, the sets $\mc{P}_t$ and $\wt{\mc{P}}_t$ created by $(X_0,Y_0,\vecsigma)$ and by $(X_0,Y_0,\vecsigma')$ are the same and further, the sets $Z_t(v)$ and $\wt{Z}_t(v)$ are also the same for all $v \in \mc{P}_t = \wt{\mc{P}}_t$.
    \end{itemize}
\end{lemma}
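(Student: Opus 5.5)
The plan is to handle the two bullets separately. The first is a direct inspection of the definition. The second is an induction over the composition $F = f_{\le \Tcouple}$, resting on a single invariance claim: for every $s$, the bounding chains produced by $(X_0,Y_0,f_{\le s}(\vecsigma))$ and by $(X_0,Y_0,\vecsigma)$ have identical sets $\mc P_r$ for all $r$. Moreover, $Z_r(v)$ agrees for $v\in\mc P_r$, and the hazard sets $H(Z_{r-1},\cdot)$ agree as well.

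\textbf{First bullet.} $Z_0(v)=\{X_0(v),Y_0(v)\}$ is symmetric in $(X_0,Y_0)$. The update rules (2)--(4) refer only to $Z_{t-1}$ and to $(v_t,c_t)$, never to $X$ or $Y$ separately. The same holds for the fixed total order in 4(b). Hence the entire chain $(Z_t)$, and so $\mc P$ and conditions (1)--(3) of $\BC$, are unchanged when $X_0$ and $Y_0$ are swapped.

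\textbf{Second bullet.} I will prove by induction on $s$ that passing from $\vecsigma^{(s-1)}:=f_{\le s-1}(\vecsigma)$ to $\vecsigma^{(s)}$ does not change the bounding chain on the relevant vertices. The vertex sequence is never altered. The identity case is trivial. A Jerrum step swaps $c_s$ within $H=\{X_{s-1}(p),Y^{s-1}_{s-1}(p)\}$, where $p\in D_{s-1}$.

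The first key step is to show $H\subseteq Z_{s-1}(p)$ with $p\in\mc P_{s-1}$. This is the content of the domination statement, which I would prove simultaneously by the same induction, in the spirit of $(\dagger)(3)$ in \cref{lem:non-markovian-local-is-bijection}. Granting it, both $c_s$ and $c_s'$ lie in $H(Z_{s-1},v_s)$. Case 4 of the update rule then gives the same $Z_s(v_s)$, namely $Z_{s-1}(w)\cup Z_{s-1}(v_s)$, or $Z_{s-1}(v_s)$ if $v_s\in\mc P_{s-1}$. Here I use that the tree structure from \cref{lem:bc-basic-geometry-rewrite} makes the chosen $w$ the unique $\mc P$-neighbor $p$, independently of which hazardous color was proposed. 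This is exactly why hazardous colors were given precedence over blocked ones.

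For an $\NM_{t,H^*}$ step, the change at time $t$ is handled the same way, since $H^*\subseteq H(Z_{t-1},v_t)$. The remaining edits occur at times $\tau^-_w$ and at neighbor-update times $r\in I^\circ(w,t)$ for $w\in W=B\cup\alpha(B)$. By \cref{lem:temporary-geometry}(1)--(3), every such updated vertex lies in $W\cup N(W)\setminus\{v_t\}$. None of these vertices is in $\mc P$, and none is adjacent to $\mc P$ except $W$ through $v_t$.

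For these edits I must check three things.
\begin{enumerate}
\item The edited updates are classified as $A$ or $B$, never $H$, both before and after editing.
\item Changing $Z_r(w)$ for a singleton $w\in W$ from $\{\chi(w)\}$ to $\{\chi'(w)\}$ on the epoch $I(w,t)$ does not change the $A/B/H$ classification at any vertex in $\mc P$ or adjacent to it.
\item Consequently it does not change $\mc P$ or the $Z$-values on $\mc P$.
\end{enumerate}
The only $\mc P$-neighbor of $w$ is $v_t$. At $v_t$, once $v_t\in\mc P$, condition (2) of $\BC$ forbids hazardous proposals. For a non-hazardous proposal, swapping a singleton blocker between $\chi$ and $\chi'$ at $w$ can only move the proposal between $A$ and $B$. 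I need that this never alters $Z(v_t)$ restricted to what matters. Here I will use \eqref{eq:local-vt-does-not-attempt-old-rewrite} and \eqref{eq:local-no-neighbor-uses-new-rewrite-section24}: during $I^\circ(w,t)$, $v_t$ never proposes $\chi(w)$ or $\chi'(w)$. Hence the classification at $v_t$ is identical.

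Conditions (1)--(3) of $\BC$ depend only on $\mc P$, on the $H$-classification at $\mc P$-vertices, and on $|\mc P|$. Once $\mc P$ and these classifications are shown invariant, $\BC$ remains $\tt{True}$.

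\textbf{Main obstacle.} I expect the hardest part to be the bookkeeping in the $\NM$ case. The edited $Z$-values at the singleton vertices $W$ must not leak into the classification elsewhere. Subtle points are the outer neighbors $u\in N(W)\setminus\{v_t\}$, whose $Z$ may genuinely change but which lie outside $N(\mc P)$, and interactions between several $\NM$ edits at different times. For the latter I plan to use \cref{lem:temporary-geometry}(3) and the $\Avoid$ set to keep edits confined away from $N^2(\mc P)$, so that each induction step only needs to compare against the original $\vecsigma$.
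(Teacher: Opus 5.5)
There is a genuine gap in the $\NM$ case, at the outer neighbors $u\in N(W)\setminus\{v_t\}$. The rest of your structure matches the paper's: the identity/Jerrum/$\NM$ split, confining the $\NM_{t,H^*}$ edits to $W\cup N(W)\setminus\{v_t\}$, and using that $v_t$ never proposes $\chi(w)$ or $\chi'(w)$ during $I^\circ(w,t)$. Your remark that the witness in rule 4(b) is forced to be the unique $\mc P$-neighbor $p$ is also a welcome precision over the paper's one-line Jerrum argument.

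You say the $Z$-values of these outer neighbors ``may genuinely change'' but are harmless because $u\notin N(\mc P)$. You then check only that classifications are unchanged at vertices of $\mc P\cup N(\mc P)$. That does not give your step (3). Suppose some singleton $Z_r(u)$ differed between the two runs, say $\{a\}$ versus $\{a'\}$. A later proposal of $a$ at another neighbor $x$ of $u$ could then be blocked in one run and available in the other, so $Z(x)$ would differ. Over $\Tcouple$ steps, such singleton differences can travel to a neighbor of some $z\in\mc P$. There they can turn a blocked proposal at $z$ into an available one in only one run, collapsing $Z(z)$ there and changing $\mc P$. So you need the stronger invariant that no vertex outside $W$ ever has different bounding-chain values in the two runs.

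That invariant is the paper's $(\dagger_s)(3)$, and it is exactly where step~(3) of $\NM_{t,H^*}$ and exchangeability are used. Fix $u\in N(w)\setminus\{v_t\}$ and $s\in I^\circ(w,t)$.
\begin{itemize}
\item A proposal $c_s=\chi(w)$ is rewritten to $\chi'(w)$, and each is blocked by the singleton $w$ in its own run. So the update has type $B$ in both chains; your item (1) only says ``$A$ or $B$'', which is not enough.
\item Every other proposal of $u$ differs from $\chi(w)$, since otherwise it would have been rewritten. It also differs from $\chi'(w)$, because $\chi'(w)\in\Exchange(w,t)\setminus\{\chi(w)\}$.
\item By \cref{lem:temporary-geometry}, $u$ has no $\mc P$-neighbor and no other neighbor in $W$. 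So nothing at $u$ is hazardous, and its remaining neighbors agree.
\end{itemize}
Outside $I^\circ(w,t)$ the two runs agree at $w$. Hence $Z(u)$ is identical in both runs.

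With this added, the only bounding-chain discrepancies are $\{\chi(w)\}$ versus $\{\chi'(w)\}$ at $w\in W$ during $I(w,t)$. You should still verify this directly: the update at $\tau_w^-$ is available in both runs, those in $(\tau_w^-,\tau_w^+)$ are blocked in both since $w\notin\mc P$, and the one at $\tau_w^+$ is available in both. Then your argument at $v_t$ applies and the induction closes.
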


\begin{proof}
    The first bullet is immediate, as examining the definition reveals no difference in how $\BC$ handles its first two inputs.

    The second bullet is somewhat more subtle.  We will do this by analyzing the local functions $f_t$, and showing that none of them will meaningfully affect the evolution of the bounding chain $Z_t$.  There are three possibilities for $f_t$. 
    \begin{itemize}
        \item $f_t = \id$.  This is trivial.
        \item $f_t = \Jerrum_{t,H}$ for some $H \subset H(Z_{t-1},v_t)$.  This is also immediate from reviewing the definition of the bounding chain: a hazardous update never accesses which element of the hazardous set was actually attempted.
        \item $f_t = \NM_{t,H}$ is non-Markovian.  In this case, we must run an induction similar to proving \cref{lem:non-markovian-local-is-bijection}.  The key point is that in terms of updates, $\NM_{t,H}$ affects $\vecsigma$ in the $2$-neighborhood at potentially any time in $[\Tcouple]$ but affects $\vecsigma$ in the $1$-neighborhood at only times in $[0,t)$ (specifically, $\tau_w^-$ for each $w \in N_p(v_t)$).  Thus the changes are all in the past, before any times at which $v_t$ is hazardous (since $\BC(X_0,Y_0,\vecsigma) = \tt{True}$ means that $v_t$ was not yet hazardous).  Thus the hazardous set $\mc{P}_t$ constructed in $\{Z_s\}$ does not see them.
    \end{itemize}
Formally, let
\[
\eta:=\vecsigma=((v_s,c_s))_{s\in [\Tcouple]},
\qquad
\eta':=\vecsigma'=\NM_{t,H}(\vecsigma)=((v_s,c'_s))_{s\in [\Tcouple]}.
\]
Let $Z^\eta$ and $Z^{\eta'}$ be the corresponding bounding chains, and write $\widetilde Z:=Z^{\eta'}$ and $\widetilde{\mc P}:=\mc P^{\eta'}$. Let $W = B \cup \alpha(B)$.

For each $w\in W$, write
\[
\chi(w):=X_{t-1}(w),
\qquad
\chi'(w):=c'_{\tau_w^-}.
\]
As in the proof of \cref{lem:non-markovian-local-is-bijection}, we have
\begin{equation}
\label{eq:nm-exchangeable-color-available-rewrite}
\chi'(w)\in \Exchange(w,t)
\qquad\text{for every }w\in W,
\end{equation}
and therefore
\begin{equation}
\label{eq:nm-no-neighbor-uses-new-color-rewrite}
\text{for every }w\in W\text{ and every }s\in I^\circ(w,t)\text{ with }v_s\in N(w),\text{ one has }c_s\ne \chi'(w).
\end{equation}
Because every $w\in W$ belongs to a swappable set, we also have
\begin{equation}
\label{eq:nm-vt-does-not-attempt-old-color-rewrite}
\text{for every }w\in W\text{ there is no }s\in I^\circ(w,t)\text{ with }v_s=v_t\text{ and }c_s=\chi(w).
\end{equation}

Moreover, \cref{lem:temporary-geometry} gives the following geometry relative to the bounding chain generated by $\eta$: $W\subseteq N_{\mc P^{\eta}}(v_t)$, for every $w\in W$ one has $N(w)\cap \mc P^{\eta}=\{v_t\}$, and if $u\in N(W)\setminus\{v_t\}$ then $N(u)\cap \mc P^{\eta}=\emptyset$ and there is a unique $w\in W$ with $u\in N(w)$.

We now prove by induction on $s$ the following stronger statement, which we denote by $(\dagger_s)$:
\medskip

\noindent For every $0\le r\le s$:
\begin{enumerate}
\item $\widetilde{\mc P}_r=\mc P_r^{\eta}$ and $\widetilde Z_r(u)=Z_r^{\eta}(u)$ for every $u\in \mc P_r^{\eta}$;
\item for every $w\in W$,
\[
Z_r^{\eta}(w)=\{\chi(w)\}
\quad\text{and}\quad
\widetilde Z_r(w)=\{\chi'(w)\}
\qquad\text{whenever }\tau_w^-\le r<\tau_w^+,
\]
while $Z_r^{\eta}(w)=\widetilde Z_r(w)$ whenever $r<\tau_w^-$ or $r\ge \tau_w^+$;
\item if $u\in N(W)\setminus\{v_t\}$, then
\[
\widetilde Z_r(u)=Z_r^{\eta}(u)
\qquad\text{and}\qquad
u\notin \mc P_r^{\eta}.
\]
\end{enumerate}
\medskip
The base case $s=0$ is immediate. Assume $(\dagger_{s-1})$ and consider time $s$. Only the vertex $v_s$ can change at time $s$, so we distinguish cases.

\smallskip
\noindent\emph{Case 1: $v_s\notin W\cup N(W)$.}
Then $c_s'=c_s$, and every neighbor of $v_s$ has the same bounding-chain state in the two runs by $(\dagger_{s-1})$. Hence the proposal has the same status in the two bounding chains, and the update at time $s$ has the same effect in both. Therefore $(\dagger_s)$ holds.

\smallskip
\noindent\emph{Case 2: $v_s=w\in W$.}
We split according to the position of $s$ inside the epoch $I(w,t)$.

If $s=\tau_w^-$, then the original proposal is $\chi(w)$ and the modified proposal is $\chi'(w)$. Since $\tau_w^-<t$ and $v_t\notin \mc P_{\le t-1}^{\eta}$, the vertex $w$ has no multivalued neighbor at time $\tau_w^- -1$; hence every neighbor contributes a singleton color in both bounding chains. The proposal $\chi(w)$ is available because $\tau_w^-$ is a successful recoloring time in the reference run, and $\chi'(w)$ is also available by \eqref{eq:nm-exchangeable-color-available-rewrite}. Thus both updates set the bounding-chain state of $w$ to a singleton, namely $\{\chi(w)\}$ and $\{\chi'(w)\}$ respectively.

If $\tau_w^-<s<\tau_w^+$, then $c_s'=c_s$. By \cref{lem:temporary-geometry}, every neighbor of $w$ other than $v_t$ lies in $N(W)\setminus\{v_t\}$ and therefore has the same bounding-chain state in the two runs by $(\dagger_{s-1})(3)$, while the unique neighbor of $w$ in $\mc P^{\eta}$ is $v_t$, whose multivalued set is also the same in the two processes by $(\dagger_{s-1})(1)$. Therefore the proposal has the same status in the two bounding chains. Since $w\notin \mc P^{\eta}$ and there is no successful recoloring time of $w$ in the open interval $(\tau_w^-,\tau_w^+)$, the update at time $s$ is blocked in the reference bounding chain and hence also in the modified one. So the singleton status of $w$ is unchanged in both runs.

If $s=\tau_w^+$, then again $c_s'=c_s$ and the same neighborhood comparison shows that the proposal has the same status in both bounding chains. Because $w\notin \mc P^{\eta}$, the update at time $\tau_w^+$ cannot be hazardous; and since $\tau_w^+$ is a successful recoloring time in the reference run, it must therefore be available. Hence it is also available in the modified bounding chain, and both runs set the state of $w$ equal to the same singleton $\{c_s\}$. Thus $(\dagger_s)$ holds in Case~2.

\smallskip
\noindent\emph{Case 3: $v_s=u\in N(W)\setminus\{v_t\}$.}
By \cref{lem:temporary-geometry}, there is a unique $w\in W$ with $u\in N(w)$, and $u$ has no neighbor in $\mc P^{\eta}$ by definition of $\Avoid(t)$. Thus $u$ has no hazardous color in either bounding chain.

If $s\in I^\circ(w,t)$ and $c_s=\chi(w)$, then by definition of $\NM_{t,H^*}$ we have $c_s'=\chi'(w)$. In the reference run the proposal $\chi(w)$ is blocked by the singleton neighbor $w$, and in the modified run the proposal $\chi'(w)$ is blocked by the same singleton neighbor $w$. Hence in both runs the update at time $s$ leaves the state of $u$ unchanged, so $u\notin \mc P_s^{\eta}=\widetilde{\mc P}_s$ and $(\dagger_s)(3)$ continues to hold.

In all other subcases, $c_s'=c_s$. If $s\in I^\circ(w,t)$, then \eqref{eq:nm-no-neighbor-uses-new-color-rewrite} gives $c_s\ne \chi'(w)$, and because we are not in the previous subcase we also have $c_s\ne \chi(w)$. Thus the carried color of $w$ is irrelevant for the proposal status at $u$, and the update type is the same in both bounding chains. So again $(\dagger_s)(3)$ holds.

\smallskip
\noindent\emph{Case 4: $v_s=v_t$.}
If $s=t$, then $c_t'=H^*\setminus\{c_t\}$ and both colors lie in the same hazardous set $H^*\subseteq H(Z_{t-1}^{\eta},v_t)$. Hence time $t$ is hazardous in both sequences, and the resulting multivalued set at $v_t$ is the same in both bounding chains:
\[
\widetilde Z_t(v_t)=H(Z_{t-1}^{\eta},v_t)\cup Z_{t-1}^{\eta}(v_t)=Z_t^{\eta}(v_t).
\]
Thus $\widetilde{\mc P}_t=\mc P_t^{\eta}$.

Now assume $s\ne t$. Then $c_s'=c_s$. If $w\in W$ and the two runs currently disagree on the singleton color of $w$, then necessarily $s\in I^\circ(w,t)$. In that case, \eqref{eq:nm-vt-does-not-attempt-old-color-rewrite} implies $c_s\ne \chi(w)$, while \eqref{eq:nm-no-neighbor-uses-new-color-rewrite} implies $c_s\ne \chi'(w)$. So the neighbor $w$ blocks neither proposal. All other neighbors of $v_t$ have the same bounding-chain state in the two runs by $(\dagger_{s-1})$. Therefore $c_s$ has the same status in the two bounding chains, and the update at time $s$ has the same effect in both.

\smallskip
These four cases exhaust all possibilities, so $(\dagger_s)$ holds for every $s\le \Tcouple$. In particular,
\[
\widetilde{\mc P}_s=\mc P_s^{\eta}
\qquad\text{for all }0\le s\le \Tcouple,
\]
and
\[
\widetilde Z_s(u)=Z_s^{\eta}(u)
\qquad\text{for every }u\in \mc P_s^{\eta}.
\]
This proves that the two runs generate the same sets $\mc{P}_s$ and the same bounding-chain values on those sets.

It remains to prove that $\eta'$ also satisfies $\BC$. Note first that Condition~(1) in the definition of $\BC$ depends only on the union of the sets $\mc P_s$, and that union is the same for $\eta$ and $\eta'$. For Condition~(2), fix $1\le s\le \Tcouple$ such that
\[
v_s\in \bigcup_{0\le r<s}\widetilde{\mc P}_r=\bigcup_{0\le r<s}\mc P_r^{\eta}.
\]
By the case analysis above, the proposal at time $s$ has the same type (available, blocked, or hazardous) in the two bounding chains. In particular,
\[
 c_s'\in H(\widetilde Z_{s-1},v_s)
 \iff
 c_s\in H(Z_{s-1}^{\eta},v_s).
\]
Since $\eta$ satisfies Condition~(2) in the definition of $\BC$, the right-hand side is false, and therefore the left-hand side is also false. So $\eta'$ satisfies Condition~(2). Finally, since we have shown that $\mc{P} = {\wt{\mc{P}}}$, Condition~(3) is also satisfied, proving that $\BC(X_0,Y_0,\eta')=\tt{True}$.
\end{proof}

\subsection{Validity of local coupling}
\label{sub:validity-local}

Here we prove that the local coupling $\NM_{t,H^*}$ is bijective, and hence is a valid coupling.
In the lemma below, as throughout this section, $X_0$ and $Y_0$ are two labelings with an update sequence $\vecsigma \in (V \times [k])^\Tcouple$.  The discrepancy set $\mc{P}$ follows condition $\BC(X_0,Y_0,\vecsigma) = \tt{True}$.

\begin{lemma}\label{item:local-coupling-involution}
    Suppose $\BC(X_0,Y_0,\vecsigma) = \tt{True}$ and $t$ satisfies \cref{assumption:non-markovian-succeed}.  Then
    \[  \NM_{t,H^*}(\NM_{t,H^*}(\vecsigma))~=~\vecsigma.
    \]
\end{lemma}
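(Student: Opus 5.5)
The plan is to show that when $\NM_{t,H^*}$ is applied to $\vecsigma':=\NM_{t,H^*}(\vecsigma)$, every piece of data entering \cref{def:local-NM-coupling} is recomputed identically, except that the roles of $c_b$ and $c_u$ (and of $B$ and $\alpha(B)$) are exchanged. The second application then exactly undoes the first. Throughout, $W:=B\cup\alpha(B)$, $\chi(w):=X_{t-1}(w)$ and $\chi'(w):=c'_{\tau_w^-}$, and $X'_s$ denotes the run driven by $\vecsigma'$.

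\textbf{Step 1: global data are unchanged.} The vertex sequence is untouched. By the second bullet of \cref{lem:bounding-chain-is-symmetric}, in the $\NM$ case of its proof, $\BC(X_0,Y_0,\vecsigma')=\tt{True}$ and the sets $\mc P_s$ and the values $Z_s$ on $\mc P_s$ coincide. Hence $c'_t\in H^*\subseteq H(\widetilde Z_{t-1},v_t)$, and $p$ is the same vertex. By \cref{lem:non-markovian-local-is-bijection}(2b), and since $p\notin W\cup\mc C_{t-1}$, we get $X'_{t-1}(p)=X_{t-1}(p)$. So the new pair is $\{X'_{t-1}(p),c'_t\}=\{X_{t-1}(p),X_{t-1}(p)\}\cup\{c_t\}$ adjusted as follows: since $c'_t=H^*\setminus\{c_t\}$, it equals $H^*$ again, with the roles of $c_t$ and $X_{t-1}(p)$ swapped. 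By \cref{cor:epoch-behavior-W-rewrite}, together with \cref{lem:non-markovian-local-is-bijection}(1),(2a) for vertices in $N(W)$, all successful-update times $\tau^\pm_w$ for $w\in N(v_t)$ are unchanged. Thus the epochs $I(w,t)$, the degrees, and the orderings used to define $\alpha$ are unchanged.

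\textbf{Step 2: the blocked colour flips.} On $N_p(v_t)$, the run $X'_{t-1}$ differs from $X_{t-1}$ only on $W$, where $\chi$ is replaced by $\chi'$. There are two kinds of vertex in $W$:
\begin{itemize}
\item Every $w\in B$ loses $c_b$. Its new colour is either $\beta_w(\cdot)\notin H^*$ or, if $w$ is a fixed point of $\alpha$, it is $c_u$.
\item Every vertex of $\alpha(B)$ gains $c_u$.
\end{itemize}
Vertices outside $W$ keep colours outside $H^*$ or equal to $c_b$ only if they lie in $B$. Also, the second bullet of \cref{assumption:non-markovian-succeed} excludes $\mc P$-neighbours. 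Therefore the new blocked colour is $c_b'=c_u$, the new unblocked colour is $c_u'=c_b$, and the new blocking set is $B'=\alpha(B)$. The last bullet of \cref{condition:nm-prelim} (namely $\tau_w^->0$) carries over by Step 1.

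\textbf{Step 3: the auxiliary sets are invariant.} For $w\in W$, the set $A(X_{\tau_w^--1},w)$ is unchanged, because the two runs agree on $N(w)$ before $\tau_w^-$. In $\vecsigma'$, the neighbour attempts of $\chi(w)$ during $I^\circ(w,t)$ have been rewritten to $\chi'(w)$, which is now the current colour. Meanwhile $\chi'(w)$ lay in the "unattempted" part before the edit, and $\chi(w)\in A(X_{\tau_w^--1},w)$ because $\tau_w^-$ was a successful update. Hence $\Exchange'(w,t)=\Exchange(w,t)$ as sets.

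For vertices outside $W$ nothing changes. The same bookkeeping shows that $\Avoid(t)$ is unchanged:
\begin{itemize}
\item the $N^2_{v_t}(\mc P)$ part depends only on $\mc P$;
\item the attempt-condition is relabelled consistently, using \eqref{eq:local-vt-does-not-attempt-old-rewrite} and \cref{lem:temporary-geometry}(3) to see that no relevant update was edited.
\end{itemize}
Consequently $\Swappable'(c_b')=\Swappable(c_u)$ and $\Swappable'(c_u')=\Swappable(c_b)$. Since the orderings are identical, $\alpha'=\alpha^{-1}$ on $B'=\alpha(B)$. Since the ordered $\Exchange$ sets are identical, $\beta'_{\alpha(w)}=\beta_w^{-1}$.

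It remains to verify the bullets of \cref{assumption:non-markovian-succeed} for $\vecsigma'$:
\begin{itemize}
\item $\alpha'$ is defined on $B'$.
\item $\alpha'(B')\cap B'$ consists of the fixed points.
\item $\beta'_{\alpha(w)}(X'_{t-1}(w))=\beta_w^{-1}(\beta_w(X_{t-1}(\alpha(w))))=X_{t-1}(\alpha(w))\notin H^*$, because $\alpha(w)$ was not in $B$ and did not carry $c_u$.
\item $B'\cap\Avoid=\emptyset$, because $\alpha(B)\subseteq\Swappable(c_u)$.
\end{itemize}
This is the step I expect to be the main obstacle: it is delicate to check carefully that $\Exchange$ and $\Avoid$ are truly invariant, and that the fixed-point case $\alpha(w)=w$ is handled.

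\textbf{Step 4: compose the edits.} For $w'=\alpha(w)\in B'$ with $w'\ne w$, the second application sets
\begin{itemize}
\item $c''_{\tau^-_{w'}}=\beta'_{w'}(X'_{t-1}(w))=X_{t-1}(\alpha(w))=c_{\tau^-_{\alpha(w)}}$, and
\item $c''_{\tau^-_w}=c_u'=c_b=c_{\tau^-_w}$.
\end{itemize}
Fixed points are handled in the same way. Item (3) of \cref{def:local-NM-coupling} rewrites the neighbour attempts of $\chi'(w)$ back to $\chi(w)$. At time $t$, $c''_t=H^*\setminus\{c'_t\}=c_t$. All other coordinates were never touched. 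Hence $\NM_{t,H^*}(\NM_{t,H^*}(\vecsigma))=\vecsigma$.
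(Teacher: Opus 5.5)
Your proposal is correct and follows essentially the same route as the paper's proof. In both, the bounding chain is preserved by the single-edit case of \cref{lem:bounding-chain-is-symmetric} and the epochs are fixed by \cref{cor:epoch-behavior-W-rewrite}. Then $\Exchange$, $\Avoid$ and the swappable sets are shown to be invariant, which gives $B'=\alpha(B)$, $c_b'=c_u$, $\alpha'=\alpha^{-1}$ and $\beta'_{\alpha(w)}=\beta_w^{-1}$, and the edits are undone coordinate by coordinate. Your treatment of $\Avoid$ on $W$ and your check of the bullets of \cref{assumption:non-markovian-succeed} are somewhat more explicit than the paper's.

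The one slip is the garbled computation in Step~1. Since $c'_t=X_{t-1}(p)=X'_{t-1}(p)$, the pair $\{X'_{t-1}(p),c'_t\}$ is a singleton, not $H^*$. So $H^*$ cannot be recomputed from the $X_0$-run under $\vecsigma'$. It must be held fixed as the parameter of $\NM_{t,H^*}$, exactly as the paper does. In the global argument this reflects that the reverse run uses the $Y$-chain as reference, whose color at $p$ is $c_t$.
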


\begin{proof}
Set
\[
\vec\sigma' := \NM_{t,H^*}(\vec\sigma),
\qquad
\vec\sigma'' := \NM_{t,H^*}(\vec\sigma').
\]
We will show that $\vec\sigma''=\vec\sigma$.

The single-edit case in the proof of \cref{lem:bounding-chain-is-symmetric} shows that the first pass preserves the bounding chain. Hence $\vec\sigma$ and $\vec\sigma'$ generate the same sets $\mc P_s$ and the same bounding-chain values on those sets. In particular, the auxiliary objects $p$, $H^*$, and the neighborhood $N_{\mc P}(v_t)$ are the same for the first and second pass.

Let
\[
W:=B\cup \alpha(B),
\qquad
\chi(w):=X_{t-1}(w),
\qquad
\chi'(w):=c'_{\tau_w^-}
\qquad (w\in W),
\]
where $\vec\sigma'=((v_s,c_s'))_{s\in [\Tcouple]}$.
By \cref{cor:epoch-behavior-W-rewrite}, the first pass changes the reference evolution on $W$ exactly as follows: for each $w\in W$, the times $\tau_w^-$ and $\tau_w^+$ are unchanged, and on the whole interval $[\tau_w^-,\tau_w^+)$ the color carried by $w$ is changed from $\chi(w)$ to $\chi'(w)$.

\begin{claim}
\label{claim:exchangeable-invariance-rewrite}
For every $z\in N_{\mc P}(v_t)$, the second pass sees the same epoch and the same exchangeable set as the first pass:
\[
I^{\vec\sigma'}(z,t)=I^{\vec\sigma}(z,t),
\qquad
\Exchange^{\vec\sigma'}(z,t)=\Exchange^{\vec\sigma}(z,t).
\]
Furthermore,
\[ \Avoid^{\vecsigma'}(t) = \Avoid^\vecsigma(t). \]
Consequently,
\[
\Swappable_{\mc P}^{\vec\sigma'}(c,t)=\Swappable_{\mc P}^{\vec\sigma}(c,t)
\qquad\text{for every }c\in [k].
\]
\end{claim}

\begin{proof}
Fix $z\in N_{\mc P}(v_t)$.

First suppose that $z\in W$.
By \cref{cor:epoch-behavior-W-rewrite}, the times $\tau_z^-$ and $\tau_z^+$ are unchanged by the first pass, so the epoch $I(z,t)$ is unchanged as well.
Before time $\tau_z^-$, the two runs coincide on the whole neighborhood of $z$, so the availability set at the last successful recoloring time is the same in the two runs.
For every color $d\notin\{\chi(z),\chi'(z)\}$, the first pass does not edit any proposal of color $d$ made by a neighbor of $z$ during $I^\circ(z,t)$, so membership of $d$ in the exchangeable set is unchanged.
The color $\chi'(z)$ belongs to $\Exchange^{\vec\sigma}(z,t)$ by construction, and it belongs to $\Exchange^{\vec\sigma'}(z,t)$ because it is the current color of $z$ at time $t-1$ under $\vec\sigma'$.
The color $\chi(z)$ belongs to $\Exchange^{\vec\sigma}(z,t)$ because it is the current color under $\vec\sigma$, and it belongs to $\Exchange^{\vec\sigma'}(z,t)$ because every interior neighbor-attempt of $\chi(z)$ was changed by the first pass to an attempt of $\chi'(z)$.
Hence
\[
\Exchange^{\vec\sigma'}(z,t)=\Exchange^{\vec\sigma}(z,t).
\]

Now suppose that $z\notin W$.
We claim that no coordinate edited by the first pass lies at $z$ or at a neighbor of $z$ during the open interval $I^\circ(z,t)$.
Indeed, steps~1 and~2 of $\NM_{t,H^*}$ edit only successful recolorings at vertices of $W$, and by acyclicity of $\mc{P}$, no vertex of $W$ is adjacent to $z$.
Step~3 edits only proposals made at neighbors of vertices of $W$; but since $G$ has girth $\geq 7$, this cannot lie in $N(z)$ either. 
Step~4 edits time $t$ itself, but time $t$ is excluded from $I^\circ(z,t)$ by definition.
Therefore the successful recoloring times of $z$, the neighborhood states relevant at those times, and the entire neighbor-attempt pattern on $I^\circ(z,t)$ are unchanged, so both the epoch and the exchangeable set are unchanged.

This proves the first two assertions.
For the swappable sets, note that their definition depends only on the exchangeable sets and on whether there exists $s\in I^\circ(z,t)$ with $v_s=v_t$ and $c_s$ equal to the current color of $z$.
For $z\notin W$, all of these are unchanged.
For $z\in W$, the current color changes from $\chi(z)$ to $\chi'(z)$, but the first pass never edits a time with updated vertex $v_t$ except time $t$ itself, and time $t$ is excluded from $I^\circ(z,t)$.
Moreover, by exchangeability neither $\chi(z)$ nor $\chi'(z)$ is proposed by $v_t$ on $I^\circ(z,t)$.

For $\Avoid(t)$, as $\mc{P}$ is the same in both cases, clearly $N_{v_t}^2(\mc{P})$ is the same in both cases.  The remaining part of $\Avoid(t)$ is
\[ \{ w \in N(v_t) : \exists s \in I^\circ(w,t) \text{ where } v_s \in (N(w) \cap N_w^2(\mc{P})) \cup \{ v_t \}, c_s = X_{s-1}(w) \}. \]

Since $\Avoid(t) \cap N(B \cup \alpha(B)) = \emptyset$, by \cref{lem:temporary-geometry}, vertices in $\Avoid(t)$ and their neighbors have all color attempts untouched by $\NM_{t,H}$.  Thus $\Avoid(t)$ is also invariant under $\NM_{t,H}$.

Thus the swappable sets are unchanged for every color.
\end{proof}

Let $B'$ denote the blocking set computed from $\vec\sigma'$ in the second pass.
By \cref{cor:epoch-behavior-W-rewrite}, every vertex $w\in B$ with $\alpha(w)\ne w$ carries the color
\[
\chi'(w)=\beta_w(X_{t-1}(\alpha(w)))\notin H^*
\]
at time $t-1$ under $\vec\sigma'$, while every vertex $\alpha(w)$ carries the color $c_u$. In the case that $\alpha(w) = w$, the second criterion takes precedence, i.e.~$w$ carries $c_u$.
No vertex of $N_{\mc P}(v_t)\setminus W$ changes color under the first pass.
Since originally the only color from $H^*$ present on $N_{\mc P}(v_t)$ was $c_b$, it follows that after the first pass the only color from $H^*$ present on $N_{\mc P}(v_t)$ is $c_u$, and it appears exactly on $\alpha(B)$.
Therefore
\[
B' = \alpha(B),
\qquad
c_b' = c_u,
\qquad
c_u' = c_b.
\]
By \cref{claim:exchangeable-invariance-rewrite}, the epochs $I(w,t)$ are the same in the two passes.
Hence the complementary-neighbor map in the second pass is the inverse ordered matching:
\[
\alpha'(\alpha(w))=w
\qquad\text{for every }w\in B.
\]
Likewise, because the ordered exchangeable sets are unchanged, the complementary-color map in the second pass is the inverse ordered matching on those ordered sets:
\[
\beta'_{\alpha(w)}\bigl(\chi'(w)\bigr) = X_{t-1}(\alpha(w))
\qquad\text{for every }w\in B\text{ with }\alpha(w)\ne w.
\]
In particular, the second pass is well-defined.

We now compare the coordinates edited by the two passes.
At time $t$,
\[
c''_t = H^*\setminus\{c'_t\} = c_t.
\]
For each $w\in B$, the second pass changes the two successful recolorings by
\[
c''_{\tau_w^-}=c_b=\chi(w),
\qquad
c''_{\tau_{\alpha(w)}^-}=X_{t-1}(\alpha(w))=\chi(\alpha(w)).
\]
These are exactly the original colors at those two times.
Finally, because
\[
W = B'\cup \alpha'(B') = \alpha(B)\cup B,
\]
the second pass acts on exactly the same collection of epochs as the first pass, but with the carried colors reversed.
Thus every interior edit made in the first pass is undone in the second pass: each proposal of $\chi'(w)$ created by the first pass is changed back to a proposal of $\chi(w)$, for every $w\in W$.
All coordinates untouched by the first pass remain untouched by the second pass.
Hence
\[
\vec\sigma'' = \vec\sigma,
\]
which proves the lemma.
\end{proof}

\subsection{Validity of Global Coupling}
\label{sub:validity-global-bijective}

In this section, we show that the global coupling is indeed a valid coupling.

\begin{definition}
    We define the set $\Temp_t$ as follows.  If $f_t = \NM_{t,H}$, let $\Temp_t = B \cup \alpha(B)$ as defined in \cref{sub:non-markovian-local-coupling}.  Else let $\Temp_t = \emptyset$. Let $\Temp_{\leq t} = \bigcup_{s \leq t} \Temp_s$.
\end{definition}

\begin{remark}
  As discussed in \cref{rmk:temporary-vs-persistent}, the set $\Temp_t$ captures the ``temporary'' discrepancies which are necessarily introduced during our non-Markovian coupling. However, since \cref{assumption:non-markovian-succeed} does not assume that $\tau_w^+$ is defined for $w \in B$ to avoid measurability issues, it may be the case that these ``temporary'' discrepancies may actually persist until time $\Tcouple$. Therefore, our argument will need to separately control the set $\Temp_{\leq\Tcouple}$ as well.
\end{remark}

Fix $t\in\{0,1,\dots,\Tcouple\}$. Besides the forward construction
\[
\eta^{(s)}:=f_{\le s}(\vecsigma)\qquad(0\le s\le t),
\]
we consider the \emph{reverse construction} obtained by running the global coupling
\cref{def:global-coupling} with initial ordered pair $(Y_0,X_0)$ and input sequence
$\eta^{(t)}$.  This construction creates local functions $g_s^{(t)}$ for $1\leq s \leq t$.
For this reverse run write 
\[
\zeta^{(0,t)}:=\eta^{(t)},
\qquad
\zeta^{(s,t)}:=g_s^{(t)}\bigl(\zeta^{(s-1,t)}\bigr)
\qquad(1\le s\le t),
\]
and let $D_s^{\leftarrow,t}$ denote the reverse discrepancy sets.
Thus
\[
\zeta^{(t,t)}=g_t^{(t)}\circ\cdots\circ g_1^{(t)}\bigl(\eta^{(t)}\bigr).
\]
When $t$ is fixed we suppress the superscript and simply write
$\zeta^{(s)}$, $g_s$, and $D_s^{\leftarrow}$.

The following observation packages the locality information used repeatedly in the induction.

\begin{remark}
\label{rem:stage-local-data-rewrite}
At stage $s$ of the global construction, once $D_{s-1}$ is known, the branch choice in
\cref{def:global-coupling} depends only on the states at time $s-1$ on
$D_{s-1}\cup N(D_{s-1})$.
If the chosen map is non-Markovian, then by \cref{lem:non-markovian-local-is-bijection} every coordinate it
changes has updated vertex in $N_{\mc P}^2(v_s)$, and the proof of
\cref{item:local-coupling-involution} shows that all auxiliary objects ($B$, $\alpha$,
$\beta$, the relevant epochs, and the exchangeable/swappable sets) are determined from the update
sequence restricted to that same local region.
The same statement applies verbatim to the reverse construction.
\end{remark}

\begin{proposition}
\label{prop:main-global-induction-rewrite}
For every $0\le t\le \Tcouple$, the following hold.
\begin{enumerate}[(i)]
\item If the reverse construction is run on input $\eta^{(t)}$, then
\[
\zeta^{(t,t)}=\vec\sigma.
\]
Equivalently,
\[
g_t^{(t)}\circ\cdots\circ g_1^{(t)}\bigl(\eta^{(t)}\bigr)=\vec\sigma.
\]
\item The forward and reverse discrepancy sets agree at time $t$:
\[
D_t^{\leftarrow,t}=D_t.
\]
\item Outside persistent and temporary discrepancies, the two chains agree:
\[
X_t(v)=Y_t^t(v)
\qquad\text{for every }v\notin D_t\cup \Temp_{\le t}.
\]
\item Every vertex in $D_t$ is a genuine persistent discrepancy, and its two actual colors are
recorded by the bounding chain:
\[
X_t(v)\neq Y_t^t(v)
\qquad\text{and}\qquad
\{X_t(v),Y_t^t(v)\}\subseteq Z_t(v)
\qquad\text{for every }v\in D_t.
\]
\end{enumerate}
\end{proposition}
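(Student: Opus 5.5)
We argue by induction on $t$, proving (i)--(iv) together. At $t=0$ everything is trivial: $f_0=\id$, so $\eta^{(0)}=\vecsigma$ and the reverse run has no steps; $D_0=D_0^{\leftarrow}=\{z^*\}$; $\Temp_{\le 0}=\emptyset$; and $Z_0(z^*)=\{X_0(z^*),Y_0(z^*)\}$ with $X_0(z^*)\ne Y_0(z^*)$. If $\BC(X_0,Y_0,\vecsigma)=\tt{False}$, all maps are the identity. By \cref{lem:bounding-chain-is-symmetric}, the reverse input also has $\BC=\tt{False}$, so (i) and (ii) are immediate. We then only need (iii)/(iv), which follow from the standard identity-coupling argument: disagreements under identical updates lie in $\mc P_t$ and are recorded by $Z_t$, via the same monotonicity argument as Case~4 of \cref{lem:non-markovian-local-is-bijection}. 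So assume $\BC=\tt{True}$. By \cref{lem:bounding-chain-is-symmetric}, every intermediate sequence $\eta^{(s)}$ has the same $\mc P_s$ and the same $Z_s$ on $\mc P_s$. Hence the branch structure in \cref{def:global-coupling}, which only reads $Z$ on $\mc P$ and colours on $B_1(D_{s-1})$, is comparable between the forward and reverse runs.

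For the inductive step $t-1\to t$, we first establish (iii) and (iv) at time $t$ by splitting into the cases of \cref{def:global-coupling}. In Cases (1) and (2) with $f_t=\id$, the claims follow from (iii)/(iv) at $t-1$, using the fact that $v_t$ sees only agreeing neighbours or lies in $D_{t-1}$; a successful update collapses the disagreement, since both proposals are then available. In Case (3) we use (iv) to see that $p$ is a genuine disagreement with both colours in $Z_{t-1}(p)$, so $|H|=2$ and $H\subseteq H(Z_{t-1},v_t)$. Uniqueness of $p$ follows from $D_{t-1}\subseteq\mc P_{t-1}$ together with \cref{lem:bc-basic-geometry-rewrite}(2). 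The Jerrum branches then behave as in the classical argument. In the $\NM$ branch we invoke \cref{lem:non-markovian-local-is-bijection}(2b): new disagreements appear only on $\Temp_t=B\cup\alpha(B)$ and at $v_t$. When $H\cap X_{t-1}(N_p(v_t))=\{c_t\}$, the swap makes both proposals blocked, so $v_t$ stays agreeing and $D_t=D_{t-1}$. In the other branch $v_t$ is added to $D_t$ with colours in $Z_t(v_t)=H\cup Z_{t-1}(v_t)$. We also need that the edits of $f_t$ at times $s<t$ do not alter the past configurations on $D_{t-1}\cup N(D_{t-1})\setminus\Temp_t$. This uses \cref{lem:temporary-geometry}(3): edits avoid $N^2_{v_t}(\mc P)$. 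Consequently $Y^t_{t-1}$ and $Y^{t-1}_{t-1}$ agree off $\Temp_t$.

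For (i) and (ii) we show that the reverse run on $\eta^{(t)}$ reproduces, step by step, the same branch choices with the roles of $X$ and $Y$ exchanged. Concretely, for each $s\le t$, the map $g_s^{(t)}$ equals $f_s$: the identity coincides with the identity, $\Jerrum_{s,H}$ with itself, and $\NM_{s,H}$ with $\NM_{s,H}$ computed from the swapped side. By \cref{item:local-coupling-involution}, these maps are involutions. Using \cref{rem:stage-local-data-rewrite}, the branch decision at stage $s$ of the reverse run depends only on local data near $D_{s-1}$ at time $s-1$. By the inductive hypothesis (ii) and (iv) with $X$ and $Y$ swapped, that data coincides with the forward data. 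The $X$-chain of the reverse run is $Y^{(\cdot)}$ of the forward run, and \cref{lem:temporary-geometry} ensures that later edits (stages $s'>s$) do not touch the region read at stage $s$. The one asymmetry to check is the $\NM$ branch. In the forward run $c_t=Y_{t-1}(p)$ is the danger proposal. In the reverse run the proposal is $c'_t=X_{t-1}(p)$, which is again the danger colour for the reversed chain. The blocked/unblocked roles of $c_b$ and $c_u$ swap, exactly as computed in the proof of \cref{item:local-coupling-involution}, and \cref{assumption:non-markovian-succeed} is symmetric under that swap. The two $\NM$ subcases, $\{c_t\}$ versus $H\setminus\{c_t\}$, correspond to each other under this swap, and both give the same $D_t$ update. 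This is precisely why the definition includes both of them. Then
\[
\zeta^{(t,t)}=f_1\circ\cdots\circ f_t\circ f_t\circ\cdots\circ f_1(\vecsigma)=\vecsigma,
\]
after we verify that the reverse maps can be applied in forward order. That holds because the local maps act on disjoint coordinate sets and commute, via \cref{lem:temporary-geometry} and the disjointness enforced by $\Avoid$.

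I expect this commutation/locality step to be the main obstacle. The reverse run applies $g_1,\dots,g_t$ in forward order to $\eta^{(t)}$, which already contains the edits of later stages, including future-time edits. We must show that the data read by $g_s$ is the same under $\eta^{(t)}$ as under $\eta^{(s-1)}$, up to the $X\leftrightarrow Y$ swap. My first attempt would be to prove a separate claim: for $s<s'$, the coordinates written by $f_{s'}$ are disjoint from the local region read by $f_s$ (namely $B_1(D_{s-1})$ at time $s-1$ and the epochs of $N_{\mc P}(v_s)$). This would combine the facts that $v_{s'}$ enters $\mc P$ after time $s$, that edits avoid $N^2(\mc P)$ via $\Avoid$, and that edited epochs are invariant by \cref{cor:epoch-behavior-W-rewrite} and \cref{claim:exchangeable-invariance-rewrite}. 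With this claim in hand, (i) and (ii) follow by a straightforward induction.
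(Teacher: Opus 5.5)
Your overall architecture matches the paper's proof. You use induction on $t$, the case split of \cref{def:global-coupling} for (iii)--(iv) via \cref{lem:non-markovian-local-is-bijection}, and for (i)--(ii) the locality statement you isolate, which is the paper's ``Fact~2''. The paper applies it inductively: the first $t-1$ reverse stages send $\eta^{(t)}=f_t(\eta^{(t-1)})$ to $f_t(\vecsigma)$, and \cref{item:local-coupling-involution} then finishes.

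The problem is the step you call ``the one asymmetry to check''. You assert that the two non-Markovian subcases, $H\cap X_{t-1}(N_p(v_t))=\{c_t\}$ and $H\cap X_{t-1}(N_p(v_t))=H\setminus\{c_t\}$, are exchanged by the reversal, and that both give the same $D_t$ update. The second assertion is false by definition: the first subcase gives $D_t=D_{t-1}$ and the second gives $D_t=D_{t-1}\cup\{v_t\}$. So if the reversal really exchanged them, (ii) would fail at every such time. In fact the reverse stage $t$ lands in the \emph{same} subcase:
\begin{itemize}
\item Its proposal is $c'_t=H\setminus\{c_t\}=X_{t-1}(p)$, which is the reverse-$Y$ colour at $p$, so $t$ is again a danger-zone time.
\item Its $X$-chain is the forward $Y$-chain under $\eta^{(t)}$. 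There $\alpha(B)$ carries $c_u$ and $B\setminus\alpha(B)$ carries colours outside $H$, so the only colour of $H$ on $N_p(v_t)$ is now $c_u$.
\end{itemize}
The proposal and the blocked colour are swapped simultaneously. Hence a $\{c_t\}$-type stage reverses to a $\{c'_t\}$-type stage, and an $(H\setminus\{c_t\})$-type stage reverses to an $(H\setminus\{c'_t\})$-type stage. That is why the same $D_t$ rule applies in both directions.

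The exchange of types you describe is what happens under $\Jerrum_{t,H}$, which swaps the proposal but not the neighbourhood. This is the actual reason $\NM_{t,H}$ is used in both subcases. With Jerrum in the $H\setminus\{c_t\}$ subcase, the reverse stage would see its proposal $c'_t=X_{t-1}(p)$ blocked on $N_p(v_t)$. Whenever \cref{assumption:non-markovian-succeed} holds there, it would then choose $\NM_{t,H}$ and keep $D_t=D_{t-1}$, breaking (i) and (ii).

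There are two smaller points.
\begin{itemize}
\item The mirrored-neighbourhood computation above, and already the Jerrum branches of (iii)--(iv), require the paper's ``Fact~1'': no earlier $\Temp_r$ meets $N_p(v_t)$. This follows from $\Temp_r\cap\Avoid(r)=\emptyset$ together with the girth assumption. You use it but never establish it.
\item Your $\BC=\mathtt{False}$ paragraph is unnecessary, since $D_t$ and $\Temp_t$ are only defined when $\BC$ holds. Its claim that identity-coupling disagreements are always recorded by $Z_t$ is also false without Condition~(2) of $\BC$, because rule~4(a) freezes $Z_t(v_t)$ under a repropagation.
\end{itemize}
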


\begin{proof}
We argue by induction on $t$.

For $t=0$ there is nothing to prove: $\eta^{(0)}=\vec\sigma$, the reverse construction has no steps,
$D_0^{\leftarrow,0}=D_0=\{z^*\}$, and the only discrepancy is the root discrepancy already recorded
by $Z_0$.

Fix $t\ge 1$ and assume the four statements have been proved at time $t-1$.
Set
\[
\eta^{(t)} = f_t\bigl(\eta^{(t-1)}\bigr).
\]
By the induction hypothesis, every vertex of $D_{t-1}$ lies in $\mc P_{t-1}$ and carries the two
actual colors recorded by $Z_{t-1}$.
We distinguish the three branches of \cref{def:global-coupling}. 

\medskip
\noindent\emph{Case 1: $v_t\notin D_{t-1}\cup N(D_{t-1})$.}
Then $f_t=\id$ and $D_t=D_{t-1}$ by definition.
Hence $\eta^{(t)}=\eta^{(t-1)}$, so item~(iii) and item~(iv) are unchanged from time $t-1$.
For the reverse construction, item~(ii) at time $t-1$ gives
\[
D_{t-1}^{\leftarrow,t}=D_{t-1}^{\leftarrow,t-1}=D_{t-1},
\]
and therefore stage $t$ in the reverse run also falls into Case~1 and chooses the identity map.
So
\[
\zeta^{(t,t)}=\zeta^{(t-1,t)}=\vec\sigma
\]
by the induction hypothesis applied at time $t-1$, and item~(i)--item~(ii) follow.

\medskip
\noindent\emph{Case 2: $v_t\in D_{t-1}$.}
Again $f_t=\id$.
By item~(iv) at time $t-1$,
\[
\{X_{t-1}(v_t),Y_{t-1}^{t-1}(v_t)\}\subseteq Z_{t-1}(v_t).
\]
Since $v_t\in \mc P_{t-1}$ and $\BC(X_0,Y_0,\vec\sigma)=\tt{True}$, Condition~(2) in the definition of the
bounding chain implies
\[
c_t\notin H(Z_{t-1},v_t).
\]
Thus the time-$t$ proposal is either available in both chains or blocked in both chains.
If $c_t\in A(Z_{t-1},v_t)$, both chains recolor $v_t$ to $c_t$, so the persistent discrepancy at
$v_t$ disappears and the update rule in Step~(2) gives
\[
D_t=D_{t-1}\setminus\{v_t\}.
\]
If $c_t\in B(Z_{t-1},v_t)$, both chains keep their old color at $v_t$, so
\[
D_t=D_{t-1}.
\]
Either way, item~(iii) and item~(iv) hold at time $t$.

For the reverse construction, item~(ii) at time $t-1$ again gives
$D_{t-1}^{\leftarrow}=D_{t-1}$.
Moreover, by the definition of $\NM_{r,H}$ and \cref{lem:temporary-geometry}, an earlier non-Markovian edit can change a proposal at time $t$ only if the updated vertex $v_t$ lies outside $\mc P$; since here $v_t\in \mc P$, the reverse run sees the same proposal $c_t$ at time $t$.
Hence stage $t$ of the reverse construction falls into the same branch of Step~(2), chooses the
identity map, and updates $D_t^{\leftarrow}$ in the same way.
Therefore item~(i)--item~(ii) hold.

\medskip
\noindent\emph{Case 3: $v_t\in N(D_{t-1})$.}
Because $D_{t-1}\subseteq \mc P_{t-1}$ by the induction hypothesis and because
$G[\mc P \cup \{ v_t \}]$ is acyclic, there is a unique vertex
\[
p\in N(v_t)\cap D_{t-1}.
\]
Set
\[
H:=\{X_{t-1}(p),Y_{t-1}^{t-1}(p)\}.
\]
By item~(iv) at time $t-1$ we have
\[
H\subseteq Z_{t-1}(p).
\]

We first record two structural facts that will be used in all subcases.

\smallskip
\noindent\textbf{Fact 1:} the current stage sees the same local neighborhood in $X$ and in the forward
$Y$-run.
Indeed, every earlier set $\Temp_r$ has $\Temp_r \cap \Avoid(r) = \emptyset$ (by \cref{assumption:non-markovian-succeed}).  If $v_r = p$, then $\Temp_r \subset N(p)$ is disjoint from $N(v_t)$ or $G$ would have a $3$-cycle.  If $v_r \ne p$, then $N_p(v_t) \subset N^2(p) \subset N^2_{v_r}(\mc{P}) \subset \Avoid(r)$
by definition.  Thus
\[
N_p(v_t)\cap \Temp_{\le t-1}=\emptyset.
\]
By item~(iii) at time $t-1$, we therefore have
\begin{equation}
\label{eq:global-local-equality-rewrite}
Y_{t-1}^{t-1}(u)=X_{t-1}(u)
\qquad\text{for every }u\in N_{\mc P}^2(v_t).
\end{equation}
In particular, the color pattern on $N_p(v_t)$ seen by the forward construction is
exactly the reference pattern $X_{t-1}(N_p(v_t))$.

\smallskip
\noindent\textbf{Fact 2:} later edits do not affect earlier reverse stages.
By \cref{rem:stage-local-data-rewrite}, stage $s<t$ of the reverse construction inspects only the
states on $D_{s-1}^{\leftarrow}\cup N(D_{s-1}^{\leftarrow})\subseteq \mc P_{t-1}\cup N(\mc P_{t-1})$ and,
if it is non-Markovian, update-sequence coordinates with updated vertex in $N_{\mc P}^2(v_s)$.
The $t$-th forward edit changes the input only on the side branch $N_{\mc P}^2(v_t)$ and, after
its time-$t$ discrepancy is created, on the descendant component of
$\mc P\setminus \mc P_{t-1}$ containing $v_t$.

By \cref{lem:non-markovian-local-is-bijection}, any earlier update at time $s$ only reads data within $N^2_\mc{P}(v_s)$.  Persistent changes cannot affect earlier non-Markovian couplings as their changes are contained in $\mc{P}$.  Temporary edits cannot affect earlier non-Markovian couplings by \cref{lem:temporary-geometry}.

Thus these regions are disjoint from all local data seen at stages
$s<t$.
Consequently the first $t-1$ reverse local maps on input $\eta^{(t)}$ are exactly the same as on
input $\eta^{(t-1)}$.
By the induction hypothesis at time $t-1$, applying those maps to $\eta^{(t-1)}$ yields
$\vec\sigma$.
Therefore, after the first $t-1$ reverse stages are applied to $\eta^{(t)}=f_t(\eta^{(t-1)})$, the
intermediate sequence is
\begin{equation}
\label{eq:reverse-before-stage-t-rewrite}
f_t(\vec\sigma).
\end{equation}

We now split according to the relevant subcase of
\cref{def:global-coupling}.

\smallskip
\noindent\emph{Subcase 3a: $|H|=1$.}
Then $f_t=\id$ and $D_t=D_{t-1}$.
There is nothing to prove beyond the induction hypothesis.
The reverse run sees the same singleton set $H$ and also chooses the identity map.
Hence item~(i)--item~(iv) hold.

\smallskip
\noindent\emph{Subcase 3b: $|H|=2$ and $c_t\neq Y_{t-1}^{t-1}(p)$.}
Then
\[
f_t=\Jerrum_{t,H}.
\]
If $c_t\notin H$, this map is the identity on the time-$t$ coordinate.
If $c_t\in H$, then necessarily $c_t=X_{t-1}(p)$, and the Jerrum map replaces the forward proposal
$c_t$ by $Y_{t-1}^{t-1}(p)$.
In either event, the two chains see the same update status at $v_t$: if the proposal is available,
both chains recolor $v_t$ to the same color; if it is blocked, both chains keep the old color.
So no new persistent discrepancy is created and
\[
D_t=D_{t-1}.
\]
This proves item~(iii)--item~(iv) in Subcase~3b.

For the reverse construction, the first $t-1$ stages produce the intermediate sequence
$f_t(\vec\sigma)$ by \eqref{eq:reverse-before-stage-t-rewrite}.
At time $t$, the reverse run sees the same unordered pair $H$, now with the roles of the two colors
swapped.
Hence stage $t$ of the reverse construction again chooses the same Jerrum map $\Jerrum_{t,H}$.
Since $\Jerrum_{t,H}$ is an involution,
\[
g_t^{(t)}\bigl(f_t(\vec\sigma)\bigr)
=
\Jerrum_{t,H}\bigl(\Jerrum_{t,H}(\vec\sigma)\bigr)
=
\vec\sigma.
\]
This proves item~(i), and item~(ii) follows because both runs keep the same discrepancy set.

\smallskip
\noindent\emph{Subcase 3c: danger zone, so $|H|=2$ and $c_t=Y_{t-1}^{t-1}(p)\neq X_{t-1}(p)$.}
Now the branch is determined by the intersection
\[
H\cap X_{t-1}(N_p(v_t)) = H\cap X_{t-1}(N_{\mc P}(v_t))
\]
and by the well-definedness condition for the local non-Markovian map.  We record an important fact.

By \eqref{eq:global-local-equality-rewrite}, the forward $Y$-run sees exactly the same local data.
Hence the forward construction lands in the same subcase as the reference definition.

If the chosen map is Jerrum, then the discrepancy behavior is exactly the one encoded in Step~(4) of
\cref{def:global-coupling}: in the first branch both proposals are blocked, so
$D_t=D_{t-1}$; in the remaining Jerrum branches exactly one chain succeeds at $v_t$, so
$D_t=D_{t-1}\cup\{v_t\}$.
Because
\[
H\subseteq Z_{t-1}(p)\subseteq H(Z_{t-1},v_t),
\]
the two post-update colors at every persistent-discrepancy vertex lie in the corresponding
bounding-chain sets, so item~(iii)--item~(iv) hold.
The reverse run sees the same unordered pair $H$ and the same Jerrum branch, so the same argument as
in Subcase~3b gives item~(i)--item~(ii).

Finally, assume that the chosen map is non-Markovian, so
\[
f_t=\NM_{t,H}.
\]
Because of the local agreement \eqref{eq:global-local-equality-rewrite}, the proof of
\cref{lem:non-markovian-local-is-bijection} applies verbatim to the pair of runs
$\{Y_s^{t-1}\}_s$ and $\{Y_s^t\}_s$.
Thus the $t$-th forward edit creates discrepancies only on the temporary set $\Temp_t$ and,
possibly, at the vertex $v_t$ itself.
In the branch
\[
H\cap X_{t-1}(N_p(v_t))=\{c_t\},
\]
no persistent discrepancy is created at time $t$, so
\[
D_t=D_{t-1}.
\]
In the branch
\[
H\cap X_{t-1}(N_p(v_t))=H\setminus\{c_t\},
\]
a new persistent discrepancy is created at $v_t$, so
\[
D_t=D_{t-1}\cup\{v_t\}.
\]
In both branches, item~(iii)--item~(iv) hold because
\cref{lem:non-markovian-local-is-bijection} shows that all other discrepancies are temporary
and that any persistent discrepancy created at time $t$ is recorded by the bounding chain.

For the reverse construction, \eqref{eq:reverse-before-stage-t-rewrite} again shows that after the
first $t-1$ reverse stages the input to stage $t$ is $f_t(\vec\sigma)$.
At that point the local configuration at $p$ and on $N_p(v_t)$ is the same as in the forward run,
with the two colors of $H$ interchanged.
Hence stage $t$ of the reverse construction also chooses the map $\NM_{t,H}$.

By \cref{item:local-coupling-involution},
\[
g_t^{(t)}\bigl(f_t(\vec\sigma)\bigr)
=
\NM_{t,H}\bigl(\NM_{t,H}(\vec\sigma)\bigr)
=
\vec\sigma.
\]

So item~(i) holds, and item~(ii) follows because the same persistent-discrepancy update rule is used
in both directions.

\medskip
The three cases exhaust the possibilities, so the induction is complete.
\end{proof}

\begin{proposition}[global coupling is valid]
\label{prop:coupling-is-valid-rewrite}
For every pair $X_0,Y_0\in \wh\Omega$ that differ at exactly one vertex,
\[
F_{Y_0,X_0}\bigl(F_{X_0,Y_0}(\vec\sigma)\bigr)=\vec\sigma
\qquad\text{for every }\vec\sigma\in (V\times [k])^{\Tcouple}.
\]
In particular, $F_{X_0,Y_0}$ is a bijection.
\end{proposition}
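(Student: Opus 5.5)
The plan is to split according to the value of $\BC(X_0,Y_0,\vecsigma)$ and, in the nontrivial case, read off the statement from \cref{prop:main-global-induction-rewrite}(i) at $t=\Tcouple$.

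First suppose $\BC(X_0,Y_0,\vecsigma)=\tt{False}$. Then $F_{X_0,Y_0}(\vecsigma)=\vecsigma$ by \cref{def:global-coupling}. By the first bullet of \cref{lem:bounding-chain-is-symmetric}, $\BC(Y_0,X_0,\vecsigma)=\tt{False}$ as well, so $F_{Y_0,X_0}(\vecsigma)=\vecsigma$ and the identity holds.

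Now suppose $\BC(X_0,Y_0,\vecsigma)=\tt{True}$, and set $\vecsigma'=F_{X_0,Y_0}(\vecsigma)=\eta^{(\Tcouple)}$. By the second bullet of \cref{lem:bounding-chain-is-symmetric}, $\BC(X_0,Y_0,\vecsigma')=\tt{True}$. By the first bullet, $\BC(Y_0,X_0,\vecsigma')=\tt{True}$. Hence $F_{Y_0,X_0}(\vecsigma')$ is computed by the composition branch of \cref{def:global-coupling} rather than the identity branch. That composition is exactly the reverse construction with initial pair $(Y_0,X_0)$ and input $\eta^{(\Tcouple)}$, namely $g^{(\Tcouple)}_{\Tcouple}\circ\cdots\circ g^{(\Tcouple)}_1(\eta^{(\Tcouple)})=\zeta^{(\Tcouple,\Tcouple)}$. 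Item (i) of \cref{prop:main-global-induction-rewrite} at $t=\Tcouple$ says this equals $\vecsigma$.

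One point needs checking here. The reverse construction evolves the first chain from $Y_0$ under $\eta^{(\Tcouple)}$, and it builds its bounding chain from $(Y_0,X_0,\eta^{(\Tcouple)})$. I will verify that this bounding chain coincides with the one from $(X_0,Y_0,\vecsigma)$ on the relevant sets $\mc P_t$ and $Z_t|_{\mc P_t}$, since the proof of \cref{prop:main-global-induction-rewrite} implicitly uses this. The argument is: $Z_0$ is symmetric in $X_0,Y_0$, and the update rule depends only on the proposed colors' membership in $A$, $B$, $H$. Iterating the second bullet of \cref{lem:bounding-chain-is-symmetric} then gives agreement of $\mc P$ and $Z$ on $\mc P$. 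I expect this consistency check to be the main obstacle, because off $\mc P$ the $Z$-values may differ: $Z_0(v)=\{X_0(v)\}$ versus evolution along a different sequence. I would handle it by noting that $\mc P$, $H(Z,\cdot)$, and the case distinctions only involve $Z$ on $\mc P$ together with singleton availability, which is governed by item (iii) of the proposition.

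Having shown $F_{Y_0,X_0}\circ F_{X_0,Y_0}=\id$ for every ordered pair differing at one vertex, apply it with the roles swapped to get $F_{X_0,Y_0}\circ F_{Y_0,X_0}=\id$ as well. Thus $F_{X_0,Y_0}$ is a bijection of the finite set $(V\times[k])^{\Tcouple}$, with inverse $F_{Y_0,X_0}$. Injectivity alone would already suffice by finiteness. Consequently, if $\vecsigma$ is uniform then so is $F_{X_0,Y_0}(\vecsigma)$, which makes this a valid coupling.
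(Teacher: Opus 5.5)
Your proposal is correct and follows the paper's proof. You handle the $\BC=\texttt{False}$ case with the symmetry bullet of \cref{lem:bounding-chain-is-symmetric} and the $\BC=\texttt{True}$ case by reading off item (i) of \cref{prop:main-global-induction-rewrite} at $t=\Tcouple$, and you usefully make explicit, via both bullets of that lemma, that $\BC(Y_0,X_0,F_{X_0,Y_0}(\vecsigma))=\texttt{True}$, so $F_{Y_0,X_0}$ really is computed by the reverse composition; the bounding-chain consistency check you flag as the main obstacle need not be redone here, since the second bullet of \cref{lem:bounding-chain-is-symmetric} already gives it directly for the full map $F$ (no iteration needed), together with the symmetry of the construction in its first two arguments, and it is absorbed into \cref{prop:main-global-induction-rewrite}, which you may cite.
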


\begin{proof}
If $\BC(X_0,Y_0,\vec\sigma)=\tt{False}$, then
$F_{X_0,Y_0}(\vec\sigma)=\vec\sigma$, and by
\cref{lem:bounding-chain-is-symmetric} the reverse run also uses the identity map.
So the claim is immediate.

Assume now that $\BC(X_0,Y_0,\vec\sigma)=\tt{True}$.
Then the forward construction produces the final intermediate sequence
\[
\eta^{(\Tcouple)}=F_{X_0,Y_0}(\vec\sigma).
\]
Applying \cref{prop:main-global-induction-rewrite} with $t=\Tcouple$ gives
\[
F_{Y_0,X_0}\bigl(\eta^{(\Tcouple)}\bigr)=\vec\sigma.
\]
Since $\eta^{(\Tcouple)}=F_{X_0,Y_0}(\vec\sigma)$, this is exactly the required identity.
Therefore $F_{X_0,Y_0}$ is invertible, hence bijective.
\end{proof}

\section{Analysis of the coupling}
\label{sec:analysis-coupling}

In this section, we provide a probabilistic analysis of the coupling defined in \cref{sec:nonmarkovian}; this will complete the proof of the main coupling result stated in \cref{thm:main-nonmarkovian}.

Throughout the previous section, we have defined the coupling with several constants. We briefly recall what the constants are.  The input parameter $\delta$ is such that $k \geq (1+\delta)\Delta$.  The parameter $\Tcouple = \Ccouple n$ is the number of steps of the coupling.  The parameter $\eps$ is the local uniformity parameter we will input into \cref{thm:local-uniformity-discrete}. The parameter $\Delta_0$ is a lower bound on $\Delta$. 
We will assume the following relationship holds between these constants:
\begin{equation}\label{eq:parameter-inequalities} 1/\Delta_0 \ll \eps \ll 1/\Ccouple \ll \delta. \end{equation}
In particular, this relationship is consistent with \cref{thm:local-uniformity-discrete}. 
These parameters are fixed globally.

For our analysis, we isolate the following good event. 
\begin{definition}\label{def:good-event-G}
    Let $\mc{G} = \mc{G}_{(X_0,Y_0)}$ be the event that both the bouding chain is well-behaved and no non-Markovian updates fail due to \cref{assumption:non-markovian-succeed}. Formally,
    \begin{multline*}
    \mc{G} = \mc{G}_{(X_0,Y_0)} := \{ \vecsigma \in (V\times [k])^{\Tcouple}: \BC(X_0,Y_0,\vecsigma) = {\tt True} \} \\ \cap \{ \vecsigma \in (V \times [k])^\Tcouple : \nexists t \text{ satisfying \cref{condition:nm-prelim} but not \cref{assumption:non-markovian-succeed}} \}. \end{multline*}
\end{definition}

\subsection{Analysis of the bounding chain}\label{sub:controlling-bounding-chain}

First, we control the probability that $|\mc{P}|$ becomes too large. This requires a careful argument, since $|\mc{P}|$ is heavy-tailed and its expectation is not bounded by a function of $\Ccouple$.  

\begin{proposition}\label{lem:P-total-size-bound}
    $\mb{P}[|\mc{P}| \geq \exp(\exp(O(\Ccouple)))] \leq \exp(-\Ccouple^2)$.
\end{proposition}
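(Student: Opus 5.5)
Since the bounding chain $Z_t$ is a deterministic function of $(X_0,Y_0,\vecsigma)$, I will bound the growth of $\mc P$ by a domination argument on the raw update sequence, without using local uniformity. The plan is to split the horizon $[\Tcouple]$ into $\Ccouple$ blocks of length $n$ and track two quantities at the end of each block $j$: the size $M_j := |\mc P_{\le jn}|$ and the largest color-set size $L_j := \max_{v}|Z_{jn}(v)|$. By the update rules, only hazardous updates create or enlarge multivalued sets. Such an update at $v_t$ requires $v_t \in N(\mc P_{t-1})$ and $c_t \in H(Z_{t-1},v_t)$. Given the past, this has probability at most
\[
\frac{|N(\mc P_{t-1})|}{n}\cdot\frac{|H(Z_{t-1},v_t)|}{k}.
\]
A new member of $\mc P$ receives $Z_{t-1}(w)\cup Z_{t-1}(v_t)$, so its size is at most $L+1$. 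Moreover $|H(Z_{t-1},v_t)|$ is a union over neighbors in $\mc P_{t-1}$, hence at most $\Delta\cdot L$, although in the tree-like situation it is typically at most a few times $L$.

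\textbf{Step 1 (one block).} Condition on the state at the start of a block with $M_j = M$ and $L_j = L$. Every vertex in $N(\mc P_{\le (j+1)n})$ that becomes hazardous must be reached through a chain of hazardous updates rooted in the current set. I will dominate the new vertices created during the block by a Galton--Watson-type branching process. Each current or new vertex $x$ has at most $\Delta$ neighbors, each is selected about once per block, and a selection is hazardous with probability at most $|Z(x)|/k \le (L+\text{growth})/k$. Since $\Delta/k \le 1$, the expected offspring per vertex per block is $O(L')$, where $L'$ is the final max set size. Within one block the set sizes grow by at most the depth of the chain, so I expect $M_{j+1} \le M\cdot e^{O(L_{j+1})}$ and $L_{j+1}\le L_j + (\text{depth})$. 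The depth of hazardous chains in a single block is stochastically dominated by a Poisson-like variable, because a chain of depth $d$ requires $d$ time-ordered selections within $n$ steps, which has probability $\le C^d/d!$ per path.

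\textbf{Step 2 (tail bounds per block).} I will show that with probability $1-\exp(-\Ccouple^3)$ the depth added in each block is at most $O(\Ccouple)$. The number of rooted ordered chains of length $d$ is bounded by $M\cdot \Delta^d$, and each chain has probability at most $(L/k)^d/d!$ times the ordering factor. Taking $d \approx \Ccouple^{2}$-ish suffices once $M, L$ are bounded. I expect this to make $L_{\Ccouple} = O(\Ccouple)$ overall when I sum the per-block depths: the depth is $O(1)$ typically and only occasionally large. I will handle this with a tail bound on the sum, namely a sum of $\Ccouple$ independent-like Poisson tails exceeding $C\Ccouple$ with probability $e^{-\Omega(\Ccouple\log\Ccouple)}$, plus a union bound. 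The vertex count then satisfies $M_{j+1}\le M_j e^{O(L)}$ except with tiny probability, using Markov or martingale concentration on the branching domination. Iterating over $\Ccouple$ blocks gives $M \le e^{O(\Ccouple\cdot \Ccouple)}$, which is well within $\exp(\exp(O(\Ccouple)))$. The slack is why the doubly exponential bound is stated.

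\textbf{Main obstacle.} The hard part is the heavy tail and the feedback between $L$ and the growth rate of $M$: large color sets make more colors hazardous, which speeds growth. The union over neighbors in $H$ could a priori be as large as $\Delta L$, which would break the $\Delta/k\le 1$ bookkeeping. I will first try to argue directly with the multiplicity-counting bound, charging each hazardous event to a specific pair (neighbor $u\in\mc P$, color in $Z(u)$), so that the rate is $\sum_{u}|Z(u)|\cdot\deg(u)/(nk) \le L\,|\mc P|/n$ rather than using $|H|$. This gives the clean branching domination. The remaining risk is the failure probability target $\exp(-\Ccouple^2)$. I will secure it by choosing the per-block depth threshold as $\Ccouple^{2}$ and bounding $\Pr[\text{depth}\ge d]\le (e C L/d)^d$, so that one bad block already costs $\exp(-\Omega(\Ccouple^2\log\Ccouple))$. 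This yields a worst-case $L\le \Ccouple^3$ and $M\le \exp(\Ccouple^{4})\le \exp(\exp(O(\Ccouple)))$.
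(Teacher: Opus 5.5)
Your skeleton matches the paper's: blocks, tracking the largest color set and $|\mc P|$, a union bound over chains of hazardous updates for the former, and an expected-growth bound for the latter. The gap is that the depth tail in Step~2 is wrong. Along a chain $u_0\to u_1\to\cdots\to u_d$ of hazardous updates, $u_i$ receives $Z(u_{i-1})\cup Z(u_i)$, so its set is typically one larger than its parent's. The per-step probability that $u_{i+1}$ is hit through $u_i$ is therefore $|Z(u_i)|/(kn)$, with $|Z(u_i)|$ as large as $L+i$. The union bound over rooted chains in a window of $T$ steps is then
\[
M\,\Delta^{d}\binom{T}{d}\prod_{i=1}^{d}\frac{L+i}{kn}\;\le\; M\binom{L+d}{d}\Bigl(\frac{T\Delta}{nk}\Bigr)^{d},
\]
and the factor $\prod_i(L+i)=d!\binom{L+d}{d}$ cancels the $1/d!$ from time ordering. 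The tail is at best geometric, not Poisson-like. With your blocks ($T=n$) the ratio is only $\Delta/k\le 1/(1+\delta)$, and decay starts only once $d$ exceeds a ($\delta$-dependent) multiple of $L$ and of $\log M$ plus the log of the allowed failure probability. As a sanity check, consider the Yule-type domination you implicitly use, where a vertex of set size $s$ spawns children of set size $s+1$ at rate $s\Delta/k$ per $n$ steps. There a size-$s$ vertex has expected number of descendants by time $tn$ equal to $(1-t\Delta/k)^{-s}$, which diverges as $t\uparrow k/\Delta$, well inside the horizon when $k\approx(1+\delta)\Delta$. A bound like $(eCL/d)^d$ would keep such moments bounded. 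This heavy tail is what the doubly exponential bound in the statement reflects.

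Consequently the bounds $L=O(\Ccouple)$, $L\le\Ccouple^3$ and $M\le e^{\Ccouple^4}$ are not established. Even granting $(eCL/d)^d$, it is nontrivial only for $d>eCL$, so your fixed threshold $\Ccouple^2$ fails once $L\gtrsim\Ccouple^2$. The per-block threshold must grow with the current $L$, and also with $\log M$ because there are $M$ possible roots. Your Markov step itself inflates $\log M$ by $\log(1/\text{failure})$ per block, so $L$ at least doubles from block to block.

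The repair is the paper's. Use blocks of length $n/100$ and set $d_j=\ell_j+\log_{25}(200\Ccouple m_j/\gamma)$. Then $\binom{\ell_j+d_j}{d_j}\le 4^{d_j}$ and the chain bound becomes $m_j25^{-d_j}\le\gamma/(200\Ccouple)$. Your charging estimate then gives an expected growth factor $e^{\ell_{j+1}/100}$ per block for $|\mc P_{\le t}|$, and Markov's inequality controls it. The resulting recursion is $\ell_{j+1}=2\ell_j+\log_{25}(200\Ccouple m_j/\gamma)$ and $\log m_{j+1}=\log m_j+\ell_{j+1}/100+\log(200\Ccouple/\gamma)$. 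This gives $\ell_j,\log m_j\le 3^{j}\,O(\Ccouple^2)$, hence $|\mc P|\le\exp(\exp(O(\Ccouple)))$ after $100\Ccouple$ blocks, except with probability $\gamma=e^{-\Ccouple^2}$.
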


Let $\gamma = \exp(-\Ccouple^2)$ be the desired probability bound. We discretize time and control both $|Z_t(v)|$ and $|\mc{P}_t|$ iteratively.  Initialize
\[ T_0 = 0, \qquad \ell_0 = 2, \qquad m_0 = 1 \]
and for $0 \leq j < 100\Ccouple$, let
\[ T_{j+1} = T_j + \frac{n}{100}, \qquad \ell_{j+1} = 2\ell_j + \frac{\log(200 \Ccouple m_j/\gamma)}{\log(25)}, \qquad m_{j+1} = \frac{200 \Ccouple}{\gamma} m_j \exp(\ell_{j+1}/100). \]

Define the good event
\[ \mc{E}_j = \{ |\mc{P}_{\leq T_j}| \leq m_j \} \cap \left\{ \max_{v \in V, t \leq T_j} |Z_t(v)| \leq \ell_j \right \}. \]

\begin{lemma}\label{lem:controlling-P-one-block}
    For all $0\leq j < 100\Ccouple$, we have
    \[ \mb{P}[\mc{E}_{j+1} \mid \mc{E}_j] \geq 1 - \frac{\gamma}{100\Ccouple}. \]
\end{lemma}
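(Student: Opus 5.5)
Throughout, condition on $\mc{E}_j$ and the history up to time $T_j$; we must bound the probability that during the block $(T_j,T_{j+1}]$ of length $n/100$ either some $|Z_t(v)|$ exceeds $\ell_{j+1}$ or $|\mc{P}_{\le T_{j+1}}|$ exceeds $m_{j+1}$. Each of these will be shown to fail with probability at most $\gamma/(200\Ccouple)$, and a union bound gives the claim.

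\emph{Step 1: the sets $Z_t(v)$.} By the update rule, $Z_t(v)$ only grows through a hazardous update at a vertex $v\notin\mc{P}_{t-1}$. Such an update sets $Z_t(v)=Z_{t-1}(w)\cup Z_{t-1}(v)$, so its size is at most $\ell+1$ when all sets had size at most $\ell$. Within the block, define generations of growth. A vertex's set can be inflated only by copying from a neighbor which itself got inflated earlier in the block. The size at the end of the block is therefore at most $\ell_j$ plus roughly the length of the longest chain $u_0,u_1,\dots,u_r$ of adjacent vertices, with $u_0\in\mc{P}_{\le T_j}$, updated in increasing order of times inside the block. Each copy adds at most a doubling, which accounts for the $2\ell_j$ term. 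For a fixed path of length $r$, the probability that its vertices are updated in increasing order within $n/100$ steps, each update hitting one of the at most $\ell$ hazardous colors, is at most $\binom{n/100}{r}(\ell/(kn))^r\le (e\ell/(100 k r))^r$. There are at most $m_j\Delta^r$ such paths, and since $k\ge\Delta$ the count is at most $m_j (e\ell/(100r))^r$. Because $\ell$ itself grows along the chain, I would instead argue per color: each hazardous step chooses one of $|Z|$ colors, and bound the number of chains crudely by $m_j 25^{-r}$-type decay. This yields that $r\ge \log(200\Ccouple m_j/\gamma)/\log 25$ has probability at most $\gamma/(200\Ccouple)$, matching the additive term in $\ell_{j+1}$.

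\emph{Step 2: the size of $\mc{P}$.} Given that all sets stay of size at most $\ell_{j+1}$, each vertex $u\in\mc{P}_{\le T_j}$ (or any newly added vertex) spawns new vertices of $\mc{P}$ at rate at most $\Delta\cdot\ell_{j+1}/(kn)\le \ell_{j+1}/n$ per step. The set $\mc{P}_{\le t}$ is therefore stochastically dominated by a Yule (pure-birth) process with per-individual rate $\ell_{j+1}/n$, run for $n/100$ steps, started from $m_j$ individuals. Its expectation is at most $m_j\exp(\ell_{j+1}/100)$. Markov's inequality then gives $\mb{P}[|\mc{P}_{\le T_{j+1}}|>m_{j+1}]\le \gamma/(200\Ccouple)$, which is exactly the form of $m_{j+1}$. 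To make the domination rigorous, I would first stop the process at the first time some $|Z_t(v)|$ exceeds $\ell_{j+1}$, then compute $\mb{E}|\mc{P}_{\le t\wedge\tau}|$ via the supermartingale $|\mc{P}_{\le t}|(1+\ell_{j+1}/n)^{-t}$.

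\emph{Main obstacle.} The hard part is Step 1. The set-size growth is self-reinforcing, since larger sets mean more hazardous colors and hence more growth, and the sets are correlated across the tree. A naive path-counting union bound picks up factors of $\ell$ per step. I expect to handle this by noting that inside a single block of length $n/100$ the expected number of steps along any fixed path is tiny, at most $1/100$ per edge. Summing over the at most $\Delta$ neighbors and the at most $\ell\le$ const colors against $k\ge\Delta$ then gives geometric decay with ratio below $1/25$, provided $\ell_j$ is treated via the doubling term rather than in the exponent. If this fails, I would shrink the block length accordingly, since only the number of blocks enters the final $\exp(\exp(O(\Ccouple)))$ bound.
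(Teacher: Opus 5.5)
Your decomposition into the two failure events, each bounded by $\gamma/(200\Ccouple)$ and combined by a union bound, matches the paper, and so does your Step~2. The paper also bounds the one-step expected growth of $|\mc{P}_{\le t}|$ by $|\mc{P}_{\le t}|\Delta\ell_{j+1}/(kn)$ while all sets have size at most $\ell_{j+1}$, gets $m_je^{\ell_{j+1}/100}$, and applies Markov; your stopped supermartingale handles that indicator more cleanly.

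The gap is in Step~1. Your claim that ``each copy adds at most a doubling, which accounts for the $2\ell_j$ term'' contradicts your own correct observation that a hazardous copy yields a set at most one larger than the current maximum. The threshold you then aim for, $r\ge r^*:=\log(200\Ccouple m_j/\gamma)/\log 25$, is not something path counting can deliver. For a witness chain of length $r$ started in $\mc{P}_{T_j}$, the color counts along the chain go up to $\ell_j+r$, so your own estimate reads $m_j\bigl(e(\ell_j+r)/(100r)\bigr)^r$. At $r=r^*$ this is below $m_j25^{-r^*}=\gamma/(200\Ccouple)$ only if $\ell_j\le(4/e-1)r^*\approx r^*/2$. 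The recursion breaks this after a couple of blocks, because $\ell_j$ roughly doubles each block while $\log m_j$ only gains $\log(200\Ccouple/\gamma)+\ell_{j+1}/100$. Your ``main obstacle'' heuristic, a ratio $\ell/100<1/25$, likewise needs $\ell\le 4$. Shrinking the block is not available, since the lemma fixes $T_{j+1}-T_j=n/100$ and the constant $\log 25$.

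The missing idea is to use the threshold $d_j:=\ell_{j+1}-\ell_j=\ell_j+r^*$. Since each hazardous copy adds at most one color, a set of size at least $\ell_{j+1}$ forces a witness genealogy with at least $d_j$ hazardous updates inside the block, starting from some $u_0\in\mc{P}_{T_j}$. At the $i$-th step the proposed color must lie in the witness's set $Z(u_{i-1})$, not merely in $H$, and that set has size at most $\ell_j+i$. The $2\ell_j$ in the recursion is there precisely to guarantee $d_j\ge\ell_j$, so these color counts stay within a constant multiple of the chain length. The paper bounds a fixed walk by $\binom{n/100}{d_j}\prod_{i=1}^{d_j}\frac{\ell_j+i}{kn}\le(100k)^{-d_j}\binom{\ell_j+d_j}{d_j}\le(100k)^{-d_j}4^{d_j}$. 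Summing over the at most $m_j\Delta^{d_j}$ walks and using $\Delta\le k$ gives $m_j25^{-d_j}=\frac{\gamma}{200\Ccouple}25^{-\ell_j}$. With $r=d_j$, even your cruder estimate $m_j\bigl(e(\ell_j+r)/(100r)\bigr)^r$ closes, so the threshold and its justification are exactly what is missing.
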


\begin{proof}
    We work with the two failure conditions separately. First, suppose there is $T_j \leq t \leq T_{j+1}$ a vertex $v \in V$ with $|Z_t(v)| \geq \ell_{j+1}$.  Let
    \[ d_j := \ell_{j+1} - \ell_j = \ell_j + \frac{\log(200 \Ccouple m_j/\gamma)}{\log(25)}. \]

    By $\mc{E}_j$, we can control (in particular) $\max_{v \in \mc{P}_{T_j}} |Z_{T_j}(v)|$.  To see how quickly this maximum can grow, write
    \[
        M_t:=\max_{u\in V}|Z_t(u)|.
    \]
    Whenever a hazardous update changes $Z_{t-1}(v_t)$, we have $v_t\notin\mc{P}_{t-1}$ and hence $|Z_{t-1}(v_t)|=1$.  The chosen witness $w\in\mc{P}_{t-1}$ satisfies $|Z_{t-1}(w)|\leq M_{t-1}$, and therefore
    \[
        |Z_t(v_t)|=|Z_{t-1}(w)\cup Z_{t-1}(v_t)|\leq M_{t-1}+1.
    \]
    The other update rules do not increase the maximum, so $M_t\leq M_{t-1}+1$ at every step.

    Moreover, each newly created nonsingleton set has a witness at the preceding time.  Tracing these witnesses backwards from a set of size at least $\ell_{j+1}=\ell_j+d_j$, we eventually reach a vertex of $\mc{P}_{T_j}$; uncertainty cannot be created after time $T_j$ without propagating from a vertex that is already in $\mc{P}$.  Since each propagation can increase the size along this genealogy by at most one, the genealogy contains at least $d_j$ hazardous updates, in chronological order, during $[T_j,T_{j+1}]$.  Taking its first $d_j$ steps gives a starting point $v \in {\mc{P}_{T_j}}$ and a walk of length $d_j$ of the required type.

    {Fix a walk $u_1,u_2,\ldots,u_{d_j}$.  In fact, we have the slightly stronger statement that for each step in this walk, we have $u_{i-1}$ is the chosen ``witness'' for $u_{i}$'s hazardous update so that $Z_t(u_i) = Z_{t-1}(u_i) \cup Z_{t-1}(u_{i-1})$.  In particular, this means that the chosen color is in not just $H(Z_{t-1},u_i)$ but in $Z_{t-1}(u_{i-1})$.  Thus the probability this walk was taken as hazardous updates in $[T_j,T_{j+1}]$ (an interval of length $n/100$) is at most}
    \[ \binom{n/(100)}{d_j} \prod_{i=1}^{d_j} \paren{\frac{\ell_j + i}{kn}} = \frac{(n/100)_{d_j}}{n^{d_j}} k^{-d_j} \binom{\ell_j + d_j}{d_j} \leq (100k)^{-d_j} 4^{d_j}, \]
where the last inequality used $d_j \geq \ell_j$.  Now, we union bound over all possible walks of length $d_j$ starting at a vertex in $\mc{P}_{T_j}$.  This is at most $m_j \Delta^{d_j}$ since $\mc{E}_j$ holds and $G$ has max degree $\Delta$.  Thus
    \begin{multline}
        \label{first-failure}
    \mb{P}\left[\max_{v \in V, t \leq T_{j+1}} |Z_t(v)| > \ell_{j+1} \mid \mc{E}_j\right] \\
    \leq m_j \Delta^{d_j} (100k)^{-d_j} 4^{d_j} \leq m_j 25^{-d_j} = \frac{\gamma}{200\Ccouple} 25^{-\ell_j} \leq \frac{\gamma}{200\Ccouple}. 
    \end{multline} 
    by the definition of $d_j$.

    Second, we control $|\mc{P}_{{\leq} T_{j+1}}|$ given $\mc{E}_j$ and $|Z_t(v)| \leq \ell_{j+1}$ for all $t \leq T_{j+1}$.  Consider that
    \[ \mb{E}[|\mc{P}_{{\leq} t+1}| \mbf1\{ \max_{v \in V} |Z_t(v)| \leq \ell_{j+1} \} \mid \mc F_t] \leq |\mc{P}_{\leq t}| + \frac{|\mc{P}_{\leq t}| \Delta \ell_{j+1}}{kn} \leq |\mc{P}_{\leq t}| e^{\ell_{j+1}/n} \]
    and so
   \[
    \mb{E}[|\mc{P}_{\leq T_{j+1}}| \mbf1\{ \max_{v \in V, t \leq T_{j+1}} |Z_t(v)| \leq \ell_{j+1} \} \mid \mc E_j] \leq m_j e^{\ell_{j+1}/100}. 
    \]
    Thus, by Markov's inequality, we have the desired bound for growth by plugging in the recursive definition of $m_{j+1}$:
     \begin{equation}
        \label{second-failure}
        \mb{P}\left[|\mc{P}_{\leq T_{j+1}}|>m_{j+1},\ \max_{v\in V,\ t\leq T_{j+1}}|Z_t(v)|\leq\ell_{j+1} \mid \mc{E}_{j}\right] \leq \frac{m_j\exp(\ell_{j+1}/100)}{m_{j+1}} \leq \frac{\gamma}{200\Ccouple}.
    \end{equation}
    Finally, by combining \cref{first-failure,second-failure} with the definition of $\mc{E}_{j+1}$ we have the following: 
    \[ \mb{P}[\mc{E}_{j+1} \mid \mc{E}_j] \geq 1 - \frac{\gamma}{100\Ccouple}. \qedhere \]
\end{proof}

\begin{proof}[Proof of \cref{lem:P-total-size-bound}]
    By \cref{lem:controlling-P-one-block}, we have $\mb{P}[\mc{E}_{100\Ccouple}] \geq 1 - \gamma$.  Thus except with probability $\gamma$, we have $|\mc{P}| \leq m_{100\Ccouple}$.  By a computation, if $\gamma = \exp(-\Ccouple^2)$, then $m_{100\Ccouple} \leq \exp(\exp(O(\Ccouple)))$.  See \cref{app:computation} for details of this computation. 
\end{proof}

Given this, we can now control the probability of $\BC$ being false. 

\begin{lemma}\label{lem:small-error-if-nonmarkovian-fails}
    Let $X_0,Y_0 \in \wh\Omega$ be neighboring labelings.  Then
    \[ \mb{P}_\vecsigma[\BC(X_0,Y_0,\vecsigma)] \geq 1 - 2\exp(-\Ccouple^2). \]
\end{lemma}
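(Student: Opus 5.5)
We will show that each of the three conditions in the definition of $\BC$ fails with small probability, and then take a union bound. Condition~(3) is exactly \cref{lem:P-total-size-bound}: except with probability $\exp(-\Ccouple^2)$, we have $|\mc P|\le \exp(\exp(O(\Ccouple)))$. It remains to show that, on the event $|\mc P|\le M:=\exp(\exp(O(\Ccouple)))$ (with the maximal set sizes $|Z_t(v)|\le \ell_{100\Ccouple}$ from the events $\mc E_j$), Conditions~(1) and~(2) fail with total probability at most $\exp(-\Ccouple^2)$. This should be easy, since $\Delta\ge\Delta_0$ and $1/\Delta_0\ll 1/\Ccouple$: every bad event costs a factor of roughly $1/k$ or $1/\Delta$, while the number of relevant configurations is bounded by a function of $\Ccouple$ alone.

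\textbf{Condition (2).} A violation requires a time $t$ at which $v_t\in\mc P_{\le t-1}$ and $c_t\in H(Z_{t-1},v_t)$. We union bound over the growth history. Each vertex of $\mc P$ is some earlier-entered vertex, and at each step the chosen vertex lies in $\mc P_{\le t-1}$ with probability at most $M/n$. Given this, the color is hazardous with probability at most $|H|/k\le \Delta\ell/k$, where $\ell$ is the bound on $|Z|$. This $\Delta$ factor is too large, so a sharper bound is needed. Since $G[\mc P\cup\{v\}]$ is acyclic, a vertex $v\in\mc P$ has few $\mc P$-neighbors, at most $M$ of them. Hence $|H(Z_{t-1},v)|\le M\ell$, and the per-step probability is at most $(M/n)(M\ell/k)$. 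Summing over the $\Ccouple n$ steps gives $\Ccouple M^2\ell/k$, which is at most $\exp(-\Ccouple^2)/2$ because $\Delta_0$ is huge compared to $M$. To make this rigorous, I would define the stopping time at which $|\mc P_{\le t}|$ or $\max|Z_t|$ first exceeds its bound, and apply the per-step estimate conditional on $\mc F_{t-1}$ before that time.

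\textbf{Condition (1).} A cycle in $G[\mc P\cup\{v\}]$ arises only when a new vertex $v_t$ enters $\mc P$ and either $v_t$ has two neighbors in $\mc P_{\le t-1}$, or $v_t$ lies at distance at most $2$ from $\mc P_{\le t-1}$ through a vertex other than its parent. Equivalently, the new vertex is within distance $2$ of a non-parent vertex of the current tree. Entry happens only through a hazardous update: we pick a $\mc P_{t-1}$-neighbor $v_t$ (probability $\le M\Delta/n$) and then a hazardous color (probability $\le M\ell/k$). For the new vertex to be bad, it must be a neighbor of $p$ lying in $N(\mc P\setminus\{p\})\cup N^2(\mc P\setminus\{p\})$. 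By girth $\ge 7$, each vertex $q\ne p$ of $\mc P$ contributes at most $O(1)$ such neighbors of $p$ (a second one would form a short cycle). Hence the number of bad choices for $v_t$ is at most $O(M)$ rather than $M\Delta$. The per-step probability is therefore $O(M\cdot M\ell/(kn))$, and summing over $\Ccouple n$ steps gives $O(\Ccouple M^2\ell/k)$, which is again negligible.

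\textbf{Main obstacle.} I expect the main difficulty to be the girth and geometry counting in Condition~(1), namely showing that only $O(|\mc P|)$ candidate vertices create a cycle. A secondary difficulty is handling the conditioning cleanly via stopping times, so that the bounds from \cref{lem:P-total-size-bound} can be used inside the per-step estimates. Combining the three failure probabilities then gives the claimed bound $2\exp(-\Ccouple^2)$.
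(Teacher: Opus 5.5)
Your proposal is correct and is essentially the paper's proof. It uses the same decomposition into the size event (\cref{lem:P-total-size-bound}), repropagation, and acyclicity; the same first-moment bounds, in which each bad step costs a factor $1/k$ against a count depending only on $\Ccouple$; and the same girth-$7$ fact that two vertices of $\mc P$ are joined by at most one path of length $3$.

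One wording fix in Condition~(1): the bad neighbors of $p$ are those joined to some $q\in\mc P\setminus\{p\}$ by an edge, or by a length-$2$ path whose middle vertex is not $p$, as in your first formulation. The literal set $N(p)\cap N^2(\mc P\setminus\{p\})$ is all of $N(p)\setminus\mc P$ as soon as $p$ has a $\mc P$-neighbor. Also, the correct count is $O(M^2)$ over pairs $(p,q)$ rather than $O(M)$, which is still harmless.
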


\begin{remark}
\label{rmk:girth}
    We will need the probability to be sufficiently small as a function of $\Ccouple$.  The exact choice $\exp(-\Ccouple^2)$ is arbitrary and can be made into any function of $\Ccouple$ using the machinery in \cref{sub:controlling-bounding-chain}. Moreover, the analysis of the event $\mc{B}_3$ in the proof below is where the requirement that the girth is at least $7$ enters the picture.  
\end{remark}

\begin{proof}
    Recall that $\BC(X_0,Y_0,\vecsigma) = \tt{False}$ if either (i) the set $\mc{P}$ created is too large, (ii) the set $\mc{P}$ is close to completing a cycle or (iii) there is a vertex $v \in V$ which has at least two hazardous updates.

    We begin by decomposing $\BC(X_0,Y_0,\vecsigma)$ into these three potential issues.
    Accordingly, we define a series of bad events.
    Informally, our first bad event $\mc{B}_1$ is that $\mc{P}_{\leq\Tcouple}$ is too large.  Our second event $\mc{B}_2$ is the event that $\mc{P}_{\leq\Tcouple}$ remains small and still contains a repropagation.  Our third event $\mc{B}_3$ is the event that $\mc{P}_{\leq\Tcouple}$ is small and still contains a cycle.
    
    \begin{itemize}
        \item $\mc{B}_1 = \{ |\mc{P}| \geq \exp(\exp(O(\Ccouple))) \}$.
        \item $\mc{B}_2 = \overline{\mc{B}_1} \cap \{ \vecsigma : \overline{\BC(X_0,Y_0,\vecsigma)} \text{ fails condition }(2)\text{ (repropagation)} \}$.
        \item $\mc{B}_3 = \overline{\mc{B}_1} \cap \overline{\mc{B}_2} \cap \{ \vecsigma : \overline{\BC(X_0,Y_0,\vecsigma)} \text{ fails condition }(1)\text{ (acyclicity)} \}$
    \end{itemize}

    Notice that $\{ \BC(X_0,Y_0,\vecsigma) = \tt{False} \} \subset \mc{B}_1 \cup \mc{B}_2 \cup \mc{B}_3$ by design.

    We have $\mb{P}[\mc{B}_1] \leq \exp(-\Ccouple^2)$ by \cref{lem:P-total-size-bound}.

    For $\mc{B}_2$, notice that at step $t$, to repropagate, we must have $v_t \in \mc{P}_{\leq t-1}$ and $c_t \in H(Z_{t-1},v_t)$.  However, $|\mc{P}_t| \leq \exp(\exp(O(\Ccouple)))$ by $\ol{\mc{B}_1}$ and $|H(Z_t,v_t)| \leq |\mc{P}_t| \cdot \max_{w \in V} |Z_t(w)|$.  If we have not yet repropagated, then $|Z_t(w)| \leq |\mc{P}_t|$, so $|H(Z_t,v_t)| \leq \exp(\exp(O(\Ccouple)))$ by $\ol{\mc{B}_1}$.  Thus the expected number of repropagations is at most $\Tcouple \exp(\exp(O(\Ccouple))) / (kn) = \exp(\exp(O(\Ccouple))) / k = o(1)$ as $k > \Delta_0 \gg \Ccouple$.  The result follows by Markov's inequality.

    For $\mc{B}_3$, suppose that an attempt $(v,c)$ would break the cycle condition.  Then we know $v \in N(p)$ for some $p \in \mc{P}_{t-1}$ and furthermore there is some $v' \in V$ so that $G[\mc{P}_{t-1} \cup \{v,v'\}]$ contains a cycle; thus there is some $w \in \mc{P}_{t-1}$ with $v' \in N(v) \cap N(w)$.  Thus we have $p,w \in \mc{P}_{t-1}$ with the path $pvv'w$.  As the graph has girth $\geq 7$, there is at most one path of length $3$ between two vertices.  Thus $p$ and $w$ uniquely determine $v$.  Thus there are at most $|\mc{P}_{t-1}|^2$ choices of $v$ and $|\mc{P}_{t-1}|$ choices of $c$ (by $\ol{\mc{B}_2}$) and so by a union bound, the probability of creating a cycle is at most $\Tcouple \exp(\exp(O(\Ccouple))) /(kn) = o_{\Delta\to\infty}(1)$.

    By enlarging $\Delta_0=\Delta_0(\Ccouple)$, the sum of the bounds for $\mc B_2$ and $\mc B_3$ is at most $e^{-\Ccouple^2}$. Together with $\mb P(\mc B_1)\le e^{-\Ccouple^2}$, the union bound proves the claim.
\end{proof}

\subsection{Putting it all together}

We will use the following result to turn high probability bounds into expectation bounds.

\begin{lemma}[{\cite[Observation 21]{hayesvigoda-nonmarkovian}}]\label{lem:HV-exponential-RV-bound}
    Let $X$ be an exponential random variable with mean $\mu$ and let $A$ be an event of probability $p$.  Then
    \[ \mb{E}[X \mbf1\{A\}] \leq p(\mu\log(e/p) + 1). \]
\end{lemma}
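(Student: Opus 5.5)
The approach is the layer-cake (tail-integral) representation of the expectation, bounding the joint tail $\mb{P}[X>x,\,A]$ by the smaller of the two obvious bounds. Since $X\ge 0$, we have
\[
\mb{E}[X\mbf1\{A\}] = \int_0^\infty \mb{P}[X\mbf1\{A\}>x]\,dx = \int_0^\infty \mb{P}[\{X>x\}\cap A]\,dx .
\]
For every $x\ge 0$ the integrand is at most $\mb{P}[A]=p$, and also at most $\mb{P}[X>x]=e^{-x/\mu}$ because $X$ is exponential with mean $\mu$. Hence the integrand is bounded by $\min\{p,e^{-x/\mu}\}$.

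The two bounds cross at $x_0:=\mu\log(1/p)$. We may assume $0<p\le 1$, since for $p=0$ both sides vanish. On $[0,x_0]$ we use the bound $p$, and on $[x_0,\infty)$ we use $e^{-x/\mu}$. This gives
\[
\mb{E}[X\mbf1\{A\}] \le p\,x_0 + \int_{x_0}^\infty e^{-x/\mu}\,dx = p\mu\log(1/p) + \mu e^{-x_0/\mu} = p\mu\log(1/p)+p\mu = p\mu\log(e/p).
\]
This is at most $p(\mu\log(e/p)+1)$, as claimed.

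There is no real obstacle here. The only thing to check is that the truncation point $x_0$ is nonnegative, which holds because $p\le 1$. The additive $+1$ in the statement is slack. It is useful when the lemma is applied to a variable that is only stochastically dominated by an exponential up to an additive rounding, for instance a geometric or integer-valued count with $\mb{P}[X>x]\le e^{-(x-1)/\mu}$. In that case the same splitting argument, with $x_0$ shifted by $1$, produces exactly the extra term $p$. If needed, I would state and prove that variant in the same way, replacing the tail bound $e^{-x/\mu}$ by $\min\{1,e^{-(x-1)/\mu}\}$.
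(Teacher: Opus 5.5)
Your proof is correct. The layer-cake identity, the pointwise bound $\mb{P}[\{X>x\}\cap A]\le\min\{p,e^{-x/\mu}\}$, and the split at $x_0=\mu\log(1/p)$ give $p\mu\log(e/p)$, which is slightly stronger than the stated bound.

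The paper gives no proof of its own; it imports the lemma from Hayes--Vigoda. Your tail-truncation argument is the standard way to prove it. It also uses only the tail bound $\mb{P}[X>x]\le e^{-x/\mu}$, so it covers the places where the paper applies the lemma to quantities that are merely stochastically dominated by an exponential.
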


We will need three propositions to control the contribution from non-Markovian errors.

\begin{proposition}
\label{lem:probability-nonmarkovian-fails}
    Let $v \in V$ and suppose $\LU(X_0,\vecsigma,\eps,v)$ holds for $X_0 \in \wh{\Omega}$. 
    Let $0 < \Cmod < \Ccouple$, let $t > \Cmod n$, and let $H \in \binom{[k]}{2}$. With probability $1 - \exp(-\Delta^{2/3})$, $v$ has at most $e^{-\Cmod/100}\Delta$ neighbors that follow all of \cref{condition:nm-prelim} but the last bullet, i.e.~they are blocked by a vertex who has never been updated.
\end{proposition}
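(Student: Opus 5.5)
The event to control is that many neighbors $w\in N(v)$ are ``blocked by a vertex never updated''. More precisely, $w$ has a neighbor $x\in N_p(w)$ carrying the relevant color of $H$ at time $t-1$ with $\tau_x^-=0$, i.e.\ $x$ has had no successful recoloring in $[1,t]$. The plan is to dominate this by the simpler event
\[
\mc{N}_w := \{\exists\, x\in N(w):\ X_{t-1}(x)\in H \text{ and } x \text{ has no successful update in } [1,t-1]\}.
\]
We then show that, with probability $1-\exp(-\Delta^{2/3})$, at most $e^{-\Cmod/100}\Delta$ of the $w\in N(v)$ satisfy $\mc{N}_w$. On $\mc{N}_w$ we have $X_{t-1}(x)=X_0(x)$. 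So the condition reduces to: $X_0(x)\in H$ and $x$ was never successfully recolored.

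The first step is a per-vertex bound. Under $\LU(X_0,\vecsigma,\eps,v)$, every vertex $x\in B_2(v)$ has $|A(X_s,x)|\ge ke^{-d(x)/k}-\eps k\ge k e^{-1/(1+\delta)}/2=\Omega(k)$ for all $s$, by \eqref{eq:main-number-colors}. Hence at each step, conditional on the past, the probability that $x$ is selected and its proposal succeeds is at least $c_0/n$ for an absolute $c_0>0$ (say $c_0\ge 1/4$). So the probability that $x$ has no successful update in $t-1\ge \Cmod n$ steps is at most $(1-c_0/n)^{\Cmod n}\le e^{-c_0\Cmod}$. Next we count the candidate blockers $x$: those $x\in N^2(v)$ with $X_0(x)\in H$. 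By \eqref{eq:two-nbd-uniformity} applied at time $0$ to the two colors of $H$, there are at most $800\Delta$ of them. Since girth $\ge 7$ makes $B_2(v)$ a tree, each such $x$ lies in $N(w)$ for a unique $w\in N(v)$. The number of bad $w$ is therefore at most the number $S$ of candidate $x$ never updated, with $\mb{E}[S]\le 800\Delta e^{-c_0\Cmod}\le \tfrac12 e^{-\Cmod/100}\Delta$ once $\Cmod$ is large. For bounded $\Cmod$ the statement can be handled by adjusting constants, or it is vacuous if the intended regime is $\Cmod$ large.

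The main obstacle is the concentration step, upgrading the expectation bound to failure probability $\exp(-\Delta^{2/3})$. The events ``$x$ never successfully updated'' are correlated through the configuration. The lower bound $c_0/n$ is only a conditional bound, valid on the $\LU$ event, so it does not give independence. I would handle this with the same Poisson-thinning / auxiliary-clock device described in the overview. Give each candidate $x$ an independent clock of rate $c_0/n$ whose rings are dominated by successful updates of $x$; this is valid because the acceptance probability always exceeds $c_0$ on $\LU$. Concretely, at each step, when $x$ is chosen with a proposed color, we couple a uniform variable $U_s$ so that $U_s\le c_0$ forces success. The vertex choices for distinct $x$ are disjoint events per step, and the $U_s$ are independent of everything else. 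The indicators ``$x$'s dominating clock never rang'' are then negatively associated: they are a multinomial occupancy of disjoint cells. So Chernoff applies, giving
\[
\mb{P}\big[S > e^{-\Cmod/100}\Delta\big]\le \exp(-\Omega(e^{-\Cmod/100}\Delta)) \le \exp(-\Delta^{2/3})
\]
for $\Delta\ge\Delta_0(\Cmod)$.

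One subtlety remains. On the complement of $\LU$ the domination fails, but the statement assumes $\LU$, so we intersect with it and the bound is unaffected. The other care point is checking that the domination coupling does not depend on future information. It does not, because at each step it uses only $X_{s-1}$ and fresh randomness.
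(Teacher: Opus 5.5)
Your proof is correct and follows essentially the paper's route. Both arguments combine three ingredients:
\begin{enumerate}
\item the $400\Delta$ per-color bound on $B_2(v)$ from $\LU$ at time $0$;
\item an $e^{-\Omega(\Cmod)}$ bound on a vertex never being successfully recolored, obtained from the available-color lower bound;
\item a Chernoff bound under negative correlation.
\end{enumerate}
Two differences work in your favor. By counting never-updated blockers $x\in S_2(v)$ directly (each attached to a unique $w\in N(v)$ by girth), you avoid the paper's heavy/light split of $N(v)$ and its per-$w$ union bound. Your $U_s$-domination coupling with multinomial occupancy also makes rigorous the negative correlation that the paper only asserts.

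The one slip is that $\mc{N}_w$ should range over $x\in N(w)\setminus\{v\}$, as your count over $N^2(v)$ already presumes. Otherwise a single never-updated $v$ with $X_0(v)\in H$ would make every $w$ bad. Excluding the parent is also the reading the drift analysis needs, since in the relevant non-Markovian branch the parent does not carry $c_b$.
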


The reason for isolating $\Cmod$ is that in the first few steps, it is unlikely that non-Markovian updates will succeed as there will be a blocker who has never been recolored.  However, after a buffer of $\Cmod n$ steps, with high probability (in $\Cmod$), all blockers have been updated at least once and so are eligible to be recolored. While the above proposition is valid for any $\Cmod < \Ccouple$, we will later assume that $\Cmod$ has size
\[1/\Delta_0 \ll \eps \ll 1/\Ccouple \ll 1/\Cmod \ll \delta.\]

\begin{proof}
    We will simply count the number of neighbors of $v$ who are blocked for color $c_b$ that have not been recolored since time $0$.  Let $\gamma = e^{-\Cmod}$.  By \eqref{eq:two-nbd-uniformity}, there are at most $\gamma^{1/10} \Delta$ neighbors $w \in N(v)$ who have more than $400\gamma^{-1/10}$ neighbors $z \in N(w)$ initially colored $c_b$.  We will assume these vertices all fail the last bullet.

    Then, for the remaining neighbors $w \in N(v)$, we will use a union bound over $z \in N(w)$.  Using \eqref{eq:main-number-colors}, each $z$ has been updated except with probability
    \[ \prod_{s=1}^t \paren{1 - \frac{|A(X_{s-1},z)|}{kn}} \leq \exp\paren{ - \frac{t}{en} } \leq \gamma^{1/e}. \]

    Thus by a union bound, all have been updated except with probability $\gamma^{1/e - 1/10} \leq \gamma^{1/4}$.  Since these are negatively correlated, we may use a Chernoff bound.  Thus, at most $2\gamma^{1/4}\Delta$ vertices are blocked by an unupdated vertex except with probability $\exp(-\Delta^{3/4})$.  Combining with the $\gamma^{1/10}\Delta$ neighbors earlier excluded, we get the result.
\end{proof}

\begin{restatable}{proposition}{propprobg}
\label{prop:non-markovians-all-succeed}
Let $X_0$ and $Y_0$ be neighboring labelings in $\wh{\Omega}$.  Then
\[\mb{P}[\ol{\mc{G}} \cap \{\BC(X_0,Y_0,\vecsigma) = {\tt True} \} \cap \mc{U}] \leq \eps^{1/3}.\]
\end{restatable}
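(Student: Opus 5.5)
We would prove the bound by a union bound over the times at which \cref{condition:nm-prelim} holds, reducing to a per-time estimate. Throughout we take $\mc U$ to be the local uniformity event $\LU(X_0,\vecsigma,\eps,z^*)$. On $\{\BC(X_0,Y_0,\vecsigma)=\texttt{True}\}$, every time $t$ satisfying the first bullet of \cref{condition:nm-prelim} is a hazardous update at a vertex $v_t$ that enters $\mc P$ for the first time. By Condition~(2) of $\BC$ and \cref{lem:bc-basic-geometry-rewrite}, distinct such times have distinct $v_t$, so there are at most $|\mc P|\le \exp(\exp(O(\Ccouple)))$ of them. Since $\eps\ll 1/\Ccouple$, it therefore suffices to show the following: for a fixed hazardous time $t$ satisfying \cref{condition:nm-prelim}, and on $\mc U$, each bullet of \cref{assumption:non-markovian-succeed} after the first fails with conditional probability at most $\eps^{1/2}$ (up to constants depending on $\Ccouple$). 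Summing over the $O(1)$ bullets and the at most $\exp(\exp(O(\Ccouple)))$ times then gives a total of at most $\eps^{1/3}$. To make the union over the random set of times rigorous, we would index candidate times by pairs (vertex $v\in B_{\Delta^{1/10}}(z^*)$, its order of entry into $\mc P$) and apply Markov's inequality to the expected number of failing times.

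The ``static'' bullets are straightforward. For the second bullet, $N_p(v_t)\cap\mc P$ consists of vertices entering $\mc P$ after time $t$, since the tree structure forbids earlier ones. There are boundedly many such vertices, and each carries a color in the fixed two-element set $H^*$ with probability $O(1/k)$. For the third bullet, $|B|$ is controlled by \eqref{eq:two-nbd-uniformity} and \eqref{eq:main-blockers}: the number of neighbors of $v_t$ colored $c_b$ is $O(1)$ with high probability. Girth $\ge 7$ then shows that each $w\in B$ has at most $O(|\mc P|)$ neighbors in $N^2_w(\mc P)$. Such a $w$ lies in $\Avoid(t)$ only if one of these $O(1)$ vertices, or $v_t$, proposes the specific color $X_{s-1}(w)$ during the epoch $I(w,t)$. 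Epochs have length $O(n)$ except with probability exponentially small in the epoch length, by \eqref{eq:main-number-colors}, so this happens with probability $O(1/k)$.

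For the fourth bullet (that $\alpha$ is defined on $B$), we would show $|\Swappable_{\mc P}(c_u,t)|=\Omega_\delta(\Delta)$ with high probability. The color $c_u$ is unblocked on $N_p(v_t)$. Moreover, a neighbor $w$ has $c_u\in\Exchange(w,t)$ with probability bounded below: $c_u$ must be available at $\tau_w^-$, which has probability roughly $e^{-d(w)/k}$, and must not be proposed by a neighbor during $I^\circ(w,t)$, which has probability at least $e^{-O(|I|/n)}$. Concentration then follows from the Chernoff-type arguments already used for \eqref{eq:main-blockers}. Since $|B|=O(1)$, the first $|B|$ slots of the ordering are filled.

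The last two bullets are the ones requiring genuine work. For the sixth bullet, a vertex of $B$ can appear among the first $|B|$ elements of the ordering of $\Swappable_{\mc P}(c_u,t)$ only with probability $O(|B|^2/\Delta)$, given that this set has linear size. For the fifth bullet, we need two things: $\beta_w(X_{t-1}(\alpha(w)))$ must be defined, meaning the rank of $X_{t-1}(\alpha(w))$ in $\Exchange(\alpha(w),t)$ is at most $|\Exchange(w,t)|$, and its image must avoid $H^*$. The point of ordering by $\deg z\cdot(|I(z,t)|/n+1)$ is that matched vertices have comparable exchangeable-set sizes. We would prove that both $|\Exchange(z,t)|$ and the position of the current color are concentrated around a deterministic function of $d(z)$ and $|I(z,t)|/n$, within a relative error $O(\sqrt\eps)$. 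The rank of the current color is essentially uniform, so a failure requires it to land in a window of relative width $O(\sqrt\eps)$. The image landing in $H^*$ has probability $O(1/k)$.

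The main obstacle is measurability. All of these events look into the future, through $\tau_w^+$ and the epochs $I(w,t)$, and they are correlated through the shared configuration on $B_2(v_t)$. Our plan is to import the decoupling device of \cref{sec:local-uniformity}. We pass to continuous time, orient edges toward $v_t$ inside $B_3(v_t)$, and replace Metropolis clock rates by their conditional expectations given the configuration outside $B_2(v_t)$, then couple back to the true dynamics via Poisson thinning. This makes the exchangeable sets of distinct neighbors conditionally independent, so that the concentration statements above reduce to Chernoff bounds, at a coupling cost of $\exp(-\Omega(\Delta))$ per time.
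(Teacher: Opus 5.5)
Your overall architecture matches the paper's: a union bound over the at most $|\mc P|$ entry times, plus a per-time estimate in a decoupled process. However, your analysis of the matching $\alpha$ in bullets 4 and 5 has a genuine gap.

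For bullet 4 you argue that since $|B|=O(1)$, ``the first $|B|$ slots of the ordering are filled.'' But $\alpha$ matches by rank across the whole of $\Swappable_{\mc P}(c_b,t)$ and $\Swappable_{\mc P}(c_u,t)$. A vertex $w\in B$ sits at the rank in $\Swappable_{\mc P}(c_b,t)$ fixed by its key $d(w)(|I(w,t)|/n+1)$, which can be anywhere up to $|\Swappable_{\mc P}(c_b,t)|=\Theta(\Delta)$. $\alpha(w)$ is defined iff that rank is at most $|\Swappable_{\mc P}(c_u,t)|$. Since the two swappable sets have comparable random sizes, a linear lower bound on $|\Swappable_{\mc P}(c_u,t)|$ does not suffice.

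For bullet 5 you assert that ordering by the key makes matched vertices have comparable exchangeable sets. But equal rank in two \emph{different} random subsets of $N_{\mc P}(v_t)$ does not give close keys without an argument. This is the heart of the proposition, not a consequence of the ordering. Your bullet-6 sentence repeats the ``first $|B|$ elements'' misreading, although $O(|B|^2/\Delta)$ is the right order.

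The missing ingredient is the paper's order-statistics claim (\cref{claim:bernoulli-sequence-nice-event}).
\begin{itemize}
\item After decoupling, list $N_p(v_t)$ by decreasing $p_w=\exp\bigl(-\tfrac{d(w)}{k}(1+|I(w,t)|/n)\bigr)\ge e^{-\Ccouple}$.
\item Realize membership of $c_b$ and of $c_u$ in $\Exchange(w,t)$ (up to negligible corrections) as two independent $\Bernoulli(p_w)$ sequences.
\item Take a roughly uniform $w$ of rank $R$ in the first sequence. Hoeffding on partial sums, using $p_i\ge e^{-\Ccouple}$, shows the second sequence reaches count $R$ within $\eps m$ positions of $w$.
\item Since $(p_i)$ is monotone, a telescoping/Markov argument shows it varies by at most $\eps^{1/2}$ on the radius-$2\eps m$ window around all but an $O(\eps^{1/2})$ fraction of positions.
\end{itemize}
Together these give, except with probability $O(\eps^{1/2})$ per $w\in B$, that $\alpha(w)$ exists and $|p_w-p_{\alpha(w)}|\le 2\eps^{1/2}$. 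Concentration then gives $\bigl||\Exchange(w,t)|-|\Exchange(\alpha(w),t)|\bigr|=O(\eps^{1/2}k)$, after which your uniform-current-color argument for bullet 5 works. The claim must also tolerate the $O(\eps\Delta)$ vertices where the decoupled process and the true chain differ. It also needs $|N_p(v_t)|$ large; the paper treats $\deg v_t<\eps^{1/2}\Delta$ separately, since there $B=\emptyset$ is likely.

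Two smaller points:
\begin{itemize}
\item $|B|=O(1)$ does not follow from \eqref{eq:two-nbd-uniformity} and \eqref{eq:main-blockers}. It comes from the decoupled process: each neighbor's color is uniform on at least $k/e$ available colors, so $|B|\preceq\on{Binom}(\Delta,e/k)$.
\item The decoupling gives agreement only off $O(\eps\Delta)$ vertices, not exact agreement at cost $e^{-\Omega(\Delta)}$.
\end{itemize}
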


We defer the proof to \cref{sec:non-markovian-inequalities}, as we will need to use ideas developed in our local uniformity proofs in \cref{sec:local-uniformity} to approximately decouple the future randomness in $\vecsigma$ from various non-Markovian events which necessarily look into the future.

\begin{restatable}{proposition}{proptemp}\label{prop:temporary-errors-are-mostly-fixed}
    Let $\mc U:=\{\LU(X_0,\vecsigma,\eps,z^*)\}$. Then
    \[ \mb{E}[|(X_\Tcouple \oplus Y_\Tcouple) \cap \Temp_{\leq \Tcouple}| \mbf1\{\mc U\cap\mc{G}\}] \leq \frac19. \]
\end{restatable}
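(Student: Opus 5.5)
The plan is to bound the expected number of temporary discrepancies that survive to time $\Tcouple$, by splitting according to when and whether each $w\in\Temp_t$ is recolored after $t$. By \cref{cor:epoch-behavior-W-rewrite}, a vertex $w\in\Temp_t$ created at a non-Markovian time $t$ disagrees at time $\Tcouple$ only if $\tau_w^+(t)=\Tcouple+1$, i.e.\ $w$ has no successful recoloring in $(t,\Tcouple]$. A successful recoloring after $\tau_w^+$ cannot recreate a discrepancy, since the proof of \cref{lem:non-markovian-local-is-bijection} (Case 2, $s>\tau_w^+$) shows that agreement persists forever. The first step is therefore
\[
\mb E\bigl[|(X_\Tcouple\oplus Y_\Tcouple)\cap\Temp_{\le\Tcouple}|\,\mbf1\{\mc U\cap\mc G\}\bigr]\le \sum_{t}\mb E\Bigl[\sum_{w\in\Temp_t}\mbf1\{\tau_w^+(t)>\Tcouple\}\,\mbf1\{\mc U\cap\mc G\}\Bigr].
\]

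Second, I will bound the sizes and the number of non-Markovian times. On $\mc G$, non-Markovian times occur only at hazardous updates of a newly entering vertex $v_t$, which must be a neighbor of $\mc P$, and each vertex enters $\mc P$ at most once. For each such $t$ we have $|\Temp_t|\le 2|B|$. Under $\mc U$, by \eqref{eq:two-nbd-uniformity}, $|B|\le|X_{t-1}^{-1}(c_b)\cap N(v_t)|\le 400\Delta$; a better bound comes from conditioning on $\mc F_{t-1}$, where a Poisson-type bound from the local uniformity refinement makes $|B|$ typically $O(1)$. The crude estimate I aim for is $\mb E[\sum_t|\Temp_t|\mbf1\{\mc G\}]\le O(\mb E|\mc P|)$ times a constant. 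Since $|\mc P|$ is heavy-tailed, I will truncate using \cref{lem:P-total-size-bound}, whose failure event has probability $e^{-\Ccouple^2}$, and on the complement combine with the additional factor below.

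Third, and this is the key gain, each $w\in\Temp_t$ must avoid being successfully recolored during $(t,\Tcouple]$. By \eqref{eq:main-number-colors} under $\mc U$, each step recolors $w$ with probability at least $(1-\eps)e^{-1}\cdot\Omega(1)/n$. So the probability of no recoloring during $(t,\Tcouple]$ is at most $\exp(-c(\Tcouple-t)/n)$. Since the non-Markovian times are spread over $[0,\Tcouple]$, this alone is not enough for $t$ near $\Tcouple$. I therefore need the rate at which hazardous updates occur near the end to be small.

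The expected number of hazardous updates in $[\Tcouple-Ln,\Tcouple]$ is at most $Ln\cdot|\mc P_t|\Delta\max|Z|/(kn)$, which could be large, so the estimate has to use contraction. The expected number of persistent discrepancies $|D_t|$ decays, since the margin $k>(1+\delta)\Delta$ gives drift $-\Omega(\delta)/n$ per vertex. Hazardous times at which $\NM$ is applied require $v_t\in N(D_{t-1})$, and their rate is $O(\mb E|D_{t-1}|/n)$. This gives $\mb E|D_t|\le e^{-c\delta t/n}$ plus error terms. Combining this with the survival bound, the expected number of surviving temporaries is at most
\[
\sum_t \frac{C}{n}\,\mb E|D_{t-1}|\cdot\mb E\bigl[|B|\bigr]\cdot e^{-c(\Tcouple-t)/n},
\]
which is $O(e^{-c\delta\Ccouple})$ after splitting $t$ at $\Tcouple/2$. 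This is $\le 1/9$ for large $\Ccouple$.

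The main obstacle is measurability. Whether $t$ is non-Markovian, and the set $\Temp_t$, depend on future updates through $\Exchange$ and $\Avoid$, while the survival event concerns $w$'s future clock. I would sidestep this by bounding $\mbf1\{w\in\Temp_t\}\le\mbf1\{w\in N(v_t),\,X_{t-1}(w)\in H^*\}$, which is $\mc F_t$-measurable. This can inflate the count by the number of $c_b$- or $c_u$-colored neighbors, which is $O(1)$ in expectation under $\mc U$. I would then bound the no-recoloring probability using the auxiliary Poisson-thinning process of \cref{sec:local-uniformity}, which decouples the rate of $w$ from the configuration.
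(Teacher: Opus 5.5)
\textbf{Gap: the $\alpha(B)$ half of $\Temp_t$.} Your outline has the same shape as the paper's proof. Late non-Markovian times are rare because the persistent potential has contracted. Temporaries created early are usually erased by a later successful recoloring. The $B$ half and the late-time part can be made rigorous, using the $\mc F_t$-measurable $D_t^\circ$ and the supermartingale $M(t)$ from the proof of \cref{thm:main-nonmarkovian}. However, your measurability workaround fails exactly where the real difficulty lies. You bound $\mathbf{1}\{w\in\Temp_t\}\le\mathbf{1}\{w\in N(v_t),\,X_{t-1}(w)\in H^*\}$, but this covers only $B$, not $\Temp_t=B\cup\alpha(B)$. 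By \cref{condition:nm-prelim}, $c_u$ does not occur on $N_p(v_t)$, and $X_{t-1}(w)=c_b$ forces $w\in B$. Hence every $w\in\alpha(B)\setminus B$ has $X_{t-1}(w)\notin H^*$. By the last bullet of \cref{assumption:non-markovian-succeed}, this is all of $\alpha(B)$ except fixed points. These vertices are genuine temporary discrepancies: the $Y$-run recolours $\alpha(w)$ to $c_u$ at $\tau^-_{\alpha(w)}$, so it carries colours $\{X_{t-1}(\alpha(w)),c_u\}$.

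\textbf{Why your survival bound does not reach $\alpha(B)$.} The vertex $\alpha(w)$ is chosen by ranking $\Swappable_{\mc P}(c_u,t)$ by $\deg z\,(|I(z,t)|/n+1)$. This depends on $\tau_z^+$, which is precisely the variable whose largeness is the survival event. Membership $c_u\in\Exchange(z,t)$ also depends on neighbour attempts throughout $I^\circ(z,t)$. So $\alpha(w)$ is biased toward vertices whose epoch length matches $w$'s. The bound $(1-\rho_\delta/n)^{\Tcouple-t}$ holds for a fixed or $\mc F_t$-measurable vertex and cannot be applied to $\alpha(w)$.

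No bounded-size $\mc F_t$-measurable superset of $\alpha(B)$ is available. Replacing $\alpha(B)$ by $N(v_t)$ gives $\Delta e^{-c(\Tcouple-t)/n}$ surviving candidates per non-Markovian time, which is useless since $\Delta\gg\Ccouple$. The bound $|\alpha(B)\cap S|\le\min(|B|,|S|)$, with $S$ the survivor set, gains nothing either, because $|S|$ is typically of order $e^{-\Omega(\Ccouple)}\Delta$, far larger than $|B|=O(1)$.

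\textbf{The missing idea is a decoupling step.} The paper does the following.
\begin{enumerate}
\item Work in the auxiliary process $W^*$ at the time $v$ enters $\mc P$, where the successful-update times of $w\in N_p(v)$ do not leak the colours they pick.
\item Reveal all epochs $\tau_w^\pm$ for $w\in N(v)$ first. With high probability at most $2e^{-9\Ccouple/10}\deg v$ of them have $\tau_w^+>\Tcouple$.
\item Call these vertices and their $\alpha$-preimages risky. Since $\alpha$ is injective, this at most doubles the count.
\item Only then sample the colours. Now $B$ is dominated by independent $e/k$-sampling, so $\mb E|B\cap\mathrm{risky}|=e^{-\Omega(\Ccouple)}$ per non-Markovian update.
\item Sum this against the $e^{O(\Cmod)}/n$ per-step rate of non-Markovian attempts over the early interval.
\end{enumerate}
Without some argument of this kind, showing that a risky $z'$ lies in $\alpha(B)$ with probability $O(1/k)$ conditional on the epoch data, your early-time estimate covers only $B$.
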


We defer the proof of this result to \cref{sec:non-markovian-inequalities} as well.

\begin{proof}[Proof of \cref{thm:main-nonmarkovian}]
    Let
    \[
        \mc{U}
        :=
        \left\{
            \operatorname{LU}(X_0,\vecsigma,\eps,z^*)
        \right\}.
    \]
    We must prove
    \[
        \mb{E}\left[
            |X_\Tcouple \oplus Y_\Tcouple| \mbf1\{\mc{U}\}
        \right]
        \leq \frac13.
    \]

    \medskip
    \noindent\textbf{Step 1: separating the event $\ol{\mc{G}}$.}
    On the event $\{\BC(X_0,Y_0,\vecsigma)=\tt{False}\}$ the global coupling is, by definition, the identity coupling.  Hence the resulting Hamming distance is stochastically dominated by the standard branching-process bound for the identity coupling; see \cite{jerrum1995very}.  Since \cref{lem:small-error-if-nonmarkovian-fails} gives
    \[
        \mb{P}[\BC(X_0,Y_0,\vecsigma)={\tt False}] \leq 2e^{-\Ccouple^2},
    \]
    \cref{lem:HV-exponential-RV-bound} implies
    \begin{equation}\label{eq:bad-BC-expectation}
        \mb{E}\left[
            |X_\Tcouple\oplus Y_\Tcouple|
            \mbf1\{\mc{U}\cap\{\BC={\tt False}\}\}
        \right]
        \leq
        e^{O(\Ccouple)} e^{-\Ccouple^2}
        =
        o(1).
    \end{equation}
    We must also consider the event that there is some possible failed non-Markovian update.  In this case, we will bound that $D_t \subset \mc{P}$ which has size at most $\exp(\exp(O(\Ccouple)))$.  To control $\Temp_{\leq t}$, notice that there are at most $|\mc{P}|$ values $t$ for which $\Temp_t \ne \emptyset$.  For each of those values of $t$, $|\Temp_t|$ is stochastically dominated by $2$ times a binomial random variable $\on{Binom}(\Delta,e/k)$.  Thus $|\Temp_{\leq t}|$ is stochastically dominated by $2$ times a Poisson random variable with mean $e \exp(\exp(O(\Ccouple)))$, which is in turn stochastically dominated by an exponential random variable with mean $\exp(\exp(O(\Ccouple)))$.  Thus by \cref{lem:HV-exponential-RV-bound,prop:non-markovians-all-succeed}, $\eps \ll 1/\Ccouple$, and \eqref{eq:bad-BC-expectation} we have
    \begin{equation}\label{eq:bad-olG-expectation} \mb{E}[\mbf1\{ \mc U\cap\ol{\mc{G}} \} |X_\Tcouple \oplus Y_\Tcouple|] = o(1). \end{equation}
    
    \medskip
    \noindent\textbf{Step 2: persistent and temporary discrepancies on $\{\BC=\tt{True}\}$.}
    On $\{\BC(X_0, Y_0, \vecsigma)=\tt{True}\}$ the sets $D_t$ and $\Temp_{\leq t}$ are defined for every $t$.  Since these sets need not be defined on $\ol{\mc G}$, we define the following random variables piecewise:
    \[
        \rho(t)
        :=
        \begin{cases}
        \mbf1\{\mc U\}\displaystyle\sum_{w\in D_t}e^{d(w)/k},&\vecsigma\in\mc G,\\
        0,&\vecsigma\notin\mc G,
        \end{cases}
        \qquad
        \nu(t)
        :=
        \begin{cases}
        \mbf1\{\mc U\}|(X_t\oplus Y_t)\cap\Temp_{\leq t}|,&\vecsigma\in\mc G,\\
        0,&\vecsigma\notin\mc G.
        \end{cases}
    \]
    By \cref{prop:coupling-is-valid-rewrite}, whenever $\mc{U}$ and $\BC(X_0,Y_0,\vecsigma)=\tt{True}$ hold,
    \[
        X_\Tcouple \oplus Y_\Tcouple \subset D_\Tcouple \cup \Temp_{\leq \Tcouple}.
    \]
    Therefore
    \begin{equation}\label{eq:decompose-final-disagreements}
        |X_\Tcouple\oplus Y_\Tcouple|\mbf1\{\mc{U}\}
        \leq
        |X_\Tcouple\oplus Y_\Tcouple|\mbf1\{\mc{U}\cap\ol{\mc{G}}\}
        + \rho(\Tcouple) + \nu(\Tcouple).
    \end{equation}

    We next record the measurability input needed for the drift calculation.

    \begin{claim}\label{claim:measurability-of-Y}
        Let $\mc{F}_t$ denote the $\sigma$-algebra generated by the first $t$ coordinates of $\vecsigma$.  There are $\mc F_t$-measurable sets $D_t^\circ$ and colors $\{Y_t^\circ(v):v\in D_t^\circ\}$, defined for every update sequence, such that on $\mc G$ they agree with $D_t$ and $\{Y_t(v):v\in D_t\}$, respectively.
    \end{claim}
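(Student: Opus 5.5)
The idea is to build $D_t^\circ$ and $Y_t^\circ$ by a \emph{truncated} version of the global coupling. This truncated version only ever reads the first $t$ coordinates of $\vecsigma$, and on $\mc G$ it must reproduce the real construction. The obstruction is that $f_s$ for $s\le t$ may read and edit coordinates after $t$ through $\NM_{s,H}$, and $\BC$ depends on the whole sequence. The key observation is that the \emph{persistent} data, namely $D_t$ and the colors of $Y_t$ on $D_t$, should not depend on those future reads when $\mc G$ holds.

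First, define a prefix bounding chain. $Z_s$ for $s\le t$ is $\mc F_t$-measurable, since it is defined inductively from $(v_s,c_s)$ and $X_0,Y_0$; hence so are $\mc P_s$ and $H(Z_{s-1},v_s)$. Next, define $D^\circ_s$ for $s\le t$ by running the case analysis of \cref{def:global-coupling} with the following modifications:
\begin{itemize}
\item In case (3c) the branch depends only on $H\cap X_{s-1}(N_p(v_s))$ and on whether \cref{assumption:non-markovian-succeed} holds. On $\mc G$, every time satisfying \cref{condition:nm-prelim} also satisfies \cref{assumption:non-markovian-succeed}, and \cref{condition:nm-prelim} is $\mc F_s$-measurable by the remark following it.
\item So in the truncated construction we \emph{declare} the NM branch to be taken exactly when \cref{condition:nm-prelim} holds, the $\mc F_s$-measurable sets $\mc P_{\le s-1}$ form a tree-like structure, and we are in the danger zone. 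Otherwise we take the Jerrum branch.
\end{itemize}
On $\mc G$ this agrees with the true branch choice.

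For the colors, the plan is to use \cref{prop:main-global-induction-rewrite}(iii)--(iv) together with \cref{lem:non-markovian-local-is-bijection}(2b). These give that on $\mc G$, $Y_s$ agrees with $X_s$ off $D_s\cup\Temp_{\le s}$. They also give that the evolution of $Y$ on $D_s\cup N(D_s)$ is determined by $X$ on that region, the previous colors $Y_{s-1}$ on $D_{s-1}$, and the \emph{time-$s$} proposal $c'_s$. The crucial point is that $c'_s$ for $v_s\in B_1(D_{s-1})$ is never altered by NM edits made at other times $r\neq s$. This holds because NM edits touch only vertices of $\Temp_r\cup N(\Temp_r)$, which by \cref{lem:temporary-geometry}(1)--(3) and the $\Avoid$ set are outside $N^2_{v_r}(\mc P)\supseteq B_1(D_{s-1})\setminus\{v_r\}$, except at time $r$ itself at $v_r$. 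Hence $c'_s$ equals $c_s$, $\Jerrum_{s,H}(c_s)$, or $H\setminus\{c_s\}$ according to the (measurable) branch. One also needs that temporary discrepancies never border $D$: by Fact 1 in the proof of \cref{prop:main-global-induction-rewrite}, $N_p(v_s)\cap\Temp_{\le s-1}=\emptyset$, so neighbors of $v_s$ other than $p$ carry the $X$-colors.

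We therefore define $Y^\circ_s(v)$ recursively for $v\in D^\circ_s$, using $X_{s-1}$ on neighbors outside $D^\circ_{s-1}$ and $Y^\circ_{s-1}$ on neighbors inside it, and applying the Metropolis rule with proposal $c'_s$. Then we verify by induction on $s\le t$ that on $\mc G$, $D^\circ_s=D_s$ and $Y^\circ_s=Y_s$ on $D_s$. Measurability is immediate because each step uses only $\mc F_s$ data.

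I expect the main obstacle to be rigorously justifying that no future-looking edit (from $\NM_r$ with $r\le t$) ever changes a proposal at a vertex of $B_1(D)$, or a color seen by $D$, at a time $s\le t$. This is where \cref{lem:temporary-geometry} and the second part of \cref{lem:non-markovian-local-is-bijection}(2a) must be combined. In particular, one must handle edits at times $\tau_w^-<s$ for $w\in\Temp_r$ with $r>s$. These edits are in the past of $s$, but they concern vertices $w\in N_{\mc P}(v_r)$ that lie outside $N(D_{s-1})$ by the tree-like geometry and \cref{lem:temporary-geometry}(1). So they do not affect the persistent data, even though they change $Y_{s-1}$ elsewhere.
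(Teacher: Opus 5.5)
Your plan follows the same route as the paper's sketch. You rerun the recursion for $D_t$ on the prefix, take the non-Markovian branch whenever \cref{condition:nm-prelim} holds instead of consulting the future-dependent \cref{assumption:non-markovian-succeed}, and observe that on $\mc G$ this is the branch actually taken. The gap is in how you define the colors $Y^\circ_s$.

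You compute $Y^\circ_s(v_s)$ by applying the Metropolis rule to $c'_s$, reading every neighbor outside $D^\circ_{s-1}$ from $X_{s-1}$. You justify this by $N_p(v_s)\cap\Temp_{\le s-1}=\emptyset$ and by asserting that temporary discrepancies never border $D$. That assertion is false. At a non-Markovian time the edit $\NM_{s,H}$ itself creates $\Temp_s=B\cup\alpha(B)\subseteq N_p(v_s)$. In the $Y$-run these vertices carry their new colors at time $s-1\in I(w,s)$, and in one branch $v_s$ then joins $D_s$.

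Moreover, in the danger zone $c_s=Y_{s-1}(p)\in H$, so $\Jerrum_{s,H}$ and $\NM_{s,H}$ propose the \emph{same} color $X_{s-1}(p)$ at time $s$. They differ only through the recoloring of $\Temp_s$, so your simulation always reproduces the Jerrum outcome. Concretely, take the branch $H\cap X_{s-1}(N_p(v_s))=H\setminus\{c_s\}$, where $c_b=X_{s-1}(p)$ sits on $B$. Your rule declares the $Y$-proposal blocked and sets $Y^\circ_s(v_s)=X_{s-1}(v_s)$. In the real coupled run, $c_b$ has been removed from $N_p(v_s)$: by the second bullet of \cref{assumption:non-markovian-succeed} it occurred there only on $B$, and the new colors are $c_u$ or lie outside $H^*$. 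Also $p$ carries $c_s\neq c_b$. So the proposal succeeds and $Y_s(v_s)=X_{s-1}(p)$. Your inductive verification $Y^\circ_s=Y_s$ on $D_s$ therefore breaks exactly at these times. (In the other non-Markovian branch the same misprediction is masked only because $v_s\notin D_s$.)

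The fix is what the paper does: define the new color by branch rather than by simulation. Set $Y^\circ_s(v_s):=X_{s-1}(p)$ in the non-Markovian branch, which is $\mc F_s$-measurable. Simulate only in the Jerrum and identity branches, where $N_p(v_s)$ genuinely carries the $X$-colors by Fact~1.

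A related point needs justification. When $v_s\in D^\circ_{s-1}$, your simulation reads $X$-colors on neighbors $w$ that may lie in an earlier $\Temp_r$ with $v_r=v_s$. This is harmless only because the exchangeability and $\Avoid$ conditions force $c_s\notin\{X_{r-1}(w),c'_{\tau_w^-}\}$ for $s\in I^\circ(w,r)$, and you should say so. Alternatively, use Step~(2) of \cref{def:global-coupling}, which reads only $X_{s-1}$.
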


\begin{proof}
    The arguments in \cref{sec:validity} show these results; we sketch the proof here for completeness.  We induct on time $t$.  At $t=0$, set $D_0^\circ=\{z^*\}$ and $Y_0^\circ(z^*)=Y_0(z^*)$.

    For the induction step, follow the recursion defining $D_t$ and the colors on $D_t$, with one convention: whenever \cref{condition:nm-prelim} holds, use the intended successful non-Markovian branch, without inspecting the future-dependent \cref{assumption:non-markovian-succeed}.  If some data required by this recursion are not defined, use the identity branch and leave $D_t^\circ$ unchanged.  If $v_t \in D_{t-1}^\circ$ and remains in $D_t^\circ$, then this update was blocked in the $\{X_s\}$ chain, so the corresponding color on the successful branch is unchanged.

    If $v_t \in D_t^\circ \setminus D_{t-1}^\circ$, then $H=\{X_{t-1}(p),Y_{t-1}^\circ(p)\}$ is $\mc F_{t-1}$-measurable by induction.  Furthermore, \cref{condition:nm-prelim} is $\mc F_t$-measurable.  Thus the successful-branch convention determines $D_t^\circ$ and $Y_t^\circ(v_t)$ from the first $t$ updates.

    On $\mc G$, \cref{condition:nm-prelim} implies \cref{assumption:non-markovian-succeed}, so this convention agrees with the actual recursion: in the non-Markovian case $Y_t(v_t)=X_{t-1}(p)$, while in the Jerrum case the update can be simulated from the prefix.  This proves both the asserted measurability and agreement on $\mc G$.
\end{proof}

    For the drift calculation, set
    \[
        R(t):=\sum_{w\in D_t^\circ}e^{d(w)/k}.
    \]
    This is an $\mc F_t$-measurable random variable defined on the whole probability space, and the preceding claim gives
    \begin{equation}\label{eq:rho-agrees-with-prefix-potential}
        \rho(t)=\mbf1\{\mc U\cap\mc G\}R(t).
    \end{equation}

    \medskip
    \noindent\textbf{Step 3: a generic one-step bound and the tail estimate.}
    Regardless of whether the successful-branch recursion at time $t$ uses the identity, Jerrum, or non-Markovian rule, a single persistent discrepancy can create new persistent discrepancies only at neighboring vertices, and each such new disagreement has weight at most $e$ times the weight of its parent.  Consequently,
    \begin{equation}\label{eq:generic-rho-growth}
        \mb{E}[R(t+1)\mid \mc{F}_t] \leq \left(1+\frac{\Delta e}{nk}\right)R(t) \leq \left(1+\frac{e}{n}\right)R(t)
        \qquad\text{for every }t<\Tcouple.
    \end{equation}

    \medskip
    \noindent\textbf{Step 4: negative drift after buffer.}
    We now assume $t>\Cmod n$ and work at a sample point in $\mc U\cap\mc G$.  By \cref{claim:measurability-of-Y}, $D_t^\circ=D_t$ at this sample point.  Fix $p\in D_t^\circ$ and consider the ways in which the discrepancy at $p$ can create a new persistent discrepancy at time $t+1$ in the successful-branch recursion.

    There are two contributions.
    \begin{enumerate}
        \item[(a)] \emph{The intended propagating branch.}  By \eqref{eq:main-blockers}, applied with the two disagree colours at $p$, the total weighted contribution of vertices $v\in N(p)\setminus \mc{P}_{\leq t}$ for which the actual local rule would propagate the disagreement from $p$ is at most $d(p)+O(\eps\Delta)$.
        \item[(b)] \emph{Failure of the last bullet of \cref{condition:nm-prelim}.}  By \cref{lem:probability-nonmarkovian-fails}, except with probability $e^{-\Delta^{2/3}}$ there are at most $e^{-\Cmod/100}\Delta$ such vertices.
    \end{enumerate}

    Since each such vertex can matter only when it is chosen together with the unique disagree colour at $p$, the expected increase in $R$ produced by descendants of $p$ is at most
    \begin{equation}\label{eq:births-from-p}
        \frac{d(p)+O(\eps\Delta)+e^{-\Omega(\Cmod)}\Delta}{nk}
        \leq
        \frac1n\left(\frac{1}{1+\delta}+O(\eps)+e^{-\Omega(\Cmod)}\right)
        \leq
        \frac1n\left(1-\frac{\delta}{2}\right),
    \end{equation}
    after first choosing $\Cmod$ sufficiently large and then $\eps$ sufficiently small.

    On the other hand, the discrepancy at $p$ disappears whenever we update $p$ with an available colour.  By the first conclusion of \cref{thm:local-uniformity-discrete},
    \[
        \frac{|A(X_t,p)|}{kn} e^{d(p)/k}
        \geq
        \frac1n(1-\eps e).
    \]
    Summing over $p\in D_t^\circ$ and using \eqref{eq:births-from-p}, we obtain, at every sample point in $\mc U\cap\mc G$,
    \begin{align}
        \mb{E}[R(t+1)-R(t)\mid \mc{F}_t]
        &\leq
        \sum_{p\in D_t^\circ}
        \left[
            \frac1n\left(1-\frac{\delta}{2}\right)
            -
            \frac1n(1-\eps e)
        \right]
        \notag\\
        &\leq
        -\frac{\delta}{3n}|D_t^\circ|
        \leq
        -\frac{\delta}{3en}R(t).
        \label{eq:negative-drift-rho}
    \end{align}

    We now combine these estimates without conditioning any one-step expectation on the future event $\mc U\cap\mc G$.  Put
    \[
        a_-:=1-\frac{\delta}{3en},
        \qquad
        a_+:=1+\frac en,
    \]
    and define the $\mc F_t$-measurable multiplier
    \[
        a_t:=
        \begin{cases}
        a_-,&t>\Cmod n\text{ and }\mb E[R(t+1)\mid\mc F_t]\leq a_-R(t),\\
        a_+,&\text{otherwise}.
        \end{cases}
    \]
    The generic estimate \eqref{eq:generic-rho-growth} gives
    \[
        \mb E[R(t+1)\mid\mc F_t]\leq a_tR(t)
    \]
    on the whole probability space.  Hence, with
    \[
        Q(0):=1,
        \qquad
        Q(t):=\prod_{s=0}^{t-1}a_s,
        \qquad
        M(t):=\frac{R(t)}{Q(t)},
    \]
    the process $M(t)$ is a nonnegative supermartingale.  In particular,
    \[
        \mb E[M(\Tcouple)]\leq M(0)=R(0)=e^{d(z^*)/k}\leq e.
    \]

    At every sample point in $\mc U\cap\mc G$, \eqref{eq:negative-drift-rho} implies $a_t=a_-$ for every $t>\Cmod n$.  Therefore, on this event,
    \[
        Q(\Tcouple)
        \leq
        a_+^{\Cmod n}a_-^{\Tcouple-\Cmod n}.
    \]
    Using \eqref{eq:rho-agrees-with-prefix-potential} only at the terminal time, we conclude that
    \begin{align}
        \mb E[\rho(\Tcouple)]
        &=\mb E\left[\mbf1\{\mc U\cap\mc G\}M(\Tcouple)Q(\Tcouple)\right]
        \notag\\
        &\leq
        a_+^{\Cmod n}a_-^{\Tcouple-\Cmod n}\mb E[M(\Tcouple)]
        \notag\\
        &\leq
        e\left(1+\frac{e}{n}\right)^{\Cmod n}
        \left(1-\frac{\delta}{3en}\right)^{\Tcouple-\Cmod n}
        \notag\\
        &\leq
        \exp\left(1+e\Cmod-\frac{\delta}{4e}\Ccouple\right).
        \label{eq:rho-final-bound}
    \end{align}
    Therefore, by choosing $\Ccouple$ sufficiently large after $\Cmod$ has been fixed, we may ensure
    \begin{equation}\label{eq:bad-rho-expectation}
        \mb{E}[\rho(\Tcouple)] \leq \frac19.
    \end{equation}

    \medskip
    \noindent\textbf{Step 5: temporary discrepancies.}
    By \cref{prop:temporary-errors-are-mostly-fixed},
    \[
        \mb{E}[\nu(\Tcouple)] \leq 1/9.
    \]

    \medskip
    \noindent\textbf{Step 6: Putting everything together.}
    Using \eqref{eq:bad-olG-expectation}, \eqref{eq:bad-rho-expectation}, and \cref{prop:temporary-errors-are-mostly-fixed} in \eqref{eq:decompose-final-disagreements}, we obtain
    \[
        \mb{E}\left[
            |X_\Tcouple \oplus Y_\Tcouple| \mbf1\{\mc{U}\}
        \right]
        \leq
        \frac19 + \frac19 + \frac19 = \frac13
    \]
    for all sufficiently large $\Delta$, which is exactly the expectation bound in \cref{thm:main-nonmarkovian}.  Together with the tail estimate from Step~3, this completes the proof.
\end{proof}

\newcommand{\Pbb}{\mathbb P}
\newcommand{\Ebb}{\mathbb E}
\newcommand{\ind}{\mathbf 1}
\newcommand{\Gin}{G_{\mathrm{in}}}
\newcommand{\Yst}{Y^{*}}
\newcommand{\Xst}{X^{*}}
\newcommand{\Zst}{Z^{*}}
\newcommand{\set}[1]{\left\{#1\right\}}
\newcommand{\abs}[1]{\left|#1\right|}
\newcommand{\brac}[1]{\left[#1\right]}
\newcommand{\Po}{\mathrm{Pois}}
\newcommand{\Unif}{\mathrm{Unif}}

\section{Local Uniformity Properties of the Metropolis Glauber Dynamics}\label{sec:local-uniformity}
In this section, we establish local uniformity properties for the Metropolis Glauber dynamics. Local uniformity properties are those local properties (i.e.~properties depending on a ``small'' neighborhood of a vertex) which hold with high probability for a uniformly random coloring. We show that the labelings generated by the Metropolis Glauber dynamics satisfy these local uniformity properties with high probability, after sufficiently many steps.

Similar results were obtained in \cite{hayes2013local} for the heat-bath version of the Glauber dynamics. While we follow the general approach of Hayes, there is a key aspect of the Metropolis dynamics which requires new ingredients to analyze, namely, the successful-refresh rate at a vertex $w$, which is $|A(X_t, w)|/k$, is correlated with the colors in $S_2(v)$. In \cref{def:Zstar}, we introduce an auxiliary process $Z^*$ to decouple this dependence. In fact, even for the heat-bath version, there seem to be some gaps in \cite{hayes2013local}, which can be fixed using our techniques.

All chains are run on the extended state space $\wh{\Omega} = [k]^V$. For a labeling $X \in \wh{\Omega}$ and a vertex $u \in V$, we write
\[A(X,u) := [k] \setminus X(N(u))\]
for the set of colors available at $u$ under $X$. We also use the variant
\[A_v(X,u) := [k]\setminus X(N(u)\setminus \{v\}),\]
which ignores the color at $v$.

Our main result establishes that the discrete-time Metropolis Glauber dynamics achieves local uniformity after sufficiently many time steps.  We first define the $i$-times blocked subset.

\begin{definition}
\label{def:i-times-blocked-subset}
    For a labeling $X \in \wh{\Omega}$, a vertex $v \in V$, a subset of vertices $S \subseteq N(v)$, 
    a color $c \in [k]$, and a non-negative integer $i$, we define 
    \[ S_{c,i}(X) = \{ w \in S : |(N(w) \setminus \{v\}) \cap X^{-1}(c)| = i \}. \]
    We say that $S_{c,i}(X)$ is the subset of $S$ which is $i$ times blocked for $c$. 
\end{definition}

\begin{theorem}\label{thm:local-uniformity-discrete}
    Given $\delta, \eps > 0$, there exists constants $C = C(\delta)$ and $\Delta_0 = \Delta_0(\delta, \eps)$ such that the following holds.
    Let $G = (V,E)$ be a graph of maximum degree $\Delta > \Delta_0$ and girth at least $7$, and let $k \geq (1+\delta)\Delta$.  Let $\{X_t\}_{t \geq 0}$ be the discrete-time Metropolis dynamics on $\wh{\Omega}$. Fix a vertex $v \in V$, an interval length $T\geq n = |V|$, and a starting time $T_0 \geq Cn\log \Delta$. 
    
    Then,
    \begin{equation*}
        \mb{P}\left[ \exists t \in [T_0,T_0+T] : \big||A(X_t,v)| - k e^{-d(v)/k} \big| > \eps k\right] \leq \frac Tn e^{-\Delta/C}.
    \end{equation*}

    Furthermore, for any $S \subseteq N(v)$, $c_1 \ne c_2 \in [k]$, and non-negative integers $i_1,i_2$,
\begin{equation*}
   \mb{P}\left[ \exists t \in [T_0,T_0+T] : \left|  | S_{c_1,i_1}(X_t) \cap S_{c_2,i_2}(X_t)| - \sum_{w \in S} \frac{e^{-2d(w)/k}}{i_1 ! i_2 !} \paren{\frac{d(w)}{k}}^{i_1+i_2} \right| > \eps\Delta \right] \leq \frac Tn e^{-\Delta/C}.
\end{equation*}
Moreover, for any $c \in [k]$,
\begin{equation*}
\mb{P}[\exists t \in [T_0, T_0 + T]: |X_t^{-1}(c) \cap B_2(v)| > 400 \Delta] \leq \frac{T}{n}\exp(-\Delta/C).
\end{equation*}
\end{theorem}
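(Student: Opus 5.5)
We follow Hayes's strategy, adapted to the Metropolis dynamics through the auxiliary rate process described in the overview.

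\textbf{Reduction to continuous time.} Each vertex $u$ carries a Poisson clock of rate $|A(X_t,u)|/k$, and at each ring it recolors to a uniform color of $A(X_t,u)$. Discrete time $t$ corresponds to continuous time $t/n$ up to standard Poisson concentration. It therefore suffices to prove each bound at a single continuous time $\tau\ge C\log\Delta$ with failure probability $e^{-2\Delta/C}$. The window $[T_0,T_0+T]$ is then handled by a union bound over $O(T/n)$ mesh times. Between mesh points at spacing $1/\Delta^2$, at most $O(\Delta^{1/2})$ vertices of $B_2(v)$ are recolored except with probability $e^{-\Omega(\Delta)}$, and this moves every statistic by $o(\eps\Delta)$.

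\textbf{Step 1: a priori bounds.} First we establish the third assertion and a crude lower bound $|A(X_t,u)|\ge c(\delta)k$ for all $u\in B_2(v)$. For the color count we compare with heat-bath-style arguments: each vertex of $B_2(v)$ adopts color $c$ at a ring with probability at most $1/|A|$. Once $|A|\ge\delta k/2$ is established, the number of $c$-colored vertices in $B_2(v)$ is dominated by a sum of $\Delta^2$ nearly independent indicators of mean $O(1/k)$, and Chernoff yields $400\Delta$. The lower bound on $|A|$ uses Dyer--Frieze style reasoning. After time $C\log\Delta$, every neighbor $w$ of $u$ has been recolored, except for an $e^{-\Omega(C)}$ fraction handled by Chernoff. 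Each recoloring picks a uniform color from a set of size at least $k-\Delta\ge\delta\Delta$, which gives that $\mathbb{P}[c\in X(N(u))]$ is bounded away from $1$ for most $c$.

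\textbf{Step 2: decoupling via $G^*$ and $Z^*$.} Fix $v$ and pass to $G^*=G_{\mathrm{in}}(v,3)$, where edges inside $B_3(v)$ are directed toward $v$. Let $\mc F$ be the $\sigma$-algebra of the whole trajectory outside $B_2(v)$. In $G^*$, girth $\ge7$ makes the subtrees $\{w\}\cup(N(w)\setminus\{v\})$ for $w\in N(v)$ disjoint and informationally separated given $\mc F$. The difficulty is that a vertex $w\in N(v)$ has ring rate $|A(X,w)|/k$, which depends on its children's colors and is therefore not $\mc F$-measurable. The fix is the process $Z^*$: give each $x\in B_2(v)$ a rate $\mb E[|A(X_t,x)|\mid\mc F]/k$, which is deterministic given $\mc F$. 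Under $Z^*$ the colors of the distinct $w\in N(v)$, and the blocked-counts $|N(w)\setminus\{v\}\cap X^{-1}(c)|$, become independent given $\mc F$. Each child $y$ of $w$ is then an independent process whose color, given $\mc F$, is close to uniform over its available set. That set has size about $ke^{-d/k}$ by an inner application of the same argument one level out, using Step 1. The probability that $c$ appears among the children is then close to $1-e^{-d(w)/k}$, and the joint count for $(c_1,c_2)$ is approximately $\Po(d(w)/k)\otimes\Po(d(w)/k)$. Summing independent indicators over $w\in S$ and applying Chernoff/Azuma gives the second assertion for $Z^*$, and similarly $|A(v)|\approx\sum_c\prod_w(\cdot)\approx ke^{-d(v)/k}$.

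\textbf{Step 3: comparison with the true dynamics and the main obstacle.} We couple $Z^*$ with the real chain on $G$ by Poisson thinning: use a dominating clock of rate $1$ and accept with the respective rates. The two processes disagree only when the true rate $|A(X,x)|/k$ differs from its conditional mean. That conditional mean is concentrated within $\eps$ by the previous step applied at $x$, so the expected fraction of mismatched updates is $O(\eps)$ over time $O(\log\Delta)$. The redirection $G\to G^*$ is treated as in Hayes, by a disagreement-percolation argument showing that reorienting edges only within $B_3(v)$ perturbs $O(\eps\Delta)$ neighbors, with exponential tails.

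The main obstacle is this circularity: concentration of rates at $x$ is needed to justify the $Z^*$ comparison, yet that concentration is itself what is being proved. We would break it by bootstrapping in time and space. Step 1 supplies constant-factor rate control. Running the Step 2--3 argument once then gives accuracy $\eps_1$ at all $x\in B_2(v)$ simultaneously, via a union bound over the $\Delta^2$ vertices at cost $\Delta^2e^{-\Omega(\Delta)}$. Repeating a bounded number of times, with each round lasting $O(\log(1/\eps))$ time, sharpens the accuracy to $\eps$; the total burn-in stays $O(\log\Delta)$. The accumulated errors at each round must remain $O(\eps\Delta)$ with probability $1-e^{-\Omega(\Delta)}$. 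This requires martingale concentration for the thinning mismatches rather than expectations only, and that is where we expect most of the technical work to lie.
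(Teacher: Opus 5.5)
Your proposal has a genuine gap in Step 2, at the sentence ``The probability that $c$ appears among the children is then close to $1-e^{-d(w)/k}$.'' Knowing that each child $y\in N(w)\setminus\{v\}$ takes a near-uniform color from an available set of size about $ke^{-d(y)/k}$ does not give this for a \emph{fixed} color $c$. Given $\mathcal F$ and the last refresh time $\tau_y$, one has $\mathbb P[X(y)=c]=\mathbf 1\{c\in A(X_{\tau_y},y)\}/|A(X_{\tau_y},y)|$. What you actually need is that the sum of these over $y$, which is essentially the paper's bias field $P(X,w,c)$, is close to $d(w)/k$ for each $c$ separately. The sizes of the available sets say nothing about which colors lie in them; for instance, if $c$ is blocked at every child, it never appears among them. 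This per-color quantity is governed by the colors on $S_3(v)$, i.e.\ by $\mathcal F$, so no conditioning or concentration inside $B_2(v)$ can produce it. Applying ``the same argument one level out'' only asks for the same per-color statement on $S_3(v)$, then on $S_4(v)$, and so on. That regress never terminates, because at the boundary of any finite ball you have only crude information (above-suspicion, lower bounds on $|A|$).

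The paper closes the regress with three ingredients absent from your plan:
\begin{enumerate}
\item the local recurrence $P(X_T,u,c)\approx\sum_{w\in N(u)}e^{-P(X_{\tau_w},w,c)}/|A_u(X_{\tau_w},w)|$ of \cref{lem:P-local-relation-Xt}, proved through $Z^*$;
\item the normalization $\sum_c P(X_T,u,c)=d(u)$;
\item a contraction over a ball of radius $R$. If $\alpha_\ell$ is the maximal spread $|P(\cdot,w,c)-P(\cdot,w,c')|$ over $w\in B_\ell(v)$ and a time window of length proportional to $\ell$, then $\alpha_\ell\le 60\varepsilon+(1-\delta/2)\alpha_{\ell+1}$. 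The factor comes from $\Delta/((1-\varepsilon)k)\le 1/((1-\varepsilon)(1+\delta))$.
\end{enumerate}
Starting from the crude bound $\alpha_R\le 3$ and taking $R=R(\varepsilon,\delta)$ large gives $P\approx d/k$ on $B_2(v)$ (\cref{prop:concentration-of-P}). Only then does $\mathbb P[c\in A(X^*_T,v)\mid\mathcal F_v]\approx\mathbb E\,e^{-P(X^*_T,v,c)}\approx e^{-d(v)/k}$ follow, together with the Poisson pair counts. This contraction is where $k\ge(1+\delta)\Delta$ does real work; in your proposal that hypothesis is used only to lower-bound rates.

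Relatedly, the circularity you single out as the main obstacle is not where the difficulty lies, and your bootstrap does not address the gap above. For comparing $X^*$ with $Z^*$ you only need $|A(X^*_t,w)|$, $w\in N(v)$, to concentrate around its $\mathcal F$-conditional mean, not around $ke^{-d(w)/k}$. The children of $w$ evolve conditionally independently given $\mathcal F$ in $G^*$, so Dyer--Frieze/McDiarmid gives fluctuations $O(\sqrt k)$. That yields a mismatch probability $O(1/\sqrt\Delta)$ per neighbor over a bounded interval, with no bootstrapping (\cref{lem:Xstar-vs-Zstar}). Only the rates on $N(v)$ need replacing, since for $z\in S_2(v)$ the rate is already $\mathcal F$-measurable in $G^*$.

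Conversely, a round-by-round bootstrap on the fixed ball $B_2(v)$ cannot replace the spatial contraction. The per-color statistics on $B_2(v)$ are driven by $B_3(v),B_4(v),\dots$, and you have not identified a quantity that contracts or a reason it should. Finally, a thinning mismatch rate of $O(\varepsilon)$ sustained over time $C\log\Delta$ accumulates error $O(\varepsilon\log\Delta)$, which is not small for fixed $\varepsilon$. The paper instead runs all comparisons over a window $[C,C+1]$ of constant length from a $400$-above-suspicion start; the $\log\Delta$ in the burn-in is needed only to reach such a start.
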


\medskip

The third conclusion of \cref{thm:local-uniformity-discrete} follows directly from  the Metropolis versions of \cite[Lemma~31]{hayes2013local}; this can be proved in exactly the same way. Therefore, we will focus on proving the first two conclusions. We will deduce these from a similar statement about the continuous-time Metropolis  dynamics, which is easier to analyze. 

\begin{definition}
Let $G = (V,E)$ be a graph, let $k$ be the number of colors, and let $X_0 \in \wh{\Omega}$. In the continuous-time Metropolis dynamics, each vertex $u \in V$ has an independent Poisson clock of rate~$1$. When the clock of $u$ rings at time $t$, we choose a candidate color $c \in [k]$ uniformly, and we set
\[X_t(u)=\begin{cases}c,&\text{if }c\in A(X_{t^-},u),\\ X_{t^-}(u),&\text{otherwise.}\end{cases}\]
and all other coordinates are unchanged. 
\end{definition}

\begin{remark}
    It will be convenient to use the following equivalent successful-update description: 
    vertex $u$ updates successfully at rate $|A(X_t,u)|/k$, and conditional on such a successful update, the new color is uniform on $A(X_{t}, u)$.
\end{remark}

\begin{observation}
\label{obs:bound-update-rates}
Let $\rho_{\delta} := \frac{\delta}{1+\delta}.$
For every $X \in \wh{\Omega}$ and every $u \in V$,
\[|A(X,u)| \geq k-\Delta \geq \rho_{\delta} k.\]
Consequently, every successful-update clock has instantaneous rate in $\left[\rho_{\delta}, 1\right].$
\end{observation}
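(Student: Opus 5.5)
The statement to prove is \cref{obs:bound-update-rates}. The plan is a direct counting argument. Fix $X\in\wh\Omega$ and $u\in V$.

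For the first inequality, note that the set of colors blocked at $u$ is $X(N(u))$. This set is the image of $N(u)$ under $X$, so it has at most $|N(u)|=d(u)\le\Delta$ elements. Hence
\[
|A(X,u)| = k-|X(N(u))| \ge k-d(u)\ge k-\Delta .
\]
This holds for arbitrary labelings, not only proper colorings, because only the cardinality of the image is used.

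For the second inequality, use $k\ge(1+\delta)\Delta$, which gives $\Delta\le k/(1+\delta)$. Then
\[
k-\Delta\ge k\Bigl(1-\frac{1}{1+\delta}\Bigr)=\frac{\delta}{1+\delta}\,k=\rho_\delta k .
\]

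For the consequence, recall the successful-update description in the preceding remark. Vertex $u$ updates successfully at instantaneous rate $|A(X_t,u)|/k$. The lower bound just proved shows this rate is at least $\rho_\delta$. Trivially $|A(X_t,u)|\le k$, so the rate is at most $1$. The rate therefore lies in $[\rho_\delta,1]$ for every state.

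None of these steps is a real obstacle. The only point needing care is to apply the bound pointwise for every $X\in\wh\Omega$, including improper labelings, which the cardinality argument covers.
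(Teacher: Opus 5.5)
Your argument is correct and is exactly the intended one: the paper states this as an observation without proof, and the bound $|X(N(u))|\le d(u)\le\Delta$ together with $\Delta\le k/(1+\delta)$ is all that is needed. The same reasoning also covers the later uses of this observation. On $G_{\mathrm{in}}(v,3)$ it applies because in-neighborhoods have at most $\Delta$ vertices, and for the averaged rates of $Z^*$ and $W^*$ it applies because conditional expectations of quantities in $[\rho_\delta,1]$ stay in that interval.
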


We will prove the following bounded-interval, continuous-time analogue of \cref{thm:local-uniformity-discrete} and then transfer the result to discrete time. The statement of the continuous-time result requires the following notion of $C$-above-suspicion from \cite{hayes2013local}.

\begin{definition}
\label{def:above-suspicion}
    Let $G = (V, E)$ be a graph of maximum degree $\Delta$, and let $C > 0$. For any vertex $w \in V$ and positive integer $R$, let $N^R(w)$ denote all vertices, other than $w$, which are within distance at most $R$ of $w$, i.e.~$N^R(w) = B_R(w) \setminus \{w\}$.
    
    Let $f: V \to [k]$ be a labeling, let $c \in [k]$, and let $v \in V$. We say $f$ is {$C$-light} for color $c$ at $v$ if
    \[
    |f^{-1}(c) \cap N^2(v)| \le C \Delta \quad \text{and} \quad |f^{-1}(c) \cap N(v)| \le C\Delta/\log \Delta.
    \]
  Then we say $f$ is {$C$-above-suspicion} for radius $R$ at $v$ if all $w\in B_R(v)$ are $C$-light for every $c\in [k]$.
\end{definition}

\begin{theorem}\label{thm:local-uniformity-continuous}
      Given $\delta,\eps > 0$, there exist constants $\Delta_0 = \Delta_0(\delta,\eps)$, $C = C(\delta,\eps)$, and  $R = R(\eps, \delta)$ such that the following holds. Let $G = (V,E)$ be a graph of maximum degree $\Delta > \Delta_0$ and girth at least $7$, and let $k \geq (1+\delta)\Delta$.  Let $\{X_t\}_{t \geq 0}$ be the continuous-time Metropolis dynamics on $\wh{\Omega}$. Fix a vertex $v \in V$ and assume that $X_0$ is $400$-above-suspicion for radius $R$ at $v$. 
    Then,
\begin{equation}\label{eq:continuous-number-colors}
        \mb{P}\left[\exists T \in [C, C+1]: \big||A(X_T,v)| - k e^{-d(v)/k} \big| > \eps k\right] \leq e^{-\Delta/C},
    \end{equation}
and for any $S \subseteq N(v)$, $c_1 \ne c_2 \in [k]$, and non-negative integers $i_1,i_2$,
    \begin{equation}\label{eq:continuous-blockers}
        \mb{P}\left[ \exists T \in [C, C+1] : \left| |S_{c_1,i_1}(X_T) \cap S_{c_2,i_2}(X_T)| - \sum_{w \in S} \frac{e^{-2d(w)/k}}{i_1 ! i_2 !} \paren{\frac{d(w)}{k}}^{i_1+i_2} \right| > \eps\Delta \right] \leq  e^{-\Delta/C}.
    \end{equation}
\end{theorem}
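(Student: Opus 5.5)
Following Hayes's approach, we will work with the directed auxiliary graph $G^*=G_{\mathrm{in}}(v,3)$, in which edges inside $B_3(v)$ are oriented toward $v$. The resulting process $X^*$ agrees with $X$ on $B_2(v)$ up to time $C+1$ with probability $1-e^{-\Omega(\Delta)}$. To make this comparison, we couple the Poisson clocks and proposed colors identically. A discrepancy between $X$ and $X^*$ can only originate when a vertex of $B_3(v)$ ignores an outward neighbor. The above-suspicion hypothesis on a large radius $R$, together with a disagreement-percolation bound over a time window of constant length, should show that such discrepancies reach $N(v)\cup S_2(v)$ only rarely.

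In $G^*$, let $\mathcal F$ be the $\sigma$-algebra generated by the trajectory outside $B_2(v)$. Then the trajectories of the distinct branches $\{w\}\cup(N(w)\setminus\{v\})$, for $w\in N(v)$, are driven by disjoint clocks given $\mathcal F$. The Metropolis difficulty is that $w$'s successful-update rate $|A(X_t,w)|/k$ depends on the colors of $S_2(v)$. Here we will introduce the auxiliary process $Z^*$, in which each $u\in B_2(v)$ updates at the deterministic-given-$\mathcal F$ rate $\mathbb E[|A(X_t,u)|\mid\mathcal F]/k$. We will couple $Z^*$ to $X^*$ by Poisson thinning at a common rate $1$ clock.

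Under $Z^*$ and conditional on $\mathcal F$, the following structure holds:
\begin{itemize}
\item each $u\in S_2(v)$ has been refreshed by time $C$ with probability $1-e^{-\rho_\delta C}$, by Observation~\ref{obs:bound-update-rates};
\item once refreshed, the color of $u$ is close to uniform on $A(X_t,u)$;
\item the events $\{c\in X(N_v(w))\}$ for different $w$ are conditionally independent.
\end{itemize}
Using the above-suspicion bound (each color appears at most $400\Delta$ times in $N^2$), we will compute that the probability that $w$ is $i$ times blocked for $c$ is Poisson-like, with mean $\approx d(w)/k$ up to $O(\varepsilon)$. This requires a self-consistency, or "inductive layers", argument, because the near-uniformity of $S_2(v)$ itself needs concentration of $|A(X_t,u)|$ at those vertices. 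We will handle this as Hayes does, by iterating the argument outward over radius $R$ with a bootstrap on $\varepsilon$. The count of available colors at $v$ will then be a sum over colors of products of conditionally independent indicators. The count $|S_{c_1,i_1}\cap S_{c_2,i_2}|$ will be a sum of conditionally independent indicators over $w\in S$. Chernoff/Azuma will give concentration within $\varepsilon k$ and $\varepsilon\Delta$ with failure probability $e^{-\Delta/C}$.

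To go from a fixed time to all $T\in[C,C+1]$, we will discretize $[C,C+1]$ into $\Delta^2$ points and take a union bound. Between consecutive grid points, at most $O(1)$ vertices of $N(v)$ update with overwhelming probability, so the statistics move by $o(\varepsilon\Delta)$.

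The main obstacle is the comparison $Z^*$ versus $X^*$. This requires showing that the actual rate $|A(X_t,u)|/k$ stays within $\varepsilon$ of its conditional mean throughout $[0,C+1]$ for all $u\in N(v)$, so that the thinning coupling disagrees at only $O(\varepsilon)\Delta$ vertices of $N(v)$. We will first try to prove this concentration directly under $Z^*$, and transfer it via a stopping-time argument, stopping at the first time the rates deviate. The circularity here is exactly what the bootstrap over layers must resolve.
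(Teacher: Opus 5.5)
Your skeleton matches the paper's: $G_{\mathrm{in}}(v,3)$, conditioning on the outer $\sigma$-algebra, an auxiliary $Z^*$ with $\mathcal F$-measurable rates coupled by thinning, and a final mesh. However, the step that carries the proof is missing.

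\textbf{The missing step: per-color balance of the bias field.} Both conclusions reduce, after conditioning on the outer $\sigma$-algebra (where the colors on $N(v)$ and on each $N(w)\setminus\{v\}$ are independent), to one claim. For \emph{every} color $c$, the conditional mean number of $c$-colored neighbors must be $d(u)/k\pm O(\eps)$. Equivalently, the bias field of \cref{def:bias-field} must be balanced across colors. That field reads colors at the last successful refresh $\tau_w$, since a vertex is uniform on its available set at $\tau_w$, not at $T$. You defer this to ``a bootstrap as Hayes does'' and locate the self-consistency in concentration of $|A(X_t,u)|$ on $S_2(v)$. Size concentration cannot give per-color balance. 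If $c$ is absent from $S_3(v)$, every $|A(z)|$ with $z\in S_2(v)$ can still have the right size, yet $c\in A(z)$ for all such $z$. The mean for $c$ is then about $\sum_z e^{d(z)/k}/k$ instead of $d(w)/k$. Above-suspicion is a time-$0$ constant-factor condition and cannot give $\pm\eps$ accuracy at time $C$.

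The paper supplies the balance in three steps.
\begin{enumerate}
\item A lower bound $|A(X_t,u)|\ge(1-2\eps)ke^{-d(u)/k}$ comes from the Dyer--Frieze inequality with no self-consistency. It bounds every denominator, so $P\le 3$.
\item The local recursion $P(X_T,u,c)\approx\sum_{w\in N(u)}e^{-P(X_{\tau_w},w,c)}/|A_u(X_{\tau_w},w)|$ is proved under $Z^*$, conditioning on the outer $\sigma$-algebra \emph{and} on the last-refresh times. This is exactly what $Z^*$ is for: in $X^*$, conditioning on $\tau_w$ leaks information about the $S_2$ colors.
\item Since $\sum_c P(X_T,u,c)=d(u)$, it suffices to bound $\alpha_\ell=\max|P(X_t,u,c)-P(X_t,u,c')|$ over $u\in B_\ell(v)$ and a time window. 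The recursion gives $\alpha_\ell\le 60\eps+(1-\delta/2)\alpha_{\ell+1}$, because $\Delta/((1-\eps)k)\le 1-\delta/2$. Iterating from $\alpha_R\le 3$ yields $\alpha_2=O(\eps/\delta)$ (\cref{prop:concentration-of-P}).
\end{enumerate}
This contraction is where $k\ge(1+\delta)\Delta$ and the large radius $R$ genuinely enter. Without it you cannot get the Poisson means, or the mean of $|A(X_T,v)|$, right.

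\textbf{Two further claims are wrong as stated.} First, $X^*$ does not agree with $X$ on $B_2(v)$ with probability $1-e^{-\Omega(\Delta)}$. In $G^*$ each $w\in N(v)$ ignores $v$ and each $z\in S_2(v)$ ignores its parent; these are inward neighbors, not outward. Discrepancies are therefore created \emph{inside} the ball. For, e.g., $\Delta$-regular $G$ the rate is $\Theta(\Delta^2/k)=\Theta(\Delta)$ in $S_2(v)$, and discrepancies in $N(v)$ occur with probability bounded away from zero. What holds is \cref{prop:comparison}: at most $\eps\Delta$ discrepancies in every neighborhood and, for each color, at most $\eps\Delta$ discrepancies involving that color in each $B_2(u)$. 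This suffices because the target statistics are Lipschitz in these counts.

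Second, the $Z^*$--$X^*$ comparison is not circular and needs no stopping time or layer bootstrap. Given the outer $\sigma$-algebra, $|A(X^*_t,w)|$ is a Lipschitz function of the independent colors on $N(w)\setminus\{v\}$, so its $L^1$ deviation from the conditional mean is $O(\sqrt k)$. Hence the last-refresh times of $w$ in the two processes differ with probability $O(1/\sqrt\Delta)+e^{-\rho_\delta T}$. A Chernoff bound over the conditionally independent branches finishes. The real circularity sits in the per-color balance above, not here.

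(Minor: on a grid of spacing $\Delta^{-2}$ you cannot guarantee $O(1)$ updates per interval with failure $e^{-\Delta/C}$. The blocker counts also move with $S_2(v)$-updates. Allowing $\eps\Delta$ changes per interval fixes both.)
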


\begin{remark}
    We will assume $\eps \ll \delta$; this loses no generality since enlarging $\eps$ only weakens the conclusion. 
\end{remark}

The deduction of \cref{thm:local-uniformity-discrete} from \cref{thm:local-uniformity-continuous} follows as in \cite{hayes2013local}; we omit the straightforward details. The remainder of this section is devoted to the proof of \cref{thm:local-uniformity-continuous}.

\medskip

\begin{remark}[overview of the proof]\label{par:local-uniformity-overview} For the reader's convenience, we outline the structure of the proof. 

\begin{enumerate}
    \item We define a modification $X^*$ (\cref{def:Xstar}) that will prove easier to analyze, and show that $X^*_t \approx X_t$ (\cref{prop:comparison}).
    \item We prove a tight lower bound on $|A(X_t^*,v)|$ and use the above comparison to transfer this to a tight lower bound on $|A(X_t,v)|$ (\cref{cor:universal-lower-bound}). 
    \item 
    \label{detailed-new-local-uniformity} We define a quantity $P(X,v,c)$ (\cref{def:bias-field}) which is essentially the ``expected size'' of $X^{-1}(c) \cap N(v)$, and, using a recurrence for $P(X,v,c)$, show concentration $P(X,v,c) \approx d(v)/k$ (\cref{prop:concentration-of-P}).  This involves an auxiliary chain $Z^*$ (\cref{def:Zstar}) which is not necessary in the heat-bath case of \cite{hayes2013local} but crucial for the Metropolis dynamics. This part contains the key innovations of this section (see \cref{rem:heatbath-vs-metropolis}). 
    \item Finally, we use concentration of $P(X,v,c)$ to derive \cref{thm:local-uniformity-continuous}.
\end{enumerate}

\end{remark}

\medskip

Before proceeding further, we gather a few concentration results for later use.  

\begin{lemma}[see {\cite[Lemma 23]{hayes2013local}}] 
\label{lem:dyer-concentration}
  Let $k,d\ge 1$, and let $\xi_1,\dots,\xi_d$ be independent random variables taking values in $[k]$. Let
\[
p:=\max_{1\le j\le d}\max_{a\in[k]}\Pbb[\xi_j=a],
\qquad
A:=[k]\setminus\{\xi_1,\dots,\xi_d\}.
\]
Then
\[
\Ebb\brac{\abs A}\ge k\paren{\frac{1-p}{e}}^{d/k},
\]
and, for every $a\ge 0$,
\[
\Pbb\brac{\abs{\abs A-\Ebb\brac{\abs A}}\ge a}\le 2e^{-a^2/(2k)}.
\]

\end{lemma}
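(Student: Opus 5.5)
Both parts are standard, and I will prove them directly rather than rely on the citation. Write $A=[k]\setminus\{\xi_1,\dots,\xi_d\}$. Then $|A|=\sum_{a\in[k]}\ind\{a\notin\{\xi_1,\dots,\xi_d\}\}$, so by linearity and independence
\[
\Ebb\brac{\abs A}=\sum_{a\in[k]}\prod_{j=1}^d\paren{1-\Pbb[\xi_j=a]} .
\]
Set $p_{j,a}:=\Pbb[\xi_j=a]$. For each $j$ we have $\sum_a p_{j,a}=1$ and $p_{j,a}\le p$.

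For the expectation bound I will first apply AM--GM over the $k$ colors:
\[
\Ebb\brac{\abs A}\ge k\paren{\prod_{a,j}(1-p_{j,a})}^{1/k}=k\exp\paren{\frac1k\sum_{j}\sum_a\log(1-p_{j,a})}.
\]
It then suffices to show that for each fixed $j$,
\[
\sum_a\log(1-p_{j,a})\ge \log\frac{1-p}{e}=\log(1-p)-1 .
\]
I will use the elementary inequality $\log(1-x)\ge -x\cdot\frac{-\log(1-p)}{p}$ on $[0,p]$, which holds because $x\mapsto\log(1-x)$ is concave and the right-hand side is its chord. This gives $\sum_a\log(1-p_{j,a})\ge \frac{\log(1-p)}{p}$. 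That is weaker than needed when $p$ is large, so I will refine it: the function $\phi(x)=\log(1-x)+x$ is concave and vanishes at $0$. Hence $\phi(x)\ge \frac{x}{p}\phi(p)$ on $[0,p]$. Summing over $a$ with $\sum_a p_{j,a}=1$ gives
\[
\sum_a\log(1-p_{j,a})\ge -1+\frac{\phi(p)}{p}=-1+\frac{\log(1-p)+p}{p}.
\]
This is at least $\log(1-p)-1$, because $\frac{\log(1-p)}{p}+1\ge\log(1-p)$, which is equivalent to $\log(1-p)(1-p)\ge -p$, i.e.\ $(1-p)\log\frac1{1-p}\le p$, true since $\log\frac1{1-p}\le\frac p{1-p}$. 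Multiplying over the $d$ values of $j$ and taking the $1/k$ power yields $\Ebb\brac{\abs A}\ge k\paren{\frac{1-p}{e}}^{d/k}$.

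For the concentration bound I will use McDiarmid's bounded-differences inequality. $|A|$ is a function of the independent $\xi_1,\dots,\xi_d$, and changing one $\xi_j$ changes $|A|$ by at most $1$. McDiarmid therefore gives
\[
\Pbb\brac{\abs{\abs A-\Ebb\abs A}\ge a}\le 2e^{-2a^2/d}.
\]
This matches the claimed bound when $d\le 4k$. In our applications $d\le\Delta\le k$, so this suffices. For general $d$ I would instead use a Doob martingale that exposes colors (or equivalently the fact that the indicators $\ind\{a\in A\}$ are negatively associated), giving a Chernoff-type bound with variance proxy $k$. The main obstacle is exactly this point: obtaining the $k$ in the denominator uniformly in $d$. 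The negative association of the occupancy indicators (balls-in-bins with independent, non-identical throws) is the route I would take, applying the Chernoff--Hoeffding bound for negatively associated $\{0,1\}$ variables to the sum of $k$ indicators.
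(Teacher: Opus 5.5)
Your proposal is correct, provided you fill in the negative-association step that you only sketch. It is also more self-contained than the paper, which just cites Dyer--Frieze and Hayes and remarks that Hayes's Chernoff-bound argument gives the upper tail as well.

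\textbf{Expectation bound.} Your argument (AM--GM over colors, then concavity of $x\mapsto\log(1-x)$) is the standard one. Your ``refinement'' is not actually a refinement. The inequality $\phi(x)\ge\frac{x}{p}\phi(p)$ is algebraically identical to the chord inequality $\log(1-x)\ge\frac{x}{p}\log(1-p)$. Both give $\sum_a\log(1-p_{j,a})\ge\frac{\log(1-p)}{p}$, that is, $\mathbb{E}|A|\ge k(1-p)^{d/(kp)}$. Your own check $\frac{\log(1-p)}{p}\ge\log(1-p)-1$ shows that the chord bound was already sufficient, and in fact slightly stronger than what is claimed.

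\textbf{Tail bound.} Your McDiarmid argument is a genuinely different route from the cited one. It uses bounded differences in the $d$ independent colors $\xi_j$, rather than a Chernoff bound over the $k$ indicators $\mathbf{1}\{a\in A\}$. It yields $2e^{-2a^2/d}$, which suffices exactly when $d\le 4k$. That covers every use in the paper, where $d\le\Delta<k$, but not the lemma as stated for arbitrary $d$.

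The uniform-in-$d$ bound needs the route you only sketch, which is essentially the cited argument. You should state it precisely:
\begin{itemize}
\item The bin counts $N_a$ for independent, arbitrarily distributed balls are negatively associated (Dubhashi--Ranjan).
\item The indicators $\mathbf{1}\{N_a=0\}$ are monotone functions of disjoint coordinates, so they are negatively associated too.
\item Hence $\mathbb{E}\,e^{\lambda|A|}\le\prod_a\mathbb{E}\,e^{\lambda\mathbf{1}\{a\in A\}}$ for $\lambda$ of either sign.
\item Hoeffding's inequality then gives $2e^{-2a^2/k}\le 2e^{-a^2/(2k)}$.
\end{itemize}

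Drop the suggestion that a color-exposure Doob martingale is ``equivalent''. Such a martingale does not obviously have bounded increments, and it is the negative association that does the work.
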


\begin{proof}
    This result was originally proven by Dyer and Frieze~\cite{dyer2003randomly} in the iid case.   Hayes~\cite{hayes2013local} extended it to the independent setting, but in the second inequality, only stated the result for the lower tail. His proof of this result, which uses a Chernoff bound, immediately implies the stated bound on the upper tail as well.
\end{proof}

We shall also use the elementary Poisson tail bound
\begin{equation}\label{eq:poisson-tail}
\Pbb\brac{\Po(\mu)\ge b}\le \paren{\frac{e\mu}{b}}^b
\qquad\text{for }b>\mu.
\end{equation}

Finally, we will need the following observation from \cite{hayes2013local}.

\begin{observation}[{\cite[Observation 11]{hayes2013local}}]\label{obs:concentration}
    Let $X,Y$ be two non-negative random variables with $\min Y > 0$.  Let $0 \leq \theta < \min Y/2$, and suppose $p \geq \mb{P}[|Y - \mb{E}Y| \geq \theta]$.  Then with probability at least $1-p$,
    \[ \mb{E}\left[ \frac XY \right] \in \frac{\mb{E} X}{Y \pm 2\theta} \pm \frac{p \max X}{\min Y}. \]
\end{observation}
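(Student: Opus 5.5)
The statement to prove is Observation~\ref{obs:concentration}, an elementary deterministic-plus-probability fact. I read it as follows: $X,Y$ are nonnegative random variables on a common space, and we consider some conditional expectation $\mb{E}[X/Y]$ in which $Y$ plays the role of a quantity that concentrates. The intended meaning is the following. Let $\mc{E}=\{|Y-\mb{E}Y|<\theta\}$, so that $\mb{P}[\ol{\mc{E}}]\le p$. Then
\[
\mb{E}[X/Y]\in \frac{\mb{E}X}{\mb{E}Y\pm 2\theta}\pm\frac{p\max X}{\min Y},
\]
and the phrase ``with probability $1-p$'' reflects that on $\mc{E}$ one may replace $\mb{E}Y$ by the realized $Y$ at the cost of a further $\theta$. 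My plan is to prove the displayed inclusion with $\mb{E}Y\pm\theta$, and then to absorb the replacement $\mb{E}Y\to Y$ on $\mc{E}$ into the factor $2\theta$.

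\textbf{Step 1: split the expectation.} I write
\[
\mb{E}[X/Y]=\mb{E}[(X/Y)\mbf1_{\mc{E}}]+\mb{E}[(X/Y)\mbf1_{\ol{\mc{E}}}].
\]
The second term lies in $[0,\,p\max X/\min Y]$, since $X/Y\le \max X/\min Y$ pointwise.

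\textbf{Step 2: bound the main term.} On $\mc{E}$ we have $Y\in[\mb{E}Y-\theta,\mb{E}Y+\theta]$. Since $\theta<\min Y/2\le \mb{E}Y/2$, the lower endpoint is positive. Hence
\[
\frac{\mb{E}[X\mbf1_{\mc{E}}]}{\mb{E}Y+\theta}\le \mb{E}[(X/Y)\mbf1_{\mc{E}}]\le \frac{\mb{E}[X\mbf1_{\mc{E}}]}{\mb{E}Y-\theta}.
\]
For the upper bound I use $\mb{E}[X\mbf1_{\mc{E}}]\le \mb{E}X$. For the lower bound I use $\mb{E}[X\mbf1_{\mc{E}}]\ge \mb{E}X-p\max X$, and then divide by $\mb{E}Y+\theta\ge \min Y$. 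This moves the loss into the additive error $p\max X/\min Y$.

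\textbf{Step 3: pass to the realized $Y$.} On $\mc{E}$ we have $|Y-\mb{E}Y|<\theta$, so the interval $\mb{E}Y\pm\theta$ is contained in $Y\pm2\theta$. This yields the stated form, which holds on an event of probability at least $1-p$.

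The only delicate point is bookkeeping. The additive error terms arising from $\ol{\mc{E}}$ appear in both the upper and lower bounds, and they must combine into a single $\pm p\max X/\min Y$ term. The positivity of the denominators is guaranteed by the assumption $\theta<\min Y/2$. If the constants do not quite merge, I would fall back on inflating $2\theta$ slightly, which is harmless in all later applications.
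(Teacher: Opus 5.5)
Your proposal is correct. The paper gives no proof of this observation; it is quoted from Hayes's Observation~11. Your argument is the standard one: split on $\mc{E}=\{|Y-\mb{E}Y|<\theta\}$, sandwich $\mb{E}[(X/Y)\mbf1_{\mc{E}}]$ between $\mb{E}[X\mbf1_{\mc{E}}]/(\mb{E}Y\pm\theta)$, and then, on $\mc{E}$, enlarge $\mb{E}Y\pm\theta$ to $Y\pm2\theta$. That last step is also what gives meaning to ``with probability at least $1-p$,'' since the left-hand side is deterministic.

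The bookkeeping closes exactly, so you never need the fallback of inflating $2\theta$. The upper bound adds the bad-event term, which is at most $p\max X/\min Y$, to $\mb{E}X/(\mb{E}Y-\theta)$. The lower bound drops that nonnegative term, charges $p\max X$ to the numerator, and divides by $\mb{E}Y+\theta\ge\min Y$. The hypothesis $\theta<\min Y/2$ is used exactly once: it makes $Y-2\theta$ positive in Step~3. Positivity of $\mb{E}Y-\theta$ in Step~2 needs only $\theta<\mb{E}Y$.
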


\subsection{The auxiliary processes \texorpdfstring{$X^*$ and $Z^*$}{X* and Z*}}
\label{subsec:modified-process}

Fix for the moment a vertex $v \in V$.

\begin{definition}
\label{defn:directed-neighborhoods}
Given a graph $G$ of girth at least $7$ and a vertex $v$, define $G^* = G_{\mr{in}}(v,3)$ to be the directed graph constructed as follows:
\begin{itemize}
    \item The vertex set is $V(G)$.
    \item For any edge $\{x, y\}$ in the original graph $G$:
    \begin{itemize}
        \item If both $x, y \in B_3(v)$ and $d(x, v) > d(y, v)$, we include the directed edge $(x, y)$. (This directs edges towards the center $v$.)
        \item Otherwise, we include the pair of directed edges $(x,y)$ and $(y,x)$. 
    \end{itemize}
\end{itemize}
Equivalently, $G_{\mr{in}}(v,3)$ is obtained by replacing each edge in $G$ by the corresponding pair of directed edges, and then deleting all edges in directed paths of length $3$ starting at $v$. 
\end{definition}

In this directed context, neighbor means in-neighbor: for a vertex $w \in V$, we define
\[N_{G^*}(w) = \{u \in V \mid (u,w) \in G^*\}.\]
Accordingly, the available colors for a vertex $w$ in a labeling $X$ are determined only by its in-neighbors:

\[A_{G^*}(X,w) = [k]\setminus X(N_{G^*}(w)).\]

\begin{definition}[the natural directed chain $\Yst$]\label{def:Ystar}
Let $G^*:=\Gin(v,3)$. The chain $\Yst=(\Yst_t)_{t\ge 0}$ is the continuous-time Metropolis dynamics on $G^*$ started from the original initial labeling $X_0 \in \wh{\Omega}$ on all vertices.
\end{definition}

\begin{definition}[the recursive directed chain $X^*$] 
\label{def:Xstar}
Let $G^* = G_{\mr{in}}(v,3)$ and let $X_0 \in \wh{\Omega}$ be a labeling as before. 
Let $X^* = \{X^*_t\}_{t \in \mb{R}}$ denote the continuous-time process on $G^* = G_{\mr{in}}(v,3)$ obtained using the following construction. 

\begin{enumerate}
    \item Boundary phase: Note that the graph induced by $G^*$ on $B_2(v)^c = V\setminus B_2(v)$ is an undirected graph. 
    Starting from $X_0^*|_{B_2(v)^c} := X_0|_{B_2(v)^c}$, run the standard continuous-time Metropolis Glauber dynamics forward for $t\geq 0$. Independently run a second copy forward from the same labeling and use its trajectory, with time reversed, to define $X_t^*$ for $t<0$; by reversibility, this is the same dynamics run backward from time $0$.

    \item  Inward propagation to $S_2(v)$: Having fixed the trajectories on $S_3(v)$, define the trajectories on $S_2(v)$ as follows. For each $z\in S_2(v)$, conditionally on the already constructed trajectories of $N(z)\subseteq S_3(v)$, let $\mc T_z := \mc T_z^{X^*}$ be an inhomogeneous Poisson process on $\mb R$ with instantaneous rate $|A(\Xst_t,z)|/k$. At each time of $\mc T_z$, choose a color uniformly from $A(\Xst_t,z)$. Since the rate is bounded below by $\rho_\delta$, there are almost surely infinitely many points of $\mc T_z$ in $(-\infty,t)$ for every $t$. Define $\Xst_t(z)$ to be the color chosen at the last point of $\mc T_z$ before time $t$.
\item Inward propagation to $S_1(v)$: Having fixed the trajectories on $S_2(v)$, define the trajectories on $S_1(v) = N(v)$ in the same way, using the already constructed trajectories of $S_2(v)$.
\item Inward propagation to $v$: Finally, define the trajectory at $v$ in the same way, using the already constructed trajectories of $S_1(v)$.
\end{enumerate}
In particular, $\Xst_0$ agrees with $X_0$ outside $B_2(v)$, but inside the ball its time-$0$ values are generated by the two-sided recursive construction and need not equal $X_0$.
\end{definition}

\begin{remark}
The role of \(\Yst\) is that it starts from the actual initial labeling $X_0$ and is therefore the natural object to compare to the original chain $(X_t)_{t\geq 0}$. The role of \(\Xst\) is different: because it is generated recursively from the outer trajectories, every term in the later definition of the bias field (\cref{def:bias-field}) comes from a genuine refresh, and no correction term is needed for a vertex in the ball that has not refreshed since time $0$ and is still equal to its initial value. Moreover, the construction of $X^*$ ensures that for any vertex $u$ inside the ball $B_2(v)$, its state depends only on the history of vertices strictly farther from $v$ than $u$. This conditional independence will be key in the analysis. 
\end{remark}

\begin{definition}[the auxiliary process $\Zst$]\label{def:Zstar}
Fix $v \in V$ and the recursively defined process $\Xst$ as in \cref{def:Xstar}. Let
\[
\mc F_v:=\sigma\paren{\Xst_t(x): x\notin B_2(v),\ t\in\mb R}.
\]
Define a process $\Zst=(\Zst_t)_{t\in\mb R}$ by the following rules.
\begin{enumerate}
\item Outside $B_1(v)$, set $\Zst_t(x):=\Xst_t(x)$ for every $t\in\mb R$.
\item For each $w\in N(v)$, conditionally on $\mc F_v$, let $\mc T_w^{\Zst}$ be an inhomogeneous Poisson process on $\mb R$ of instantaneous rate
\[
\lambda_w(t):=\frac{1}{k}\mb E\brac{|A(\Xst_t,w)|\mid \mc F_v}.
\]
At each point of $\mc T_w^{\Zst}$, choose a color uniformly from $A(\Xst_t,w)=A(\Zst_t,w)$. Since $\lambda_w(t)\ge \rho_\delta$, there are almost surely infinitely many such points to the left of every $t$, and we define $\Zst_t(w)$ to be the color chosen at the last point before time $t$.
\item Fix the root color deterministically, say $\Zst_t(v)=1$ for all $t$.
\end{enumerate}
\end{definition}

\begin{remark}
\label{rem:heatbath-vs-metropolis}
    In the heat-bath dynamics, one can condition on the update times in $B_1(v)$ without leaking information about the colors in $S_2(v)$.  For the Metropolis dynamics, this is not true as the update rate of $w \in B_1(v)$ is proportional to $|A(X_t,w)|$, and so revealing update times in $S_1(v)$ leaks information about colors in $S_2(v)$.  The purpose of the chain $Z^*$ is to fully decouple the update times in $B_1(v)$ with the colors of $S_2(v)$: compare step (2) in the definition of $Z^*$, where the rate is a conditional expectation over $\mc{F}_v$, versus the rates in the definition of $X^*$, which depend directly on the random variable $|A(X_t^*,\cdot)|$. 
\end{remark}

We now show that the process $(X^*_t)$ is ``close'' to the continuous-time Metropolis  dynamics $(X_t)$ for the quantities we care about. To state this precisely, we need some notation.

\begin{definition}
    For directed graphs $G, G^*$ on the same vertex set, we use $G \oplus G^*$ to denote the symmetric difference of their edgesets. In particular, when $G^* = G_{\mr{in}}(v,3)$, then $G\oplus G^*$ is a directed tree of depth $3$ rooted at $v$ and oriented towards its leaves. 
\end{definition}

\begin{definition}
    For labelings $X, X' : V \to [k]$, we use $X\oplus X'$ to denote the disagreement set $\{u\in V: X(u)\neq X'(u)\}$. 
\end{definition}

By Poisson thinning, we can equivalently view the continuous-time Metropolis dynamics on $k$-colorings as follows: assign to each vertex-color pair $(v,c)$, an independent Poisson clock of rate $1/k$. When this clock rings, attempt to update the current labeling by assigning the color $c$ to $v$, accepting this update if and only if it is valid. A similar equivalent reformulation also holds for both modified process. By using the same Poisson clocks for all the vertex-color pairs $(v,c)$, we therefore obtain a natural coupling $(X_t, X_t^*, Y_t^*)$, which is what will be used in the statement below. This statement is an analogue of \cite[Theorems 32, 33] {hayes2013local}.

\begin{proposition}
\label{prop:comparison}
For every $\delta, \eps > 0$, there exist constants $C = C(\eps, \delta)$, $R = R(\eps,\delta)$ and $\Delta_0 = \Delta_0(\eps, \delta)$ such that the following holds. Suppose $G = (V,E)$ has girth $g\geq 7$ and maximum degree $\Delta \geq \Delta_0$. Let $k \geq (1+\delta)\Delta$. Let $v \in V$ and $G^* = G_{\mr{in}}(v,3)$. Let $(X_t, X_t^*)_{t\geq 0}$ denote the above coupling with initial configuration $X_0$ which is $400$-above-suspicion at $v$ for $R$.
Then, for any $T \in [C, C+1]$,
    \[ \mb{P}[
    \forall u \in V, |(X_T^* \oplus X_T) \cap N(u)| \leq \eps\Delta] \geq 1-\exp(-\Delta/C) \]
and for every color $c \in [k]$,
   \[ \mb{P}\bigg[ 
    \forall u \in V, 
    \#\{ z \in B_2(u) \mid z \in X_T \oplus X_T^*, c \in \{X_T(z),X_T^*(z)\} \} \leq \eps\Delta \bigg] \geq 1-\exp(-\Delta/C). \]
\end{proposition}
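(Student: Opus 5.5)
We will prove \cref{prop:comparison} by a disagreement-percolation argument in the spirit of \cite[Theorems 32, 33]{hayes2013local}, using the common Poisson clocks for all pairs $(u,c)$.

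\textbf{Sources of disagreement.} Under this coupling, disagreements between $X$ and $X^*$ have only two sources. The first is the initial mismatch inside $B_2(v)$, since $X^*_0$ is regenerated there rather than copied from $X_0$. The second is the deleted edges $G\oplus G^*$: an update at $x\in B_3(v)$ that in $G$ would be blocked by a nearer-to-$v$ neighbor is not blocked in $G^*$. We first dispose of the initial mismatch. By \cref{obs:bound-update-rates}, each $z\in B_2(v)$ has successful-update rate at least $\rho_\delta$. Hence by time $C$ it has refreshed except with probability $e^{-\rho_\delta C}$. After its first successful refresh, $z$ in both chains draws from the same clock, so the initial data at $z$ is forgotten.

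\textbf{Propagation of disagreements.} Thereafter we track the disagreement set $D_t=X_t\oplus X^*_t$ and dominate it by a branching-type process. A new disagreement at $u$ at time $s$ requires a clock $(u,c)$ to ring with $c$ equal to one of the two colors of some $z\in D_{s^-}\cap N(u)$, or else $u$ to be in the tree $G\oplus G^*$. Either event occurs at rate at most $2/k$ per disagreeing neighbor. Disagreements heal at rate at least $\rho_\delta$ whenever $u$ makes a successful update with a color not in $H$.

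Over a time window of length $C+1$, with $k\ge(1+\delta)\Delta$, the expected number of descendants of a single source inside any fixed $N(u)$ is at most $e^{O(C)}$. We will need this bound to be a small multiple of $\Delta$ rather than of $\Delta^2$. The point is that a disagreement path reaching $N(u)$ from a source must use specific clock rings. Enumerating walks of length $L$ with specified colors gives weight roughly $(2\Delta/k)^L (C+1)^L/L!$. Summing gives a bounded expected count per source and a Poisson-type tail from \eqref{eq:poisson-tail}, so paths of length $L\gtrsim \log\Delta$ contribute negligibly. For the color-specific second statement, a disagreement involving color $c$ at $z\in B_2(u)$ requires that $c$ was actually proposed along the path. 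This costs a factor $1/k$ per path unless $c$ is inherited from an original color. Here the $400$-above-suspicion hypothesis bounds how many sources start with color $c$ near $u$.

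\textbf{Concentration and the main obstacle.} We convert expectation bounds into probability $1-e^{-\Delta/C}$ for each $u$ via a Chernoff/Azuma bound over the independent clocks in a bounded-radius ball. The failure probability for one $u$ is $e^{-\Omega(\Delta)}$. Since disagreements only exist within distance $R=O_{\eps,\delta}(1)$ of $v$ except with tiny probability (long paths are exponentially unlikely), we union bound over the $\Delta^{R+2}$ relevant $u$.

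The main obstacle will be this concentration step. The number of disagreements in $N(u)$ is not a sum of independent terms, because a single early disagreement near the root $v$ fans out into many neighbors. We would handle it as Hayes does. First, we show that the number of \emph{first-generation} disagreements, those created directly by the edges of $G\oplus G^*$ or by unrefreshed initial values, is at most $\eps'\Delta$ in every $N(u)$ with exponentially high probability, using the $1/k$ cost per creation and the above-suspicion bounds. Second, we show that each such seed produces at most $O(1)$ descendants in any $N(u)$ except with probability $e^{-\Omega(\Delta)}$, using the girth $\ge7$ tree structure so that descendants of distinct seeds in a fixed neighborhood arrive through essentially disjoint clock sets. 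Finally, we choose $C$ large (for refreshing) and $\Delta_0$ large (for the union bounds), with $\eps'\ll\eps$.
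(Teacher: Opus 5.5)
There is a genuine gap in how you dispose of the initial mismatch. Since $X^*_0$ is regenerated on all of $B_2(v)$, at time $0$ essentially every vertex of $B_2(v)$ disagrees. These are $\Theta(\Delta^2)$ seeds, not a sparse set, and for each $w\in N(v)$ almost all of its $\approx\Delta$ children in $S_2(v)$ are mismatched.

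Your claim that after its first successful refresh a vertex ``draws from the same clock in both chains, so the initial data is forgotten'' is false when you compare $X$ (on $G$) directly with $X^*$ (on $G^*$). The proposal $(w,c)$ is accepted in $X$ iff $c\notin X(N_G(w))$, and in $X^*$ iff $c\notin X^*(N_{G^*}(w))$. While the children of $w$ are mismatched, these two available sets can differ in $\Theta(k)$ colors. So $w$ is re-disagreed at rate $\Theta(1)$ right after any common refresh. Your healing rate $\ge\rho_\delta$ also fails: the only deterministic lower bound on $|A(X,w)\cap A(X^*,w)|$ is about $k-2\Delta$, which is negative for $k<2\Delta$.

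In addition, for $X$ the region outside $B_2(v)$ is not autonomous, because in $G$ the vertices of $S_3(v)$ see $S_2(v)$. So you have no outer trajectory to condition on to make the branches below distinct $w\in N(v)$ independent, and every Chernoff step needs that independence. Your concentration plan does not repair this, for two reasons. A single seed does not produce $O(1)$ descendants in $N(u)$ with probability $1-e^{-\Omega(\Delta)}$; its offspring count has only Poisson-type tails. And with $\Theta(\Delta^2)$ deleted-edge sources, the disagreements are not confined to distance $O_{\eps,\delta}(1)$ of $v$ with probability $1-e^{-\Delta/C}$.

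The missing idea is the intermediate chain $Y^*$ (the dynamics on $G^*$ started from the true $X_0$), together with the triangle inequality.
\begin{itemize}
\item \emph{$X$ versus $Y^*$.} The initial configurations coincide, so only the deleted edges $G\oplus G^*$ create disagreements. The percolation argument of \cite[Theorems 32, 33]{hayes2013local} then applies essentially verbatim.
\item \emph{$X^*$ versus $Y^*$.} Both chains run the same dynamics on $G^*$ and agree identically outside $B_2(v)$, so conditioning on that outer trajectory gives independence across branches. At $z\in S_2(v)$ the available sets are literally equal, so a single refresh after time $0$ removes the mismatch.
\item \emph{Two-stage argument at $w\in N(v)$.} By time $T-a$, all but a small fraction (much smaller than $\eps$) of the children of each $w$ have refreshed; this follows from Chernoff plus a union bound over $w$. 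Then $w$ must refresh during $[T-a,T]$, using a common available set of size at least $\rho_\delta k/2$.
\item \emph{Color-specific bound.} Split into two kinds of vertices. Where the $X^*$-color is $c$, it is an unrefreshed pre-zero color, with probability at most $1/(\rho_\delta k)$ each. Where the $Y^*$-color is $c$, necessarily $X_0(z)=c$, and there are at most $400\Delta$ such $z$ by above-suspicion.
\end{itemize}
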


\begin{proof}

By the triangle inequality, and by adjusting constants, it suffices to show that the two conclusions of \cref{prop:comparison} hold for $X_T \oplus Y_T^*$ and $X_T^* \oplus Y_T^*$. The tension in the proof comes from the processes $X_t^*$ and $Y_t^*$ needing a sufficient amount of time to couple, and the processes $Y_t^*$ and $X_t$ beginning at $Y_0^* = X_0$ slowly diverging from each other.

The corresponding statements for $X_T \oplus Y_T^*$ follow the proofs of \cite[Theorem 32 and Theorem 33]{hayes2013local} without substantial modification, using \cref{obs:bound-update-rates}.
The only change this causes in Hayes's result is that certain constants need to be adjusted by factors of $\delta$, which is immaterial, since all of our implicit constants are allowed to depend on $\delta$. The main work is in proving the corresponding statement for $X_T^* \oplus Y_T^*$. 

Recall that
$\rho = \rho_{\delta}:=\frac{\delta}{1+\delta}$, 
so every successful-update rate in either chain is at least $\rho$, since $|A(\cdot,u)|\geq k-\Delta\geq \rho k$ for every vertex $u$.  
Choose $\eta:=\eps/100$. Next choose $a=a(\eps,\delta)$ so that $e^{-\rho a/2}\leq \eta$, and then choose $C=C(\eps,\delta)$ so large that
\[
e^{-\rho(C-a)}\leq \eta,\qquad \frac{e^{-\rho C}}{\rho(1+\delta)}\leq \eta,\qquad 400e^{-\rho C}\leq \eta.
\]
Fix $T\in [C,C+1]$.

For $z\in S_2(v)$, the available-color sets in the two chains are always equal, because both depend only on the colors in $S_3(v)$, where $X^*$ and $Y^*$ agree identically. Hence the successful-update processes at such $z$ are identical in the two chains, and once $z$ has one successful update after time $0$, the two chains agree there forever.

Now fix $w\in N(v)$ and write $\mc C(w):=N(w)\cap S_2(v)$. For $z\in \mc C(w)$, let $I_z$ be the indicator that $z$ has no successful update in $[0,T-a]$, and set $M_w:=\sum_{z\in \mc C(w)} I_z$. Conditional on the outside trajectory, the variables $\{I_z:z\in \mc C(w)\}$ are independent, and each satisfies $\mb P[I_z=1]\leq e^{-\rho(T-a)}\leq \eta$. Since $|\mc C(w)|\leq \Delta$, a Chernoff bound and then a union bound over $w\in N(v)$ give an event $\mc E_2$ with probability at least $1-\exp(-\Delta/C)$ on which
\[
M_w\leq 2\eta\Delta\qquad\text{for every }w\in N(v).
\]

Assume $\mc E_2$ holds. Then for every $w\in N(v)$ and every $t\in [T-a,T]$, the sets $A(X_t^*,w)$ and $A(Y_t^*,w)$ differ in at most $2\eta\Delta$ colors, because only those children of $w$ that have not refreshed by time $T-a$ can still be discrepant. Since both sets have size at least $\rho k$, their intersection has size at least $\rho k/2$ for $\Delta$ large enough. Therefore, once we condition on the histories up to time $T-a$, the probability that a given $w\in N(v)$ is still discrepant at time $T$ is at most $e^{-\rho a/2} + (2\eta\Delta)/(\rho k/2) \leq \eta^{1/2}$, and these events are independent over distinct $w$. Another Chernoff bound gives an event $\mc E_1$ such that
\[
\mb P[\mc E_1^c\mid \mc E_2]\leq \exp(-\Delta/C),
\]
and on $\mc E_1$,
\[
\#\{w\in N(v):X_T^*(w)\neq Y_T^*(w)\}\leq 2\eta\Delta.
\]
Hence
\[
\mb P[\mc E_1\cap \mc E_2]\geq 1-\exp(-\Delta/C).
\]

We now prove the first conclusion. On $\mc E_1\cap \mc E_2$, every neighborhood contains at most $\eps\Delta$ discrepant vertices. Indeed:
\begin{itemize}
\item if $u\notin B_3(v)$, then $X_T^*=Y_T^*$ on $N(u)$;
\item if $u\in S_3(v)$ or $u\in S_2(v)$, then by girth at least $7$, $u$ has at most one neighbor in $B_2(v)$, so $|(X_T^*\oplus Y_T^*)\cap N(u)|\leq 1\leq \eps\Delta$;
\item if $u=v$, then all discrepancies in $N(v)$ are counted by $\mc E_1$, so the bound is $2\eta\Delta$;
\item if $u=w\in N(v)$, then the only possible discrepant neighbors are $v$ and those children in $\mc C(w)$ that failed to refresh by time $T-a$, so
\[
|(X_T^*\oplus Y_T^*)\cap N(w)|\leq 1+M_w\leq 1+2\eta\Delta\leq \eps\Delta
\]
for $\Delta$ large enough.
\end{itemize}
This proves the neighborhood-discrepancy estimate.

For the color-specific estimate, fix $c\in [k]$. Since $X_T^*=Y_T^*$ outside $B_2(v)$, it is enough to show that
\[
\#\{z\in B_2(v): z\in X_T^*\oplus Y_T^*,\ c\in\{X_T^*(z),Y_T^*(z)\}\}\leq \eps\Delta
\]
with probability at least $1-\exp(-\Delta/C)$.

The contribution from $N(v)$ is already at most $2\eta\Delta$ on $\mc E_1$, so only vertices in $S_2(v)$ need further attention. Split them into two classes:
\[
A_c:=\#\{z\in S_2(v): X_T^*(z)=c\neq Y_T^*(z)\},\qquad
B_c:=\#\{z\in S_2(v): Y_T^*(z)=c\neq X_T^*(z)\}.
\]

For $A_c$, note that if $X_T^*(z)=c\neq Y_T^*(z)$ then $z$ has had no successful update in $[0,T]$ (in either chain, since those successful-update processes are identical at vertices of $S_2(v)$). Conditional on the outside trajectory, this has probability at most $e^{-\rho T}\leq e^{-\rho C}$; given that event, the value of $X_T^*(z)$ is the color used at the last successful update of the two-sided chain $X^*$ before time $0$, so the conditional probability that it equals $c$ is at most $1/(\rho k)$. Thus each $z\in S_2(v)$ contributes to $A_c$ with conditional probability at most $e^{-\rho C}/(\rho k)$. Since $|S_2(v)|\leq \Delta^2$ and $k\geq (1+\delta)\Delta$, the conditional expectation of $A_c$ is at most $\eta\Delta$, so a Chernoff bound gives $\mb P[A_c>2\eta\Delta]\leq \exp(-\Delta/C)$.

For $B_c$, if $Y_T^*(z)=c\neq X_T^*(z)$ then again $z$ has had no successful update in $[0,T]$, so $Y_T^*(z)=Y_0^*(z)=X_0(z)=c$. Hence only vertices of $B_2(v)$ that already have color $c$ at time $0$ can contribute. Since $X_0$ is $400$-above-suspicion at $v$, there are at most $400\Delta$ such vertices. Each survives with no successful update until time $T$ with probability at most $e^{-\rho C}$, so $\mb E[B_c]\leq 400e^{-\rho C}\Delta\leq \eta\Delta$, and another Chernoff bound gives $\mb P[B_c>2\eta\Delta]\leq \exp(-\Delta/C)$.

Combining the contributions from $N(v)$, $A_c$, and $B_c$, we get the required bound, since $2\eta\Delta+2\eta\Delta+2\eta\Delta<\eps\Delta$.
\end{proof}

\subsection{Lower bound on the number of available colors}
Recall that \eqref{eq:continuous-number-colors} asserts a two-sided bound on the number of available colors $A(X_T, v)$. As a first step, we will establish one side of the bound. We will first prove the result for $X^*$ at a fixed time and then transfer it to $X$ over an entire bounded-length interval. 

\begin{lemma}
\label{lem:available-colors-fixed-time}
     For every $\delta, \eps > 0$, there exist constants $\Delta_0 = \Delta_0(\delta, \eps)$ and $C = C(\delta, \eps)$ such that the following holds. Let $v \in V$ and let $T \in [C, C+1]$. 
    Then,
    \[ \mb{P}[|A(X^*_T,v)| \leq (1-\eps) k e^{-d(v)/k}] \leq \exp(-\Delta/C).\]
\end{lemma}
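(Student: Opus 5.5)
Conditioning on the trajectories outside $N(v)$ will let us apply \cref{lem:dyer-concentration}. Let $\mc F'$ be the $\sigma$-algebra generated by the trajectories of $\Xst$ on $V\setminus B_1(v)$, i.e.\ on $S_2(v)$, $S_3(v)$ and beyond. By the recursive construction in \cref{def:Xstar}, given $\mc F'$ the trajectories of the distinct $w\in N(v)$ are conditionally independent. Each $w$ has in-neighbors only in $S_2(v)$, because the edge $wv$ is directed towards $v$, and by girth $\ge 7$ the sets $N(w)\cap S_2(v)$ are pairwise disjoint. Hence the colors $\xi_w:=\Xst_T(w)$, $w\in N(v)$, are conditionally independent given $\mc F'$. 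Since $v$ has in-degree $d(v)$ in $G^*$, we get $A(\Xst_T,v)=[k]\setminus\{\xi_w\}$, exactly the setting of \cref{lem:dyer-concentration} with $d=d(v)$.

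The key quantity is $p=\max_{w,a}\mb P[\xi_w=a\mid\mc F']$. Given $\mc F'$, the color $\xi_w$ is the color chosen at the last point of $\mc T_w$ before $T$, and it is uniform on $A(\Xst_s,w)$ at that time $s$. Since $|A(\Xst_s,w)|\ge k-\Delta\ge\rho_\delta k$ always (\cref{obs:bound-update-rates}), mixing over $s$ gives $\mb P[\xi_w=a\mid\mc F']\le 1/(\rho_\delta k)$. This holds because, conditional on $\mc F'$ and on the point process, the color at the last point is uniform over a set of size at least $\rho_\delta k$. The conditioning on the point process itself reveals nothing about the chosen color, so no Metropolis subtlety arises here. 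Thus $p\le 1/(\rho_\delta k)=O_\delta(1/\Delta)$, and \cref{lem:dyer-concentration} gives
\[
\mb E\bigl[|A(\Xst_T,v)|\mid\mc F'\bigr]\ge k\Bigl(\frac{1-p}{e}\Bigr)^{d(v)/k}=k e^{-d(v)/k}(1-p)^{d(v)/k}\ge k e^{-d(v)/k}\bigl(1-O_\delta(1/\Delta)\bigr).
\]
For $\Delta\ge\Delta_0(\delta,\eps)$ this is at least $(1-\eps/2)ke^{-d(v)/k}$.

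For the tail, apply the second part of \cref{lem:dyer-concentration} with $a=\frac{\eps}{2}ke^{-d(v)/k}\ge\frac{\eps}{2}ke^{-1/(1+\delta)}$, using $d(v)\le\Delta\le k/(1+\delta)$. Conditionally on $\mc F'$ this gives failure probability at most $2\exp(-a^2/(2k))\le 2\exp(-c\eps^2 k)\le\exp(-\Delta/C)$ for $C=C(\delta,\eps)$ large, since $k\ge\Delta$. Averaging over $\mc F'$ yields the lemma. Note that no above-suspicion hypothesis and no lower bound on $T$ is needed here, because $\Xst$ is two-sided, so every color at time $T$ comes from a genuine refresh; the constraint $T\in[C,C+1]$ is harmless.

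The main obstacle I expect is the independence step. Under the recursive construction, the rate of $\mc T_w$ depends on $\Xst$ restricted to $N_{G^*}(w)\subseteq S_2(v)$, so conditional independence across $w\in N(v)$ requires conditioning on all of $S_2(v)$ as well as the outside. This is fine because $S_2(v)\subseteq V\setminus B_1(v)$ is included in $\mc F'$. The point to check carefully is that the bound on $p$ survives this conditioning: the uniform-choice structure at the last point of $\mc T_w$ must be independent of which time that last point occurs at, given $\mc F'$. I would verify this by constructing $\mc T_w$ and the color marks as a marked Poisson process driven by rates measurable with respect to $\mc F'$.
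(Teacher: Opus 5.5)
Your proposal is correct and follows the paper's proof essentially step for step. The common ingredients are conditional independence of the colors $X^*_T(w)$, $w\in N(v)$, coming from the disjoint branches of $G_{\mathrm{in}}(v,3)$; the bound $\sup_a \mb P[X^*_T(w)=a\mid\cdot\,]\le 1/(\rho_\delta k)$ from the uniform choice at the last refresh; and two applications of \cref{lem:dyer-concentration} (expectation, then tail), followed by averaging. The only difference is that you condition on the trajectories on all of $V\setminus B_1(v)$, including $S_2(v)$, rather than the paper's $\mc F_v$ (outside $B_2(v)$). If anything this makes the uniform-color step cleaner, since the available set $A(X^*_{\tau_w(T)^-},w)$ is then determined by the conditioning (together with the point process $\mc T_w$).
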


\begin{proof}
    Condition on
\[
\mc F_v:=\sigma\paren{\Xst_t(x):x\notin B_2(v),\ t\in\mb R}.
\]
Given $\mc F_v$, the colors $\{\Xst_T(w):w\in N(v)\}$ are independent, because in the directed graph $\Gin(v,3)$ the branches below distinct neighbors of $v$ are disjoint.

Fix $w\in N(v)$ and $c\in[k]$. Let $\tau_w(T)$ be the last successful refresh time of $w$ before time $T$ in $\Xst$; this time exists almost surely. Conditional on $\mc F_v$ and on $\tau_w(T)$,
\[
\mb P[\Xst_T(w)=c\mid \mc F_v,\tau_w(T)]
=
\frac{\mathbf 1\{c\in A(\Xst_{\tau_w(T)^-},w)\}}{|A(\Xst_{\tau_w(T)^-},w)|}
\le \frac1{\rho k}
\]
by \cref{obs:bound-update-rates}. Hence
\[
\sup_{c\in[k]}\mb P[\Xst_T(w)=c\mid \mc F_v]\le \frac1{\rho k}.
\]
Applying \cref{lem:dyer-concentration} conditionally on $\mc F_v$ gives
\[
\mb E\brac{|A(\Xst_T,v)|\mid \mc F_v}\ge k\paren{\frac{1-1/(\rho k)}{e}}^{d(v)/k}
= ke^{-d(v)/k}\paren{1-O\paren{\frac1\Delta}}.
\]
Since $d(v)/k\le 1/(1+\delta)$, the error term $O(1/\Delta)$ is uniform. For $\Delta$ large enough,
\[
\mb E\brac{|A(\Xst_T,v)|\mid \mc F_v}\ge (1-\eps/2)ke^{-d(v)/k}.
\]
A second application of \cref{lem:dyer-concentration} yields
\[
\mb P\brac{|A(\Xst_T,v)|\le (1-\eps)ke^{-d(v)/k}\mid \mc F_v}\le e^{-\Delta/C},
\]
and averaging over $\mc F_v$ proves the statement for $X^*_T$. 
\end{proof}

\begin{lemma}\label{cor:universal-lower-bound}
    For every $\delta, \eps > 0$, there exist constants $\Delta_0 = \Delta_0(\delta, \eps)$, $C = C(\delta, \eps)$, and $R = R(\eps,\delta)$ such that the following holds. Let $v \in V$ and suppose that $X_0$ is $400$-above-suspicion for radius $R$ at $v$. Then, 
    \[ \mb{P}[\exists t \in [C, C+1]: |A(X_t,v)| \leq (1-2\eps) k e^{-d(v)/k}] \leq \exp(-\Delta/C).\]
    Moreover, the same bound also holds with $X$ replaced by $X^*$ on $G_{\mathrm{in}}(v,3)$ (in fact, without any assumptions on $X_0$). 
\end{lemma}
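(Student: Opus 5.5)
The plan is to upgrade the fixed-time bound of \cref{lem:available-colors-fixed-time} to a bound holding uniformly over $t\in[C,C+1]$, first for $X^*$ and then for $X$ via \cref{prop:comparison}.

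\emph{Step 1: from finitely many times to the whole interval.} Fix a grid $t_j = C + j/L$, $0\le j\le L$, with $L=L(\eps,\delta)$ a large constant. Apply \cref{lem:available-colors-fixed-time} with $\eps/2$ in place of $\eps$ at each grid time and take a union bound over the $L+1$ points; this costs only a constant factor, which is absorbed into $C$. To control $|A(X^*_t,v)|$ between grid points, note that $|A(X^*_t,v)|$ can decrease only when some $w\in N(v)$ refreshes, and each refresh lowers it by at most one. On each interval $[t_j,t_{j+1}]$ the number of refreshes in $N(v)$ is dominated by a Poisson variable with mean at most $\Delta/L$, since every successful-update rate is at most $1$. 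By \eqref{eq:poisson-tail}, the probability that this count exceeds $\eps k e^{-d(v)/k}/2\ge \eps\Delta/(2e)$ is $\exp(-\Omega(\Delta))$ once $L\gg 1/\eps$. A union bound over $j$ then gives the interval statement for $X^*$, with margin $(1-\eps)ke^{-d(v)/k}$.

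Nothing here uses $X_0$. The boundary and inward construction of $X^*$ is two-sided stationary in its outer part, and \cref{lem:available-colors-fixed-time} only conditions on $\mc F_v$. This justifies the ``moreover'' clause.

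\emph{Step 2: transfer to $X$.} By \cref{prop:comparison}, applied with $\eps' \ll \eps$ at each grid time $t_j$ and using the above-suspicion hypothesis on $X_0$, with probability $1-\exp(-\Delta/C)$ we have $|(X_{t_j}\oplus X^*_{t_j})\cap N(v)|\le \eps'\Delta$. Each discrepant neighbor changes $|A(\cdot,v)|$ by at most one, so $|A(X_{t_j},v)|\ge |A(X^*_{t_j},v)|-\eps'\Delta$. Between grid times, the same Poisson refresh-count argument applied to $X$ bounds the possible decrease of $|A(X_t,v)|$ by $\eps'\Delta$, except on an event of probability $\exp(-\Omega(\Delta))$. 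Combining these and choosing $\eps'\le \eps e^{-1}/4$ gives $|A(X_t,v)| > (1-2\eps)ke^{-d(v)/k}$ for all $t\in[C,C+1]$.

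The constant $C$ must be taken as the larger of the constants from \cref{prop:comparison} and \cref{lem:available-colors-fixed-time}. The proposition requires $T\in[C,C+1]$, so I would take a common $C$ for which both statements hold on this whole window, enlarging if necessary.

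\emph{Main obstacle.} The main obstacle is the interpolation between grid points in the $X^*$ process. For $t<0$ its trajectory is defined by time reversal, so I must check that the refresh counts on $N(v)$ in $[t_j,t_{j+1}]$ are still dominated by $\mathrm{Pois}(\Delta/L)$ conditionally on the outer trajectories. This holds because each inward Poisson process $\mc T_w$ has rate at most $1$ given $\mc F_v$.
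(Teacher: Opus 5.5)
Your proposal is correct and follows essentially the paper's argument. You take a union bound over mesh times using \cref{lem:available-colors-fixed-time}, bound the number of refreshes between mesh points by a Poisson tail estimate, and then transfer to $X$ via \cref{prop:comparison} together with a second mesh argument. The differences are cosmetic: you observe that only refreshes in $N(v)$ can change $|A(\cdot,v)|$, which lets you use a constant-size grid instead of the paper's $O(\Delta/\eps)$ grid that counts refreshes in $B_2(v)$. Your aside that the outer part of $X^*$ is stationary is inaccurate, since it is started from $X_0$, but it is not needed. The ``moreover'' clause holds for the other reason you give: the fixed-time bound uses only conditional independence given $\mc F_v$ and the bound $1/(\rho_\delta k)$.
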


\begin{proof}

We first prove the statement for $\Xst$. Choose a mesh size $h:=\frac{\eps}{100\Delta}.$
The interval $[C, C+1]$ contains $O(\Delta/\eps)$ mesh points. By \cref{lem:available-colors-fixed-time}, the probability that the lower bound fails at one of those mesh points is at most $e^{-\Delta/C}$. Between two consecutive mesh points, the quantity $|A(\Xst_t,v)|$ can change only when a vertex in $B_2(v)$ refreshes successfully, and each such refresh changes $|A(\Xst_t,v)|$ by at most $1$. The total successful-refresh rate in $B_2(v)$ is at most $2\Delta^2$, so the number of such refreshes in an interval of length $h$ is Poisson with mean $O(\eps\Delta)$. By \eqref{eq:poisson-tail} and a union bound over the $O(\Delta/\eps)$ mesh intervals, with probability at least $1-e^{-\Delta/C}$ each mesh interval contains at most $\eps\Delta/4$ such refreshes. On this event, if the lower bound holds at the mesh points with parameter $\eps$, then it holds on the whole interval with parameter $2\eps$.

For the original chain $X$, the statement now follows from \cref{prop:comparison} and a similar mesh argument.
\end{proof}

\subsection{The bias field and its properties}

This is the most technically difficult part in the proof of \cref{thm:local-uniformity-continuous} and also the one which differs the most from \cite{hayes2013local} due to the different bias field $P$ and the auxiliary process $Z^*$. 

The following notion of ``bias field' is similar, but not identical to the function $P(X,v,c)$ in \cite[Section~2.4]{hayes2013local}, as we explain below. 

\begin{definition}[Bias field]
\label{def:bias-field}
Let $U = (U_t)$ be either the original metropolis chain $X$, the recursive directed chain $X^*$, or the auxiliary process $Z^*$. Let $v \in V$, let $T > 0$, and let $c \in [k]$. For each $w \in N(v)$, let $\tau_w^{U}(T)$ be the last successful refresh time of $w$ before time $T$; if no such refresh has occurred, set $\tau_w^U(T) = -\infty$. Define
\[P(U_T, v, c) =  \sum_{w \in N(v)} \mathbf{1}\{\tau_w^U(T) > -\infty\}\frac{\mathbf{1}\{c \in A_v(U_{\tau_w^U(T)^-},w)\}}{|A_v(U_{\tau_w^U(T)^-},w)|}.
\]
\end{definition}

\begin{remark}
    This is slightly different from Hayes' function $P(X_T,v,c)$ \cite[Section~2.4]{hayes2013local}, in which all colors on the right hand side are only accessed at time $T$ with no mention of $\tau_w(T)$.  We believe that the version above is what is actually needed for the later arguments in \cite{hayes2013local} as well.
\end{remark}

\begin{remark}
    In all our applications, $\tau_w^U(T) > -\infty$ almost surely, so we will omit this factor for notational convenience. Additionally, almost surely, $A_v(U_{\tau_w^U(T)^-},w) = A_v(U_{\tau_w^U(T)},w)$, so we will not distinguish between $\tau^-$ and $\tau$. Finally, note that for the processes $X^*$ and $Z^*$ on the directed graph $G^* = G_{\mathrm{in}}(v,3)$, for all $w \in N(v)$,
    \[A(X_t^*, w) = A_{G^*}(X_t^*, w) = A_{G^*}(Z_t^*, w) = A_v(Z_t^*, w),\]
    where we have used that $X_t^* = Z_t^*$ outside $B_1(v)$.
\end{remark}

The next lemma provides a comparison between the bias fields for $X^*$ and $Z^*$.

\begin{lemma}
\label{lem:Xstar-vs-Zstar}
For every $\delta,\eps>0$ there exist constants $\Delta_0=\Delta_0(\delta,\eps)$ and $C=C(\delta,\eps)$ such that the following holds. Let $v\in V$ and $T\in[C, C+1]$. Then one can couple $\Xst$ and $\Zst$ so that
\[
\mb P\brac{\bigl|\set{w\in N(v):\ \tau_w^{\Xst}(T)\ne \tau_w^{\Zst}(T)}\bigr|>\eps\Delta}\le e^{-\Delta/C}.
\]
Consequently,
\[
\mb P\brac{\exists c\in[k]:\ \bigl|P(\Xst_T,v,c)-P(\Zst_T,v,c)\bigr|>\eps}\le e^{-\Delta/C}.
\]
\end{lemma}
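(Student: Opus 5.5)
The plan is to couple $\Xst$ and $\Zst$ through Poisson thinning, conditionally on $\mc F_v$. Outside $B_1(v)$ the two processes agree by definition. Fix $w\in N(v)$, and realize the clocks of both processes from a single Poisson point process $\Pi_w$ on $\mb R\times[0,1]$ of unit intensity, independent across $w$ and independent of $\mc F_v$. A point $(s,u)$ of $\Pi_w$ is a refresh of $w$ in $\Xst$ if $u\le |A(\Xst_s,w)|/k$, and a refresh in $\Zst$ if $u\le\lambda_w(s)$. At a point that is a refresh in both processes we use the same uniform color choice. This is legitimate because $A(\Xst_s,w)=A(\Zst_s,w)$ is determined by $S_2(v)$ together with $\mc F_v$, and the children of $w$ in $S_2(v)$ are generated by the recursive construction, so they are unaffected by what happens at $w$. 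The marginals are then correct. For $\Xst$ the refresh rate is $|A(\Xst_s,w)|/k$, which is exactly the construction in \cref{def:Xstar}. For $\Zst$ the rate is $\lambda_w(s)$, and this is $\mc F_v$-measurable and independent of the children's randomness beyond $\mc F_v$, which matches \cref{def:Zstar}.

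Next, the quantity $\tau_w^{\Xst}(T)\ne\tau_w^{\Zst}(T)$ can differ only if some point of $\Pi_w$ in a window $[T-a,T]$ has $u$ lying between $|A(\Xst_s,w)|/k$ and $\lambda_w(s)$. The alternative is that neither process refreshes during $[T-a,T]$, which happens with probability at most $e^{-\rho_\delta a}\le\eta$ by \cref{obs:bound-update-rates}. If the last points in the window coincide, the last refresh times coincide.

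So we need control of $\bigl||A(\Xst_s,w)|-\mb E[|A(\Xst_s,w)|\mid\mc F_v]\bigr|$ uniformly over $s\in[T-a,T]$. Conditional on $\mc F_v$, the colors of the children $\mc C(w)=N(w)\cap S_2(v)$ at a fixed time $s$ are independent, each with point masses at most $1/(\rho_\delta k)$, by the same argument as in \cref{lem:available-colors-fixed-time}. \cref{lem:dyer-concentration} therefore gives deviation at most $\eta k$ except with probability $2e^{-\eta^2k/2}$. The mesh argument of \cref{cor:universal-lower-bound} upgrades this to all $s$ in the window, because $\lambda_w$ is continuous in $s$ and has bounded variation there. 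The expected number of $\Pi_w$-points in the discrepancy strip is then at most $a\eta$, so the probability that $w$ is bad is at most $\eta+a\eta+e^{-\Omega(\Delta)}$.

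The delicate point, and the step I expect to be the main obstacle, is independence across $w$. Conditional on $\mc F_v$, the subtrees below distinct $w\in N(v)$ are disjoint in $\Gin(v,3)$, and the $\Pi_w$ are independent. Hence the indicators $\mathbf 1\{\tau_w^{\Xst}(T)\ne\tau_w^{\Zst}(T)\}$ are conditionally independent. A Chernoff bound, taking $\eta\ll\eps$ and $a$ with $e^{-\rho_\delta a}\le\eta$ and $a\eta\le\eps/4$, gives the first claim with failure probability $e^{-\Delta/C}$. For this I will choose $\eta$ first, then $a$, and take $\eta$ small enough relative to $1/a$; this works since $a$ only grows logarithmically in $1/\eta$.

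For the consequence, on the good event the two processes share the same $\tau_w$ for all but $\eps\Delta$ neighbors. For those neighbors the refresh happens at the same time with the same available set, since $\Xst=\Zst$ outside $B_1(v)$. Their summands in $P(\cdot,v,c)$ are therefore identical. Each remaining summand is at most $1/(\rho_\delta k)$, so the difference is at most $\eps\Delta/(\rho_\delta k)\le\eps/\rho_\delta$, uniformly in $c$. Rescaling $\eps$ by $\rho_\delta$ finishes the proof.
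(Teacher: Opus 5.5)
Your proposal is correct and follows essentially the paper's argument. The shared steps are the Poisson-thinning coupling at each $w\in N(v)$ conditional on $\mc F_v$, the bad event ``a mismatched refresh or no common refresh'', Dyer--Frieze concentration of $|A(\Xst_s,w)|$ about its conditional mean, conditional independence across the disjoint branches, and a Chernoff bound. The differences are only technical: the paper bounds the expected number of mismatched points over $[0,T]$ directly via the conditional $L^1$ deviation (variance $O(k)$), which makes your window-plus-mesh step unnecessary, and your deterministic summand bound $1/(\rho_\delta k)$ from \cref{obs:bound-update-rates} is a cleaner substitute for the paper's appeal to \cref{cor:universal-lower-bound}.
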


\begin{proof}
Condition on $\mc F_v$. For each $w\in N(v)$, couple the refreshes of $\Xst$ and $\Zst$ by a common Poisson point process on $[0,T]\times[0,1]$ of intensity $1$: at a point $(t,s)$, the chain $\Xst$ refreshes $w$ if $s\le |A(\Xst_t,w)|/k$, the chain $\Zst$ refreshes $w$ if $s\le \lambda_w(t)$, and whenever both refresh we use the same new color.

Let $B_w$ be the event that either before time $T$ there is a point $(t,s)$ at which exactly one of the two chains refreshes $w$, or there is no point in $[0,T]$ at which both chains refresh $w$. If $B_w$ does not occur, then the chains have at least one common refresh in $[0,T]$ and all their refreshes there are common, so $\tau_w^{\Xst}(T)=\tau_w^{\Zst}(T)$.

Conditional on $\mc F_v$, the number of mismatched refreshes is dominated by a Poisson random variable with mean
\[
\mu_w:=\frac1k\int_0^T \bigl||A(\Xst_t,w)|-\mb E[|A(\Xst_t,w)|\mid \mc F_v]\bigr|\,dt.
\]
Hence the conditional probability of at least one mismatched refresh is at most $\mb E[\mu_w\mid \mc F_v]$.
For each fixed $t$, conditional on $\mc F_v$ the variable $|A(\Xst_t,w)|$ is a Lipschitz function of the independent colors of the vertices in $N(w)\setminus\{v\}$, and \cref{lem:dyer-concentration} implies
\[
\mathrm{Var}\bigl(|A(\Xst_t,w)|\mid \mc F_v\bigr)=O(k).
\]
Therefore
\[
\mb E\brac{\bigl||A(\Xst_t,w)|-\mb E[|A(\Xst_t,w)|\mid \mc F_v]\bigr|\mid \mc F_v}=O(\sqrt{k}),
\]
and integrating over $t\in[0,T]$ bounds the probability of a mismatched refresh by $C/\sqrt\Delta$. Moreover, the instantaneous rate of common refreshes is
\[
\min\bigl\{|A(\Xst_t,w)|/k,\lambda_w(t)\bigr\}\geq \rho_\delta,
\]
so the probability of no common refresh in $[0,T]$ is at most $e^{-\rho_\delta T}$. Thus
\[
\mb P[B_w\mid \mc F_v]\le \frac{C}{\sqrt{\Delta}}+e^{-\rho_\delta T}.
\]
For distinct $w\in N(v)$, the events $B_w$ are conditionally independent given $\mc F_v$, because the branches below the different neighbors of $v$ are disjoint. Hence the total number
\[
B:=\sum_{w\in N(v)} \mathbf 1_{B_w}
\]
is conditionally stochastically dominated by a binomial random variable with parameters $\Delta$ and $C/\sqrt\Delta+e^{-\rho_\delta T}$. Increase $C$ so that $e^{-\rho_\delta C}\leq \eps/4$, and then increase $\Delta_0$ so that $C/\sqrt\Delta\leq \eps/4$. Its mean is then at most $\eps\Delta/2$, so a Chernoff bound gives
\[
\mb P[B>\eps\Delta\mid \mc F_v]\le e^{-\Delta/C},
\]
and the statement follows from the law of total probability. 
On the complement of this event, the definitions of $P(\Xst_T,v,c)$ and $P(\Zst_T,v,c)$ differ in at most $\eps\Delta$ summands, and by \cref{cor:universal-lower-bound} each summand is bounded by $C/\Delta$. Thus
\[
\sup_{c\in[k]}\bigl|P(\Xst_T,v,c)-P(\Zst_T,v,c)\bigr|\le \eps.
\]
This proves the claim.
\end{proof}

Next, we compare the bias fields for $X$ and $X^*$. This is the analogue of \cite[Corollary 30]{hayes2013local} for our version of bias field.

\begin{lemma}\label{lem:comparison-of-P}
For every $\delta,\eps>0$ there exist constants $\Delta_0=\Delta_0(\delta,\eps)$ and $C=C(\delta,\eps)$ such that the following holds. Let $v\in V$ and $T\in[C, C+1]$. Then, 
\[
\mb P\brac{\exists c\in[k]:\ \bigl|P(\Xst_T,v,c)-P(X_T,v,c)\bigr|>\eps}\le e^{-\Delta/C}. \qedhere
\]
\end{lemma}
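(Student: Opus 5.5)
We will compare the two bias fields summand by summand, using the natural coupling of $(X_t,\Xst_t,\Yst_t)$ through common Poisson clocks for each vertex--color pair, as in \cref{prop:comparison}. Recall that
\[
P(U_T,v,c)=\sum_{w\in N(v)}\frac{\mathbf 1\{c\in A_v(U_{\tau_w^U(T)},w)\}}{|A_v(U_{\tau_w^U(T)},w)|},
\]
and that by \cref{obs:bound-update-rates} each summand is at most $1/(\rho_\delta k)=O(1/\Delta)$. It therefore suffices to show two things off an event of probability $e^{-\Delta/C}$. First, for all but $\eps'\Delta$ neighbors $w$, the last successful refresh times agree, $\tau_w^{X}(T)=\tau_w^{\Xst}(T)$. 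Second, for all but $\eps'\Delta$ neighbors $w$, the sets $A_v(X_{\tau},w)$ and $A_v(\Xst_\tau,w)$ at that common time differ in at most $\eps'\Delta$ colors. We then choose $\eps'\ll\eps$. On the good set of $w$, the two summands differ by $O(\eps'/\Delta)$ uniformly in $c$, using that both denominators are at least $\rho_\delta k$. The summands can differ in indicator only for colors in the symmetric difference, and each such summand is $O(1/\Delta)$. Averaged over the sum this gives an error of at most $O(\eps')\le\eps$, simultaneously for all $c$. So no union bound over $c$ is needed beyond the structural event.

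\textbf{Step one (agreement of refresh times).} As in the proof of \cref{prop:comparison}, I would pass through $\Yst$ via the triangle inequality. For the pair $(\Xst,\Yst)$, the argument there already shows that, on $\mc E_1\cap\mc E_2$, the two chains' available sets at each $w\in N(v)$ differ by at most $2\eta\Delta$ colors on $[T-a,T]$. Their successful-update processes differ only at clock rings whose color lies in this discrepancy. Hence, conditional on $\mc F_v$, a mismatched refresh occurs at $w$ during $[T-a,T]$ with probability $O(\eta)$. The probability of no common refresh in $[T-a,T]$ is at most $e^{-\rho a}$. By independence across the branches below distinct $w$, a Chernoff bound gives at most $\eps'\Delta$ neighbors with differing last refresh times. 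For the pair $(X,\Yst)$, I would use the Hayes-type comparison quoted in \cref{prop:comparison}. That comparison bounds $|(X_t\oplus\Yst_t)\cap N(u)|\le\eps'\Delta$ for all $u$; the mesh argument of \cref{cor:universal-lower-bound} upgrades this to all $t$ in a window. With that bound, the rate of mismatched refreshes at each $w$ is $O(\eps')$, and the same reasoning applies.

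\textbf{Step two (agreement of available sets at the refresh time).} At the common time $\tau$, I would bound $|A_v(X_\tau,w)\oplus A_v(\Xst_\tau,w)|$ by the number of discrepant vertices in $N(w)\setminus\{v\}$. The first conclusion of \cref{prop:comparison}, made uniform over a mesh of times in $[T-a,T]$ and combined with the Poisson tail bound \eqref{eq:poisson-tail} to control changes between mesh points, gives $|(X_t\oplus\Xst_t)\cap N(w)|\le\eps'\Delta$ for every $w$ and every $t$ in the window, with the required probability.

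\textbf{Main obstacle.} I expect the main obstacle to be Step one for the pair $(X,\Yst)$. Refresh times at $w$ in the true chain depend on $|A(X_t,w)|$, which is correlated with its surroundings, so independence over $w$ is not directly available there. My first attempt would be to avoid independence altogether: bound the expected number of mismatched refreshes deterministically, as a rate at most $\max_t|(X_t\oplus\Yst_t)\cap N(w)|/k$ summed over $w$. This is $O(\eps')$ in total per unit time. I would then use a martingale (Freedman-type) concentration for the counting process of mismatches to obtain the $e^{-\Delta/C}$ tail.
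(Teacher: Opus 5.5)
Your decomposition is the paper's. Neighbors whose last refresh times differ (or that never refreshed) are few, and each contributes $O(1/\Delta)$. Your Step one is a reasonable, more detailed version of the paper's appeal to the disagreement-history estimate of \cref{prop:comparison}.

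The gap is in how you handle the remaining neighbors. For a fixed $c$ and a good $w$, the two summands differ by $O(\eps'/\Delta)$ only when $c$ lies in both available sets or in neither. When $c\in A_v(X_{\tau_w},w)\oplus A_v(\Xst_{\tau_w},w)$, the difference is $\Theta(1/\Delta)$. So you need, for every $c$ separately, that only $O(\eps'\Delta)$ neighbors $w$ have $c$ in this symmetric difference. Your Step two (at most $\eps'\Delta$ discrepant children per $w$, i.e.\ only the first conclusion of \cref{prop:comparison}) controls the number of such pairs $(w,c)$ on average over $c$, not for each $c$. Your claim that no color-specific input is needed is therefore false. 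A concrete example: if $X_0$ colors all of $S_2(v)$ with $c$, a fraction depending only on $C$ of these vertices is still unrefreshed near time $T$. Since $\Delta_0$ is chosen after $C$, almost every branch below $v$ contains such a vertex, while each branch contains at most $\eps'\Delta$ of them. So both of your Steps hold. Yet in $X$ the color $c$ is blocked at almost every $w$ at its last refresh, while in $\Xst$ (whose unrefreshed values come from the two-sided construction) it is available at a constant fraction of them. Hence $|P(\Xst_T,v,c)-P(X_T,v,c)|=\Theta(1)$. The above-suspicion hypothesis implicit in the lemma rules this out, but your argument never uses it in a color-specific way.

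The missing ingredient is the second, color-specific conclusion of \cref{prop:comparison}: the analogue of Hayes's Theorem~33, which the cited argument of Hayes's Corollary~30 relies on. It says that for each $c$, the number of $z\in B_2(v)$ with $z\in X\oplus\Xst$ and $c\in\{X(z),\Xst(z)\}$ is at most $\eps'\Delta$. Its proof is where $400$-above-suspicion enters: at most $400\Delta$ vertices of $B_2(v)$ start with color $c$, and each survives unrefreshed with probability at most $e^{-\rho C}$. You need this bound not only at time $T$ but throughout the window $[T-a,T]$ containing the good $\tau_w$; rerun the survival bound with $T-a$ in place of $T$, or use a mesh. Then, since the sets $N(w)\setminus\{v\}$ are disjoint for distinct $w\in N(v)$ by girth, at most $\eps'\Delta$ good neighbors have $c$ in their symmetric difference, each costing $O(1/\Delta)$. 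A union bound over $c$ then finishes the argument.
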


\begin{proof}
For lightness of notation, let $\tau_y := \tau_y^{X}(T)$ and $\tau_y^* := \tau_y^{X^*}(T)$.
Let $D = \{ y \in N(v) \mid \tau_y \ne \tau_y^* \text{ or }\tau_y < 0 \}$. By the triangle inequality,
\begin{align*}
    |P(X_T^*,v,c) - P(X_T,v,c)| & \leq \sum_{y \in N(v)} \left| \frac{\mbf1\{c \in A(X_{\tau_y^*}^*,y)\}}{|A(X_{\tau_y^*}^*,y)|} - \frac{\mbf1\{c \in A_v(X_{\tau_y},y)\}}{|A_v(X_{\tau_y},y)|} \right| \\
    & = \paren{ \sum_{y \in N(v) \cap D} + \sum_{y \in N(v) \setminus D}} \left| \frac{\mbf1\{c \in A(X_{\tau_y^*}^*,y)\}}{|A(X_{\tau_y^*}^*,y)|} - \frac{\mbf1\{c \in A_v(X_{\tau_y},y)\}}{|A_v(X_{\tau_y},y)|} \right|  \\
\end{align*}

The same disagreement-history estimate used in the proof of \cref{prop:comparison}, now applied to the coupled successful-refresh clocks and counting a vertex with no refresh in $[0,T]$ as bad, gives $\mb P[|D|>\eps\Delta]\leq e^{-\Delta/C}$.  On the complementary event, we bound the first sum by
\[ \sum_{y \in N(v) \cap D} \left| \frac{\mbf1\{c \in A(X_{\tau_y^*}^*,y)\}}{|A(X_{\tau_y^*}^*,y)|} - \frac{\mbf1\{c \in A_v(X_{\tau_y},y)\}}{|A_v(X_{\tau_y},y)|} \right| \leq |N(v) \cap D| \cdot \frac{2}{A_\mr{min}} \leq \frac{2e\Delta\eps}{k} \leq 6\eps. \]

For the second sum, notice that if $y \notin D$, then $0 < \tau_y = \tau_y^*$. At this point, we conclude using the same argument as in the proof of \cite[Corollary~30]{hayes2013local}.
\end{proof}

We will prove local relations for the bias fields for $X, X^*, Z^*$. We will first establish such a relation for $Z^*$, and then transfer it to $X^*$ and $X$ via the previously established comparison results.

\begin{lemma}\label{lem:P-relative-for-Z*}
For every $\delta, \eps > 0$, there exist constants $\Delta_0 = \Delta_0(\delta, \eps)$ and $C = C(\delta, \eps)$ such that the following holds. Let $v \in V$, let $T \in [C, C+1]$, and let $c \in [k]$. Then,
    \[ \mb{P}\left[ \left| P(Z_T^*,v,c) - \sum_{w \in N(v)} \frac{\exp(- P(X_{\tau_w}^*,w,c) )}{|A(X_{\tau_w}^*,w)|} \right| > \eps \right] \leq \exp(-\Delta/C),\]
    where $\tau_w := \tau_w^{Z^*}(T)$.
\end{lemma}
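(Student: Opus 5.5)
The plan is to condition on enough information that the summands of $P(\Zst_T,v,c)$ become independent, with conditional means that can be written explicitly. Using the preceding remark, $A_v(\Zst_t,w)=A(\Xst_t,w)$ for $w\in N(v)$, so
\[
P(\Zst_T,v,c)=\sum_{w\in N(v)}\frac{\mathbf 1\{c\in A(\Xst_{\tau_w},w)\}}{|A(\Xst_{\tau_w},w)|},\qquad \tau_w=\tau_w^{\Zst}(T).
\]
Let $\mc G$ be the $\sigma$-algebra generated by three pieces of data: $\mc F_v$, the Poisson refresh-time processes $\mc T_z$ of all $z\in S_2(v)$, and the refresh processes $\mc T_w^{\Zst}$ for $w\in N(v)$.

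The structural claim I need is that, conditional on $\mc G$, the colors $\{\Xst_t(z): z\in S_2(v)\}$ are independent across $z$. Moreover, $\Xst_t(z)$ should be uniform on $A(\Xst_{\tau_z(t)},z)$, where $\tau_z(t)$ is the last point of $\mc T_z$ before $t$.

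The argument for this claim rests on two facts from the construction.
\begin{itemize}
\item For $z\in S_2(v)$, the rate of $\mc T_z$ and the set $A(\Xst_t,z)$ depend only on the $S_3(v)$ trajectories, which are $\mc F_v$-measurable. Revealing $\mc T_z$ therefore leaks nothing about the colors chosen at $z$.
\item The rates $\lambda_w(t)$ of $\mc T_w^{\Zst}$ are $\mc F_v$-measurable by design. This is exactly the purpose of $\Zst$ noted in \cref{rem:heatbath-vs-metropolis}. So revealing the $\tau_w$'s leaks nothing about $S_2(v)$ colors either.
\end{itemize}
By girth $\ge 7$, the branches $N(w)\setminus\{v\}$ for distinct $w\in N(v)$ are disjoint. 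Hence the summands above are conditionally independent given $\mc G$, and each is bounded by $1/(\rho_\delta k)$ by \cref{obs:bound-update-rates}.

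Next I compute each summand's conditional mean. Set $q_z:=\mathbf 1\{c\in A(\Xst_{\tau_z(\tau_w)},z)\}/|A(\Xst_{\tau_z(\tau_w)},z)|$, which is $\mc G$-measurable. This gives
\[
\mb P[c\in A(\Xst_{\tau_w},w)\mid\mc G]=\prod_{z\in N(w)\setminus\{v\}}(1-q_z)=\exp\bigl(-P(\Xst_{\tau_w},w,c)\bigr)\bigl(1+O(1/\Delta)\bigr).
\]
Here I use $\sum_z q_z=P(\Xst_{\tau_w},w,c)$ (the directed graph makes $A_w$ coincide with $A$ for the children $z$), together with $q_z=O(1/k)$, so that $\sum_z q_z^2=O(1/\Delta)$.

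For the denominator, \cref{lem:dyer-concentration} applied conditionally on $\mc G$ shows that $|A(\Xst_{\tau_w},w)|$ is within $\Delta^{2/3}$ of its conditional mean, except with probability $e^{-\Delta^{1/3}}$. I then apply \cref{obs:concentration} with $X=\mathbf 1\{c\in A\}$ and $Y=|A|$ to replace the conditional mean of the ratio by
\[
\frac{\mb P[c\in A\mid \mc G]}{\mb E[|A|\mid\mc G]}\,(1+o(1)).
\]
Finally, I swap $\mb E[|A|\mid\mc G]$ for the realized $|A(\Xst_{\tau_w},w)|$, using the same concentration bound together with a union bound over $w\in N(v)$. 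The total error from these steps is $O(\Delta\cdot\Delta^{-1/3}/k)=o(1)$. The target sum contains the realized quantities $P(\Xst_{\tau_w},w,c)$ and $|A(\Xst_{\tau_w},w)|$. The first is already $\mc G$-measurable, and the second is the one just swapped in, so the target sum equals $\sum_w\mb E[\text{summand}_w\mid\mc G]$ up to $o(1)$ on a high-probability event.

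The conclusion then follows from Hoeffding's inequality for the $\Delta$ conditionally independent summands, each bounded by $1/(\rho_\delta k)$. This gives deviation at most $\eps/2$ with probability $1-2\exp(-\Omega(\eps^2\rho_\delta^2\Delta))$, and averaging over $\mc G$ gives the bound $\exp(-\Delta/C)$.

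I expect the main obstacle to be the first structural claim: checking rigorously that conditioning on the $\Zst$ refresh times at $N(v)$ and on the $S_2(v)$ clocks leaves the $S_2(v)$ colors independent and uniform on the recorded available sets. A related subtlety is whether $\tau_w$ (a $\Zst$ time) is a legitimate evaluation time for $\Xst$ on $S_2(v)$. I would handle this by writing the joint law explicitly via Poisson thinning, so that every rate involved is $\mc F_v$-measurable.
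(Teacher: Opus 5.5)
Your argument is essentially the paper's proof. Both condition on the outer $\sigma$-field and the refresh times, and both use that the $\Zst$ rates $\lambda_w$ are $\mc F_v$-measurable, so the $\tau_w$ reveal nothing about the colors on $S_2(v)$. Both then write $\mb P[c\in A(\Xst_{\tau_w},w)]$ as a product equal to $e^{-P(\Xst_{\tau_w},w,c)}+O(1/\Delta)$, handle the ratio by concentration of $|A(\Xst_{\tau_w},w)|$, and finish with Hoeffding over the conditionally independent summands. Two small differences work in your favor. You condition on all of $\mc T_z$, $z\in S_2(v)$, at once, whereas the paper conditions on $\{\tau_w\}$ and then on the $\sigma_z$; this makes $P(\Xst_{\tau_w},w,c)$ $\mc G$-measurable and is slightly cleaner. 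You also bound summands by $1/(\rho_\delta k)$ via \cref{obs:bound-update-rates}, which avoids the lower-bound event from \cref{cor:universal-lower-bound}.

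One step, as written, does not give the stated probability. You use a window of $\Delta^{2/3}$ for $|A(\Xst_{\tau_w},w)|$. \cref{lem:dyer-concentration} then gives failure probability $2e^{-\Delta^{4/3}/(2k)}$, which is at best $e^{-\Omega(\Delta^{1/3})}$ and is vacuous once $k\ge\Delta^{4/3}$. Your union bound over $w\in N(v)$ adds this term to the Hoeffding failure. So your final bound is $e^{-\Omega(\Delta^{1/3})}$, not $\exp(-\Delta/C)$, and "averaging over $\mc G$" does not fix that.

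The repair is to use a window $\theta=\eta k$ with a constant $\eta=\eta(\eps,\delta)\le\rho_\delta/4$. Apply \cref{obs:concentration} with $\min Y\ge\rho_\delta k$. Replacing the conditional mean of the ratio by $\mb P[c\in A\mid\mc G]/|A(\Xst_{\tau_w},w)|$ then costs at most $4\eta/(\rho_\delta^2k)$ per $w$, so at most $4\eta/\rho_\delta^2\le\eps/4$ in total once $\eta\le\eps\rho_\delta^2/16$. The failure probability becomes $2\Delta e^{-\eta^2k/2}\le e^{-\Delta/C}$, uniformly in $k\ge(1+\delta)\Delta$. With this change your proof goes through.
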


\begin{proof}
Condition on $\mc F_v$ and on the collection of last ring times
$
\set{\tau_w:w\in N(v)}.
$
For $w\in N(v)$, write
\[
\xi_w:=
\frac{\mathbf 1\{c\in A_v(\Zst_{\tau_w},w)\}}{|A_v(\Zst_{\tau_w},w)|}.
\]
Then
\[
P(\Zst_T,v,c)=\sum_{w\in N(v)}\xi_w.
\]
Because $\Zst$ and $\Xst$ coincide on $S_2(v)$, the denominator is
\[
|A_v(\Zst_{\tau_w},w)|=|A(\Xst_{\tau_w},w)|,
\]
where we emphasize that the time $\tau_w$ on the right hand side is still $\tau_w^{Z^*}(T)$. 
Moreover, for fixed $w$ the event $\{c\in A_v(\Zst_{\tau_w},w)\}$ depends only on the colors of the vertices in $N(w)\setminus\{v\}$, and those vertices lie in pairwise disjoint branches for different $w$. Hence, conditioned on $\mc F_v$ and on the times $\tau_w$, the random variables $\xi_w$ are independent over $w\in N(v)$.

Fix $w\in N(v)$ and write $\tau= \tau_w = \tau_w^{\Zst}(T)$. For each $z\in N(w)\setminus\{v\}$, let $\sigma_z$ be the last successful refresh time of $z$ before $\tau$ in the chain $\Zst$ (or $\Xst$, since these chains coincide outside of $B_1(v)$). Conditional on $\mc F_v$ and the times $\{\sigma_z:z\in N(w)\setminus\{v\}\}$, the indicators
\[
\mathbf 1\{\Xst_{\tau}(z)=c\}
\qquad (z\in N(w)\setminus\{v\})
\]
are independent, and
\[
\mb P[\Xst_{\tau}(z)=c\mid \mc F_v,\{\sigma_y\}]
=
\frac{\mathbf 1\{c\in A(\Xst_{\sigma_z},z)\}}{|A(\Xst_{\sigma_z},z)|}.
\]
Therefore
\begin{align*}
&\mb P\brac{c\in A(\Xst_{\tau},w)\mid \mc F_v,\tau,\{\sigma_z\}}\\
&\qquad=
\prod_{z\in N(w)\setminus\{v\}}\paren{1-
\frac{\mathbf 1\{c\in A(\Xst_{\sigma_z},z)\}}{|A(\Xst_{\sigma_z},z)|}}
=
\exp\paren{-P(\Xst_{\tau},w,c)+O\paren{\sum_{z\in N(w)\setminus\{v\}}\frac1{|A(\Xst_{\sigma_z},z)|^2}}}.
\end{align*}
On the event from \cref{cor:universal-lower-bound}, every denominator is at least $(1-2\eps)ke^{-1/(1+\delta)}$, so the quadratic error is $O(\Delta/k^2)=O(1/\Delta)$. Hence
\[
\mb P\brac{c\in A(\Xst_{\tau},w)\mid \mc F_v,\tau,\{\sigma_z\}}
=e^{-P(\Xst_{\tau},w,c)}+O\paren{\frac1\Delta}.
\]
Dividing by the denominator $|A(\Xst_{\tau},w)|$ and taking conditional expectations, together with concentration of $|A(X^*_{\tau_w},w)|$ around its conditional mean, gives
\[
\Biggl|\mb E\brac{\xi_w\mid \mc F_v,\{\tau_y :y\in N(v)\}}
-
\frac{e^{-P(\Xst_{\tau_w},w,c)}}{|A(\Xst_{\tau_w},w)|}\Biggr|
\le \frac{C}{\Delta^2}.
\]
Summing over $w\in N(v)$ yields
\begin{equation}\label{eq:cond-exp-local-relation-new}
\Biggl|\mb E\brac{P(\Zst_T,v,c)\mid \mc F_v,\{\tau_y\}}
-
\sum_{w\in N(v)}\frac{e^{-P(\Xst_{\tau_w},w,c)}}{|A(\Xst_{\tau_w},w)|}\Biggr|
\le \frac{C}{\Delta}.
\end{equation}
Finally, on the same lower-bound event each $\xi_w$ is bounded by $C/\Delta$, and the family $(\xi_w)_{w\in N(v)}$ is conditionally independent. Hoeffding's inequality therefore gives
\[
\mb P\brac{\Bigl|P(\Zst_T,v,c)-\mb E[P(\Zst_T,v,c)\mid \mc F_v,\{\tau_y\}]\Bigr|>\eps/2\ \middle|\ \mc F_v,\{\tau_y\}}
\le e^{-\Delta/C}.
\]
Combining this with \eqref{eq:cond-exp-local-relation-new} proves the lemma.
\end{proof}

\begin{corollary}
\label{lem:P-relative-for-X*}
    For every $\delta, \eps > 0$, there exist constants $\Delta_0 = \Delta_0(\delta, \eps)$ and $C = C(\delta, \eps)$ such that the following holds. Let $v \in V$, let $T \in [C, C+1]$, and let $c \in [k]$. Then,
    \[ \mb{P}\left[ \left| P(X_T^*,v,c) - \sum_{w \in N(v)} \frac{\exp(- P(X_{\tau_w}^*,w,c) )}{|A(X_{\tau_w}^*,w)|} \right| > \eps \right] \leq \exp(-\Delta/C),\]
    where $\tau_w := \tau_w^{X^*}(T)$.
\end{corollary}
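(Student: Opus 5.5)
This follows from \cref{lem:P-relative-for-Z*} by transferring both sides of the recurrence from $\Zst$ to $\Xst$. I would work under the coupling of \cref{lem:Xstar-vs-Zstar}, applied with a smaller parameter $\eps' \ll \eps$. This gives an event $\mc E_1$ of probability at least $1-e^{-\Delta/C}$ on which two things hold. First, $|P(\Xst_T,v,c)-P(\Zst_T,v,c)|\le \eps'$. Second, the set $D:=\{w\in N(v):\tau_w^{\Xst}(T)\neq \tau_w^{\Zst}(T)\}$ has size at most $\eps'\Delta$. On $\mc E_1$, \cref{lem:P-relative-for-Z*} (again with $\eps'$) controls $P(\Xst_T,v,c)$ up to $2\eps'$ by the sum
\[
\Sigma^{\Zst}:=\sum_{w\in N(v)}\frac{\exp(-P(\Xst_{\tau^{\Zst}_w},w,c))}{|A(\Xst_{\tau^{\Zst}_w},w)|}.
\]
Note that this sum has $\Zst$-ring times but is evaluated on $\Xst$ (outside $B_1(v)$ the two processes coincide). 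It then remains to compare $\Sigma^{\Zst}$ with the target sum $\Sigma^{\Xst}$, which uses $\tau_w=\tau_w^{\Xst}(T)$.

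For $w\notin D$ the two ring times coincide, so the summands are identical. For $w\in D$, each summand lies in $[0,1/|A(\Xst_{\cdot},w)|]$. I would intersect with the event of \cref{cor:universal-lower-bound} (the $\Xst$ version, which needs no assumption on the initial labeling). I need that lower bound on $|A(\Xst_t,w)|$ uniformly over the relevant times and over all $w\in N(v)$; for this I would take a union bound over the $\le\Delta$ neighbors, which only costs a factor $\Delta$ in the $e^{-\Delta/C}$ bound. On that event each summand is at most $C'/k$. Hence the $D$-terms change the sum by at most $|D|\cdot C'/k\le C'\eps'\Delta/k=O(\eps')$. Choosing $\eps'$ small relative to $\eps$ and adjusting $C$ and $\Delta_0$ gives the claim with total failure probability $O(\Delta)e^{-\Delta/C}\le e^{-\Delta/C''}$.

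The main obstacle is a time-range issue. \cref{cor:universal-lower-bound} is stated on the window $[C,C+1]$, but the ring times $\tau_w$ may precede it. I would handle this by instead using the crude bound $|A(\cdot,w)|\ge \rho_\delta k$ from \cref{obs:bound-update-rates}. That bound holds at all times deterministically, and it already gives each $D$-summand at most $1/(\rho_\delta k)$. This suffices, since $\eps'$ may depend on $\delta$. Consequently the only probabilistic inputs are \cref{lem:Xstar-vs-Zstar} and \cref{lem:P-relative-for-Z*}, and no union bound over neighbors is needed.
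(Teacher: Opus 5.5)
Your argument is correct and is essentially the paper's proof: you combine \cref{lem:P-relative-for-Z*} with the coupling of \cref{lem:Xstar-vs-Zstar}, then absorb the at most $\eps'\Delta$ neighbors whose last refresh times differ. For those summands you use the deterministic bound $|A(\cdot,w)|\ge \rho_\delta k$ from \cref{obs:bound-update-rates} rather than \cref{cor:universal-lower-bound} as the paper does, which is slightly cleaner because it avoids the issue that the times $\tau_w$ may lie outside the window $[C,C+1]$.
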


\begin{proof}
    This follows by combining \cref{lem:P-relative-for-Z*} and \cref{lem:Xstar-vs-Zstar}, using \cref{cor:universal-lower-bound} to control the change in sum when the last-refresh times differ on at most $\eps\Delta$ vertices. 
\end{proof}

\begin{corollary}\label{lem:P-local-relation-Xt}
 For every $\delta, \eps > 0$, there exist constants $\Delta_0 = \Delta_0(\delta, \eps)$, $C = C(\delta, \eps)$, and $R = R(\delta, \eps)$ such that the following holds. Let $v \in V$, and assume that $X_0$ is $400$-above-suspicion for radius $R$ at $v$. Then, with probability at least $1-\exp(-\Delta/C)$, the following holds simultaneously for every vertex $u \in B_R(v)$, every time $T \in [C, C+1]$, and every color $c \in [k]$
\[\left|P(X_T,u,c) - \sum_{w \in N(u)} \frac{\exp(- P(X_{\tau_w},w,c) )}{|A_u(X_{\tau_w},w)|} \right| \leq \eps,\]
where $\tau_w := \tau_w^X(T)$.
\end{corollary}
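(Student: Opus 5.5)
The plan is to transfer \cref{lem:P-relative-for-X*}, which is proved for a fixed center $u$, a fixed time $T$ and a fixed color $c$, over to the original chain $X$. Three things then have to be made uniform: the center $u$ ranges over $B_R(u)$, the color ranges over $[k]$, and the time ranges over the window $[C,C+1]$.

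First, fix $u\in B_R(v)$, $T\in[C,C+1]$ and $c\in[k]$. Since $X_0$ is $400$-above-suspicion for radius $R$ at $v$, it is also above suspicion at $u$ for a smaller radius $R'$, provided we choose $R$ large relative to the radius needed in \cref{prop:comparison}. So the comparison machinery is available with $u$ as the center, using the directed graph $G_{\mathrm{in}}(u,3)$ and the processes $X^*$, $Z^*$ built around $u$.

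Next I handle the two terms of the relation separately, each with the same disagreement-history argument used in \cref{lem:comparison-of-P}. Let $D_u$ be the set of $w\in N(u)$ such that $\tau_w^X(T)\ne\tau_w^{X^*}(T)$ or $w$ has not refreshed in $[0,T]$.
\begin{itemize}
\item The left-hand term $P(X_T,u,c)$ is within $\eps$ of $P(X^*_T,u,c)$ by \cref{lem:comparison-of-P}.
\item For the sum, the summands indexed by $w\notin D_u$ share the same time $\tau_w$. For these, I need $|A_u(X_{\tau_w},w)|$ to be close to $|A(X^*_{\tau_w},w)|$, and $P(X_{\tau_w},w,c)$ to be close to $P(X^*_{\tau_w},w,c)$.
\item The denominators are handled by the first conclusion of \cref{prop:comparison}, which says each neighborhood contains at most $\eps\Delta$ discrepancies. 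This bound has to hold at the random time $\tau_w$, so I need it uniformly over the window.
\item The bias fields at $w$ are handled by applying \cref{lem:comparison-of-P} with $w$ as the center, again uniformly in time.
\item The summands with $w\in D_u$ number at most $\eps\Delta$ with probability $1-e^{-\Delta/C}$, and each is $O(1/\Delta)$ on the lower-bound event of \cref{cor:universal-lower-bound}. Their total contribution is therefore $O(\eps)$.
\end{itemize}
Since $|e^{-a}-e^{-b}|\le|a-b|$ and all denominators are $\Theta(k)$, the sum changes by $O(\eps)$ after rescaling $\eps$.

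For uniformity, I use a union bound over the $\le\Delta^{R+1}$ vertices $u$, the $k$ colors, and a mesh of $O(\Delta/\eps)$ times in $[C,C+1]$. All of these counts are polynomial in $\Delta$, so they are absorbed by $e^{-\Delta/C}$ after enlarging $C$. Between mesh points I use a Poisson-tail argument as in the proof of \cref{cor:universal-lower-bound}: with high probability, only $O(\eps\Delta)$ successful refreshes occur in $B_2(u)$ during any mesh interval. Each refresh of $w\in N(u)$ changes one summand on both sides by $O(1/\Delta)$, and each refresh in $S_2(u)$ changes nothing already frozen at $\tau_w$. Both sides therefore move by $O(\eps)$.

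The hardest step is that the comparison results evaluate $X^*$ at the random earlier times $\tau_w$, not at $T$. These times may fall below $C$, possibly near $0$, where \cref{prop:comparison} gives no guarantee. I would first discard the $w$ with $\tau_w<T-a$; a Chernoff bound shows there are at most $\eps\Delta$ of them, as in the proof of \cref{prop:comparison}. This leaves only times in $[C-a,C+1]$. I would then shrink $a$ relative to $C$ and rerun the mesh and comparison argument on this slightly enlarged window.
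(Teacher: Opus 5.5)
Your architecture matches the paper's two-line proof. You transfer \cref{lem:P-relative-for-X*} (for the recursive chain on $G_{\mathrm{in}}(u,3)$) to $X$ at a fixed time via the comparison machinery, then pass to all $T\in[C,C+1]$ by the mesh and Poisson-tail argument of \cref{cor:universal-lower-bound}, with a union bound over $u$ and $c$. Your treatment of the left-hand side, of the summands indexed by $D_u$, of the denominators, and of the look-back below time $C$ is fine.

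The step that fails is ``the bias fields at $w$ are handled by applying \cref{lem:comparison-of-P} with $w$ as the center.'' The right-hand side of \cref{lem:P-relative-for-X*} contains $P(X^*_{\tau_w},w,c)$ for the chain $X^*$ centered at $u$. \cref{lem:comparison-of-P} at center $w$ instead compares $P(X_t,w,c)$ with the bias field of a different process, the recursive chain on $G_{\mathrm{in}}(w,3)$. Bridging the two auxiliary chains would itself require the estimate $P(X^*_{\tau_w},w,c)\approx P(X_{\tau_w},w,c)$ that you are trying to prove.

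What you need, and what the paper's pointer to the \emph{proof} of \cref{lem:comparison-of-P} amounts to, is to rerun that disagreement-history argument one level down with the center kept at $u$:
\begin{itemize}
\item Write $P(\cdot_{\tau_w},w,c)$ as a sum over $z\in N(w)\setminus\{u\}\subseteq S_2(u)$ of terms evaluated at the last refresh $\sigma_z$ of $z$ before $\tau_w$.
\item Couple these refreshes through the shared vertex-color clocks. Show that only $O(\eps\Delta)$ of the $z$ have mismatched $\sigma_z$ or $\sigma_z<\tau_w-a$; your look-back issue recurs at this level.
\item For the remaining $z$, control $|A_w(\cdot,z)|$ by the first conclusion of \cref{prop:comparison} (center $u$) applied at the vertex $z$. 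Control membership of $c$ in $A_w(\cdot,z)$ by its second, color-specific conclusion applied at the vertex $w$. Both must hold uniformly over the time window. Girth $\geq 7$ ensures each discrepant vertex of $S_3(u)$ is adjacent to only one such $z$.
\end{itemize}
This yields $|P(X^*_{\tau_w},w,c)-P(X_{\tau_w},w,c)|=O(\eps)$ for all but $O(\eps\Delta)$ of the $w\in N(u)$. That suffices, since each summand is $O(1/\Delta)$ and $|e^{-a}-e^{-b}|\le|a-b|$.

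A smaller bookkeeping point concerns the radius. Above-suspicion for radius $R$ at $v$ gives above-suspicion for radius $R'$ at $u$ only when $B_{R'}(u)\subseteq B_R(v)$, i.e.\ $u\in B_{R-R'}(v)$. Taking $R$ large does not help for $u$ near the boundary of $B_R(v)$. You need the hypothesis on a ball of radius $R+R'$, or the conclusion only on $B_{R-R'}(v)$. The paper's statement uses a single $R$ in both roles and glosses over the same point; this is easily repaired by strengthening the hypothesis.
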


\begin{proof}
For a fixed time $T \in [C, C+1]$, this follows by combining \cref{lem:P-relative-for-X*} with the (proof of) \cref{lem:comparison-of-P}. To upgrade this to a bound over the entire interval $[C,C+1]$, we use a mesh argument as in the proof of \cref{cor:universal-lower-bound}.
\end{proof}

\begin{proposition}\label{prop:concentration-of-P}
For every $\delta, \eps > 0$, there exist constants $\Delta_0 = \Delta_0(\delta, \eps)$, $C = C(\delta, \eps)$, and $R = R(\delta, \eps)$ such that the following holds. Let $v \in V$, let $T \in [C, C+1]$, and assume that $X_0$ is $400$-above-suspicion for radius $R$ at $v$. Then, with probability at least $1-e^{-\Delta/C}$, the following holds simultaneously for every $u \in B_2(v)$ and every color $c \in [k]$:
    \[ |P(X_T,u,c) - d(u)/k| \leq \eps, \qquad |P(X_T^*, u,c) - d(u)/k| \leq \eps,\]
    where in the second estimate, $X^*$ denotes the recursive process centered at $u$.
    Moreover, for each $u$, the second estimate holds simultaneously for all $c\in[k]$ with conditional probability at least $1-e^{-\Delta/C}$ given the corresponding outer-history sigma-field $\mc F_u$, uniformly in the outer history.
\end{proposition}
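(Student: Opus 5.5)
The plan is to exploit the local recurrence of \cref{lem:P-local-relation-Xt} (and \cref{lem:P-relative-for-X*} for $X^*$) as an approximate fixed-point equation, and to show that it forces $P(\cdot,u,c)\approx d(u)/k$ by a contraction argument over a ball of radius $R$, in the spirit of Hayes's argument for the heat-bath bias field. First, apply \cref{lem:P-local-relation-Xt} with a small parameter $\eps'\ll\eps$ on the ball $B_R(v)$ and at all times of a slightly enlarged window. Combine this with \cref{cor:universal-lower-bound} applied at every $u\in B_R(v)$ (union bound over $\Delta^{R}$ vertices, absorbed into $e^{-\Delta/C}$ since $R$ is a constant). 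The result is an event of probability $1-e^{-\Delta/C}$ on which, for all $u\in B_R(v)$, $c$, and relevant times,
\[
P(X_T,u,c)=\sum_{w\in N(u)}\frac{e^{-P(X_{\tau_w},w,c)}}{|A_u(X_{\tau_w},w)|}\pm\eps' .
\]
On the same event, the two-sided version of the availability estimate gives $|A_u(X_{\tau_w},w)|\approx k e^{-P(X_{\tau_w},w,\cdot)\text{-average}}$. More concretely, $|A(X,w)|=\sum_{c'}\mathbf 1\{c'\in A\}$, and summing the same recurrence over colors gives $|A_u(X_{\tau_w},w)|\approx\sum_{c'}e^{-P(X_{\tau_w},w,c')}$.

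Second, set up an iterative sandwich. Let $L_j$ and $U_j$ be the inf and sup of $P(X_t,u,c)-d(u)/k$ over $u\in B_{R-j}(v)$, over colors, and over times in a window of length growing with $j$ (the last-refresh times $\tau_w$ lie in an earlier window, which is why the windows must nest). Plugging the bounds into the recurrence, the map $x\mapsto e^{-x}$ together with the normalization by $\sum_{c'}e^{-P(\cdot,w,c')}$ makes the deviation contract. The key identity is that if all $P(\cdot,w,c)=d(w)/k$, the right side equals $\sum_{w}1/k=d(u)/k$. A perturbation of size $\eta$ in the $P$'s at the children changes the right side by at most a factor of about $d(u)/k\cdot\eta\le\eta/(1+\delta)$, up to $O(\eps')$ errors; the normalization cancels the common shift, and the $e^{-x}$ derivative is at most $1$. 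Hence $U_{j+1}-L_{j+1}\le (1+\delta)^{-1}(U_j-L_j)+O(\eps')$. Starting from the a priori bound $U_0-L_0=O(1)$, which holds since every summand of $P$ is at most $1/(\rho k)$ and so $P\le 1/\rho$, after $R=O_\delta(\log(1/\eps))$ iterations the deviation is below $\eps$ on $B_2(v)$. This is also what dictates the choice of $R$.

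The main obstacle is the time bookkeeping: the recurrence evaluates children at the random times $\tau_w\le T$, and those could lie before time $0$ or before the window where the estimates hold. I would handle this with \cref{obs:bound-update-rates}. Since every vertex refreshes at rate at least $\rho_\delta$, with probability $1-e^{-\Delta/C}$ all but $\eps'\Delta$ neighbors of each vertex in the ball have refreshed within time $a$ before $T$. The exceptional ones contribute $O(\eps')$ to $P$, so I would run the iteration over windows $[C-ja,C+1]$ and choose $C\ge Ra$. For the $X^*$ statement the same contraction applies using \cref{lem:P-relative-for-X*} centered at each $u$. The conditional statement given $\mc F_u$ should follow because the proof of \cref{lem:P-relative-for-Z*} already works conditionally on $\mc F_u$, and the inner vertices of $X^*$ are generated only from the outer history plus independent randomness. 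So I would repeat the contraction conditionally, using only quantities measurable with respect to the outer branches.
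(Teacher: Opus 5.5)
Your proof has a genuine gap at its core: the one-level contraction step fails in the regime where you start it. (The setup itself is reasonable: applying \cref{lem:P-local-relation-Xt} on $B_R(v)$, using nested time windows, and iterating inward over $R$ levels.)

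\textbf{The contraction step.} You claim $U_{j+1}-L_{j+1}\le(1+\delta)^{-1}(U_j-L_j)+O(\eps')$. Your justification (``$e^{-x}$ has derivative at most $1$, the normalization cancels the common shift'') is only the linearization of the recurrence at $P\equiv d/k$. But you start the iteration from an a priori spread of order $1$, and there a box bound on the children's deviations gives no contraction. The derivative of $e^{-P(w,c)}/\sum_{c'}e^{-P(w,c')}$ in $P(w,c)$ is about $e^{-(P(w,c)-d(w)/k)}/k$, not $1/k$.

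Concretely, suppose every child $w$ has one colour with deviation $-a$, one with $+a$, and all others $0$. This is consistent with your box and with $\sum_c P(w,c)=d(w)$. Your normalized right-hand side then produces at $u$ a spread of $\frac{d(u)}{k}\cdot 2\sinh a$ from an input spread of $2a$. When $d(u)=\Delta$, the ratio $\frac{d(u)}{k}\cdot\frac{\sinh a}{a}$ exceeds $1$ once $a\gtrsim\sqrt{6\delta}$. In fact, iterating the optimal box bounds (even using $\sum_cP(w,c)=d(w)$) stalls at a nonzero fixed point when $\delta$ is small.

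What excludes such configurations is that the children's $P$'s themselves satisfy the recurrence, and exploiting this is the idea you are missing. The paper does it as follows.
\begin{itemize}
\item Fix a pair of colours $c,c'$, and let $\alpha$ bound $|P(\cdot,c)-P(\cdot,c')|$ on the outer levels.
\item Apply the recurrence one level down to get the multiplicative comparison $P^+\le e^{\alpha}P^-+21\eps$.
\item Use the elementary bound $e^{-x}-e^{-xe^{\alpha}}\le \alpha/e$, valid for all $x\ge 0$ and all $\alpha\ge 0$.
\item Combine it with $|A_w(X_{\tau_z},z)|\ge(1-\eps)k/e$ from \cref{cor:universal-lower-bound}.
\end{itemize}
This gives $\alpha_{\mathrm{new}}\le 60\eps+\frac{\Delta}{(1-\eps)k}\,\alpha\le 60\eps+(1-\delta/2)\alpha$ for every $\alpha\le 3$. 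That is a genuine contraction from the $O(1)$ starting bound, not just an infinitesimal one.

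\textbf{The normalization identity.} Separately, your right-hand side relies on $|A_u(X_{\tau_w},w)|\approx\sum_{c'}e^{-P(X_{\tau_w},w,c')}$, which you have not established. Summing the displayed recurrence over $c$ only yields the aggregate relation $\sum_c P(X_T,u,c)\approx\sum_{w\in N(u)}\frac{\sum_{c'}e^{-P(X_{\tau_w},w,c')}}{|A_u(X_{\tau_w},w)|}$, with error $k\eps'$. This says nothing about individual $w$. The ``two-sided availability estimate'' $|A|\approx k e^{-d/k}$ is \eqref{eq:continuous-number-colors}, which the paper derives from this very proposition, so invoking it here is circular. A per-$w$ version would need its own argument: conditional availability probabilities $\approx e^{-P}$ at the $Z^*$-refresh times, then \cref{lem:dyer-concentration}, then the comparison lemmas.

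The paper never needs this identity. It uses only the lower bound on denominators. It then converts pairwise closeness of the values $P(X_T,u,\cdot)$ into closeness to $d(u)/k$ via the exact identity $\sum_{c}P(X_T,u,c)=\#\{w\in N(u):\tau_w>-\infty\}$, which equals $d(u)-O(\eps\Delta)$ with high probability.
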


\begin{proof}
We will only prove the result for $X$, since the result for $X^*$ follows identically.  Indeed, all the preceding estimates for $X^*$ were proved after conditioning on an arbitrary realization of $\mc F_u$, with constants uniform in that realization.

Fix $u \in B_2(v)$ and notice that, if $\tau_w > 0$ for all $w$,
    \[ \sum_{c \in [k]} P(X_T,u,c) = \sum_{c \in [k]} \sum_{w \in N(u)} \frac{\mbf{1}\{c \in A_u(X_{\tau_w},w)\}}{|A_u(X_{\tau_w},w)|} = \sum_{w \in N(u)} \frac{\sum_{c \in [k]} \mbf{1}\{ c \in A_u(X_{\tau_w},w) \}}{|A_u(X_{\tau_w},w)|} = d(u). \]

    Since $|\{ w : \tau_w < 0 \}| \preceq \on{Binom}(d(u),e^{-\rho_\delta T})$, by a Chernoff bound, this is at most $\eps\Delta/2$ except with probability $e^{-\Omega(\Delta)}$.  Thus, we need only show $|P(X_T,u,c) - P(X_T,u,c')| < \eps$ for any $c,c'$ with probability at least $1-\exp(-\Delta/C)$.  We can then union bound over the $\binom k2$ pairs.
    
    We will discretize time by cutting it into chunks of size $\eps^{-1}$ and define
    \[ \alpha_\ell := \max_{\substack{w \in B_\ell(v) \\ T - \ell \eps^{-1} \leq t \leq T}} |P(X_t,w,c) - P(X_t,w,c')|. \]

  Throughout, we will work on the event in \cref{cor:universal-lower-bound}, invoked with $\eps/2$, for all $w \in B_R(v)$. Set $A_\mr{min}:=(1-\eps)k/e$. As we may assume $\eps\leq 1/20$, on this event $|A(X_t,w)| \geq A_\mr{min} \geq k/3$; moreover, $A_y(X_t,w)\supseteq A(X_t,w)$ for every neighbor $y$ of $w$, so every denominator below is at least $A_\mr{min}$. Consequently, $P(X_{t},w,c) \leq 3$ for any $c$, and thus $\alpha_R \leq 3$. We will now inductively shrink $\alpha_\ell$ until we have a sufficient bound on $\alpha_2$.

By \cref{lem:P-local-relation-Xt},
    \begin{align*}
        P(X_{\tau_w},w,c') & \leq \eps + \paren{\sum_{z \in N(w)} \frac{\exp(- P(X_{\tau_z},z,c') )}{|A_w(X_{\tau_z},z)|}} \\
        & = \eps + \paren{\sum_{\substack{z \in N(w) \\ \tau_z \leq T - (\ell + 1) \eps^{-1}}} + \sum_{\substack{z \in N(w) \\ \tau_z > T - (\ell + 1) \eps^{-1}}}} \frac{\exp(- P(X_{\tau_z},z,c') )}{|A_w(X_{\tau_z},z)|} \\
        & \leq \eps + \frac{\#\{ z \in N(w) \mid \tau_z \leq T - \ell\eps^{-1} \}}{\delta\Delta} + \sum_{\substack{z \in N(w) \\ \tau_z > T - (\ell + 1) \eps^{-1}}} \frac{\exp(- (P(X_{\tau_z},z,c)-\alpha_{\ell+1}) )}{|A_w(X_{\tau_z},z)|} \\
        & \leq 2\eps + \paren{\sum_{z \in N(w)} \frac{\exp(- (P(X_{\tau_z},z,c) - \alpha_{\ell+1}) )}{|A_w(X_{\tau_{z}},z)|}} \\
        & \leq e^{\alpha_{\ell+1}} (P(X_{\tau_w},w,c) + \eps) + 2\eps \\
        & \leq e^{\alpha_{\ell+1}} P(X_{\tau_w},w,c) + 21\eps.
    \end{align*}

    For a vertex $y$, let $P^+(y) = \max\{ P(X_{\tau_y},y,c),P(X_{\tau_y},y,c') \}$ and $P^-(y) = \min\{ P(X_{\tau_y},y,c),P(X_{\tau_y},y,c') \}$.  The above then shows that \[P^+(y) \leq e^{\alpha_{\ell+1}} P^-(y) + 21\eps.\]
     Moreover, we have the numerical inequality
    \[\exp(-P^-(y)) - \exp(-P^-(y) e^{\alpha_{\ell+1}}) \leq \frac{\alpha_{\ell+1}}{e}.\]
    
    Thus,
    \begin{align*}
        P^+(w) - P^-(w) 
        & \leq 2\eps + \sum_{z \in N(w)} \frac{\exp( - P^-(z) ) - \exp( - P^+(z) )}{|A_w(X_{\tau_z},z)|} \\
        & \leq 2\eps + \sum_{z \in N(w)} \frac{\exp( - P^-(z) ) - \exp( - e^{\alpha_{\ell+1}} P^-(z) - 21\eps )}{A_\mr{min}} \\
        & \leq 2\eps + \sum_{z \in N(w)} \left(\frac{\alpha_{\ell+1}}{e A_\mr{min}} + \frac{1 - e^{-21\eps}}{A_\mr{min}} \right)\\
        & \leq 2\eps + \frac{\alpha_{\ell+1} \Delta}{e A_\mr{min}} + \frac{21\eps\Delta}{A_\mr{min}} \\
        & \leq 60\eps + \alpha_{\ell+1} \cdot \frac{\Delta}{(1-\eps)k} \\
        & \leq 60\eps + \alpha_{\ell+1} \cdot \frac{1}{(1-\eps)(1+\delta)} \leq 60\eps + (1-\delta/2) \alpha_{\ell+1}.
    \end{align*}
    This gives
    \[ \alpha_2 \leq 60 \eps \frac{1 - (1 - \delta/2)^{R-2}}{\delta/2} + 3(1-\delta/2)^{R-2}. \]

    Choosing a sufficiently large $R$, we get the result for $130\eps/\delta$.  Reparametrizing gives the desired result.
\end{proof}

\subsection{Local uniformity for \texorpdfstring{$X$}{X}}
We now have all the ingredients to prove \cref{thm:local-uniformity-continuous}. These follow exactly the same argument as in Hayes \cite{hayes2013local} with a slight correction to the recurrence, which is also why we needed to modify our function $P$. 

\begin{proof}[Proof of \cref{thm:local-uniformity-continuous} \eqref{eq:continuous-number-colors}]
    Fix $T\in[C,C+1]$ and a color $c$.  Condition on $\mc{F}_v$. By conditional independence of $\{X_T^*(w) = c\}_w$,
    \begin{align*}
        \mb{P}[c \in A(X_T^*,v) \mid \mc{F}_v] & = \prod_{w \in N(v)} \mb{P}[X_T^*(w) \ne c \mid \mc{F}_v] \\
        & = \prod_{w \in N(v)} \mb{E}_{\{\tau_w\}} \left[ \paren{1+O\paren{\frac{1}{(\delta\Delta)^2}}} \exp\left\{ -\frac{\mbf{1}\{c \in A(X_{\tau_w}^*,w)\}}{|A(X_{\tau_w}^*,w)|} \right\} \right] \\
        & = \paren{1+O\paren{\frac{1}{\delta^2\Delta}}} \mb{E}_{\{\tau_w\}} \exp\left\{ - \sum_{w \in N(v)} \frac{\mbf{1}\{c \in A(X_{\tau_w}^*,w)\}}{|A(X_{\tau_w}^*,w)|} \right\} \\
        & = \paren{1+O\paren{\frac{1}{\delta^2\Delta}}} \mb{E}_{\{\tau_w\}} \exp\{-P(X_T^*,v,c) \}.
    \end{align*}

    Thus by \cref{prop:concentration-of-P},
    \begin{equation}\label{eq:probability-of-one-color} \mb{P}[c \in A(X_T^*,v) \mid \mc{F}_v] = \paren{1 + O(\Delta^{-1} \delta^{-2}) + e^{-\Omega(\Delta)} \pm 2\eps} \exp\left\{ -\frac{d(v)}{k} \right\} = (1 \pm 3\eps) \exp\left\{ -\frac{d(v)}{k} \right\}. \end{equation}

    We get concentration about the mean $k e^{-d(v)/k}$ via \cref{lem:dyer-concentration}.  Since the preceding conditional estimates are uniform in the realization of $\mc{F}_v$, the tower law gives the same concentration bound for the unconditional random variable $|A(X_T^*, v)|$.  By \cref{prop:comparison}, we have the result for $X_T$.  Finally, applying this fixed-time estimate on the mesh and using the refresh-count interpolation from the proof of \cref{cor:universal-lower-bound} (with $\eps$ replaced by a sufficiently small constant multiple of $\eps$) gives the result simultaneously for every $T\in[C,C+1]$.
\end{proof}

\begin{proof}[Proof of \cref{thm:local-uniformity-continuous} \eqref{eq:continuous-blockers}] 
Condition on $\mc{F}_v$. Fix two colors $c_1,c_2$ and let $\eta_z = \mbf1\{X_t^*(z) = c_1\}$ and $\nu_z = \mbf1\{X_t^*(z) = c_2\}$.  Then
    \[ \sum_{z \in N(w)} \mb{E}[\eta_z \mid \mc{F}_v] = \sum_{z \in N(w)} \mb{E}_{\{\tau_z\}} \left[ \frac{\mbf1\{c_1 \in A(X_{\tau_z}^*,z)}{|A(X_{\tau_z}^*,z)|} \mid \mc{F}_v \right] = \mb{E}[P(X_t^*,w,c_1) \mid \mc{F}_v]. \]

    Thus $\{\eta_z\}_z$ are independent conditioned on $\mc{F}_v$ with $\sum_z \mb E[\eta_z\mid\mc F_v] = d(w)/k\pm \eps$ with high probability, and so $\sum_z \eta_z$ is $O(1/\Delta)$-total variation distance from Poisson (see, e.g.~\cite[Lemma 20]{hayes2013local}) and hence takes the desired values.  The same holds for $\nu_z$.

    However, clearly $\eta_z$ and $\nu_z$ are disjoint events and so not independent.  This is not a serious obstacle as the total-variation distance between the law of each pair $(\eta_z,\nu_z)$ and the product of its marginals is $O(1/k^2)$ and there are $O(\Delta^2)$ many vertices in $S_2(v)$ for $O(1)$ total errors, which is irrelevant as our error term is $\eps\Delta$.  Thus we may couple the true distribution of the $(2d(w))$-tuple $(\eta_z,\nu_z)$ with a collection of independent samples with only $O(1)$ many errors, giving us the result.
\end{proof}

\subsection{Weighted local uniformity}

We now record the exact local uniformity statement necessary in \cref{sec:analysis-coupling}.

\begin{corollary}
    In the setup of \cref{thm:local-uniformity-discrete}, for any color $c \in [k]$,
    \begin{equation}\label{eq:weighted-local-uniformity-statement} \mb{P}\left[ \exists t \in [T_0,T_0+T] : \left| \paren{\sum_{\substack{w \in N(v) \\ c \in A_v(X_t,w)}} e^{d(w)/k}} - d(v) \right| > \eps\Delta \right] < \frac Tn e^{-\Delta/C}. \end{equation}
\end{corollary}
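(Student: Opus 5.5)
We will derive this weighted estimate from the second conclusion of \cref{thm:local-uniformity-discrete}, applied with $S$ ranging over degree classes of $N(v)$. The event $c\in A_v(X_t,w)$ says exactly that $w$ is $0$ times blocked for $c$, i.e.\ $w\in S_{c,0}(X_t)$. The theorem is stated for a pair $c_1\neq c_2$; we take $c_1=c$, an arbitrary $c_2\neq c$, and sum over $i_2\ge 0$. Then $\bigcup_{i_2} S_{c,0}\cap S_{c_2,i_2}=S_{c,0}$.

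Rather than summing infinitely many error terms, we truncate at $i_2\le M$ with $M=M(\eps)$, handling the tail separately. The tail is controlled because $\sum_{i_2>M}\sum_{w\in S}e^{-2d(w)/k}(d(w)/k)^{i_2}/i_2!$ is at most $\Delta\cdot\Pr[\mathrm{Pois}(1)>M]$, which is small. The count $\sum_{i_2\le M}|S_{c,0}\cap S_{c_2,i_2}|$ is a lower bound for $|S_{c,0}|$. For an upper bound, we use the complementary count: apply the theorem with $i_1=0$ and with $S_{c,0}\cap S_{c_2,i_2}$ for $i_2\le M$, and separately bound $|S|$ minus everything. Alternatively, and more simply, apply the second conclusion directly with the roles swapped: sum over $i_1\ge 1$ of $|S_{c,i_1}\cap S_{c_2,i_2}|$ as well, so that both $|S_{c,0}|$ and $|S\setminus S_{c,0}|$ are approximated from below up to $O(\eps\Delta)$. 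Since these two add to $|S|$, both are then approximated to within $O(\eps\Delta)$. Summing the Poisson masses over $i_2$ gives
\[
|S_{c,0}(X_t)| = \sum_{w\in S}e^{-d(w)/k}\pm O(\eps\Delta).
\]

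To insert the weights $e^{d(w)/k}$, we partition $N(v)$ into classes $S_j=\{w: d(w)/k\in[j\eps,(j+1)\eps)\}$, for $j=0,\dots,\lceil 1/\eps\rceil$. Within $S_j$, the weight $e^{d(w)/k}$ is constant up to a factor $1\pm 2\eps$. The estimate above, applied to each $S_j$, gives
\[
\sum_{w\in S_j\cap S_{c,0}}e^{d(w)/k}=\sum_{w\in S_j}e^{d(w)/k}e^{-d(w)/k}\pm O(\eps|S_j|)\pm O(\eps\Delta).
\]
Summing over the $O(1/\eps)$ classes produces $d(v)\pm O(\Delta)\cdot(\text{error per class}\cdot 1/\eps)$.

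The main obstacle is this accumulation of errors. Summing $O(1/\eps)$ classes each carrying an additive $\eps'\Delta$ error, with $M$ truncation terms each, requires invoking \cref{thm:local-uniformity-discrete} with a much smaller accuracy $\eps'=\eps^3/M$. This is legitimate since $\Delta_0$ and $C$ may depend on $\eps$. The probability bound then follows from a union bound over $O(M/\eps)$ applications, each failing with probability $\frac Tn e^{-\Delta/C'}$, and we absorb the constant factor by enlarging $C$.
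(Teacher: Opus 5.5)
Your argument is correct, but it differs from the paper's in both of its steps.

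\textbf{The unweighted count.} The paper does not marginalize the two-color estimate. It simply asserts that the proof of \eqref{eq:continuous-blockers} works equally for a single color, and reads off $\bigl||S_{c,0}(X_t)|-\sum_{w\in S}e^{-d(w)/k}\bigr|\le\eps\Delta$ directly. You instead obtain this from the stated two-color conclusion used as a black box: truncation at $i_2\le M$ gives the lower bound, and the complementary family $S_{c,i_1}\cap S_{c_2,i_2}$ with $1\le i_1\le M$ gives the upper bound. This is cleaner, since it needs nothing beyond the statement of \cref{thm:local-uniformity-discrete}. The complementary count uses $O(M^2)$ applications per class rather than $O(M)$, but this only changes constants; your $\eps'=\eps^3/M$ still suffices since $M(\eps)\,\eps\to 0$.

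\textbf{The weights.} The paper avoids any discretization. It applies the one-color estimate to the $\Delta$ nested threshold sets $S^{\ge\ell}=\{w\in N(v):d(w)\ge\ell\}$ and uses Abel summation,
$\sum_{w\in S^{\ge 0}_c}e^{d(w)/k}=|S^{\ge 0}_c|+\sum_{\ell=1}^{\Delta}(e^{\ell/k}-e^{(\ell-1)/k})|S^{\ge\ell}_c|$.
The increments sum to $e^{\Delta/k}-1\le e-1$, so the main terms telescope exactly to $d(v)$ and the total error is only $e\eps\Delta$. The cost is a union bound over $\Delta$ sets, which is absorbed into $e^{-\Delta/C}$. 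Your partition into $O(1/\eps)$ degree classes with approximately constant weight needs only $O(M^2/\eps)$ applications of the theorem. In exchange it pays a polynomial loss in accuracy, with $\eps'$ of order $\eps^2/M^2$ rather than the paper's constant-factor rescaling.

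Both trade-offs are harmless here, since $\Delta_0$ and $C$ are allowed to depend on $\eps$.
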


\begin{proof}
    Note first that the previous proof of \cref{thm:local-uniformity-continuous} \eqref{eq:continuous-blockers} evidently holds for one color as well.  Thus for any $S \subset N(v)$, evaluating \eqref{eq:continuous-blockers} at $i=0$,
    \begin{equation}\label{eq:Sc0-control-one-color} \left| |S_{c,0}(X_t)| - \sum_{w \in S} e^{-d(w)/k} \right| \leq \eps\Delta \end{equation}
    except with probability $e^{-\Omega(\Delta)}$.  We will use the collection of sets
    \[ S^{\geq\ell} := \{ w \in N(v) : d(w) \geq \ell \}, \qquad S^{\geq\ell}_c := (S^{\geq\ell})_{c,0}(X_t) = \{ w \in S^{\geq\ell} : |N_v(w) \cap X_t^{-1}(c)| = 0 \}. \]

    There are at most $\Delta$ many such sets $S^{\geq\ell}$, so we may union bound that \eqref{eq:Sc0-control-one-color} holds for all sets $S^{\geq\ell}$.  We now use Abel summation
    \begin{align*}
        \sum_{\substack{w \in N(v) \\ c \in A_v(X_t,w)}} e^{d(w)/k} & = \sum_{w \in S^{\geq0}_c} e^{d(w)/k} = |S^{\geq0}_c| + \sum_{\ell=1}^\Delta (e^{\ell/k} - e^{(\ell-1)/k}) |S^{\geq\ell}_c| \\
        & = \paren{\sum_{w \in N(v)} e^{-d(w)/k} \pm \eps\Delta } + \sum_{\ell=1}^\Delta (e^{\ell/k} - e^{(\ell-1)/k}) \paren{\sum_{w \in S^{\geq\ell}} e^{-d(w)/k} \pm \eps\Delta} \\
        & = \paren{ \sum_{w \in N(v)} e^{-d(w)/k} + \sum_{\ell=1}^\Delta (e^{\ell/k} - e^{(\ell-1)/k}) \sum_{w \in S^{\geq\ell}} e^{-d(w)/k} } \pm e\eps\Delta \\
        & = \paren{ \sum_{w \in N(v)} e^{-d(w)/k} \paren{1 + \sum_{\ell=1}^{d(w)} \left(e^{\ell/k} - e^{(\ell-1)/k}\right)} } \pm e\eps\Delta = d(v) \pm e\eps\Delta.
    \end{align*}

    We may rescale $\eps$ by the constant $e$ and extend the result to discrete time exactly as in \cite{hayes2013local} to finish the proof.
\end{proof}

\section{Deferred proofs from \texorpdfstring{\cref{sec:analysis-coupling}}{non-Markovian}}
\label{sec:non-markovian-inequalities}
Finally, we deal with the non-Markovian failures. We begin with the proof of \cref{prop:non-markovians-all-succeed}, which we recall here for the reader's convenience. 

\propprobg*

        We will need the following claim regarding order statistics, whose proof is deferred to \cref{app:proof-of-bernoulli}.

        \begin{claim}\label{claim:bernoulli-sequence-nice-event}
            Fix $\gamma,\eps,m$ with $m \gg \eps^{-1} \gg \gamma^{-1}$.  Let $1 \geq p_1 \geq p_2 \geq \cdots \geq p_m > \gamma$.  Let $X_i \sim \Bernoulli(p_i)$ and $Y_i \sim \Bernoulli(p_i)$ be all independent.  Define the index (in $\{Y_i\}$)
            \[ \index(R) = \min \left\{ j : \sum_{i \leq j} Y_i \geq R \right\}. \]

            Let $j \in [m]$ be uniformly at random, and let $R = \sum_{i \leq j} X_i$ be the rank of $j$ in $\{X_i\}$.  
            Then
            \[ p_j - \eps^{1/2} \leq p_{\index(R) + \eps m} \leq p_{\index(R) - \eps m} \leq p_j + \eps^{1/2} \]
            except with probability
            \[ 4\eps + 4\eps^{1/2} + 6\exp\paren{- \frac{\gamma^2\eps^2 m}{8}} \ll 1. \]
        \end{claim}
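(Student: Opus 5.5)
The claim concerns a uniformly random index $j$ with rank $R=\sum_{i\le j}X_i$ in the $X$-sequence. I will locate the index $\index(R)$ at which the $Y$-sequence reaches the same rank, and show it lies within $\eps m/2$ of $j$ with high probability. Monotonicity of the $p_i$ then converts index proximity into proximity of probabilities, once I also argue that the $p_i$ cannot drop by more than $\eps^{1/2}$ across a window of width about $2\eps m$ around $j$ for most $j$.

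Step one is concentration of partial sums. Set $S_j=\sum_{i\le j}p_i$. For each fixed $j$, both $\sum_{i\le j}X_i$ and $\sum_{i\le j}Y_i$ are within $\gamma\eps m/4$ of $S_j$ except with probability $2\exp(-\gamma^2\eps^2 m/8)$, by Hoeffding. I only need this at three indices: $j$ itself and the endpoints $j\pm\eps m/2$. A union bound over these accounts for the term $6\exp(-\gamma^2\eps^2m/8)$. Since every $p_i>\gamma$, we have
\[
S_{j+\eps m/2}-S_j\ge \gamma\eps m/2,
\]
and similarly below $j$. On the good event, the $Y$-partial sum at $j-\eps m/2$ is therefore strictly less than $R$, while the one at $j+\eps m/2$ is at least $R$. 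By monotonicity of partial sums in the index, this gives $|\index(R)-j|\le \eps m/2$.

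Step two handles boundary effects. If $j<2\eps m$ or $j>m-2\eps m$, the shifted indices $\index(R)\pm\eps m$ may fall outside $[m]$. Because $j$ is uniform, this happens with probability at most $4\eps$, which accounts for that term.

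Step three controls the probabilities. On the good event,
\[
p_{\index(R)+\eps m}\ge p_{j+2\eps m}\quad\text{and}\quad p_{\index(R)-\eps m}\le p_{j-2\eps m}.
\]
It remains to show that $p_{j-2\eps m}-p_{j+2\eps m}\le\eps^{1/2}$ for most $j$. Call $j$ bad if this drop exceeds $\eps^{1/2}$. Partition $[m]$ into blocks of length $4\eps m$; the total drop $p_1-p_m\le 1$ telescopes over consecutive blocks. A bad $j$ forces a drop greater than $\eps^{1/2}$ across the at most two blocks its window meets. Hence at most $O(\eps^{-1/2})$ blocks are involved, covering at most $O(\eps^{-1/2})\cdot 4\eps m=O(\eps^{1/2}m)$ indices. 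A careful count, using two offset partitions, gives a fraction of bad $j$ at most $4\eps^{1/2}$.

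The middle inequality $p_{\index(R)+\eps m}\le p_{\index(R)-\eps m}$ is immediate from monotonicity. Summing the three failure probabilities gives the stated bound. I expect step three to be the main obstacle, specifically getting the constant exactly $4\eps^{1/2}$ in the bad-window counting. The concentration step should be routine.
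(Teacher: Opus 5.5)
Your Steps one and two are the paper's argument. You apply Hoeffding at the three indices $j$ and $j\pm(\text{window})$ with tolerance $\gamma\eps m/4$, which gives the $6\exp(-\gamma^2\eps^2 m/8)$ term. You use $p_i>\gamma$ to force the $Y$-partial sums below and above $R$ at the window endpoints, trapping $\index(R)$ near $j$. You exclude the $4\eps$ boundary mass. One small point: with half-window $\eps m/2$ and tolerance $\gamma\eps m/4$, your margin $\gamma\eps m/2-2\cdot\gamma\eps m/4$ is exactly zero. The strict inequality at $j-\eps m/2$ therefore holds only because the hypothesis $p_i>\gamma$ is strict. The paper uses the window $\pm\eps m$ instead. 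That gives genuine slack $\gamma\eps m/2$ and still keeps $\index(R)\pm\eps m$ inside $[j-2\eps m,\,j+2\eps m]$.

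The real gap is in Step three. Your block-partition count does give an $O(\eps^{1/2})$ fraction of bad $j$, but with a worse constant, roughly $16\eps^{1/2}$ as you set it up. Your assertion that a careful count with two offset partitions gives $4\eps^{1/2}$ is unsupported. The natural two-partition count uses blocks of length $8\eps m$ at offsets $0$ and $4\eps m$, so that every window lies in a single block of one of the partitions. This yields $8\eps^{1/2}$. Any bounded number of partitions loses a constant factor, while $4\eps^{1/2}$ is essentially tight for the Markov-type bound. The constant is cosmetic for the paper's application but needed for the statement as written. The fix, which is the paper's proof, is to drop the partition and sum the window drops over all interior $j$:
\[
\sum_{j=2\eps m+1}^{m-2\eps m}\bigl(p_{j-2\eps m}-p_{j+2\eps m}\bigr)=\sum_{i\le 4\eps m}p_i-\sum_{i>m-4\eps m}p_i\le 4\eps m .
\]
Markov's inequality over uniform $j$ then gives $\mb{P}\bigl(p_{j-2\eps m}-p_{j+2\eps m}>\eps^{1/2}\bigr)\le 4\eps^{1/2}$ exactly. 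This is what averaging your partition count over all $4\eps m$ offsets amounts to. With that replacement your proof is complete and coincides with the paper's.
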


Before proceeding to the proof of \cref{prop:non-markovians-all-succeed}, we formalize the conditional $\sigma$-algebras and auxiliary processes used to decouple the random variables in the local neighborhood of a non-Markovian update. As in Section 5, it is more convenient to transition to the continuous-time version of the Metropolis dynamics, from which results translate seamlessly to discrete time via standard bounds on the total number of updates.

To rigorously analyze the failure probability of the non-Markovian update without circular dependencies, we must decouple the update times of a vertex's neighbors from their actual colors. Because the Metropolis update rate of a vertex $w$ is $|A(X_t, w)|/k$, revealing update times typically leaks information about the 2-neighborhood. We resolve this by introducing an auxiliary conditional-expectation process.

\begin{definition}
Let $(\mathcal{P}_t)_{t \ge 0}$ denote the potential persistent discrepancy set defined by the bounding chain. For each integer $\ell \ge 1$, let $\tau_\ell$ be the continuous-time stopping time at which $|\mathcal{P}_t|$ first reaches size $\ell$. On the event $\{\tau_\ell \le C_{cp}\}$, let $v$ be the uniquely identified $\ell$-th vertex added to $\mathcal{P}$, and let $p \in \mathcal{P}_{<\tau_\ell} \cap N(v)$ be its unique parent in the tree-like bounding chain. 

We define $\mathcal{H}_\ell$ to be the $\sigma$-algebra generated by the following information:
\begin{itemize}
    \item The exact clock ring times and proposed colors for all vertices $u \in B_1(\mathcal{P}_t)$ up to time $\tau_\ell$. (This completely determines the evolution of $\mathcal{P}_t$, the stopping time $\tau_\ell$, and the vertices $v$ and $p$. Since $p \in \mathcal{P}_{<\tau_\ell}$, this also explicitly reveals the clocks for $p$ and its neighborhood $N(p)$ up to time $\tau_\ell$).
    \item The exact clock ring times and proposed colors for all vertices $x \notin B_2(v)$ up to time $\tau_\ell$.
\end{itemize}
\end{definition}

Notice that $\mathcal{H}_\ell$ is a slight refinement of the standard $\sigma$-algebra $\mathcal{F}$ of information outside $B_2(v)$. In addition to revealing the identity of $v$ and of $p$, we have revealed all clocks in $N_v(p)$ and the clock of $v$ in $[\tau_{\ell-1},\tau_\ell]$.  To analyze the probability of \cref{assumption:non-markovian-succeed}, we define a decoupled auxiliary chain $W^*$.

\begin{definition}[Auxiliary Process $W^*$]
\label{def:W-star}
Conditioned on $\mathcal{H}_\ell$ (on the event $\{\tau_\ell \le \Ccouple\}$), let $G^* = G_{\mathrm{in}}(v, 3)$ be the directed graph oriented towards $v$. We define the continuous-time process $W^* = (W^*_t)_{t \in [0, \Ccouple]}$ with initial configuration $W^*_0 = X_0$ via the following construction:
\begin{enumerate}
    \item \textbf{Revealed trajectories:} For all $x \notin B_2(v)$ and for all $x \in N_v(p) \cup \{p\}$, the clock rings and proposed colors are exactly those revealed in $\mathcal{H}_\ell$. 
    \item \textbf{Decoupled updates for $N_p(v)$:} Equip each vertex $w \in N_p(v) := N(v) \setminus \{p\}$ with an inhomogeneous Poisson clock of rate $\lambda_w(t) := \frac{1}{k} \mathbb{E}[|A(X_t, w)| \mid \mathcal{H}_\ell, \mc{F}_t]$ for times $t > \tau_\ell$. When this clock rings, choose a color uniformly at random from the currently available colors $A_{G^*}(W^*_{t^-}, w)$.
    \item \textbf{Standard clocks elsewhere:} Equip each unrevealed vertex $z \in S_2(v) \setminus N(p)$ with the usual $k$ independent Poisson clocks of rate $1/k$, each attempting an update $(z, c)$ for some $c \in [k]$. A proposal $c$ is accepted if it is available in $G^*$.
\end{enumerate}
(Note that the vertex $v$ itself also has its clocks revealed by $\mathcal{H}_\ell$, since $v \in B_1(p) \subseteq B_1(\mathcal{P}_t)$ for $t < \tau_\ell$, and we use those exact clock rings).
\end{definition}

This is a small modification of $Z^*$ that serves the same purpose of ensuring the trajectories of distinct vertices $w \in N_p(v)$ are conditionally (on $\mc{H}_\ell$) independent of one another.  We formally bound the discrepancy between this auxiliary process and the true dynamics.

\begin{lemma}[Coupling $W^*$ and $X$]
\label{lem:W_X_coupling}
The auxiliary process $W^*_t$ and the true continuous-time Metropolis dynamics $X_t$ can be coupled on the interval $t \in [0, \Ccouple]$ such that, with probability at least $1 - O(\varepsilon^{1/3})$, the two processes differ on at most $O(\varepsilon^{2/3}\Delta)$ vertices in $B_2(v)$ at time $\Ccouple$.
\end{lemma}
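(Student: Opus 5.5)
The plan is to build the coupling by Poisson thinning with shared randomness, as in \cref{prop:comparison} and \cref{lem:Xstar-vs-Zstar}. We then bound the expected number of discrepancies in $B_2(v)$ at time $\Ccouple$ by $O(e^{O(\Ccouple)}\eps\Delta)$ and apply Markov's inequality.

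\textbf{The coupling.} Give every pair (vertex, color) a common rate-$1/k$ Poisson clock. For the vertices of $N_p(v)$, use instead a common intensity-one Poisson point process on $[0,\Ccouple]\times[0,1]$ together with common uniform marks for the new color. Everything revealed in $\mc H_\ell$ is used identically by both processes: the clocks on $B_1(\mc P_t)$ up to $\tau_\ell$, the clocks outside $B_2(v)$, and the clocks on $N_v(p)\cup\{p,v\}$. The coupling is set up as follows.
\begin{itemize}
\item A point $(t,s)$ refreshes $w\in N_p(v)$ in $X$ if $s\le |A(X_t,w)|/k$, and in $W^*$ if $s\le\lambda_w(t)$.
\item When both processes refresh, they choose the same color whenever it is available in both; otherwise we use a maximal coupling of the two uniform distributions.
\item At vertices of $S_2(v)\setminus N(p)$, the two processes use identical attempts and differ only through the orientation of $G^*$ versus $G$.
\end{itemize}

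\textbf{Sources of discrepancy.} There are three sources.
\begin{enumerate}
\item \emph{Rate mismatch at $w\in N_p(v)$.} Conditionally on $\mc H_\ell$ and the outer history, $|A(X_t,w)|$ is a Lipschitz function of independent colors in $N(w)\setminus\{v\}$. By \cref{lem:dyer-concentration} its conditional fluctuation is $O(\sqrt k)$. Integrating over $[0,\Ccouple]$, the probability of any unmatched refresh at $w$ is $O(\Ccouple/\sqrt\Delta)\le\eps$ once $\Delta_0$ is large.
\item \emph{Color mismatch at a common refresh.} The available sets $A(X_t,w)$ and $A_{G^*}(W^*_t,w)$ differ only through $v$ and through already-discrepant vertices of $N(w)$. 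The total variation distance between the two uniform choices is therefore $O(1+\#\text{disc})/(\rho_\delta k)$ per refresh. \cref{obs:bound-update-rates} keeps all denominators at least $\rho_\delta k$.
\item \emph{Orientation.} In $G^*$ the vertices of $S_2(v)$ ignore their neighbor in $N(v)$, and the vertices of $N(v)$ ignore $v$. Exactly as in the $X$ versus $Y^*$ part of \cref{prop:comparison} (following \cite[Theorems 32, 33]{hayes2013local}), an update of $z\in S_2(v)$ differs only if the proposed color equals the color of its unique parent in $N(v)$, which happens at rate $1/k$. By the girth assumption each vertex of $S_2(v)$ has one such parent.
\end{enumerate}

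\textbf{Propagation.} Let $D_t$ be the discrepancy set in $B_2(v)$. Each discrepant vertex creates a new discrepancy at a neighbor only when that neighbor proposes one of the two disagreeing colors, which happens at total rate at most $2\Delta/k\le 2$. A single new discrepancy therefore has at most $e^{O(\Ccouple)}$ expected descendants by time $\Ccouple$. The expected number of seed discrepancies is $O(\eps\Delta)$: source (1) contributes at most $\eps\Delta$, and the orientation seeds coming from source (3) over $\Delta^2$ vertices contribute $O(\Ccouple\Delta^2/k\cdot 1/\Delta)=O(\Ccouple)\ll\eps\Delta$. Hence
\[
\mb E\bigl[|D_{\Ccouple}|\,\big|\,\mc H_\ell\bigr]\le e^{O(\Ccouple)}\eps\Delta\le \eps\Delta\cdot\eps^{-1/3},
\]
using $\eps\ll 1/\Ccouple$. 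Markov's inequality then gives $\mb P\bigl[|D_{\Ccouple}|>\eps^{2/3}\Delta\bigr]=O(\eps^{1/3})$. The bound is uniform over $\mc H_\ell$, which yields the unconditional statement.

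\textbf{Main obstacle.} The hard step is justifying the conditional concentration in source (1) given $\mc H_\ell$, because $\mc H_\ell$ is a stopping-time $\sigma$-algebra that reveals the clocks of $p$ and $v$. The plan is to observe that $\mc H_\ell$ reveals nothing inside the branches below $w\in N_p(v)$, except through $v$, which is an out-neighbor of $w$ in $G^*$. It follows that the colors of $N(w)\setminus\{v\}$ remain conditionally independent, with point masses at most $1/(\rho_\delta k)$, exactly as in the proof of \cref{lem:available-colors-fixed-time}.
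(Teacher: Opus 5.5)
\textbf{The seed count for your source (3) is wrong.} A vertex $z\in S_2(v)$ proposes the color of its parent $w\in N(v)$ at rate $1/k$. In $X$ this proposal is blocked. In $W^*$, however, $w$ is only an out-neighbor of $z$ in $G^*$, so the proposal is accepted with probability bounded below by a constant, roughly $e^{-d(z)/k}$. Seeds are therefore created at total rate $\Theta(|S_2(v)|/k)$, which is $\Theta(\Delta)$ whenever $|S_2(v)|=\Theta(\Delta^2)$ (e.g.\ $G$ $\Delta$-regular). A discrepancy at $z$ can heal only when the clock of $z$ rings (rate $1$), so order $\Delta$ of them are present at time $\Ccouple$. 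The factor $1/\Delta$ in your estimate $O(\Ccouple\Delta^2/k\cdot 1/\Delta)$ has no source.

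Your list of sources is also incomplete. In $G^*$, vertices of $S_3(v)$ ignore their neighbor in $S_2(v)$, so $X$ and $W^*$ diverge just outside the ball as well, and those discrepancies feed back into $S_2(v)$.

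Consequently $\mb E[|D_{\Ccouple}|]$ is of order $\Delta$, with a constant independent of $\eps$. No first-moment-plus-Markov argument on the total number of discrepant vertices in $B_2(v)$ can give $O(\eps^{2/3}\Delta)$. Separately, your final Markov step is inconsistent as written: from $\mb E|D|\le \eps^{2/3}\Delta$, Markov at threshold $\eps^{2/3}\Delta$ gives nothing. You would need to keep $\mb E|D|\le e^{O(\Ccouple)}\eps\Delta$ and absorb $e^{O(\Ccouple)}$ into the failure probability.

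\textbf{What can be proved, and what the paper's route yields,} is control of the type in \cref{prop:comparison}:
for each $u$, only $O(\eps\Delta)$ discrepancies lie in $N(u)$; and for each color $c$, only $O(\eps\Delta)$ discrepant vertices of $B_2(u)$ involve $c$.
A single neighborhood receives only $O(\Ccouple)$ orientation seeds in expectation, which is why this works. This is the ``neighborhood error tolerance'' in which the lemma is invoked in the proof of \cref{prop:non-markovians-all-succeed}.

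The paper obtains it by passing through $Y^*$ in two steps.
First, $W^*$ versus $Y^*$: both live on $G^*$ and differ only through the rates at $N_p(v)$, which the thinning argument of \cref{lem:Xstar-vs-Zstar} handles. On $G^*$ a discrepancy at $w\in N(v)$ can only affect $v$, and $v$ has no out-neighbors, so nothing propagates.
Second, $Y^*$ versus $X$ via \cref{prop:comparison}, which treats the orientation effect one neighborhood at a time.

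This split also repairs a second weak point of your direct coupling. The concentration in your source (1) requires the colors on $N(w)\setminus\{v\}$ to be conditionally independent. That holds for the directed processes, but it fails for $X$ on $G$, where every child of $w$ sees the color of $w$. Your ``main obstacle'' paragraph conflates $G^*$ with $G$ at exactly this point.
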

\begin{proof}
We begin by coupling $W^*$ to $Y^*$ using identical reasoning to \cref{lem:Xstar-vs-Zstar}.  The initial configuration is identical, the error rate at a given time is $\eps$, and we are only running for time $\Ccouple \ll \eps^{-1}$.  We then couple $Y^*$ to $X$ by \cref{prop:comparison}.  
We omit further details.
\end{proof}

\begin{proof}[Proof of \cref{prop:non-markovians-all-succeed}]
Our goal is to bound the probability of $\overline{\mathcal{G}}$, the event that some valid non-Markovian update fails \cref{assumption:non-markovian-succeed}, assuming the bounding chain is well-behaved and local uniformity ($\mathcal{U}$) holds. We will union bound the failure probability over all times the discrepancy set $\mathcal{P}_t$ grows. 
    
Fix $\ell \le \exp(\exp(O(C_{cp})))$ and condition on the boundary $\sigma$-algebra $\mathcal{H}_\ell$ at the stopping time $\tau_\ell$ (assuming $\tau_\ell \le \Ccouple$). Let $v = v_\ell$ and $p = p_\ell$ be the $\ell$th vertex to enter $\mc{P}$ and its parent. To cleanly analyze the probabilistic conditions of the non-Markovian edit at vertex $v$, we substitute the true process $X_t$ with the conditionally decoupled auxiliary process $W^*_t$ (\cref{def:W-star}). By \cref{lem:W_X_coupling}, we can discard the coupling failure event (which occurs with probability $\ll \varepsilon^{1/3}$) and assume $W^*_t$ accurately models $X_t$ for all $0 < t \leq \Tcouple$ up to an $O(\varepsilon^{2/3}\Delta)$ neighborhood error tolerance.

The following claim will perform the bulk of the probabilistic work within the rigorously decoupled environment of $W^*_t$.

    \begin{claim}
    \label{claim:nm}
        Let $\{W^*_t\}$, $v$, and $p$ be as defined above.  Fix a pair $H^* \subset H(Z_t,v)$ and suppose $\BC(X_0,Y_0,\vecsigma) = \tt True$.  Then
        \[ \mb{P}[H^* \text{ fails \cref{assumption:non-markovian-succeed}}] < \eps^{2/5}. \]
    \end{claim}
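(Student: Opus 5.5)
We bound the failure probability of each bullet of \cref{assumption:non-markovian-succeed} separately and take a union bound. The first bullet is assumed. The remaining ones are checked inside the decoupled environment $W^*$. There, conditional on $\mc H_\ell$, the trajectories of distinct $w\in N_p(v)$ are independent, and their epochs $I(w,t)$, exchangeable sets $\Exchange(w,t)$ and membership in $\Avoid(t)$ are functions of disjoint branches. Throughout we work on the local uniformity event $\mc U$, so \eqref{eq:main-number-colors}, \eqref{eq:main-blockers} and \eqref{eq:two-nbd-uniformity} hold on $B_2(v)$, and we absorb the $O(\eps^{2/3}\Delta)$ coupling error from \cref{lem:W_X_coupling}.

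\textbf{Cheap bullets.}
\begin{itemize}
\item \emph{Second bullet.} $X_{t-1}(N_p(v)\cap\mc P)\cap H^*\neq\emptyset$ requires some vertex of $\mc P$ other than $p$ to be adjacent to $v$. This cannot happen, by the acyclicity condition in $\BC$.
\item \emph{Third bullet} ($B\cap\Avoid=\emptyset$). A blocker $w$ is in $\Avoid(t)$ only if either $w\in N^2_{v}(\mc P)$, or some neighbor of $w$ near $\mc P$ (or $v$ itself) proposed $X(w)$ during $I^\circ(w,t)$. Girth $\ge 7$ and acyclicity forbid the first unless $\mc P$ is near, which happens for $O(|\mc P|)$ vertices only. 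The second occurs for a given $w$ with probability $O(|\mc P|\cdot\Ccouple/k)$. Since $|B|=O(1)$ with overwhelming probability by \eqref{eq:two-nbd-uniformity}-type bounds, this bullet fails with probability $O(\exp(\exp(O(\Ccouple)))/\Delta)\ll\eps^{2/5}$.
\item \emph{Sixth bullet} ($\alpha(B)\cap B$ consists only of fixed points). This is handled by the same counting.
\end{itemize}

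\textbf{Fourth and fifth bullets.} These are the main content, and they are what \cref{claim:bernoulli-sequence-nice-event} is for. Order $N_p(v)$ by the key $\deg z\cdot(|I(z,t)|/n+1)$. For each $z$, let $q_z^b$ and $q_z^u$ be the conditional probabilities (given $\mc H_\ell$ and the key) that $z\in\Swappable_{\mc P}(c_b,t)$, respectively $z\in\Swappable_{\mc P}(c_u,t)$. By $c_b\leftrightarrow c_u$ symmetry of the decoupled dynamics, these agree up to $O(\eps)$. The reason is that, in $W^*$, the law of the branch below $z$ depends on the colors only through the bias fields $P(\cdot,z,c)$, and these are $d(z)/k\pm\eps$ for all colors by \cref{prop:concentration-of-P}.

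A uniformly chosen element $w$ of $B$ then has some rank $R$ among the $c_b$-swappable vertices. The map $\alpha$ sends it to the element of rank $R$ in the $c_u$-list. We apply \cref{claim:bernoulli-sequence-nice-event} with $X_i$ and $Y_i$ the two swappability indicators, $m\approx\Delta$, and $\gamma$ a lower bound on $q_z$ coming from \eqref{eq:main-number-colors}, since swappability has probability $\Omega_\delta(1)$. The claim gives that, except with probability $O(\eps^{1/2})$:
\begin{itemize}
\item $\alpha(w)$ is defined, so the $c_u$ list is not exhausted before rank $R$ (fourth bullet);
\item $|\Exchange(\alpha(w),t)|$ and $|\Exchange(w,t)|$ nearly match, because $|\Exchange(z,t)|$ is concentrated given the key (again by \cref{lem:dyer-concentration} and \eqref{eq:main-number-colors}).
\end{itemize}
From this, $\beta_w(X(\alpha(w)))$ is defined unless $X(\alpha(w))$ lies in the top $O(\eps^{1/2}k)$ of the ordered exchangeable set, which has probability $O(\eps^{1/2})$. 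The image equals an element of $H^*$ with probability $O(1/k)$. This gives the fifth bullet. Summing over the $O(1)$ elements of $B$ gives total failure probability $O(\eps^{1/2})<\eps^{2/5}$.

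\textbf{Main obstacle.} The hardest step is justifying that the rank statistics $R$ of $w$ really behave like the random index $j$ in \cref{claim:bernoulli-sequence-nice-event}. The difficulty is that $w$ is selected by being a $c_b$-blocker, which is correlated with its branch. I would first condition on the keys and $\mc H_\ell$. I would then argue that blocker status of $z$ and its swappability for $c_b$ are conditionally independent of the $c_u$-swappability of the other vertices. This reduces to the uniform-$j$ setting up to a reweighting by bounded densities, since by \eqref{eq:main-blockers} blocking probabilities are $\Theta(1)/k$ uniformly.
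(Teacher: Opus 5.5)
Your treatment of the fourth through sixth bullets follows the paper's route: order $N_p(v)$ by the key, couple the $c_b$- and $c_u$-swappability indicators to Bernoulli sequences, apply \cref{claim:bernoulli-sequence-nice-event}, then use concentration of $|\Exchange(\cdot,t)|$ for $\beta$. Your argument for the second bullet, however, is wrong.

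\textbf{The second bullet.} In \cref{assumption:non-markovian-succeed}, $\mc P=\bigcup_{s\le \Tcouple}\mc P_s$ contains vertices that enter \emph{after} time $t$. Acyclicity and \cref{lem:bc-basic-geometry-rewrite} rule out only two things: a second neighbor of $v$ in $\mc P_{\le t-1}$, and a neighbor of $v$ joining $\mc P$ through a different parent. They do not rule out children of $v$. Once $|Z_t(v)|>1$, any $u\in N_p(v)$ that later proposes a color of $Z_s(v)$ makes a hazardous update and enters $\mc P$ with parent $v$. A tree in which $v$ has several neighbors is still acyclic.

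This is exactly why the bullet is needed. $B$ and the swappable sets are defined inside $N_{\mc P}(v)$, so a $c_b$-blocker that later becomes a child of $v$ is invisible to the edit but still blocks $c_b$ at time $t$. The bullet can therefore fail with positive probability, and you must estimate it. A $c_b$-colored $w\in N_p(v)$ enters $\mc P$ only if one of its clocks $(w,c)$ with $c\in Z_t(v)$ rings before time $\Ccouple$. By Markov's inequality the failure probability is at most
$\#\{w\in N_p(v):X_{t-1}(w)=c_b\}\cdot |Z_t(v)|\,\Ccouple/k\le \eps^{-1/10}\exp(\exp(O(\Ccouple)))\Ccouple/\Delta\ll\eps$.

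\textbf{Smaller repairs.}
\begin{enumerate}
\item \emph{The bound on $|B|$.} \eqref{eq:two-nbd-uniformity} gives only $|B|\le 400\Delta$, not $|B|=O(1)$. The correct input is that $|B|$ is dominated by $\on{Binom}(\Delta,e/k)$: in $W^*$ each neighbor's color is uniform over at least $k/e$ available colors, conditionally independently. No constant cap on $|B|$ holds with overwhelming probability. You must either cap at $|B|\le\eps^{-1/10}$, with failure probability $\exp(-\eps^{-\Omega(1)})$, or use linearity of expectation with $\mb E|B|\le e$ together with your reweighting argument. The cap is the paper's choice, and summing $O(\eps^{1/2})$ over $\eps^{-1/10}$ blockers is where the exponent $2/5$ comes from.
\item \emph{Small $\deg v$.} \cref{claim:bernoulli-sequence-nice-event} needs $m=|N_p(v)|\gg\eps^{-1}$, but $\deg v$ can be far below $\Delta$, so $m\approx\Delta$ is not justified. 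Dispose of $\deg v<\eps^{1/2}\Delta$ separately; in that case $\mb E|B|\le e\eps^{1/2}$.
\item \emph{The lower bound $\gamma$.} The uniform lower bound on swappability comes from the epoch length, $p_w=\exp\bigl(-\tfrac{\deg w}{k}(1+|I(w,t)|)\bigr)\ge e^{-O(\Ccouple)}$, not from \eqref{eq:main-number-colors}. It is compatible with the claim's hypothesis $\eps^{-1}\gg\gamma^{-1}$ only because $\eps\ll 1/\Ccouple$.
\end{enumerate}
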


    \begin{proof}[Proof of \cref{claim:nm}]
        For convenience, we restate the bullets of \cref{assumption:non-markovian-succeed}.
        \begin{enumerate}
            \item\label{itm:BC} $\BC(X_0, Y_0, \vecsigma) = \tt{True}$;
            \item\label{itm:unblocked} $W^*_{\tau^-}(N_p(v) \cap \mc{P}) \cap H^* = \emptyset$;
            \item\label{itm:avoid} $B \cap \mathsf{Avoid}_{\mc{P}}(t) = \emptyset$;
            \item\label{itm:alpha} the mapping $\alpha$ is defined on $B$;
            \item\label{itm:beta} for each $w\in B$ with $\alpha(w)\ne w$, the color
            $\beta_w(X_{t-1}(\alpha(w)))$ is defined and satisfies  $\beta_w(X_{t-1}(\alpha(w))) \notin H^*$; 
            \item\label{itm:no-intersection} $\alpha(B)\cap B=\{ w \in B : \alpha(w) = w \}$.
        \end{enumerate}

        We begin by bounding the size of the set $B = (W^*_t)^{-1}(c_b) \cap N_p(v)$.  As $W^*_t(w)$ is uniformly distributed among available colors at the last ring time of $w$, of which there are at least $k/e$ by \cref{cor:universal-lower-bound}, we have that $|B|$ is stochastically dominated by a binomial random variable with parameters $\Delta$ and $e/k$.  Thus $|B| \leq \eps^{-1/10}$ except with probability $\exp(-\eps^{-\Omega(1)}) \ll \eps^{2/5}$, which can be swallowed into the error bound.


        Similarly, we assume $\deg v \geq \eps^{1/2} \Delta$.  If not, then $|B|$ has expectation at most $e\eps^{1/2}$ and so is zero except with probability $e\eps^{1/2} \ll \eps^{2/5}$, in which case \cref{assumption:non-markovian-succeed} trivially holds.

        We now systematically work through the bullets of \cref{assumption:non-markovian-succeed}.  \cref{itm:BC} is a hypothesis of \cref{claim:nm}.

        For \cref{itm:unblocked},
        notice that each vertex $w \in B$ ends up in $\mc{P}$ only if one of the clocks $(w,c)$ for $c \in Z_t(v)$ attempts an update before time $\Ccouple$.  Each of these clocks ring at rate $1/k$.  By Markov's inequality, we may bound by the expected number of such rings
        \[ |B| \cdot \frac{ |Z_t(v)| \cdot \Ccouple}{k} \leq \frac{\eps^{-1/10} \exp(\exp(O(\Ccouple))) \Ccouple}{\Delta} \ll \eps. \]

\cref{itm:avoid} is similar.  Recall the definition
\begin{equation*}
    \Avoid(t) = \{ w \in N(v_t) : \exists s \in I^\circ(w,t) \text{ where } v_s \in (N(w) \cap N_w^2(\mc{P})) \cup \{ v_t \}, c_s = X_{s-1}(w) \} \cup N_{v_t}^2(\mc{P}).
\end{equation*}

        First, we must bound $|\Avoid(t)|$.  We begin by bounding the size of the second set.  Suppose $z \in \mc{P}$.  We claim that $|N_{v_t}^2(z) \cap N(v_t)| \leq 1$ due to the girth condition.
        
        If $z \notin N^2(v_t)$, then there can be at most one path from $z$ to $v_t$ containing two interior vertices, as otherwise, we would have a $4$- or $6$-cycle in $G$, which has girth $7$.  Thus $|N_{v_t}^2(z) \cap N(v_t)| \leq 1$.  If $z \in N(v_t)$, then there can be no point in $N_{v_t}^2(z) \cap N(v_t)$, or this would give a $3$- or $4$-cycle in $G$.  If $z \in N^2(v_t) \setminus N(v_t)$, then the intermediate vertex is clearly in $N_{v_t}^2(z) \cap N(v_t)$.  There can be no other vertex in $N_{v_t}^2(z) \cap N(v_t)$ or this would give a $4$- or $5$-cycle in $G$.  Thus
        \[ |N_{v_t}^2(\mc{P}) \cap N(v_t)| \leq |\mc{P}|. \]

        Now, we bound the size of the first part of $\Avoid(t)$.  Fix $w \in N(v_t)$.  By the same reasoning as in the preceding paragraph, there are at most $|\mc{P}|+1$ many vertices $u \in (N(w) \cap N_w^2(\mc{P})) \cup \{v_t\}$.  By identical reasoning as for \cref{itm:unblocked}, the expected number of ring attempts is at most
        \[ (|\mc{P}| + 1) \cdot \frac{\Ccouple}{k} \ll \Delta^{-1/2}. \]

        Thus we have, whp, at most $2\deg v \Delta^{-1/2}$ vertices in $\Avoid(t)$.  Once we have conditioned on the interval sizes, this is all independent of the colors, so we may then sample $B$ (which is stochastically dominated by independent sampling with probability $e/k$) to get at most $O(\deg v/\Delta^{3/2}) = o(1)$ probability of failing \cref{itm:avoid}.

        The remaining bullets (\cref{itm:alpha,itm:beta,itm:no-intersection}) require finer analysis.  We begin by sampling $\tau_w^+$ and $\tau_w^-$ for each $w \in N_p(v)$.  Due to the structure of $W^*$, this has revealed no information about the ring times or update times of any $z \in N_p^2(v)$.  Once $\tau_w^+$ and $\tau_w^-$ are revealed, we now check whether each is $c$-swappable for $c = c_u$ or $c = c_b$.  Define
        \[ \Pure(w,t) = \{ c \in [k]
        : \mbox{for all }s\in I^\circ(w,t), \mbox{ if } v_s\in N(w) \mbox{ then } c_s\neq c\}. \]

        Notice that $(\Pure(w,t) \cap A(W^*_{\tau_w^-},w)) \oplus \Exchange(w,t) \subset \{ W^*_{t-1}(w), c_t \}$, as the only difference between the definitions is these two elements.  We can now exactly compute the probability of $c$-swappability 
        \[ \mb{P}[c \in \Pure(w,t)] = \exp\paren{-\frac{|I(w,t)| \cdot \deg w}{k}}. \]

        We must now compute $\mb{P}[c \in A(W^*_{\tau_w^-},w)]$.  This is independent of the previous event, as the previous event depended solely on the Poisson clocks of $z$ in $I(w,t)$, whereas this event depends solely on Poisson clocks in $[0,\tau_w^-)$.  By analogous reasoning to \cref{eq:probability-of-one-color}, we have the same output
        \[ \mb{P}[c \in A(W^*_{\tau_w^-},w)] = (1 \pm \eps) \exp \paren{- \frac{\deg w}{k}}. \]

        Thus
        \[ \mb{P}[c \in \Pure(w,t) \cap A(W^*_{\tau_w^-},w)] = (1 \pm \eps) \exp \paren{ - \frac{\deg w}{k} \paren{1 + |I(w,t)|} }. \]
        Let
        \[ p_w = \exp \paren{ - \frac{\deg w}{k} (1+|I(w,t)|) }. \]

        These events are also independent over all $w$.  Notice $1 \geq p_w > e^{-\Ccouple}$.  Recall that as we used Poisson clocks of rate $1$ instead of $1/n$, there will be a factor of $1/n$ on $I(w,t)$ when working in discrete time.  Now $\alpha$ is constructed as follows.

        Order $N_p(v) = \{ w_1,\ldots,w_{|N_p(v)|} \}$ by decreasing $p_{w_i}$.  We generate two sequences $S_i^b \sim \Bernoulli(p_{w_i})$ and $S_i^u \sim \Bernoulli(p_{w_i})$ of independent Bernoulli random variables and couple these with the chain $W^*$ so that $c_u \in \Pure(w_i,t) \cap A(W^*_{\tau_w^-},w) \iff S_i^u = 1$ and similarly $c_b \in \Pure(w_i,t) \cap A(W^*_{\tau_w^-},w) \iff S_i^b = 1$.  This is possible by the conditional independence under $\mc{H}_\ell$.  We will then sample $W^*_{\tau_w^-}(w)$, which are independent of all previously revealed data and of each other.  This will tell us $B$; in particular, $B$ is stochastically dominated by independent sampling with probability $e/k$.

        We will apply \cref{claim:bernoulli-sequence-nice-event} with $\gamma = e^{-\Ccouple}$, $\eps = \eps$, and $m = |N_p(v)| = \deg v - 1 \geq \eps^{1/2} \Delta - 1$.  Observe that the mapping $\alpha$ corresponds to exactly the scenario described.  As $B$ is stochastically dominated by independent sampling with probability $e/k$, we may sample a dominating set by first sampling $|B| \sim \on{Binom}(\deg v, e/k)$ and then choose the elements of $B$ uniformly at random from $N_p(v)$.  For each element $w \in B$, the mapping $\alpha(w)$ is exactly $\on{index}(R)$ with respect to $S_i^u$ where $R$ is the rank of $w$ in $S_i^b$.

        Fix $w \in B$.  Then by \cref{claim:bernoulli-sequence-nice-event}, except with probability $O(\eps^{1/2})$, $|p_w - p_{\on{index}(\on{rank}(w))}| \leq 2\eps^{1/2}$.  However, notice that $\alpha(w)$ may not align exactly with this idealized process due to the fact that we are working in $W^*_t$ and not $X_t$.  There are up to $\eps \Delta$ many discrepancies, and we must also delete $\mc{P} \cap N(w)$, which (very crudely) has size at most $|\mc{P}| = \exp(\exp(O(\Ccouple)))$.  Still, \cref{claim:bernoulli-sequence-nice-event} is sufficiently robust to handle this and say that $|p_w - p_{\alpha(w)}| \leq 2\eps^{1/2}$ except with probability $O(\eps^{1/2})$.  If this holds, then $\alpha(w)$ exists.  By a union bound, \cref{itm:alpha} holds except with probability
        \[ |B| \cdot O(\eps^{1/2}) = O(\eps^{2/5}). \]
        
        We now work on \cref{itm:beta}. Fix $w\in B$ with $\alpha(w)\ne w$.
        For any $w \in N_p(v)$, $|\Pure(w,t) \cap A(W^*_{\tau_w^-},w)|$ satisfies a Chernoff bound as it is a sum of $k$ different i.i.d.~$\Bernoulli((1\pm\eps)p_w)$ random variables,
        and thus is concentrated about its mean $p_w k$ with high probability (up to $2\eps$ to cover the fluctuations due to the probability not being exactly $p_w$).  As $|\Exchange(w,t)| = |\Pure(w,t) \cap A(W^*_{\tau_w^-},w)| \pm O(1)$, we have concentration of $|\Exchange(w,t)|$ as well.  Thus by a union bound, all $w \in N_p(v)$ have $|\Exchange(w,t)| = (1 \pm \eps) p_w k$.  By the triangle inequality and \cref{claim:bernoulli-sequence-nice-event},
        \begin{align*}
            \big||\Exchange(w,t)| - |\Exchange(\alpha(w),t)| \big| & \leq \big| |\Exchange(w,t)| - p_w k \big| + |p_w k - p_{\alpha(w)} k | + \big| |\Exchange(\alpha(w),t)| - p_{\alpha(w)} k \big| \\
            & \leq 2\eps k + 2\eps^{1/2} k + 2\eps k.
        \end{align*}

        Thus $\beta$ is defined unless $|\Exchange(\alpha(w),t)| > |\Exchange(w,t)|$ and $X_{t-1}(\alpha(w))$ falls in the unmatched suffix.
        Since $X_{t-1}(\alpha(w))$ is uniform from $\Exchange(\alpha(w),t)$, the probability it lands in these final entries (and so \cref{itm:beta} fails) is
        \[ \frac{\big||\Exchange(w,t)| - |\Exchange(\alpha(w),t)| \big|}{|\Exchange(\alpha(w),t)|} \leq \frac{(4\eps + 2\eps^{1/2})k}{(e^{-\Ccouple} - \eps) k} \ll \eps^{2/5}. \]

        In addition, we need that $\beta_w(X_{t-1}(\alpha(w))) \notin H^*$.  As $|H^*| = 2$ and $|\Exchange(\alpha(w),t)| \geq (e^{-\Ccouple}-\eps) k$, the probability of hitting $H^*$ is $\ll \eps^{2/5}$.

        Finally, we show \cref{itm:no-intersection}.  This will follow by a union bound.  For each $w \in N(v)$, the probability of being in $B$ is at most $e/k$, and can be determined solely by the sequence $S_i^b$.  The probability that $w \in \alpha(B \setminus \{w\})$ is also at most $e/k$ and is independent (as it depends on $S_i^u$).  Thus the expected number of vertices in $B \cap \alpha(B)$ that are not fixed points of $\alpha$ is at most
        \[ \deg v \cdot \paren{\frac ek}^2 = O(1/\Delta) \ll \eps. \]

        Thus by Markov's inequality, we have \cref{itm:no-intersection}, and we can union bound over all six items.
    \end{proof}

    The proposition follows from the \cref{claim:nm} and a union bound over all non-Markovian updates.  As $\BC(\cdot)$ holds, there are no repropagations, so there are at most $|\mc{P}| = \exp(\exp(O(\Ccouple)))$ times satisfying \cref{condition:nm-prelim}.  Similarly, we can union bound over all pairs contained in $Z_t(v)$, which has size at most $|\mc{P}|$ again.  Thus the probability of a non-Markovian update failing is at most
    \[ |\mc{P}| \cdot \binom{|\mc{P}|}{2} \cdot O(\eps^{2/5}) = \exp(\exp(O(\Ccouple))) \eps^{2/5} \ll \eps^{1/3}. \qedhere \]
\end{proof}

Finally, we handle temporary discrepancies introduced during the non-Markovian coupling.

\proptemp*

\begin{proof}
    For this result, we break into two time intervals.  Let $I_1 = [0, \Tcouple/10]$ and $I_2 = [\Tcouple/10,\Tcouple]$, and let $\gamma = e^{-\Ccouple}$.

    First, we control the size of $\Temp_t = B \cup \alpha(B)$.  As in the proof of \cref{claim:nm}, we have $|B|$ is stochastically dominated by a binomial distribution with parameters $\Delta$ and $e/k$.  

    First, notice that $|\Temp_t| \leq \gamma^{-1/10}$ with high probability as the set $|B|$ from \cref{def:local-NM-coupling} is stochastically dominated by a binomial distribution with parameters $\Delta$ and $e/k$.  If $\Temp_t$ is too large, we appeal to \cref{lem:HV-exponential-RV-bound} to say that we still see $o(1)$ contribution.

    We begin by handling the contribution from $I_2$.  In this case, we ignore the possibility of correcting them and simply bound the size of
    \[ \Temp_{I_2} = \bigcup_{t \in I_2} \Temp_t. \]
    
    Suppose there are $\ell$ non-Markovian updates in $I_2$.  Then $|\Temp_{I_2}|$
    is stochastically dominated by a sum of $\ell$ i.i.d.~$\on{Binom}(\Delta,e/k)$ random variables multiplied by $2$ (which are independent of $\ell$ as well).  In particular,
    \[ \mb{E}\left[\left|\bigcup_{t \in I_2} \Temp_t\right| \mid \ell \right] \leq 2e \ell. \]
    
    As argued in \cref{sec:analysis-coupling}, the probability of a non-Markovian error is at most $2\Delta \rho(t) / kn$.  Let $t = s + \Tcouple/10$.  Then
    \[ \mb{E}[\rho(t)] \leq \exp(2e \Cmod) \cdot \paren{1 - \frac{\delta}{3en}}^{\Tcouple/10 - \Cmod n + s} \leq \exp(-\delta \Ccouple/100) . \]

    Thus the expected number of attempted non-Markovian errors is at most 
    \[ \sum_{s = 0} ^{9\Tcouple/10} \frac{2\Delta}{kn} \exp(-\delta \Ccouple/100) \leq \frac{9\Tcouple}{10} \cdot \frac{2\exp({-\delta\Ccouple/100})}{n} \leq \exp(-\delta\Ccouple/200). \]    

    Handling $I_1$ requires a different approach.  In this region, we cannot yet control the probability of non-Markovian updates being attempted, but each temporary discrepancy has time $9\Tcouple/10$ to be ``resolved.''  Heuristically, we create $\exp(O(\Cmod))$ many temporary discrepancies and each will be fixed with probability $1-\exp(-\Omega(\Ccouple))$.
    
    Unfortunately, as $\alpha(B)$ depends on the update time $\tau_w^+$ of $w \in \alpha(B)$, we are badly lacking independence and must use auxiliary processes as before to gain independence.  As in the previous proof, we will work with the auxiliary process $\{W^*_t\}$ at some time $\tau$ when $v$ is added to $\mc{P}$.  If $\deg v \leq e^{-\Ccouple/10}\Delta$, then except with probability $e^{-\Omega(\Ccouple)}$, $|B| = 0$ and so $\Temp_t = \emptyset$ and there is nothing further to do.
    
    Otherwise, as in the previous analysis, we reveal $\tau_w^+$ and $\tau_w^-$ for all $w \in N(v)$.  Except with probability exponentially small in $\Delta$, we have at most $2e^{-9\Ccouple/10} \deg v$ vertices $w$ with $\tau_w^+ > \Tcouple$.  Call these vertices $w$ and their corresponding neighbors $\alpha^{-1}(w)$ ``risky.''  Notice $\Temp_t \cap (X_\Tcouple \oplus Y_\Tcouple)$ is empty if and only if $B$ contains no risky vertices.

    We now sample $B$, which is stochastically dominated by independent sampling with probability $e/k$.  Each risky vertex in $B$ contributes at most two discrepancies (in the unlikely event that both $w$ and $\alpha(w)$ are risky), so the expected number of discrepancies created that survive to the end is at most
    \[ 2e^{-9\Ccouple/10} \deg v \cdot \frac ek = \exp(-\Omega(\Ccouple)). \]

    Finally, we must sum this over all non-Markovian updates in $I_1$.  As before, the probability of attempting a non-Markovian update at any time step is at most $2\Delta\rho(t)/kn \leq 2\rho(t)/n$.  Further, the expectation of $\rho(t)$ is always at most $\exp(O(\Cmod))$.  Thus the total expectation is
    \[ \mb{E}\left[ \left| \bigcup_{t \in I_1} \Temp_t \cap (X_\Tcouple \oplus Y_\Tcouple) \right| \right] \leq |I_1| \cdot \frac{\exp(O(\Cmod))}{n} \cdot \exp(-\Omega(\Ccouple)) \ll 1. \]

    Thus we combine the two bounds over $I_1$ and $I_2$ to get a universal bound of $1/9$.
\end{proof}

\section{Fast Mixing: proof of \texorpdfstring{\cref{thm:main-mixing}}{Theorem 1.3}}
\label{sec:DFHV}
In this section we prove \cref{thm:main-mixing} by combining the one-block contraction from the non-Markovian coupling with the local-uniformity inputs from \cref{sec:local-uniformity}.  The overall strategy follows the burn-in and contraction approach from previous works \cite{dyer2003randomly, Molloy, hayes-old-girth5}, especially Dyer--Frieze--Hayes--Vigoda \cite{DFHV} to handle the constant degree case. As many of the details are quite similar, we will frequently refer the reader to \cite{DFHV} for various computations. 

\medskip 

Recall that we work on the extended state space
\[
    \wh\Omega := [k]^V,
\]
and, for every pair $X,Y\in \wh\Omega$, we fix once and for all a Hamming interpolation
\[
    X = Z_0 \sim Z_1 \sim \cdots \sim Z_{|X \oplus Y|} = Y
\]
of minimal length.  The intermediate labelings need not be proper colorings even if both $X$ and $Y$ are proper colorings, which is exactly the reason why the argument is carried out on $\wh\Omega$.

\medskip 

To handle various ``failure cases'', we will need certain weak-estimates for the identity coupling. 

\begin{lemma}
\label{lem:identity-estimates}
  For every $\delta > 0$ and every $C\ge 3$, there exists $\Delta_0=\Delta_0(\delta,C)$ such that the following holds.
Let $G=(V,E)$ be a graph on $n$ vertices with maximum degree $\Delta\ge \Delta_0$ and let $k\ge (1+\delta)\Delta$.
Let $X_0,Y_0\in\wh\Omega$ be neighboring labelings and let $(X_t,Y_t)_{t\ge 0}$ evolve under the identity coupling for the Metropolis dynamics on $\wh\Omega$. Then, for any $T\geq Cn$,
\begin{enumerate}
    \item \[
        \mb E\bigl[|X_{T} \oplus Y_{T}|\bigr] \le \exp(2T/n).
    \]
    \item Let $\ell \geq \exp(20T/n)$. Then,
    \[
        \mb E\bigl[|X_T \oplus Y_T|\mathbf 1(|X_T\oplus Y_T|>\ell^{2/3})\bigr] \le \exp(-\sqrt{\ell}).
    \]   
\end{enumerate}
\end{lemma}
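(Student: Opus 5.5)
Under the identity coupling, both chains use the same update $(v_t,c_t)$. I would dominate the disagreement set $D_t:=X_t\oplus Y_t$ by a branching (Yule-type) process in discrete time.

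\textbf{Growth rate.} A new disagreement can arise at time $t$ only if $v_t\in N(D_{t-1})$ and $c_t$ equals the color of some disagreeing neighbor in one of the two chains. For each $p\in D_{t-1}$ and each $w\in N(p)$ there are at most two such colors. Hence each $p$ spawns a new disagreement with probability at most $2\Delta/(kn)\le 2/n$ per step. The count $|D_t|$ is therefore stochastically dominated by a process $N_t$ with $N_0=1$ in which each individual reproduces independently with probability $2/n$ per step. Deaths only help, so we may ignore them. Formally, I would order the potential births and use a standard coupling in which, given $\mc F_{t-1}$, the increment $|D_t|-|D_{t-1}|$ is dominated by $\on{Bin}(N_{t-1},2/n)$.

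\textbf{Part (1).} This follows immediately from the one-step bound
\[
\mb E[|D_t|\mid\mc F_{t-1}]\le(1+2/n)|D_{t-1}|,
\]
which iterates to
\[
\mb E|D_T|\le(1+2/n)^T\le e^{2T/n}.
\]

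\textbf{Part (2).} This needs a tail bound. Passing to continuous time (or comparing directly), a Yule process of rate $2$ observed at time $s=T/n$ is geometric with parameter $e^{-2s}$. Its tail therefore satisfies
\[
\mb P[N>m]\le (1-e^{-2s})^m\le \exp(-m e^{-2s}).
\]
For the discrete chain I would instead take a Chernoff bound on the moment generating function: $\mb E[\theta^{N_t}]$ satisfies a recursion giving a geometric-type bound
\[
\mb P[N_T>m]\le C\exp(-m e^{-cT/n}).
\]
I would then sum
\[
\mb E[N\mathbf 1\{N>\ell^{2/3}\}]\le \sum_{m>\ell^{2/3}} \mb P[N\ge m]+\ell^{2/3}\mb P[N>\ell^{2/3}] .
\]
With $e^{2T/n}\le \ell^{1/10}$, each term is at most $\exp(-\ell^{2/3-1/10})$. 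This is comfortably below $\exp(-\sqrt\ell)$ once $\ell$ is large, and $\ell\ge e^{20C}$ is large because $C\ge3$.

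\textbf{Main obstacle.} The delicate step is the rigorous geometric tail for the discrete-time dominating process. I would handle it by showing that
\[
\mb E[(1+\lambda)^{N_t}]\le \bigl(1-\lambda(e^{2t/n}-1)\bigr)^{-1}\cdots
\]
via the generating-function recursion
\[
g_{t}(x)=g_{t-1}\bigl(x(1-2/n+2x/n)\bigr)
\]
for $x$ near $1$, which is the discrete analogue of the Yule computation.
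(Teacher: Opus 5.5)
Your proposal is correct and is essentially the paper's argument: part (1) is exactly the paper's one-step bound $\mb E[|D_t|\mid\mc F_{t-1}]\le(1+2\Delta/(kn))|D_{t-1}|$ iterated, and for part (2) the paper just defers to \cite[Lemma~3]{DFHV}, where the standard argument is the branching-process domination with a geometric tail that you carry out. One small repair: the domination of the increment by $\on{Bin}(N_{t-1},2/n)$ can fail once $N_{t-1}$ is a constant fraction of $n$ (since $1-(1-2/n)^N<2N/n$), but you do not need that coupling, because for $\theta\ge 1$ Bernoulli's inequality gives $\mb E[\theta^{|D_t|}\mid\mc F_{t-1}]\le \theta^{|D_{t-1}|}\bigl(1+2(\theta-1)|D_{t-1}|/n\bigr)\le f(\theta)^{|D_{t-1}|}$ with $f(x)=x(1-2/n+2x/n)$, hence $\mb E[\theta^{|D_T|}]\le f^{\circ T}(\theta)$ for the actual process; then the comparison $f(x)\le (1-q)x/(1-qx)$ on $[1,1/q)$ with $q=2/n$, against this geometric generating function (which is closed under composition), yields $\mb P[|D_T|\ge m]\le 2(1-p/2)^m$ with $p=(1-2/n)^T$.
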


\begin{proof}
   The first part follows by noting that the rate of spread of disagreements at any time is at most $1+2\Delta/(kn) \leq \exp(2/n)$, regardless of past history. The second part follows using the same argument as \cite[Lemma~3]{DFHV}.
\end{proof}

The main work in the proof of \cref{thm:main-mixing} is the following proposition, which provides a contractive coupling for $O(n\log \Delta)$ steps of the Metropolis dynamics, starting from a single disagreement. Once this proposition is established, \cref{thm:main-mixing} follows immediately via a standard path-coupling argument. On the other hand, similar to \cite{DFHV}, the proof of \cref{lem:twelve} itself uses path-coupling. 

\begin{proposition}
\label{lem:twelve}
For every $\delta > 0$, there exist constants $\Delta_0(\delta)$ and $C_{\mathrm{blk}} (\delta)$ such that the following holds. 
Let $G=(V,E)$ be a graph on $n$ vertices with maximum degree $\Delta\ge \Delta_0$ and girth at least $7$, and let $k\ge (1+\delta)\Delta$.
Let $X_0,Y_0\in\wh\Omega$ be neighboring labelings, with $X_0\oplus Y_0=\{z^*\}$. Set $T_{\mathrm{blk}} = C_{\mathrm{blk}}n\log\Delta$.

 Then, there exists a $T_\mathrm{blk}$-step coupling of the Metropolis dynamics starting from $(X_0, Y_0)$ such that
\[
    \mb E\bigl[|X_{T_\mathrm{blk}} \oplus Y_{T_\mathrm{blk}}|\bigr]\le \Delta^{-1}.
\]
\end{proposition}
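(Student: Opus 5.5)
We will follow the burn-in-then-contract scheme of \cite{DFHV}, splitting the $T_{\mathrm{blk}}$ steps into a burn-in phase of length $T_{\mathrm{b}} := C_{\mathrm{b}} n\log\Delta$ and then a sequence of $J = \Theta(\log \Delta)$ blocks of length $\Tcouple = \Ccouple n$, so that $C_{\mathrm{blk}} = C_{\mathrm{b}} + J\Ccouple/\log\Delta$ is a constant depending only on $\delta$. During burn-in we use the identity coupling. At the end of burn-in we want three properties with high probability: the disagreements $X_{T_{\mathrm b}}\oplus Y_{T_{\mathrm b}}$ are few (at most $\Delta^{c}$ for some small $c$); they lie in $B_{\Delta^{1/20}}(z^*)$; and every vertex of $B_{\Delta^{1/10}}(z^*)$ is locally uniform in both chains throughout the remaining window. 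The first property comes from \cref{lem:identity-estimates}, applied with $T=T_{\mathrm b}$, which gives $\mb E|X_{T_{\mathrm b}}\oplus Y_{T_{\mathrm b}}|\le \Delta^{2C_{\mathrm b}}$ together with the tail bound in part~(2). The containment comes from the same branching bound, since a disagreement path of length $r$ requires $r$ specific updates and so has probability $\Delta^r(2T_{\mathrm b}/(kn))^r/r!$, which is tiny for $r=\Delta^{1/20}$. Local uniformity follows from \cref{thm:local-uniformity-discrete}, union-bounded over the $\Delta^{O(\Delta^{1/10})}$-free count $|B_{\Delta^{1/10}}(z^*)|\le \Delta^{\Delta^{1/10}+1}$ and over $T_{\mathrm{blk}}/n$ windows. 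Here the $e^{-\Delta/C}$ failure probability dominates, because $\Delta^{1/10}\log\Delta\ll\Delta$.

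In the contraction phase we apply path coupling block by block. At the start of each block we interpolate the current pair along the fixed minimal Hamming path and, for each neighbouring pair $(Z_{i-1},Z_i)$, use the coupling of \cref{thm:main-nonmarkovian}. This needs each intermediate labeling to satisfy $\LU$ at the disagreement vertex. To arrange this, we observe that intermediate labelings agree with $X$ or $Y$ on all but the few disagreement vertices. The disagreement set has size at most $\Delta^{c}$, with $c$ small enough that it perturbs the counts in \cref{def:local-uniformity} by $o(\eps\Delta)$. So each intermediate labeling is $2\eps$-uniform whenever $X$ and $Y$ are $\eps$-uniform. We then compose the couplings and apply the triangle inequality, which gives
\[
\mb E\bigl[|X_{\mathrm{end}}\oplus Y_{\mathrm{end}}|\mbf1\{\text{good}\}\bigr]\le 3^{-1}\,\mb E|X_{\mathrm{start}}\oplus Y_{\mathrm{start}}|
\]
per block. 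Over $J=C'\log\Delta$ blocks this contracts $\Delta^{2C_{\mathrm b}}$ down to $\Delta^{-2}$, provided $C'$ is large compared with $C_{\mathrm b}$.

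The bad events are: failure of local uniformity, the disagreement region escaping the ball $B_{\Delta^{1/10}}(z^*)$ during later blocks, and the disagreement count exceeding $\Delta^{c}$. We handle these as in \cite{DFHV}. On any bad event we fall back to the identity coupling, and its contribution to the expectation is bounded by part~(2) of \cref{lem:identity-estimates} combined with \cref{lem:HV-exponential-RV-bound}. The resulting terms are of order $e^{O(C_{\mathrm{blk}}\log\Delta)}\cdot e^{-\Delta/C}$, which is at most $\Delta^{-2}$ for $\Delta\ge\Delta_0$. The drift of the support during the contraction phase is controlled because the expected number of disagreements is decreasing and the per-path spread bound still applies over total time $O(n\log\Delta)$.

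The main obstacle is the interface between \cref{thm:main-nonmarkovian} and path coupling. That theorem only controls $|X\oplus Y|\mbf1\{\LU\}$, and $\LU$ is an event about the future trajectory of the intermediate chains, not just their starting states. We would try to dominate the complement by requiring local uniformity of the \emph{actual} chains $X,Y$ over the whole window, which is a union bound via \cref{thm:local-uniformity-discrete}. We would then argue that the intermediate chains coupled to $X$ stay within Hamming distance $\Delta^{c}$ of $X$ in the relevant ball, so they inherit uniformity. Any residual expectation would be absorbed using the exponential-tail lemma. Since the conditioning is forward-looking, it must be handled by bounding unconditioned expectations on indicator events rather than by conditioning.
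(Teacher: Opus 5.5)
Your architecture matches the paper's: an identity-coupled burn-in, then $\Theta(\log\Delta)$ blocks of the non-Markovian coupling glued along fixed Hamming interpolations, with an identity fallback after a bad event. The gap is in how you transfer local uniformity to the intermediate labelings.

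\textbf{The global count is not small.} You assume that, outside an event of probability about $e^{-\Delta/C}$, the \emph{total} number of disagreements stays below $\Delta^{c}$ with $c<1$, at the start of every block. Nothing you cite gives this.
\begin{itemize}
\item \cref{lem:identity-estimates} yields only $\mb E|X_{T_{\mathrm b}}\oplus Y_{T_{\mathrm b}}|\le\Delta^{2C_{\mathrm b}}$.
\item Its part~(2) controls only the excess above $\ell^{2/3}$ with $\ell\ge\Delta^{20C_{\mathrm b}}$.
\item $C_{\mathrm b}$ cannot be small, because the burn-in must be at least the $Cn\log\Delta$ demanded by \cref{thm:local-uniformity-discrete}.
\end{itemize}
This is not just a weakness of the lemma. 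On a $\Delta$-regular graph with $k<2\Delta$, the identity coupling is supercritical once the chains are burned in. A disagreement spawns new ones at rate about $2\Delta e^{-\Delta/k}/(kn)$ but is repaired only at rate about $e^{-\Delta/k}/n$. So one expects the post-burn-in count to be $\Delta^{\Omega(C_{\mathrm b})}\gg\Delta$ with probability bounded below independently of $\Delta$. In the contraction phase the factor $1/3$ holds only in expectation, so single blocks can also push the count up with non-negligible probability. Your bad event ``count exceeds $\Delta^{c}$'' therefore cannot be absorbed into an $e^{O(C_{\mathrm{blk}}\log\Delta)}e^{-\Delta/C}$ error term.

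\textbf{The missing idea: local sparsity.} The conditions \eqref{eq:main-number-colors}--\eqref{eq:two-nbd-uniformity} only involve counts inside $N(v)$, $N(w)\setminus\{v\}$ and $B_2(v)$. So what you need is local sparsity of the disagreements, not global scarcity.
\begin{itemize}
\item The paper uses the event $\mc H$ that some $v\in B_{\Delta^{1/3}}(z^*)$ has $|B_2(v)\cap D^*|\ge\Delta^{2/3}$, where $D^*$ is the set of vertices that ever disagree.
\item By the paths-of-disagreement argument of \cite[Lemma~4]{DFHV}, $\mc H$ (together with escape and local-uniformity failure) has probability at most $\exp(-\Delta^{1/10})$, even though the global count may be polynomially large.
\item Since $\Delta^{2/3}\ll\eps\Delta$, local uniformity passes to the intermediate chains.
\item The global count then only needs to stay below $D_{\max}=\Delta^{100C_{\mathrm{blk}}}$, which is small enough to absorb $\mb P[\overline{\mc G}]$.
\end{itemize}

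\textbf{A second omission: a tail bound for the non-Markovian blocks.} The part of the count above $D_{\max}$ needs a tail bound for \emph{both} couplings used in the blocks.
\begin{itemize}
\item For the identity coupling, \cref{lem:identity-estimates}(2) suffices.
\item For the non-Markovian blocks, the paper uses the deterministic bound $(2\Delta+1)|\mc P|\le\Delta^2$ on $\{\BC=\texttt{True}\}$.
\end{itemize}
Your fallback paragraph appeals only to the identity-coupling tail and \cref{lem:HV-exponential-RV-bound}. But a bad event is generally detected only after some non-Markovian blocks have already run, so the second tail estimate is needed as well.
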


\begin{proof}
We will choose constants in the following order
\[ 1/\Delta_0\ll 1/C_{\mathrm{blk}} \ll \eps \ll 1/\Ccouple \ll \delta,\]
where $\Ccouple$ and $\eps$ are as in the proof of \cref{thm:main-nonmarkovian}. We will also set 
\[\gamma := 1/(4\Ccouple).\]

We begin by describing our coupling. The coupling is defined block by block. Block 0 has length $\gamma T_{\mathrm{blk}}$. Subsequent blocks have length $\Tcouple$, so that there are  $m \approx ((1-\gamma)C_{\mathrm{blk}}/\Ccouple)\log\Delta$ such blocks.

For $i = 0, 1, 2,\dots,m$, suppose the pair at the beginning of block $i$ is $(X_{t(i)},Y_{t(i)})$. The high-level idea of the construction is the following. 

\begin{itemize}
    \item If at some earlier block boundary a ``bad'' event has already occurred, then from this block onward we use the identity coupling.

    \item Otherwise, we take the fixed Hamming interpolation
    \[
        X_{t(i)}=W^{(i)}_0\sim W^{(i)}_1\sim \cdots \sim W^{(i)}_{|X_{t(i)} \oplus Y_{t(i)}|}=Y_{t(i)},
    \]
    and we evolve each neighboring pair $(W^{(i)}_{r-1},W^{(i)}_r)$ over one block using our non-Markovian coupling from \cref{thm:main-nonmarkovian}; we then glue these couplings along the interpolation to obtain $(X_{t(i+1)}, Y_{t(i+1)})$. 
\end{itemize}

\medskip

For $0\le i\le m+1$, let
\[
    \on{Dist}_i:=|X_{t(i)} \oplus Y_{t(i)}|.
\]
For $1\le r\le \on{Dist}_i$, let $\on{Dist}_{i+1,r}$ be the Hamming distance after one block for the coupled evolution started from the neighboring pair $(W^{(i)}_{r-1},W^{(i)}_r)$.
By the triangle inequality,
\begin{equation}
\label{eq:rewrite-triangle}
    \on{Dist}_{i+1}\le \sum_{r=1}^{\on{Dist}_i} \on{Dist}_{i+1,r}.
\end{equation}

We now define our bad events, which are analogous to the ones in \cite{DFHV}. We will need the disagreement set of $X$ and $Y$. Accordingly, let
\[D^*_t := \cup_{0\leq s \leq t}\{v \in V: X_s(v) \neq Y_s(v)\}\]
Note that $D^*_t$ is an increasing sequence. We let $D^* = D^*_{T_{\mathrm{blk}}}$.

\medskip
\begin{itemize}
\item {\bf Large-growth.} Let
\[D_{\max} := \exp(100T_{\mathrm{blk}}/n) = \Delta^{100 C_{\mathrm{blk}}}.\]
and let $\mc{D}$ denote the event that
\[\on{Dist}_{m+1} \geq D_{\max}.\]

\item {\bf Local-uniformity fails.} Let $\vecsigma$ denote the sequence of updates for the $X$-chain after Block $0$. Let $\mc{NU}$ denote the event that $\LU(X_{t(1)}, \vecsigma, \eps, z^*)$ does not hold.

\item {\bf Disagreement escape.}
Let $\mc{E}$ denote the event that
\[D^* \not\subseteq B_{\Delta^{1/100}}(z^*).\]

\item 

\noindent {\bf Locally heavy disagreements.}
Let $\mc{H}$ denote the event that there exists $v \in B_{\Delta^{1/3}}(z^*)$ such that 
\[|B_2(v) \cap D^*| \geq \Delta^{2/3}.\]

\end{itemize}

Let
\[\mc{G} = \overline{\mc{NU} \cup \mc{E} \cup \mc{H}}.\]
Then, by \cref{thm:local-uniformity-discrete} to handle the $\mc{NU}$ term and the same paths-of-disagreements argument as in \cite[Lemma~4]{DFHV}  to handle the other two terms, we have
\[\mb{P}[\overline{\mc{G}}] \leq \exp(-\Delta^{1/10}).\]

Recall that our goal is to show
\[\mb{E}[\on{Dist}_{m+1}] \leq \Delta^{-1}.\]

We first write
\begin{align*}
    \mb{E}[\on{Dist}_{m+1}] &= \mb{E}[\on{Dist}_{m+1} \mathbf{1}_{\mc{D}}] + \mb{E}[\on{Dist}_{m+1} \mathbf{1}_{\overline{\mc{D}}}] \\
    &\leq \exp(-\Delta^{1/100}) + \mb{E}[\on{Dist}_{m+1} \mathbf{1}_{\overline{\mc{D}}}].
\end{align*}
The second line follows from the same standard argument as in \cite[Lemma~3]{DFHV}. This requires two inputs: firstly, the second conclusion of \cref{lem:identity-estimates} and secondly, a similar conclusion for the non-Markovian coupling,~i.e.~over $\Tcouple$ steps starting from neighboring labelings $X_0, Y_0 \in \wh{\Omega}$, the non-Markovian coupling satisfies the tail bound
 \[  \mb E\bigl[|X_\Tcouple \oplus Y_\Tcouple|\mathbf 1(|X_\Tcouple\oplus Y_\Tcouple|>\Delta^{2})\bigr] \le \exp(-\Delta^{1/2}).\]
On the event $\BC(X_0, Y_0, \vecsigma) = \tt{False}$, we use the identity coupling, so that this follows from \cref{lem:identity-estimates}; on the event $\BC(X_0, Y_0, \vecsigma) = \tt{True}$, this is deterministically bounded by $(2\Delta+1)|\mc{P}| \leq 3\Delta \cdot \exp(\exp(O(\Ccouple))) \leq \Delta^2$, after increasing $\Delta_0$.

Therefore, it suffices to control $\mb{E}[\on{Dist}_{m+1} \mathbf{1}_{\overline{\mc{D}}}]$. The idea, which is the same as in \cite{DFHV}, is the following: on the event $\overline{\mc{D}}$, we are guaranteed that the terminal distance is at most $D_{\max}$, which is polynomially large in $\Delta$. Therefore, we can absorb the failure of $\mc{G}$, which is exponentially small in a power of $\Delta$. On the other hand, on the event $\mc{G}$, the distance contracts by $1/3$ in expectation in each of the final $m$ stages. By choosing $\gamma$ sufficiently small, this easily offsets the growth during the first $\gamma T_{\mathrm{blk}}$ steps required for burn-in.

\medskip 

We proceed to formal details. We decompose
\begin{align*}
    \mb{E}[\on{Dist}_{m+1} \mathbf{1}_{\overline{\mc{D}}}]
    &\leq 
    \mb{E}[\on{Dist}_{m+1} \mathbf{1}_{\overline{\mc{D}}}\mathbf{1}_{\overline{\mc{G}}}] + \mb{E}[\on{Dist}_{m+1} \mathbf{1}_{\mc{G}}] \\
    &\leq D_{\max} \mb{P}[\overline{\mc{G}}] + \mb{E}[\on{Dist}_{m+1} \mathbf{1}_{\mc{G}}]\\
    &\leq \exp(-\Delta^{1/100}) + \mb{E}[\on{Dist}_{m+1} \mathbf{1}_{\mc{G}}],
\end{align*}
using the above estimates for $D_{\max}$ and $\mb{P}[\overline{\mc{G}}]$.

Finally, we control the dominant term $\mb{E}[\on{Dist}_{m+1} \mathbf{1}_\mc{G}]$. Consider the Hamming interpolation at the start of Block $m$:
\[
        X_{t(m)}=W^{(m)}_0\sim W^{(m)}_1\sim \cdots \sim W^{(m)}_{|X_{t(m)} \oplus Y_{t(m)}|}=Y_{t(m)},
    \]
Consider the adjacent pair $(W^{(m)}_{r-1}, W^{(m)}_r)$ with unique disagreement $v_{r-1}$ and let $\vecsigma_{r-1}$ denote the update sequence of $W^{(m)}_{r-1}$ in Block $m$ (obtained iteratively via the gluing lemma, starting from the update sequence for $X_0$). 

The key point is that on the event $\mc{G}$, the local uniformity event $\LU(W^{(m)}_{r-1}, \vecsigma_{r-1}, \eps, v_{r-1})$ holds for all $1\leq r \leq \on{Dist}_{m}$: $\overline{\mc{NU}}$ guarantees that the $X$-chain satisfies the $\eps$-local-uniformity properties for all times after the conclusion of Block $0$ in a ball of radius $\Delta^{1/10}$ around $z^*$. The event $\overline{\mc{E}}$ ensures that $W^{(m)}_{r}(t)$ coincides with $X_t$ outside $B_{\Delta^{2/100}}(z^*)$. Moreover, inside this ball, $\overline{\mc{H}}$ ensures that $2$-neighborhoods of $W^{(m)}_r(t)$ agree with those of $X_t$ up to a change of at most $\Delta^{2/3}$ assignments. Since both our local uniformity properties allow an $\eps \Delta \gg \Delta^{2/3}$ slack, this allows us to transfer local uniformity properties to $W^{(m)}_r(t)$. (We remark that here, we are following the somewhat simpler approach of \cite{DFHV}, but another approach is to modify the proof for local uniformity to apply directly to $W^{(m)}_{r-1}$, which follow an ``interpolated'' Metropolis dynamics, see Molloy \cite{Molloy}). Therefore, by \cref{thm:main-nonmarkovian} and path-coupling, it follows that
\[\mb{E}[\on{Dist}_{m+1} \mathbf{1}_{\mathcal{G}}] \leq (1/3)\mb{E}[\on{Dist}_{m} \mathbf{1}_{\mathcal{G}}].\]
Iterating this and using \cref{lem:identity-estimates} for the base case, we have
\begin{align*}
    \mb{E}[\on{Dist}_{m+1} \mathbf{1}_{\mathcal{G}}] 
    &\leq (1/3)^{m}\mb{E}[\on{Dist}_1 \mathbf{1}_{\mathcal{G}}] \\
    &\leq (1/3)^{m}\mb{E}[\on{Dist}_1]\\
    &\leq \Delta^{-C_{\mathrm{blk}}/\Ccouple} \exp(2\gamma T_{\mathrm{blk}}/n)\\
    &\leq \Delta^{-C_{\mathrm{blk}}/\Ccouple} \cdot \Delta^{2\gamma C_{\mathrm{blk}}}\\
    &\leq \Delta^{-1},
\end{align*}
by our choice of $\gamma$ and by taking $C_{\mathrm{blk}}$ sufficiently large compared to $\Ccouple$.
\end{proof}

The previous proposition easily implies \cref{thm:main-mixing}, again by a path-coupling argument.

\begin{proof}[Proof of \cref{thm:main-mixing}]
Let $T_{\mathrm{blk}}$ be as in \cref{lem:twelve} and let $K$ denote the $T_{\mathrm{blk}}$-step transition matrix on $\wh\Omega$.
By \cref{lem:twelve}, every neighboring pair $X, Y \in \wh{\Omega}$ admits a coupling $(X',Y')$ of $K(X,\cdot)$ and $K(Y,\cdot)$ such that
\begin{equation*}
\label{eq:v4-good-edge-H}
    \mb E\bigl[|X'\oplus Y'|\bigr]\le \Delta^{-1}.
\end{equation*}
Consequently, by path-coupling with respect to the Hamming metric, every pair $X,Y \in {\wh{\Omega}}$ admits a coupling $(X', Y')$ of $K(X, \cdot)$ and $K(Y, \cdot)$ such that
\[\mb{E} \bigl [|X' \oplus Y'| \bigr]\le \Delta^{-1}|X\oplus Y|.\]
Iterating this $r$ times, every pair $X,Y \in \wh{\Omega}$ admits a coupling $(X'', Y'')$ of $K^r(X,\cdot)$ and $K^r(Y,\cdot)$ such that
\[\mb{E}[|X'' \oplus Y''|] \leq \Delta^{-r} |X\oplus Y| \leq \Delta^{-r}n.\]
In particular, since $|X'' \oplus Y''| \geq \mathbf{1}[X''\neq Y'']$, it follows from the coupling characterization of total variation distance that for all $X, Y \in \Omega$,
\begin{align*}
    \left\|K^r(X,\cdot)-K^r(Y,\cdot)\right\|_{\mathrm{TV}}
    &\le \Delta^{-r} n.
\end{align*}
For any $\eta \in (0,1)$, choosing $r = \left\lceil\log(n/\eta)/\log\Delta\right\rceil$ makes the right hand side at most $\eta$.
Therefore, taking $Y \sim \pi$, we see that
\[T_{\on{mix}}(\eta) \leq \left\lceil\log(n/\eta)/\log\Delta\right\rceil T_{\mathrm{blk}} = O_{\delta}(n\log(n/\eta)),\]
as claimed.
\end{proof}

\bibliographystyle{alpha}
\bibliography{main}

\appendix
\crefalias{section}{appendix}

\section{Deferred computation from \texorpdfstring{\cref{sub:controlling-bounding-chain}}{bounding chain}}\label{app:computation}

In the proof of \cref{lem:P-total-size-bound} we defined
\[
\ell_0=2,\qquad m_0=1,
\]
and, for \(0\le j<100\Ccouple\),
\[
\ell_{j+1}
=
2\ell_j+\frac{\log(200\Ccouple m_j/\gamma)}{\log 25},
\qquad
m_{j+1}
=
\frac{200\Ccouple}{\gamma}\,m_j \exp(\ell_{j+1}/100).
\]
We now verify that when \(\gamma=\exp(-\Ccouple^2)\), this recursion gives
\[
m_{100\Ccouple}\le \exp(\exp(O(\Ccouple))).
\]

\begin{lemma}\label{lem:recursive-threshold-bound}
Assume \(\gamma=\exp(-\Ccouple^2)\). Then for every \(0\le j\le 100\Ccouple\),
\[
\ell_j,\ \log m_j
\le
3^j\bigl(\Ccouple^2+\log(200\Ccouple)+2\bigr).
\]
In particular,
\[
\ell_{100\Ccouple}\le \exp(O(\Ccouple)),
\qquad
m_{100\Ccouple}\le \exp(\exp(O(\Ccouple))).
\]
\end{lemma}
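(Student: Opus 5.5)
We prove the bound by induction on $j$, writing $K:=\Ccouple^2+\log(200\Ccouple)+2$. With $\gamma=\exp(-\Ccouple^2)$ we have $\log(200\Ccouple/\gamma)=\Ccouple^2+\log(200\Ccouple)\le K-2$. The base case $j=0$ is immediate: $\ell_0=2\le K$ and $\log m_0=0\le K$.

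For the inductive step, assume $\ell_j,\log m_j\le 3^jK$. From the recursion and $\log 25>1$,
\[
\ell_{j+1}\le 2\ell_j+\log(200\Ccouple/\gamma)+\log m_j\le 2\cdot 3^jK+(K-2)+3^jK\le 3^{j+1}K .
\]
This step is fine.

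The second quantity is the problem. The recursion gives
\[
\log m_{j+1}=\log(200\Ccouple/\gamma)+\log m_j+\ell_{j+1}/100,
\]
and the naive bound $\ell_{j+1}/100\le 3^{j+1}K/100$ yields
\[
\log m_{j+1}\le (K-2)+3^jK+3^{j+1}K/100 .
\]
We then need $(K-2)+3^{j}K\,(1+3/100)\le 3^{j+1}K$. Since $3^jK(3-1.03)\ge 1.97K\ge K-2$, this holds. So the coupled induction closes with base $3$.

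It remains to evaluate at $j=100\Ccouple$. This gives
\[
\ell_{100\Ccouple},\ \log m_{100\Ccouple}\le 3^{100\Ccouple}K=\exp\bigl(100\Ccouple\log 3+O(\log\Ccouple)\bigr)=\exp(O(\Ccouple)),
\]
and hence $m_{100\Ccouple}\le\exp(\exp(O(\Ccouple)))$. The implicit constant is universal, as condition (3) of $\BC$ requires.

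The only step that needs care is the $m$-recursion. It feeds in $\ell_{j+1}$ at the same index, so the two quantities must be bounded jointly, with $\ell_{j+1}$ bounded first and then used in $\log m_{j+1}$. The factor $1/100$ is what keeps the growth base at $3$. If it were absent, base $4$ or $5$ would still work and would only change the $O(\cdot)$ constant.
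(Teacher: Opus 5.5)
Your joint induction, bounding $\ell_{j+1}$ first and then feeding it into $\log m_{j+1}$, has the same structure as the paper's proof. However, the step you call fine is false as written. The left-hand side of your chain satisfies
\[
2\cdot 3^jK+(K-2)+3^jK \;=\; 3^{j+1}K+(K-2),
\]
which strictly exceeds $3^{j+1}K$, since $K-2=\Ccouple^2+\log(200\Ccouple)>0$.

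The loss comes from replacing the factor $1/\log 25$ by $1$. After spending $2\ell_j\le 2\cdot 3^jK$, exactly $3^jK$ remains in the budget for the term $\bigl(\log(200\Ccouple/\gamma)+\log m_j\bigr)/\log 25$. The undivided sum $(K-2)+3^jK$ does not fit in that.

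The repair is to keep the division and use $\log 25>2$, not merely $\log 25>1$. Since $K-2\le 3^jK$ and $\log m_j\le 3^jK$,
\[
\frac{(K-2)+\log m_j}{\log 25}\;\le\;\frac{2\cdot 3^jK}{\log 25}\;<\;3^jK,
\]
so $\ell_{j+1}<3^{j+1}K$. With this fix, your $m$-step and your final evaluation at $j=100\Ccouple$ go through verbatim, and the argument becomes the paper's proof. The $m$-step does have room to spare thanks to the $1/100$.

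So you misplaced where the care is needed: the $\ell$-recursion is the tight one, not the $m$-recursion. Your crude bound would close with base $4$ instead of $3$. That still gives the ``in particular'' conclusions, but not the explicit bound $3^j\bigl(\Ccouple^2+\log(200\Ccouple)+2\bigr)$ that the lemma claims.
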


\begin{proof}
Set
\[
A:=\log\left(\frac{200\Ccouple}{\gamma}\right)
=\Ccouple^2+\log(200\Ccouple),
\qquad
u_j:=3^j(A+2).
\]
We claim that for all \(j\),
\[
\ell_j\le u_j,
\qquad
\log m_j\le u_j.
\]

This is immediate for \(j=0\), since
\[
\ell_0=2\le A+2=u_0,
\qquad
\log m_0=0\le u_0.
\]

Now assume \(\ell_j,\log m_j\le u_j\). Since \(u_j\ge u_0=A+2\), we also have \(A\le u_j\). Hence
\[
\ell_{j+1}
=
2\ell_j+\frac{A+\log m_j}{\log 25}
\le
2u_j+\frac{2u_j}{\log 25}
<
3u_j
=
u_{j+1},
\]
because \(\log 25>2\).

Also,
\[
\log m_{j+1}
=
A+\log m_j+\frac{\ell_{j+1}}{100}
\le
u_j+u_j+\frac{u_{j+1}}{100}
<
3u_j
=
u_{j+1}.
\]
This closes the induction.

Finally,
\[
\log m_{100\Ccouple}
\le
3^{100\Ccouple}(A+2)
=
\exp(O(\Ccouple)),
\]
since \(A=\Ccouple^2+\log(200\Ccouple)=O(\Ccouple^2)\). Exponentiating gives
\[
m_{100\Ccouple}\le \exp(\exp(O(\Ccouple))). \qedhere
\]
\end{proof}

\section{Proof of \texorpdfstring{\cref{claim:bernoulli-sequence-nice-event}}{order statistics}}
\label{app:proof-of-bernoulli}

Let \(\kappa:=\eps m\), and for notational simplicity assume that \(\kappa\in \mathbb{N}\); inserting floors and ceilings does not change the argument.  Also write
\[
A_t:=\sum_{i\le t} X_i,\qquad B_t:=\sum_{i\le t} Y_i,\qquad \mu_t:=\sum_{i\le t} p_i.
\]
Then \(R=A_j\), while \(\index(R)\) is the first time \(B_t\) reaches the value \(R\).

The proof has two ingredients.  First, for a uniformly random \(j\), the monotone sequence \((p_i)\) is typically almost constant on the window \([j-2\kappa,j+2\kappa]\).  Second, for such a \(j\), the random index \(\index(R)\) is typically within distance \(\kappa\) of \(j\).  Putting these together gives the desired comparison between \(p_j\) and the nearby values \(p_{\index(R)\pm \kappa}\).

We begin with the first point.  For \(2\kappa+1\le j\le m-2\kappa\), set
\[
\kappa_j:=p_{j-2\kappa}-p_{j+2\kappa}\ge 0.
\]
Since \((p_i)\) is nonincreasing, these \(\kappa_j\) measure the total variation of the sequence on windows of radius \(2\kappa\).  Summing over all interior \(j\), we obtain a telescoping estimate:
\[
\sum_{j=2\kappa+1}^{m-2\kappa}\kappa_j
=
\sum_{j=2\kappa+1}^{m-2\kappa} p_{j-2\kappa}
-
\sum_{j=2\kappa+1}^{m-2\kappa} p_{j+2\kappa}
=
\sum_{i=1}^{m-4\kappa} p_i
-
\sum_{i=4\kappa+1}^{m} p_i.
\]
Hence
\[
\sum_{j=2\kappa+1}^{m-2\kappa}\kappa_j
=
\sum_{i=1}^{4\kappa} p_i-\sum_{i=m-4\kappa+1}^{m} p_i
\le 4\kappa,
\]
because each \(p_i\le 1\).  Since \(j\) is uniform in \([m]\), this implies
\[
\mb{P}\left(j\notin [2\kappa+1,m-2\kappa]\right)\le \frac{4\kappa}{m}=4\eps,
\]
and, by Markov's inequality,
\[
\mb{P}\left(\kappa_j>\eps^{1/2}\right)
\le
\frac{1}{m\eps^{1/2}}
\sum_{j=2\kappa+1}^{m-2\kappa}\kappa_j
\le
\frac{4\kappa}{m\eps^{1/2}}
=
4\eps^{1/2}.
\]
Therefore, with probability at least \(1-4\eps-4\eps^{1/2}\), we have simultaneously
\[
j\in [2\kappa+1,m-2\kappa]
\qquad\text{and}\qquad
p_{j-2\kappa}-p_{j+2\kappa}\le \eps^{1/2}.
\]
On this event, monotonicity of \((p_i)\) yields
\begin{equation}
\label{equation:one}
p_j-\eps^{1/2}\le p_{j+2\kappa}\le p_j\le p_{j-2\kappa}\le p_j+\eps^{1/2}.
\end{equation}

We now turn to the second point.  Fix \(j\in [2\kappa+1,m-2\kappa]\).  We claim that with high probability,
\begin{equation}\label{equation:two}
|\index(R)-j|\le \kappa.
\end{equation}
To see this, let
\[
\lambda:=\frac{\gamma\kappa}{4},
\]
and consider the event
\[
E_j:=
\left\{
|A_j-\mu_j|\le \lambda,\ 
|B_{j-\kappa}-\mu_{j-\kappa}|\le \lambda,\ 
|B_{j+\kappa}-\mu_{j+\kappa}|\le \lambda
\right\}.
\]
Since \(A_j\), \(B_{j-\kappa}\), and \(B_{j+\kappa}\) are sums of independent Bernoulli variables, Hoeffding's inequality gives
\begin{equation}
\label{equation:three}
\mb{P}(E_j^c)
\le
6\exp\left(-\frac{2\lambda^2}{m}\right)
=
6\exp\left(-\frac{\gamma^2\kappa^2}{8m}\right)
=
6\exp\left(-\frac{\gamma^2\eps^2 m}{8}\right).
\end{equation}
Assume now that \(E_j\) occurs.  Because every \(p_i\ge \gamma\), we have
\[
\mu_j-\mu_{j-\kappa}=\sum_{i=j-\kappa+1}^j p_i\ge \gamma\kappa,
\qquad
\mu_{j+\kappa}-\mu_j=\sum_{i=j+1}^{j+\kappa} p_i\ge \gamma\kappa.
\]
Using \(R=A_j\), it follows that
\[
B_{j-\kappa}
\le \mu_{j-\kappa}+\lambda
\le \mu_j-\gamma\kappa+\lambda
= \mu_j-\frac{3\gamma\kappa}{4}
< \mu_j-\lambda
\le A_j=R,
\]
and similarly
\[
B_{j+\kappa}
\ge \mu_{j+\kappa}-\lambda
\ge \mu_j+\gamma\kappa-\lambda
= \mu_j+\frac{3\gamma\kappa}{4}
> \mu_j+\lambda
\ge A_j=R.
\]
Thus \(B_{j-\kappa}<R\le B_{j+\kappa}\).  Since \(B_t\) is nondecreasing and increases only by steps of size \(0\) or \(1\), it must hit the value \(R\) at some time between \(j-\kappa+1\) and \(j+\kappa\), i.e.~
\[
|\index(R)-j|\le \kappa,
\]
which proves \eqref{equation:two} on the event \(E_j\).

Finally, we combine the two ingredients.  Suppose that \(j\) satisfies the conclusion of the first part, and that \(E_j\) also occurs.  Then \eqref{equation:two} implies
\[
j\le \index(R)+\kappa\le j+2\kappa,
\qquad
j-2\kappa\le \index(R)-\kappa\le j.
\]
Since \((p_i)\) is nonincreasing, this gives
\begin{equation}\label{equation:four}
p_{j+2\kappa}\le p_{\index(R)+\kappa}\le p_{\index(R)-\kappa}\le p_{j-2\kappa}.
\end{equation}
Combining \eqref{equation:four} with \eqref{equation:one}, we conclude that
\[
p_j-\eps^{1/2}
\le
p_{\index(R)+\kappa}
\le
p_{\index(R)-\kappa}
\le
p_j+\eps^{1/2}.
\]
Recalling that \(\kappa=\eps m\), this is exactly
\[
p_j-\eps^{1/2}
\le
p_{\index(R)+\eps m}
\le
p_{\index(R)-\eps m}
\le
p_j+\eps^{1/2}.
\]

It remains only to estimate the failure probability.  The first part fails with probability at most \(4\eps+4\eps^{1/2}\), while by \eqref{equation:three}, conditional on any fixed \(j\), the event \(E_j\) fails with probability at most
\[
6\exp\left(-\frac{\gamma^2\eps^2 m}{8}\right).
\]
Hence
\[
\mb{P}(\text{failure})
\le
4\eps+4\eps^{1/2}
+
6\exp\left(-\frac{\gamma^2\eps^2 m}{8}\right).
\]
Under the regime \(m\gg \eps^{-1}\gg \gamma^{-1}\), this quantity tends to \(0\).  Therefore the stated inequalities hold with high probability. \qed

\end{document}